\DeclareMathAlphabet{\mathsfsl}{T1}{cmss}{bx}{n} 
\def\Ga#1{\Gamma^*_{\!#1}}
\def\GaN{\Gamma^*_{\!N}}
\def\barGamma{\mkern1mu\overline{\hbox{$\mkern-1mu\Gamma\mkern4mu$}}\mkern-4mu}
\def\clGa#1{\barGamma^*_{\!#1}}
\def\GN{\Gamma_{\!N}}
\let\norm=\|%
\def\|{\mkern1mu|\mkern1mu}
\def\down(#1,#2){{\downarrow}^{#1}_{#2}}
\newcommand\sm{\tsp{-}\tsp}
\newcommand\del{\mathbin{\setminus}} 
\newcommand\contr{\mathbin{\mkern -7mu\not\mkern 7mu}} 
\newcommand\eps{\varepsilon}
\newcommand\rest{{\restriction}}
\newcommand\up{{\uparrow}}
\newcommand\dn{{\downarrow}}
\newcommand\subgr{\trianglelefteq}
\DeclareMathOperator\Prob{Pr} 
\DeclareMathOperator\prob{\mathbb P} 
\DeclareMathOperator\E{\mathsf E}
\DeclareMathOperator\dom{dom}
\newcommand\cups{\mathrel{\cup\mkern -1mu^*\mkern-3mu}}
\def\restr{\mathord{\upharpoonright}}
\def\pge{\mathbin{\ge}}
\def\m#1{\mkern 1.5mu{#1}\mkern 1.5mu}
\def\phi{\varphi}
\def\tto{\to\mkern-14mu\to}
\def\<#1,#2\|#3\>{\langle#1,#2\|#3\rangle}
\newtheorem{theorem}{Theorem}
\newtheorem{lemma}[theorem]{Lemma}
\newtheorem{proposition}[theorem]{Proposition}
\newtheorem{claim}[theorem]{Claim}
\theoremstyle{definition}
\newtheorem{definition}[theorem]{Definition}
\newtheorem{problem}{Problem}
\renewenvironment{cases}{\left\{\mkern-6mu\begin{array}{l@{~~~}l}}{\end{array}\right.}
\def\ssum{\raisebox{1pt}{$\textstyle\sum@$}\mkern 1.5mu}
\def\tsp{\mkern 1mu}
\def\fc{f^{\mathsf c}}
\newcommand\eqdef{\stackrel{\mathrm{def}}{=}}
\def\sharp{{\resizebox{!}{5pt}{\raisebox{2pt}{$\parallel$}}}}
\newcommand\bref[1]{\hyperlink{#1}{#1}}
\def\r{\mathsfsl r}
\def\MEM{\textsf{MAXE}}
\def\GMEM{\textsf{GMAXE}}
\DeclareMathOperator\GAK{\textsf{GAK}}
\def\H{\mathbf H}
\def\I{\mathbf I}
\title{Exploring the entropic region}
\author{Laszlo Csirmaz}
\date{}
\begin{document}
\maketitle

\begin{abstract}

\noindent

The paper explores three known methods, their variants and limitations,
that can be used to obtain new entropy inequalities. The Copy Lemma was
distilled from the original Zhang--Yeung construction which produced the
first non-Shannon inequality. Its iterated version, effects of
symmetrizations, and connections with polyhedral vertex enumeration are
discussed. Another method, derived from the principle of maximum entropy,
has the Copy Lemma as a special case. Nevertheless, none of the two
presented variants is known to generate more inequalities than the iterated
Copy Lemma. Finally, the Ahlswede-K\"orner method is shown to employ a
hidden application of the Copy Lemma -- the underlying lemma alone cannot
generate new inequalities --, which makes this method strictly weaker than
the Copy Lemma. The paper is written in a tutorial style and concludes with
a list of open questions and research problems.

\smallskip\noindent\textbf{Keywords}
entropy; information inequality; polymatroid; representability; copy lemma;
maximum entropy principle.

\smallskip\noindent\textbf{AMS Numbers}
06B35, 26A12, 52B12, 90C29
\end{abstract}

\section{Introduction}

Questions about probabilistic representation of structures and structure
classes can often be reduced to the following one: Does there exist a
collection of (finite) random variables with specified entropies? Examples
include, but are not limited to, channel coding \cite{Csiszar.Korner} and
network coding in particular \cite{yeung-network}, estimating the efficiency
of secret sharing schemes \cite{Beimel,Beimel.Orlov, Csirmaz.Ligeti, Farras.Kaced,
G.Rom}, questions about matroid representability \cite{B.B.F.P}, guessing
games \cite{riis}, and finding
conditional independence inference rules \cite{studeny}. The $N$-variable
\emph{entropy region} is the collection of the entropies of the non-empty
subsets (i.e., the marginals) of every possible jointly distributed $N$ discrete
random variables indexed by the non-empty subsets of $N$. Shannon's basic
information inequalities \cite{yeung-book} provide linear inequalities among
these entropies, thus yielding a polyhedral bound of the entropy region. The
long-standing open question of whether this Shannon bound is tight has been
settled by Zhang and Yeung \cite{ZhY.ineq,Zh.gen.ineq} by exhibiting a linear inequality
that holds for entropies but is not a consequence of the basic inequalities.
Their discovery initiated intensive research to further investigate the
shape of the entropy region. The phrase \emph{Copy Lemma} was coined by
Dougherty et al.~\cite{DFZ11} to describe the general method distilled from
the original Zhang--Yeung construction. It has been applied successfully to
generate several hundred sporadic and a couple infinite families of
non-Shannon entropy inequalities \cite{Csirmaz.book,DFZ11,M.infinf}. A
different method using an information-theoretic lemma attributed to Ahlswede
and K\"orner \cite{Ahlswede.Korner} was proposed in \cite{MMRV}; later it
was shown to be equivalent to a special case of the Copy Lemma \cite{Kaced}.
Structures satisfying all instances of the Copy Lemma were named
\emph{self-adhesive} by F.~Mat\'u\v s who characterized self-adhesive
structures for the case when $N$ has at most four elements \cite{M.fmadhe}.
Self-adhesivity was an important ingredient in investigating different
representations of conditional independence models
\cite{Boege.Gauss,Boege.Studeny}.

The Copy Lemma, and its variants, are especially amenable to numerical
computations \cite{B.B.F.P,Beimel.Orlov,multiobj,G.Rom}. Although the Copy
Lemma can utilize previously generated non-Shannon inequalities, the same
effect can be achieved by applying the Lemma \emph{iteratively}. The influential
paper of Dougherty et al.~\cite{DFZ11} also introduces such iterative
variants. Additional tricks, such as exploiting the inherent symmetry and
special structure of the entropy region, can significantly reduce the
otherwise prohibitively huge computational complexity \cite{multiobj}.

The Zhang-Yeung method can be recovered by an ingenious application of
\emph{principle of maximum entropy}. The principle, which was conceived in
its modern form by E.~T.~Jaynes \cite{jaynes}, is easy to formulate: ``If a
probability distribution is specified only partially, take the one with the
largest entropy,'' see \cite{maxentp}. In this case the partial
specification requires certain marginals (those at the ``copied instance''
positions) to be isomorphic to a given, fixed distribution. Since this
requirement imposes linear constraints on the distribution probabilities,
the maximum entropy distribution always exists \cite{convex-optim}, is
unique, and satisfies numerous independence statements, including those
provided by the Copy Lemma.

While it is clear that the Maximum Entropy Method is more general than the
Copy Lemma, it is not known whether it is actually stronger and could
provide additional entropy inequalities. Finally, it is also unknown whether
any of these methods captures the complete entropy region, but there are
indications that this is not the case \cite{Boege.Gauss}.

The remainder of this paper is organized as follows. Section
\ref{sec:prelim} recalls some basic notions used later, such as probability
distribution, Shannon entropy, entropy region, and polymatroids. Basic facts
about the entropy region are proved in Section \ref{sec:entreg}, while
Section \ref{sec:polymatroids} is a quick introduction to polymatroids and
operations on them. Linear polymatroids and the Ingleton expression are
treated in Section \ref{sec:linear}. The Copy Lemma and its refinements are
discussed in Section \ref{sec:copylemma}. Section \ref{sec:MEP} describes
the Maximum Entropy Method and its limitations. The method based on the
Ahlswede-K\"orner lemma is discussed briefly in Section \ref{sec:AK}.
Section \ref{sec:conclusion} summarizes the results and collects some open
questions and research problems. Proofs of two information-theoretic
reductions, thinning and compressing a distribution, are provided in the
\bref{Appendix}.


\section{Preliminaries}\label{sec:prelim}

In this paper all sets are finite. Capital letters, such as $A$, $J$, $K$,
$N$, denote (finite) sets; elements of these sets are denoted by lowercase
letters such as $a$, $i$, $j$, etc. The union sign between sets is
frequently omitted, as well as the curly brackets around singletons. Thus,
$Nij$ denotes the set $N\cup\{i,j\}$. The difference of two sets is
written as $A\sm B$, or $A\sm b$ if the second set is a singleton. The
star in $A\cups B$ emphasizes that the arguments $A$ and $B$
are disjoint sets. A \emph{partition} of $N$ is a collection of non-empty
disjoint subsets of $N$ whose union equals $N$.

A \emph{discrete random variable $\xi$}, taking values from the finite set
$\mathcal X$ called the \emph{alphabet}, is determined by the probability
mass function $\Prob_\xi=\{\Prob_\xi(x):x\in\mathcal X\}$ specifying that
$\xi$ takes the value $x\in\mathcal X$ with probability $\Prob_\xi(x)$.
Being probabilities, we must have $0\le\Prob_\xi(x)\le 1$, and
$\sum_{x\in\mathcal X} \Prob_ \xi(x)=1$. When the random variable $\xi$ is
clear from the context, the index $\xi$ in $\Prob_\xi$ is omitted. The
\emph{support} of $\xi$ is the set of elements of $\mathcal X$ which $\xi$
can take with positive probability. $\xi$ is \emph{uniform} if it takes
every element of its alphabet with equal probability.

Suppose $\xi$ is defined on the direct product $\mathcal X=\prod\{\mathcal
X_i: i\in N\}$ for some finite index set $N$, also called \emph{base}.
For every non-empty subset $A$ of the base the \emph{marginal distribution $\xi_A$} is
defined on the partial product $\mathcal X_A=\prod\{\mathcal X_i:i\in A\}$
so that the probability of each $y\in\mathcal X_A$ is the sum of the
probabilities of those $x\in\mathcal X$ whose projection to $\mathcal X_A$
is just $y$:
$$
   \Prob_{\xi_A}(y)=\ssum \{ \Prob_\xi(x):x\restr A = y \}.
$$
In particular, for each $i\in N$ $\xi_i$ is a random variable defined on
$\mathcal X_i$, and $\xi_N$ is the same as $\xi$. To emphasize that they are
marginals, we write $\xi=\{\xi_i:i \in N\}$, and say that the variables
$\{\xi_i\}$ are \emph{jointly distributed}. Such a distribution is
\emph{quasi-uniform} if each $\xi_i$ is uniform, and the marginals $\xi_A$
are uniform on their own support for all non-singleton subsets $A$ (so
there might be elements of $\mathcal X_A$ that are taken with zero
probability, but all positive probabilities are the same).

Typical examples of quasi-uniform distributions come from groups
\cite{Chan.Yeung,yeung-book}. Let $G$ be a finite group and $G_i\subgr G$ be subgroups
of $G$ for $i\in N$. The domain of $\xi_i$ is the collection of the left
cosets $\{g G_i: g\in G\}$, and the joint distribution $\{\xi_i:i\in N\}$ is
defined as follows. Choose $g\in G$ uniformly and set $\xi_i$ to the left
coset $g G_i$. The probability of a sequence of such left cosets is the
number of group elements that produce these cosets, divided by the size of
the group $G$. For $A\subseteq N$ let $G_A=\bigcap\{G_i:i\in A\}$. As the left
cosets of $G_i$ and $G_j$ are either disjoint or have exactly $|G_{ij}| =
|G_i\cap G_j|$ common elements, the intersections of the elements of the
coset families $\{g G_i:g\in G\}$ and $\{g G_j:g\in G\}$ form a partition of
$G$ where each part has size $|G_{ij}|$. Similarly, the support of
$\xi_A$ has exactly $|G|/|G_A|$ elements, each obtained by the same number
of group elements, thus having the same probability.

\smallskip

The \emph{Shannon entropy} of $\xi$ is 
$$
\H(\xi)=\ssum \big\{\, {-}\Prob_\xi(x)\log_2 \Prob_\xi(x) :~ x\in\mathcal X
\big\},
$$ 
with the convention that $0\cdot\log_2 0 = 0$. If $\xi=\{\xi_i:i\in N\}$ is
jointly distributed on the base $N$ and $A\subseteq N$, then we write
$\H_\xi(A)$ for $\H(\xi_A)$; and even the index $\xi$ is dropped from the
notation and we write $\H(A)$ when $\xi$ is clear from the context. By
convention, $\H(\emptyset)=0$. Other information measures, called
conditional entropy, mutual information, and conditional mutual information,
are defined as follows; see \cite{yeung-book}:
\begin{align*}
   \H(A\|B) &\eqdef \H(A\cup B)-\H(B), \\
   \I(A,B) &\eqdef \H(A)+\H(B)-\H(A\cup B), \\
   \I(A,B\|C) &\eqdef \H(A\|C)+\H(B\|C)-\H(A\cup B\|C).
\end{align*}
Here $A$, $B$, $C$ are arbitrary subsets of $N$. \emph{Shannon
information inequalities} (S1) and (S2) below express, in an equivalent
form, that these information measures are non-negative (which, incidentally,
follows from the concavity of the $\log$ function, see \cite{yeung-book}).
\begin{itemize}\setlength\itemsep{0pt}
\item[(S1)] If $A\subseteq B$ then $0\le\H(A)\le \H(B)$ (monotonicity),
\item[(S2)] $\H(A)+\H(B)\ge \H(A\cap B)+\H(A\cup B)$ (submodularity).
\end{itemize}
Monotonicity in (S1) is equivalent to the non-negativity of the conditional
entropy $\H(B\|A)\ge 0$, while submodularity in (S2) is the non-negativity
of the conditional mutual information $\I(A,B\|A\cap B)\ge 0$ in a different
form. Non-negativity of mutual information is the special case of (S2)
when $A\cap B=\emptyset$. Non-negativity of $\I(A,B\|C)$ for arbitrary
subsets $A$, $B$, $C$ also follows from (S1) and (S2).

\smallskip

Any joint distribution $\xi$ on base $N$ gives rise to the marginal
entropies $\H_\xi(A)$ arranged into a vector indexed by the non-empty
subsets of $N$. The collection of these $(2^{|N|}\m-1)$-dimensional real
vectors forms the \emph{entropy region}, which is denoted by $\GaN$
\cite{yeung-book}. Elements of $\GaN$ are considered interchangeably as
vectors, as points in this high-dimensional Euclidean space, and as
functions assigning non-negative real numbers to non-empty subsets of the
base set $N$. A $(2^{|N|}\m-1)$-dimensional point (or vector, or a real
function on the non-empty subsets of $N$) is \emph{entropic} if it is in
$\GaN$, and it is \emph{almost entropic}, or \emph{aent}, if it is in $\clGa
N$, the closure of $\GaN$ in the usual Euclidean topology. In other words,
$\mathbf x$ is aent, if for each positive $\eps$ there is an entropic
$\mathbf y\in\GaN$ such that $\norm\mathbf x-\mathbf y\norm<\eps$ in the
usual Euclidean norm. The \emph{entropy profile} of the distribution $\xi$
is the vector formed from the marginal entropies of $\xi$, and the vector
$\mathbf x\in \GaN$ (or point, or function) is \emph{represented by} the
distribution $\xi$ if $\mathbf x$ is the entropy profile of $\xi$.

Notions of conditional entropy, mutual information and conditional mutual
information are formally extended to the functional form of entropic points.
If $f$ is a function on subsets of $N$ then the following notations are used
as abbreviations for arbitrary subsets $A,B,C$ of $N$:
\begin{align*}
   f(A\|B) &\eqdef f(AB)-f(B), \\
   f(A,B)  &\eqdef f(A)+f(B)-f(AB), \mbox{ and}\\
   f(A,B\|C) &\eqdef f(AC)+f(BC)-f(C)-f(ABC).
\end{align*}
Although the entropy function $f$ is not defined on the empty set (since there
is no coordinate in $\GaN$ indexed by the empty set), when convenient,
$f(\emptyset)=0$ will be assumed. In particular, $f(A,B\|\emptyset)$ is
considered the same as $f(A,B)$. If $f\in\GaN$, then each of the above
expressions is non-negative as guaranteed by the Shannon inequalities (S1)
and (S2).

\smallskip

Shannon inequalities (S1) and (S2) impose linear constraints on the
entropy region $\GaN$. The collection of vectors satisfying all Shannon
inequalities is denoted by $\GN$, clearly $\GaN\subseteq \GN$. $\GN$ is a
\emph{pointed polyhedral cone} \cite{ziegler} as it is bounded by finitely
many hyperplanes and each bounding hyperplane contains the origin. The
collection of the facets of $\GN$ is the (unique) minimal subset of
Shannon-inequalities -- the so-called \emph{basic inequalities}, see
\cite{yeung-book} -- which imply all other Shannon inequalities. This
minimal set, written in functional form, consists of the following
inequalities:
\begin{itemize}\setlength\itemsep{0pt}
\item[(B1)] \hypertarget{B1}{}$f(N\sm i)\le f(N)$ ~~for all $i\in N$, and
\item[(B2)] \hypertarget{B2}{}$f(aK)+f(bK)\ge f(K)+f(abK)$ ~~for all $K\subseteq N$ and different
$a,b\in N\sm K$ (including $K=\emptyset$ with $f(\emptyset)=0$).
\end{itemize}
It is worth pointing out that the non-negativity of $f$ follows from
(\bref{B1}) and (\bref{B2}) and needs not be specified separately.

\emph{Polymatroids} are the elements of $\GN$, written as functions that
assign
non-negative real numbers to non-empty subsets of $N$. In this sense
polymatroids are axiomatized by the (basic) Shannon inequalities (\bref{B1})
and (\bref{B2}). Polymatroids are discussed in more detail in Section
\ref{sec:polymatroids}.

A \emph{non-Shannon} (linear) \emph{entropy inequality} is a half-space in
the $(2^{|N|}\m-1)$-dimensional Euclidean space (the space of the entropy
region $\GaN$), which goes through the origin (so this half-space can be
specified by the coefficients of its normal), contains the complete entropy
region $\GaN$ on its non-negative side, and cuts into the polymatroidal
region $\GN$ (implying that it is not a consequence of the Shannon
inequalities). The intersection of all valid linear entropy inequalities is
the closure of the conic hull of $\GaN$.

A natural strategy to separate $\GN$ and $\GaN$ by such an inequality is the
following. Find some linear operation on the real vectors that
maps entropic vectors into entropic vectors but maps some polymatroid
outside the polymatroid region. If such an operation is found, the entropic
region sits inside the pullback of the polymatroid region along the inverse
of the operation. Since the operation is linear, the pullback of one of
the bounding hyperplanes provides a linear inequality that separates $\GN$ and
$\GaN$. In fact, this strategy works with the additional twist that $\GaN$ is
mapped to $\Ga M$ with a larger base set $M$.

\section{The entropy region}\label{sec:entreg}

This section explores various properties of the entropy region $\GaN$. Many
of the results and proofs in this section are from the paper \cite{M.twocon}
by F.~Mat\'u\v s. For additional material on Information Theory and on the
entropy region, consult one of the books \cite{cover-thomas} or
\cite{Csiszar.Korner} or \cite{yeung-book}.

For each real number $\alpha\ge 0$, there is a random variable $\xi$ such
that $\H(\xi)=\alpha$. In fact, the range the entropy of a random variable
defined on the alphabet $\mathcal X$ can take is the closed interval
$[0,\log_2 |\mathcal X|]$, and $\xi$ takes the upper bound only when it is
uniform on $\mathcal X$. From this it follows immediately that $\Ga 1$, the
1-dimensional entropy region for a single variable, is the non-negative real
half-line.

\subsection{Independence}

Two random variables $\xi$ and $\eta$ jointly distributed on $\mathcal
X\times\mathcal Y$ are \emph{independent} if for every $x\in\mathcal
X$ and every $y\in\mathcal Y$,
\begin{equation}\label{eq:indep}
  \Prob(\xi=x,\eta=y) = 
    \Prob(\xi=x)\cdot\Prob(\eta=y);
\end{equation}
this occurs if and only if $\H(\xi\eta) = \H(\xi)+ \H(\eta)$. If $\xi$ is
defined on $\mathcal X$ and $\eta$ is defined on $\mathcal Y$, then
\emph{joining them independently} yields the distribution $\xi\times\eta$ on
$\mathcal X\times \mathcal Y$ that has independent marginals identical to
$\xi$ and $\eta$. The probability mass of this distribution is determined
uniquely by the formula (\ref{eq:indep}). If both $\xi$ and $\eta$ are
jointly distributed on the same base set $N$, then $\xi\times\eta$ can also be
considered jointly distributed on the same base set. In this
case the marginal of $\xi\times\eta$ on $A\subseteq N$ is isomorphic to the
independent join of the marginals $\xi_A$ and $\eta_A$, therefore,
$$
   \H_{\xi\times\eta}(A)=\H_\xi(A)+\H_\eta(A)~~~ \mbox { for all }A\subseteq N.
$$
This observation proves the first part of the following claim.
\begin{claim}\label{claim:addition}
Both the entropy region $\GaN$ and the almost entropy region $\clGa N$ are
closed under addition.
\end{claim}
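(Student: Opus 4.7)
The first part of the claim is essentially already established by the paragraph immediately preceding the statement: given $\mathbf x,\mathbf y\in\GaN$ represented by jointly distributed families $\xi=\{\xi_i:i\in N\}$ and $\eta=\{\eta_i:i\in N\}$, form the independent join $\xi\times\eta$ on the same base $N$. The displayed identity $\H_{\xi\times\eta}(A)=\H_\xi(A)+\H_\eta(A)$ for every non-empty $A\subseteq N$ says exactly that the entropy profile of $\xi\times\eta$ is $\mathbf x+\mathbf y$. Hence $\mathbf x+\mathbf y\in\GaN$, so $\GaN$ is closed under addition. So the only real content left is to transfer this to the topological closure $\clGa N$.

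For that, my plan is to use the triviality that coordinate-wise addition is a (jointly) continuous operation on the ambient Euclidean space $\mathbb R^{2^{|N|}-1}$, so any set closed under addition has its topological closure also closed under addition. Concretely, let $\mathbf x,\mathbf y\in\clGa N$ and $\eps>0$. By definition of $\clGa N$ I can pick $\mathbf x',\mathbf y'\in\GaN$ with $\norm\mathbf x-\mathbf x'\norm<\eps/2$ and $\norm\mathbf y-\mathbf y'\norm<\eps/2$. By the first part, $\mathbf x'+\mathbf y'\in\GaN$, and the triangle inequality gives
$$
  \norm(\mathbf x+\mathbf y)-(\mathbf x'+\mathbf y')\norm \;\le\; \norm\mathbf x-\mathbf x'\norm+\norm\mathbf y-\mathbf y'\norm \;<\;\eps.
$$
Since $\eps>0$ was arbitrary, $\mathbf x+\mathbf y$ lies in the closure of $\GaN$, i.e.\ in $\clGa N$.

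I do not expect any genuine obstacle here; the argument is a one-liner once the first half is written down, because the only tools needed are the independent join construction (already introduced) and continuity of addition. In particular there is no need to construct a single distribution whose profile is exactly $\mathbf x+\mathbf y$ for aent points — approximation plus the $\GaN$-case suffices.
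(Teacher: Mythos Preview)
Your proof is correct and follows exactly the paper's approach: the first part is the independent-join construction already established in the preceding paragraph, and the second part is the continuity of vector addition. The paper's own proof is the single sentence ``the other part follows from this part and the continuity of vector addition''; your $\eps/2$ argument is just an explicit spelling-out of that sentence.
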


\noindent
Using Minkowski sum notation \cite{ziegler} and the fact that the all-zero vector
is clearly in $\GaN$, this claim can be expressed equivalently as $\GaN + 
\GaN = \GaN$, and $\clGa N + \clGa N = \clGa N$.

\begin{proof}
The claim for $\GaN$ was proved above. The other part follows from this part
and the continuity of vector addition.
\end{proof}

If $\xi$ is multiplied by itself $n$ times, the resulting distribution is
denoted by $\xi^n$, and is referred to as \emph{$n$ i.i.d.} (independent and
identically distributed) copies of $\xi$. Sometimes it is also called the
\emph{tensorization} of $\xi$. If $\xi=\{\xi_i:i\in N\}$ is jointly
distributed on $\mathcal X = \prod_{i\in N} \mathcal X_i$, then $\xi^n$ is a
distribution on $\mathcal X^n$, which is isomorphic to (and frequently
identified with) $\prod_{i\in N} \mathcal X^n_i$.
Therefore, $\xi^n=\{\xi^n_i:
i\in N\}$ is also jointly distributed on $N$; moreover the entropy profile
of $\xi^n$ is the entropy profile of $\xi$ multiplied by $n$.

If $\xi$ and $\eta$ are quasi-uniform, then $\xi\times\eta$ is
quasi-uniform as well. Additionally, if both $\xi$ and $\eta$ are
group-generated by the subgroups $G_i\subgr G$ and $G'_i\subgr G'$,
respectively, then $\xi\times\eta$ is group-generated by the subgroups
$G_i\times G'_i \subgr G\times G'$, where $G_i\times G'_i$ is the usual
group direct product. Claim \ref{claim:qu} below summarizes these facts.

\begin{claim}\label{claim:qu}
Suppose $\xi$ is jointly distributed on $N$, and $\xi^n$ is $n$
i.i.d.~copies of $\xi$.
\begin{itemize}\setlength\itemsep{0pt}
\item[\upshape(i)] For all $A\subseteq N$, $\H_{\xi^n}(A) = n\cdot\H_\xi(A)$.

\item[\upshape(ii)] If $\xi$ is quasi-uniform (or if it is group-generated),
then so is $\xi^n$.
\end{itemize}
\end{claim}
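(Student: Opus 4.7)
My plan for part~(i) is to reduce it to the observation made just above Claim~\ref{claim:addition}, where it is shown that $\H_{\xi\times\eta}(A)=\H_\xi(A)+\H_\eta(A)$ for any two distributions on base $N$ and any $A\subseteq N$. Writing $\xi^{n+1}=\xi^n\times\xi$ and inducting on $n$ then yields $\H_{\xi^n}(A)=n\cdot\H_\xi(A)$ immediately.

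For part~(ii) I would show that the class of quasi-uniform distributions is closed under the binary product $\times$, whereupon $n$-fold iteration gives the claim for $\xi^n$. Suppose $\xi$ and $\eta$ are quasi-uniform on base $N$ and fix $A\subseteq N$. The first step is to note that $(\xi\times\eta)_A$ is isomorphic to $\xi_A\times\eta_A$; this follows from the product form of the joint probability mass function. For a singleton $A=\{i\}$, both $\xi_i$ and $\eta_i$ are uniform on their full alphabets, so their independent join is uniform on the product alphabet. For $|A|\ge 2$, both $\xi_A$ and $\eta_A$ are uniform on their respective supports, so every element of the product support receives the (constant) product of the two marginal probabilities, while values outside the product support receive $0$; this is precisely the quasi-uniformity condition for $(\xi\times\eta)_A$.

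The group-generated case is handled by the same strategy of proving closure under $\times$. If $\xi$ is produced by the subgroups $G_i\subgr G$ and $\eta$ by $G_i'\subgr G'$, then picking $(g,g')\in G\times G'$ uniformly and reading off the pair $(gG_i,g'G_i')$ is equivalent to reading off the left coset $(g,g')(G_i\times G_i')$ in $G\times G'$; hence $\xi\times\eta$ is group-generated by $G_i\times G_i'\subgr G\times G'$, and iterating $n$ times shows $\xi^n$ is group-generated by $G_i^{\,n}\subgr G^n$. None of these steps poses a real obstacle: the entire argument rests on the routine identity $(\xi\times\eta)_A\cong\xi_A\times\eta_A$ together with the definitional facts that products of uniform distributions are uniform on the product support, and direct products of cosets are cosets in the direct product group.
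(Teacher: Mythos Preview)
Your proposal is correct and follows exactly the approach the paper takes: the paper does not give a formal proof of Claim~\ref{claim:qu} but instead states in the paragraphs immediately preceding it that $(\xi\times\eta)_A\cong\xi_A\times\eta_A$ (hence $\H_{\xi\times\eta}(A)=\H_\xi(A)+\H_\eta(A)$), that products of quasi-uniform distributions are quasi-uniform, and that products of group-generated distributions are group-generated by the product subgroups, then says the claim ``summarizes these facts.'' Your induction on $n$ via $\xi^{n+1}=\xi^n\times\xi$ is the intended way to pass from these binary closure properties to the $n$-fold statement.
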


In Claim \ref{claim:timsum}, functional notation is used: $f(A)$
denotes the coordinate of the vector $f$ indexed by $A$.

\begin{claim}\label{claim:timsum}
Suppose $N$ and $M$ are disjoint base sets, $f\in\GaN$ and $g\in\Ga M$.
Then the \emph{direct sum} $f\oplus g$ is also entropic, where $f\oplus g$
is defined on non-empty subsets of the disjoint union $N\cup M$ as
$$
   f\oplus g: A\mapsto f(A\cap N) + g(A\cap M) ~~~A\subseteq N\cup M.
$$
Additionally, if $f,g\in\clGa N$, then $f\oplus g\in\clGa {N\cup M}$.
\end{claim}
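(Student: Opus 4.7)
The plan is to mirror the argument used for Claim~\ref{claim:addition}, but instead of joining distributions on the \emph{same} base independently, we join them on \emph{disjoint} bases. Since $N$ and $M$ share no indices, there is no nontrivial joint marginal to analyze: every marginal of the resulting distribution factors as a product of a marginal of $\xi$ and a marginal of $\eta$.

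Concretely, first I would pick a distribution $\xi=\{\xi_i:i\in N\}$ on $\mathcal X=\prod_{i\in N}\mathcal X_i$ representing $f$ and a distribution $\eta=\{\eta_j:j\in M\}$ on $\mathcal Y=\prod_{j\in M}\mathcal Y_j$ representing $g$. Define a joint distribution $\zeta$ on the base $N\cup M$ over the alphabet $\mathcal X\times\mathcal Y$ by the product mass function
$$
\Prob_\zeta(x,y)=\Prob_\xi(x)\cdot\Prob_\eta(y),
\qquad x\in\mathcal X,\ y\in\mathcal Y.
$$
For a non-empty $A\subseteq N\cup M$, write $A_N=A\cap N$ and $A_M=A\cap M$. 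Projecting to the coordinates in $A$ factors through the projections to $A_N$ on the $\mathcal X$-side and to $A_M$ on the $\mathcal Y$-side, so the marginal $\zeta_A$ is the independent join of $\xi_{A_N}$ and $\eta_{A_M}$ (one of these factors being trivial when $A_N=\emptyset$ or $A_M=\emptyset$). The additivity of entropy under independent join, together with the convention $f(\emptyset)=g(\emptyset)=0$, then yields
$$
\H_\zeta(A)=\H_\xi(A_N)+\H_\eta(A_M)=f(A\cap N)+g(A\cap M)=(f\oplus g)(A),
$$
which is exactly what is needed to show $f\oplus g\in\Ga{N\cup M}$.

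For the almost entropic part I would use continuity: the operation $(f,g)\mapsto f\oplus g$ is linear in each coordinate of $f$ and $g$ (it is just a rearrangement of their entries followed by coordinate-wise sum), hence continuous. Given $f\in\clGa N$ and $g\in\clGa M$, pick sequences $f_n\in\Ga N$ and $g_n\in\Ga M$ with $f_n\to f$ and $g_n\to g$; the first part gives $f_n\oplus g_n\in\Ga{N\cup M}$, and continuity gives $f_n\oplus g_n\to f\oplus g$, placing $f\oplus g$ in the closure $\clGa{N\cup M}$.

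There is no real obstacle here; the only minor point to handle carefully is the bookkeeping when $A$ meets only one of the two bases, which is cleanly absorbed by the $f(\emptyset)=0$ convention already fixed in the preliminaries.
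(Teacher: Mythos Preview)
Your proposal is correct and matches the paper's own proof essentially line for line: pick representing distributions, form their independent join over the disjoint base $N\cup M$, read off that every marginal factors so the entropy profile is $f\oplus g$, and then invoke continuity for the closure statement. The paper's argument is terser, but there is no substantive difference in approach.
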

\begin{proof}

Choose the distributions $\xi$ and $\eta$ with the entropy profiles $f$ and
$g$, respectively. The independent join $\xi\times\eta$ can be considered to
produce jointly distributed variables on the base set $N\cup M$. The
function $f\oplus g$ expresses that the marginals of $\xi\times\eta$ are
just $\xi$ and $\eta$, and that they are independent.
The claim about closure follows from the fact that the operation
$f\oplus g$ is continuous in $f$ and $g$.
\end{proof}

\subsection{Restricting, factoring, duplicating}\label{subsec:rest-fact-dupl}

\emph{Restricting} a distribution to a subset of variables means
discarding the rest and replacing the probabilities with their marginal values.
\emph{Factoring} is the process when a group of variables is taken as (or
grouped into) a single variable. Finally \emph{duplicating} means adding an
identical copy of one of the variables so that the duplicate always takes
the same value as the original.

These operations on distributions have their corresponding operations on the
entropy region since the entropy of the result depends only on the entropy
of the operands and not on the particular distribution. If $f$ is the
entropy profile of a distribution $\xi$ on $N$ and the distribution is
restricted to the smaller base $M\subseteq N$, then the entropy profile of
the restricted distribution will be denoted by $f\restr M$. Clearly,
$(f\restr M)(A)=f(A)$ for all $A\subseteq M$, and this value does not depend
on the particular distribution $\xi$. Using $f\restr M$ is a slight abuse of
notation as $f$ is restricted not to $M$ but to the (non-empty) subsets of
$M$. The restriction operation is also known as \emph{deleting} the
complement of $M$, and the corresponding notation is $f\del (N\sm M)$.

\emph{Factoring} requires an equivalence relation $\sim$ on $N$; this
relation defines the groups of variables that are to be merged. When lifted
to the entropy profile $f$, the factored profile, denoted $f/{\sim}$, is
defined on subsets of the equivalence classes, and on a subset of
equivalence classes it takes the value that $f$ would do on their union.
Again, the result clearly depends on the profile only and not on the
representing distribution.

\emph{Duplicating at index $a\in N$} is the result of adding a ``twin'' of $a$
to $N$, named $a'$, and extending the base set to $N\cup\{a'\}$. If the
original entropy profile is $f$ and the extended entropy profile is
$f^\sharp_a$, then, as $a$ and $a'$ always take the same value,
$$\begin{array}{l@{\,=\,}l}
   f^\sharp_a(A) & f(A),\\[2pt]
   f^\sharp_a(a'A) & f(aA)
\end{array} ~~~~ \mbox{ for all subsets $A\subseteq N$,}
$$
independently of the representing distribution.
In particular, $f^\sharp_a(a)=f^\sharp_a(a')=f^\sharp_a(aa')=f(a)$.

\begin{claim}\label{claim:Gaoper1}
{\upshape(i)} \hangindent\parindent\hangafter1
Suppose $M$ is a subset of $N$. If $f\in\GaN$, then $f\restr
M\in \Ga M$. Similarly, if $f\in\clGa N$, then $f\restr M \in \clGa M$.

\noindent{\upshape(ii)}
Suppose $\sim$ is an equivalence relation on $N$; let $M = N/{\sim}$ be
the set of equivalence classes. If $f\in\GaN$, then $f/{\sim}\in\Ga M$.
Similarly, if $f\in\clGa N$, then $f/{\sim}\in\clGa M$.

\noindent{\upshape(iii)}
Let $f\in\GaN$, $a\in N$, and $f^\sharp_a$ be the function on $M=N\cup\{a'\}$ 
which duplicates $f$ at $a$. In this case $f^\sharp_a\in\Ga M$. If $f\in\clGa N$, then
$f^\sharp_a\in\clGa M$.
\end{claim}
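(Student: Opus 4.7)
The plan is to handle the three parts uniformly: in each case, starting from a distribution $\xi$ on $N$ that represents $f$, I will exhibit an explicit distribution on the new base set whose entropy profile matches the function described, thereby establishing the $\GaN$-version of the claim. The closure statements will then follow from the common observation that in all three cases the operation $f\mapsto f\restr M$, $f\mapsto f/{\sim}$, $f\mapsto f^\sharp_a$ is a \emph{coordinate projection/relabeling}, hence linear, hence continuous; so if $f_n\in\GaN$ converges to $f\in\clGa N$, the image sequence lies in the appropriate entropy region and converges to the image of $f$, placing it in the closure.

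For part (i), given $\xi=\{\xi_i:i\in N\}$ representing $f$, I would simply take the subfamily $\xi\restr M=\{\xi_i:i\in M\}$, which is jointly distributed on $M$. Its marginal on any $A\subseteq M$ coincides with the marginal $\xi_A$ of the original distribution, so $\H_{\xi\restr M}(A)=\H_\xi(A)=f(A)=(f\restr M)(A)$, proving that $f\restr M\in\Ga M$. For part (ii), I would group the variables according to $\sim$: for each equivalence class $C\in M=N/{\sim}$, define a single random variable $\eta_C$ taking values in $\prod_{i\in C}\mathcal X_i$ as the joint marginal $\xi_C$. The resulting family $\{\eta_C:C\in M\}$ is jointly distributed on the base $M$, and for any $S\subseteq M$ the marginal $\eta_S$ equals $\xi_{\bigcup S}$, so its entropy is $f(\bigcup S)=(f/{\sim})(S)$.

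For part (iii), I would enlarge the base by adjoining a formal twin $a'$ and defining $\xi^\sharp_a=\{\xi_i:i\in N\}\cup\{\xi_{a'}\}$ where $\xi_{a'}$ is declared \emph{equal} to $\xi_a$ as a random variable (i.e.\ they share outcomes on a common probability space, with joint mass concentrated on the diagonal of $\mathcal X_a\times\mathcal X_a$). Then for $A\subseteq N$ the marginal on $A$ is unchanged, giving $f^\sharp_a(A)=f(A)$, whereas for a set $a'A$ containing the twin, the marginal is isomorphic to that of $\xi_{aA}$ since $\xi_{a'}$ adds no new information once $\xi_a$ is present, yielding $f^\sharp_a(a'A)=f(aA)$. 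This matches the definition of $f^\sharp_a$ exactly.

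There is no genuine obstacle here; the only thing that requires a moment's care is making sure the duplication construction is probabilistically legitimate, which it is because the joint distribution of $\xi$ together with $\xi_{a'}:=\xi_a$ is well-defined (put all mass on the diagonal in the $a,a'$ coordinates), and because the resulting entropy profile depends only on $f$ and not on the chosen representative $\xi$. Once each operation is exhibited as a linear map on the $(2^{|N|}\m-1)$- or $(2^{|M|}\m-1)$-dimensional coordinate space, the closure assertions are immediate from continuity, as already used in the proofs of Claims \ref{claim:addition} and \ref{claim:timsum}.
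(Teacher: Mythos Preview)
Your proposal is correct and follows essentially the same approach as the paper: the paper's proof simply refers back to the preceding discussion (which describes exactly the constructions you spell out --- taking the marginal subfamily, grouping variables by equivalence class, and adjoining an identical copy of $\xi_a$) and then invokes continuity for the closure statements. You have merely made explicit what the paper leaves implicit in ``the discussion above.''
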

\begin{proof}
Claims regarding the entropy region follow from the discussion above. The claims
are also true for the aent region as each operation is continuous.
\end{proof}

\subsection{The aent region is a closed cone}\label{subsec:aent-cone}

For $J\subseteq N$ let $\r_J$ be the indicator vector of the
relation $J\cap A\ne\emptyset$, that is,
$$
   \r_J(A) = \begin{cases}
                1 & \mbox{if $J\cap A\ne\emptyset$,}\\
                0 & \mbox{otherwise.}
             \end{cases}
$$
Notice that $\r_{\emptyset}$ is the all-zero vector.
\begin{lemma}\label{lemma:1}
{\upshape(i)} \hangindent\parindent\hangafter1
$\lambda\, \r_J$ is entropic for each $\lambda\ge 0$.

\noindent{\upshape(ii)} The vectors $\{\r_J : \emptyset\ne J\subseteq N\}$ are linearly
independent, thus span the complete $(2^{|N|}\m-1)$-dimensional Euclidean
space.
\end{lemma}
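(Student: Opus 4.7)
The plan is to construct, for each $J\subseteq N$ and $\lambda\ge 0$, a joint distribution whose entropy profile is exactly $\lambda\r_J$. Pick any single random variable $\zeta$ with $\H(\zeta)=\lambda$; such a $\zeta$ exists because, by the opening remark of Section~\ref{sec:entreg}, the entropy of a variable on a sufficiently large alphabet takes every value in $[0,\log_2|\mathcal X|]$. Now define $\xi=\{\xi_i:i\in N\}$ on the base $N$ by letting $\xi_i$ be the very same copy of $\zeta$ whenever $i\in J$, and a deterministic constant whenever $i\notin J$. For every non-empty $A\subseteq N$ the marginal $\xi_A$ consists of $|A\cap J|$ perfectly correlated copies of $\zeta$ adjoined with constants, so $\H(\xi_A)=\H(\zeta)=\lambda$ when $A\cap J\ne\emptyset$ and $\H(\xi_A)=0$ otherwise; this is precisely $\lambda\r_J(A)$. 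The corner case $J=\emptyset$ is covered because then all the $\xi_i$ are constants and the profile is the zero vector $\lambda\r_\emptyset=0$.

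\textbf{Plan for part (ii).} For linear independence I would look at the homogeneous system $\sum_{\emptyset\ne J\subseteq N}c_J\r_J=0$ and solve it directly. Using $\r_J(A)=1-[J\subseteq N\setm A]$ and evaluating at a non-empty $A$ gives
$$
  C \;=\; \sum_{\emptyset\ne J\subseteq N\setm A} c_J, \qquad C\eqdef\sum_{\emptyset\ne J\subseteq N} c_J.
$$
Introducing $g(S)\eqdef\sum_{\emptyset\ne J\subseteq S}c_J$ with $g(\emptyset)=0$, this reads $g(N\setm A)=C$ for every non-empty $A\subseteq N$; i.e.\ $g$ equals $C$ on every proper subset of $N$. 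Plugging in $A=N$ forces $0=g(\emptyset)=C$, and then $g\equiv 0$ on the whole Boolean lattice $2^N$. Möbius inversion now yields $c_J=0$ for every $\emptyset\ne J\subseteq N$. Since there are $2^{|N|}-1$ such vectors $\r_J$ in a space of that dimension, they also span it.

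\textbf{Main obstacle.} There is no real obstacle in either part; the work is almost entirely bookkeeping. In (i) the only subtlety is invoking the fact that entropy ranges over $[0,\log_2|\mathcal X|]$ so that a $\zeta$ of any prescribed entropy exists. In (ii) the minor nuisance is to keep track that the index $J$, like the evaluation set $A$, is constrained to be non-empty; once the empty set is formally assigned coefficient $0$ the classical Möbius inversion applies verbatim. If one prefers to avoid Möbius inversion altogether, one may change basis to $\{\r_N\}\cup\{\r_N-\r_J: J\ne\emptyset,\,J\ne N\}$; the vectors $\r_N-\r_J$ are the characteristic functions $A\mapsto[A\subseteq N\setm J]$, which together with $\r_N$ form a triangular system under set inclusion and are therefore manifestly independent.
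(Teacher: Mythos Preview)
Your proof of part (i) is exactly the paper's construction. For part (ii) the paper goes the other way around: it writes each unit vector $e_J$ explicitly as a linear combination of the $\r_K$'s via the identity $\r_N-\r_{N\sm J}=\sum_{A\subseteq J}e_A$ and M\"obius inversion, concluding spanning first and then independence by a dimension count; you prove independence first by solving the homogeneous system (again via M\"obius inversion) and then span by counting. The two arguments are dual to each other and equally short; your alternative ``triangular system'' remark at the end is in fact the paper's identity with $J$ replaced by $N\setm J$.
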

\begin{proof}
(i) Take a random variable $\xi$ on $\mathcal X$ with $\H(\xi)=\lambda$, and
fix some $x_0\in\mathcal X$. Define the jointly distributed variables $\{\xi_i:i
\in N\}$ as follows:
$$
    \xi_i \eqdef \begin{cases} \xi &\mbox{if $i\in J$,}\\
                          x_0 &\mbox{otherwise}.
            \end{cases}
$$
Then $\xi_A$ is identical to $\xi$ if $A\cap J\ne\emptyset$, thus has
entropy $\lambda$. If $A\cap J=\emptyset$, then $\xi_A$ is identically
$x_0$, thus has entropy $0$.

\smallskip\noindent (ii)
This statement can be proved, for example, by induction on the size of the
base set $N$. Another possibility is to observe that
$$
   \r_N - \r_{N\sm J} = \sum_{A\subseteq J} e_A
$$
where $e_A$ is the unit vector with 1 at coordinate $A$ and zero
everywhere else. The M\"obius inversion formula \cite{mobius.inv} applied to
the Boolean lattice gives
$$
   e_J = \sum_{A\subseteq J} (-1)^{|J\sm A|}\,(\r_N-\r_{N\sm A})
$$
for each $J\subseteq N$, 
showing that each unit vector is a linear combination of the vectors $\r_J$.
Since their number equals the dimension of the space, they must also be
linearly independent.
\end{proof}

\begin{lemma}\label{lemma:2}
Suppose $\lambda\ge 0$ and $\eps>0$. If $\mathbf x$ is entropic, then so is
$\lambda\mathbf x+\eps\tsp\r_N$. That is, $\mathbf x\in\GaN$ implies
$\lambda\mathbf x + \eps\,\r_N\in\GaN$.
\end{lemma}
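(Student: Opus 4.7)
The plan is to realize $\lambda\mathbf x + \eps\,\r_N$ as the entropy profile of a Bernoulli mixture. Tensorization freely scales $\mathbf x$ by any integer $n$ via Claim \ref{claim:qu}(i), while mixing $\xi^n$ with a carefully chosen ``background'' distribution using a Bernoulli weight $q := \lambda/n$ performs both the fractional-scaling adjustment and injects a term proportional to $\r_N$: the selector itself carries entropy $h(q) := -q\log_2 q - (1-q)\log_2(1-q)$ that shows up uniformly in every non-empty marginal. Taking $n$ large makes $h(q)$ as small as needed, and any slack $\eps - h(q)$ is supplied by the background.

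The case $\lambda = 0$ is immediate: Lemma \ref{lemma:1}(i) with $J = N$ and scale $\eps$ puts $\eps\,\r_N$ into $\GaN$. For $\lambda > 0$, let $\xi = \{\xi_i : i \in N\}$ represent $\mathbf x$ on alphabets $\mathcal X_i$. Pick an integer $n$ large enough that $q := \lambda/n \in (0,1)$ and $h(q) \le \eps$, and set $c := (\eps - h(q))/(1-q) \ge 0$. Apply Lemma \ref{lemma:1}(i) once more (with $J = N$ and scale $c$) to obtain a distribution $\eta = \{\eta_i : i \in N\}$ whose entropy profile equals $c\,\r_N$; by relabelling, assume that for each $i \in N$ the alphabet of $\eta_i$ is disjoint from that of $\xi^n_i$.

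Now let $U$ be an independent Bernoulli variable with $\Prob(U{=}1) = q$ and define $\zeta_i := \xi^n_i$ when $U = 1$ and $\zeta_i := \eta_i$ when $U = 0$. The disjoint-alphabet arrangement makes $U$ a deterministic function of $\zeta_A$ for every non-empty $A \subseteq N$, so
\begin{equation*}
\H_\zeta(A) = \H(U) + q\,\H_{\xi^n}(A) + (1-q)\,\H_\eta(A) = h(q) + qn\,\mathbf x(A) + (1-q)c = \lambda\,\mathbf x(A) + \eps,
\end{equation*}
invoking Claim \ref{claim:qu}(i) for the middle equality. Thus $\zeta$ represents $\lambda\mathbf x + \eps\,\r_N$, which therefore lies in $\GaN$.

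The only substantive point is the constraint $h(q) \le \eps$: the Bernoulli overhead must fit inside the prescribed budget. Since $h(q) \to 0$ as $n \to \infty$, this is always achievable, and the remainder $\eps - h(q)$ is absorbed into $c$ via the background $\eta$. Every positive $\eps$ can therefore be matched exactly, and no other step in the construction is more than routine bookkeeping.
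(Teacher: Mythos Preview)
Your proof is correct and follows essentially the same strategy as the paper: tensorize to $\xi^n$, form a Bernoulli mixture with weight $q=\lambda/n$ so that the selector contributes $h(q)\,\r_N$, and use $h(q)\to 0$ to fit this overhead within $\eps$. The only cosmetic difference is that the paper mixes with the constant background $z$ and then adds the slack $(\eps-h(q))\,\r_N$ separately via Claim~\ref{claim:addition}, whereas you fold the slack directly into the background distribution $\eta$ and thereby avoid invoking closure under addition.
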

\begin{proof}

Suppose first that $\xi=\{\xi_i\}$ is jointly distributed on the alphabet
$\prod \{\mathcal X_i:i\in N\}$, $\mathcal Y_i=\mathcal X_i\cup\{z\}$ where
$z$ is a new element shared by all $\mathcal Y_i$, and $0\le p<1$. Define
the variables $\eta=\{\eta_i\}$ jointly distributed on $\prod \{\mathcal
Y_i:i\in N\}$ as follows. Let $\eta=\xi$ with probability $p$, and let all
$\eta_i=z$ with probability $(1-p)$. For each $A\subseteq N$ we have
$$
  \H_\eta(A) = \big( p\cdot\H_\xi(A) - p \log_2 p \big)
           - (1-p)\log_2(1-p),
$$
where the first term comes from the cases when $\eta=\xi$ as every
probability there is multiplied by $p$, and the second term is the case when 
all $\eta_i$ are equal to $z$. This construction proves that if $\mathbf x$
is entropic, then so is $p\cdot \mathbf x + h(p)\cdot\r_N$ where
$$
    h(p)=-p\log_2p - (1-p)\log_2(1-p)
$$
is the binary entropy function. Since $h(p)$ is non-negative, continuous
and $h(0)=0$,
there is an integer $n>\lambda$ such that $h(\lambda/n)<\eps$. Let
$p=\lambda/n$, then
$$
  \lambda\mathbf x+\eps\tsp\r_N=\big(p\cdot(n\tsp\mathbf x)
      + h(p)\cdot\r_N \big)+ (\eps-h(p))\r_N.
$$
Since $\GaN$ is closed for addition by Claim \ref{claim:addition},
$n\tsp\mathbf x\in\GaN$, thus, by the remark above, the first term on the
right hand side is entropic as well. Lemma \ref{lemma:1} (i) and
$\eps-h(p)>0$ implies that the second term is also in $\GaN$, proving the
lemma.
\end{proof}

Lemma \ref{lemma:2} does not remain true for $\eps=0$. For a counterexample take the
following distribution on three variables $N=\{a,b,c\}$. Fix the integer $n\ge
2$. Each variable takes values from zero to $n-1$ uniformly so that the sum
$a+b+c$ is divisible by $n$. Then any two of the variables are independent, and any
two determine the third. Since the entropy of the uniform distribution on
$n$ elements is $\log_2 n$, entropies of this distribution are
\begin{align*}
   & \H(a)=\H(b)=\H(c)=\log_2 n, \\
   & \H(ab)=\H(ac)=\H(bc)=\H(abc)=2\log_2 n.
\end{align*}
Thus the entropy profile of this distribution is $(\log_2 n)\mathbf u$,
where
\begin{equation}\label{eq:u}
\mathbf u(J)=\min\{|J|,2\} ~~~ \mbox{ for $J\subseteq N$}.
\end{equation}

\begin{claim}\label{claim:u}
If $\lambda$ is not of the form $\log_2 n$ for some integer $n\ge 2$, then
$\lambda\tsp\mathbf u$ is not entropic.
\end{claim}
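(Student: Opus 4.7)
The plan is to deduce that any joint distribution $\xi=\{\xi_a,\xi_b,\xi_c\}$ representing $\lambda\mathbf u$ must have uniform marginals on equally-sized supports, from which $\lambda=\log_2 n$ with $n$ the common support size. First I would unpack the structural content of the profile $\lambda\mathbf u$: the identity $\H(ab)=\H(a)+\H(b)$ and its two permutations give pairwise independence of the three variables, while $\H(abc)=\H(ab)=\H(ac)=\H(bc)$ say that each of the three variables is a deterministic function of the other two.

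Next I would analyse the joint support $S$ of $\xi$. Pairwise independence of $\xi_a$ and $\xi_b$ forces the projection of $S$ to $(a,b)$ to be the full product of $\mathrm{supp}(\xi_a)$ and $\mathrm{supp}(\xi_b)$. Combined with $\xi_c=f(\xi_a,\xi_b)$, the support $S$ is exactly the graph of a function $f$ defined on that whole product. For each fixed $x$, the map $y\mapsto f(x,y)$ must be injective (otherwise two triples $(x,y_1,z)$ and $(x,y_2,z)$ would lie in $S$, contradicting that $\xi_b$ is determined by $(\xi_a,\xi_c)$) and surjective onto $\mathrm{supp}(\xi_c)$ (because $\xi_a\perp\xi_c$ forces every $(x,z)$ to appear in the projection of $S$). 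Hence $|\mathrm{supp}(\xi_a)|=|\mathrm{supp}(\xi_b)|=|\mathrm{supp}(\xi_c)|=n$ for a common $n$, and $f$ is a quasigroup-style operation on $n$-element sets.

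The central computation is showing that the three marginals are uniform. Writing $p_x,q_y,r_z$ for the marginal probabilities, the support description together with $\xi_a\perp\xi_b$ gives $\Prob(\xi_a=x,\xi_b=y,\xi_c=f(x,y))=p_xq_y$. Marginalising to $(\xi_a,\xi_c)$ and comparing with $\xi_a\perp\xi_c$ yields $r_{f(x,y)}=q_y$; the symmetric argument with $\xi_b\perp\xi_c$ yields $r_{f(x,y)}=p_x$. Since $f$ is defined on the full product, $p_x=q_y$ for every admissible pair, so all $p_x$ are equal, each is $1/n$, and likewise for the $q_y$ and $r_z$. Consequently $\lambda=\H(\xi_a)=\log_2 n$, and $\lambda>0$ forces $n\ge 2$.

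The main obstacle I expect is exactly that last step: the two multiset equalities $\{r_z\}=\{q_y\}$ and $\{r_z\}=\{p_x\}$ by themselves only tell us that the three marginals are permutations of one another, not that they are uniform. What rescues the argument is that the \emph{same} index $z=f(x,y)$ appears on the left of both pointwise identities $r_{f(x,y)}=q_y$ and $r_{f(x,y)}=p_x$, coupling the two and producing $p_x=q_y$ across the entire Cartesian product of supports, from which uniformity follows immediately.
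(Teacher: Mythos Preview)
Your proof is correct and follows essentially the same approach as the paper's: both arguments exploit pairwise independence together with mutual determinism to show that the joint probability $\Prob(x,y,z)$ can be expressed in two ways (you get $p_xq_y=p_xr_z$ and $p_xq_y=q_yr_z$; the paper gets $\Prob(a,b,c)=\Prob(a)\Prob(c)=\Prob(b)\Prob(c)$), forcing all marginal weights to coincide. The only cosmetic difference is packaging---you phrase it via the quasigroup operation $f$, the paper via a matrix of lists---but the logical content is the same.
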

\begin{proof}

Take a distribution on $N=\{a,b,c\}$ with entropy profile $\lambda\tsp\mathbf u$. Then
any two of the variables $a$, $b$, $c$ are independent and any two determine
the third one. Denote the alphabets of the variables by $\mathcal A$,
$\mathcal B$, and $\mathcal C$, respectively, keeping only those elements
that occur with positive probability. Create a matrix $\mathbb M$ with rows
indexed by $\mathcal A\cup\{{*}\}$ and columns indexed by $\mathcal
B\cup\{{*}\}$. The matrix entry $\mathbb M[a,b]$ in the row marked by $a$ and
in the column marked by $b$ is the list
$$
  \langle \Prob(a,b,c): c\in \mathcal C \rangle,
$$ 
where $*$ is interpreted as taking the marginal. For example, $\mathbb
M[{*},{*}]$ is the marginal distribution of $c$. Since $b$ and $c$ are
independent, $\mathbb M[{*},b]$ equals $\mathbb M[{*},{*}]$ multiplied by
$\Prob(b)$, and $\mathbb M[a,{*}]$ is the multiple of $\mathbb M[{*},{*}]$
by $\Prob(a)$. In addition, the sum of the probabilities in the list $\mathbb M[a,b]$ is
$\Prob(a,b)$, which is not zero since $a$ and $b$ are independent.

By assumption, $a$ and $b$ together determine the value of $c$. This means
that the list at $\mathbb M[a,b]$ contains exactly one non-zero probability.
Furthermore, since $a$ and $c$ together determine the value of $b$, the same
non-zero position $c$ cannot occur twice in the row indexed by $a$. The sum
of lists in row $a$ is the list at $\mathbb M[a,{*}]$ with all non-zero
probabilities, thus in row $a$ all non-zero positions occur exactly once. By
symmetry, the same holds for each column $b$.

Suppose that the only non-zero entry in $\mathbb M[a,b]$ is labeled by $c$. It is
the same value that appears in $\mathbb M[a,{*}]$ at position $c$; thus it
is $\Prob(a)\Prob(c)$. Similarly, the same number occurs at position $c$ in
$\mathbb M[{*},b]$, thus it also equals $\Prob(b)\Prob(c)$. Therefore
$\Prob(a)=\Prob(b)$ for any $a$ and $b$, which means that all these marginal
probabilities are the same, that is, $a$ takes every value with the same
probability. Thus $\H(a)=\log_2 |\mathcal A|$, which is a logarithm of an
integer number.
\end{proof}

\begin{theorem}\label{claim:3}
$\clGa N$, the closure of $\GaN$, is a full-dimensional convex cone.
Furthermore, internal points of $\clGa N$ are entropic.
\end{theorem}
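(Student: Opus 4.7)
The plan is to prove three things: $\clGa N$ is a convex cone, it is full-dimensional, and its interior lies inside $\GaN$. For the cone part, closure under addition is given by Claim~\ref{claim:addition}. For closure under non-negative scalar multiplication, given $\mathbf x \in \clGa N$ approximated by $\mathbf y_n \in \GaN$ and $\lambda \ge 0$, Lemma~\ref{lemma:2} provides $\lambda \mathbf y_n + (1/n)\,\r_N \in \GaN$, which converges to $\lambda\mathbf x$, so $\lambda\mathbf x \in \clGa N$; together with addition this makes the cone convex. Full-dimensionality is immediate from Lemma~\ref{lemma:1}: the vectors $\r_J$ for $\emptyset \ne J \subseteq N$ lie in $\GaN$ and form a basis of the ambient space, so their positive combinations give a full-dimensional open subcone of $\clGa N$.

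The delicate part is showing that internal points of $\clGa N$ belong to $\GaN$. Given such an $\mathbf x$, I would pick $\eps > 0$ with $\mathbf x - \eps\,\r_N \in \clGa N$ and aim to realize $\mathbf x$ as the entropy profile of a finite mixture, generalizing the ``shared dummy state'' construction used in the proof of Lemma~\ref{lemma:2}. Specifically, mixing entropic distributions $\xi_1,\ldots,\xi_k$ on pairwise disjoint alphabets with probabilities $p_1,\ldots,p_k$ produces a distribution whose entropy profile is
$$
   \sum_i p_i\,\mathbf v_i + \H(\mathbf p)\,\r_N,
$$
where $\mathbf v_i$ is the entropy profile of $\xi_i$ and $\H(\mathbf p)=-\sum_i p_i\log_2 p_i$. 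I would then choose $\mathbf p$ and $\mathbf v_i \in \GaN$ satisfying $\H(\mathbf p)=\eps$ and $\sum_i p_i\,\mathbf v_i = \mathbf x - \eps\,\r_N$, which realizes $\mathbf x$ exactly. Since $\mathbf x - \eps\,\r_N$ is internal to $\clGa N$, the density of $\GaN$ in $\clGa N$ combined with the explicit basis $\{\r_J\}\subseteq\GaN$ lets me surround it with a full-dimensional simplex having entropic vertices, placing $\mathbf x - \eps\,\r_N$ in the convex hull of $\GaN$.

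The main obstacle will be matching the two conditions simultaneously: the entropy constraint $\H(\mathbf p)=\eps$ pins down admissible weights, while the centroid constraint $\sum_i p_i\,\mathbf v_i = \mathbf x - \eps\,\r_N$ pins down the entropic vertices. My plan is to exploit the remaining flexibility: for each admissible $\mathbf p$, the set of achievable averages $\sum_i p_i\,\mathbf v_i$ as the $\mathbf v_i$ vary over $\GaN$ is rich enough in a neighborhood of $\mathbf x - \eps\,\r_N$ that a compactness/perturbation argument lands on the target exactly, not merely in its closure. This precision step is what upgrades the conclusion from ``$\mathbf x\in\clGa N$'' (trivially true) to the desired ``$\mathbf x\in\GaN$''.
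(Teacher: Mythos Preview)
Your treatment of the cone and full-dimensionality parts is fine and matches the paper. The gap is entirely in the last part, and you have correctly identified it yourself: you need to simultaneously satisfy $\H(\mathbf p)=\eps$ and $\sum_i p_i\,\mathbf v_i=\mathbf x-\eps\,\r_N$ with all $\mathbf v_i\in\GaN$, and the ``compactness/perturbation argument'' you gesture at is not actually an argument. Knowing that $\mathbf x-\eps\,\r_N$ lies in the convex hull of $\GaN$ gives you \emph{some} weight vector $\mathbf p$, but you have no control over $\H(\mathbf p)$; conversely, fixing $\mathbf p$ with $\H(\mathbf p)=\eps$ gives you no reason why the target sits in $p_1\GaN+\cdots+p_k\GaN$. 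Neither direction closes without a further idea, and the ``rich enough'' claim about achievable averages is exactly the statement you are trying to prove.

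The paper sidesteps this matching problem entirely with a one-line trick that you are missing: instead of subtracting $\eps\,\r_N$, subtract $\eps\sum_{J}\r_J$ over \emph{all} nonempty $J\subseteq N$. The point $\mathbf y'=\mathbf x-\eps\sum_J\r_J$ is still in $\clGa N$ for small $\eps$, so pick an entropic $\mathbf v$ so close to $\mathbf y'$ that, writing $\mathbf v-\mathbf y'=\sum_J\mu_J\r_J$ in the basis of Lemma~\ref{lemma:1}(ii), every $|\mu_J|<\eps$. Then
\[
   \mathbf x \;=\; \mathbf v \;+\; \sum_{J}(\eps-\mu_J)\,\r_J
\]
is a finite sum of entropic vectors (each $(\eps-\mu_J)\,\r_J$ is entropic by Lemma~\ref{lemma:1}(i) since $\eps-\mu_J>0$), hence entropic by Claim~\ref{claim:addition}. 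No mixture, no convex-combination bookkeeping, no entropy constraint on weights. The mixture construction you propose is a detour; what you actually need is the basis $\{\r_J\}$, which you already invoked for full-dimensionality, used once more to absorb the approximation error additively rather than convexly.
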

\begin{proof}
Since non-negative multiples of the vector $\r_A$ are entropic, 
their conic (that is, non-negative linear) combination is also entropic.
The vectors $\r_A$ span the entire space by Lemma
\ref{lemma:1}, thus $\GaN$, and then $\clGa N$ is full-dimensional. To show
that $\clGa N$ is closed for conic combinations, choose $\mathbf x$ and $\mathbf
y$ from $\clGa N$ as well as the non-negative real numbers $\lambda$, $\mu$. There are
entropic points $\mathbf x'$ and $\mathbf y'$ arbitrarily close to $\mathbf
x$ and $\mathbf y$, respectively. By Lemma \ref{lemma:2} there are entropic
points arbitrarily close to $\lambda \mathbf x'+\mu\mathbf y'$, therefore
$\lambda \mathbf x + \mu\mathbf y$ is indeed in the closure of the entropic
points.

For the last statement let $\mathbf y$ be an internal point of $\clGa N$,
which means that an open neighborhood of $\mathbf y$ is also in $\clGa N$. In
particular, there is a positive $\eps$ such that the point
$$
   \mathbf y'= \mathbf y - \sum_{A\subseteq N} \eps\tsp \r_A
$$
is also in $\clGa N$, which means that there are entropic points arbitrarily
close to $\mathbf y'$. Since the vectors $\{\r_A:A\subseteq N\}$ span the
whole space, any vector can be written as their linear combination. So there
is an entropic point $\mathbf x\in\GaN$ such that
$$
   \mathbf x - \mathbf y' = \sum_{A\subseteq N} \mu_A \r_A
$$
where all the scalars $\mu_A$ have an absolute value less than $\eps/2$. Then
$$
   \mathbf y = \mathbf x + \sum_{A\subseteq N} (\eps-\mu_A)\r_A
$$
is a sum of entropic points by (i) of Lemma \ref{lemma:1} (as $\eps-\mu_A$
is positive), therefore $\mathbf y$ is also entropic.
\end{proof}

\subsection{The case of two, three and four variables}\label{subsec:case234}

The entropy region $\Ga 1$ of a single variable is one-dimensional; it is
the non-negative half-line; and it coincides with its closure $\clGa 1$. 
For the two-element base set $N=\{a,b\}$ the region $\Ga 2$ sits in the
three-dimensional space with coordinates labeled $a$, $b$, $ab$,
written in this order. The vectors $\r_J$ are
\begin{align*}
    \r_a &= \langle 1,0,1\rangle, \\
    \r_b &= \langle 0,1,1\rangle, \\
    \r_{ab} &= \langle 1,1,1\rangle.
\end{align*}
By Lemma \ref{lemma:1} any non-negative linear combination of these vectors
is in $\Ga 2$, therefore $\Ga 2$ contains the (infinite) cone $\mathcal C_2$,
which has its tip at the origin and contains all rays (half-lines) starting
at the origin and having a point at the boundary or inside the triangle
marked by the points $\r_a$, $\r_b$, and $\r_{ab}$.

On the other hand, the entropy region is bounded by the basic Shannon
inequalities (\bref{B1}) and (\bref{B2}). For the point $\langle x,y,z\rangle\in \Ga 2$
they give the restrictions
\begin{align*}
   x\le z, ~ y\le z & ~~\mbox{ from (\bref{B1})},\\
   x+y\ge z          &~~\mbox{ from (\bref{B2})}.
\end{align*}
Each of these inequalities specifies a half-space bounded by the plane
with points where equality holds, and $\Ga 2$ is on the non-negative side of
that plane. Each plane contains the origin, the first plane contains
points $\r_a$ and $\r_{ab}$, thus it coincides with one of the faces of
$\mathcal C_2$. Similarly, the second plane contains $\r_b$ and $\r_{ab}$,
the third one contains $\r_a$ and $\r_b$. Thus $\Ga 2$ is bounded by the
three faces of the cone $\mathcal C_2$, which means that the two-variable
entropy region is the cone $\mathcal C_2$. Since $\mathcal C_2$ is closed,
we have $\clGa 2=\Ga 2=\mathcal C_2$. If
$\Ga 2$ were plotted in the coordinate system with axes $\r_a$, $\r_b$ and
$\r_{ab}$, it would be just the non-negative orthant.

The entropy region $\Ga 3$ of three variables $N=\{a,b,c\}$ is seven-dimensional,
and contains the (closed) cone $\mathcal C_3$ spanned by the seven vectors
$\r_J$, $\emptyset\ne J\subseteq N$. $\Ga 3$ also contains the vector
$\mathbf u$ defined in (\ref{eq:u}) (and also its $\log_2 n$ multiples), which
is outside $\mathcal C_3$. This
is because writing $\mathbf u$ as the unique linear combination of the vectors 
$\r_j$, not all coefficients are non-negative:
$$
   \mathbf u = \r_{ab}+\r_{ac}+\r_{bc}-\r_{abc}.
$$
Thus $\clGa 3$, as a convex cone by Theorem \ref{claim:3}, contains the cone
spanned by the eight vectors $\r_J$ and $\mathbf u$. Similarly to the
two-variable case, the basic Shannon inequalities provide an outer bound for
$\clGa 3$. There are three inequalities in (\bref{B1}) and six inequalities
in (\bref{B2}), determining a total of nine half-spaces whose intersection
is the Shannon-bound. As we are in the seven-dimensional space, any six of
the bounding hyperplanes intersect in a line. If there is a point on this
line, different from the origin, which satisfies the remaining three
inequalities (as six out of the nine inequalities takes zero), then 
the corresponding half-line, starting from the origin, is
an \emph{extremal ray} of the Shannon-bound; and that bound is the convex
hull of its extremal rays \cite{ziegler}. It is relatively easy to check
that these extremal rays are just $\r_J$ and $\mathbf u$. Thus $\clGa 3$ is
this cone, and $\Ga 3$ misses only boundary points from $\clGa 3$. The exact
structure of $\Ga 3$ on the boundary is not known, but there are partial
results, see \cite{2faces,M.piece,thakor}.

\begin{table}[!ht]
\centering\begin{tabular}{|rrrrrrr|}
\hline
\strut$|N|$ & \footnotesize\it dim
     & \footnotesize\it   symm 
     & \footnotesize\it  ineq
     & \footnotesize\it ineq/symm 
     & \footnotesize\it rays ~~ 
     & \footnotesize\it ray/symm \\
\hline
\strut
   2     ~  &  3~  &  2 ~  &  3~   &  2 ~~~~      &  3  &   2  ~   \\
   3     ~  &  7~  &  6 ~  &  9~   &  3 ~~~~      &  8  &   4  ~   \\
   4     ~  & 15~  & 24 ~  & 28~   &  4 ~~~~      & 41  &  11  ~   \\
   5     ~  & 31~  & 120 ~  & 85~   & 5 ~~~~      & 117,983 &  1320 ~  \\
   6     ~  & 63~  & 720 ~  &246~   & 6 ~~~~ & ${>}4{\cdot}10^{11}$ &
   ${>}5{\cdot}10^9$ ~ \\
\hline
\end{tabular}
\caption{Dimension and symmetry classes of $\GN$.}\label{table:1}
\end{table}

Permuting elements of the base set $N$ induces a permutation on the
coordinates of the entropy region, and provides symmetries of both $\GaN$
and $\GN$. Table \ref{table:1} lists the dimensions and the number of
symmetries for $|N|\le 6$. The column \emph{ineq} shows the number of basic
Shannon inequalities, which is the same as the number of bounding
hyperplanes of $\GN$. The number of extremal rays of $\GN$ is in column
\emph{rays}; the exact number is known only for $|N|\le 5$ \cite{studeny-kocka},
the lower estimate for $|N|=6$ is from \cite{impossible}. Upper
estimates are doubly exponential in $|N|$. Each symmetry moves a
bounding plane into a bounding plane and an extremal ray into an extremal
ray. The columns \emph{ineq/symm} and \emph{ray/symm} show the number of
symmetry classes.

According to Table \ref{table:1}, in the case $N=\{abcd\}$, the outer bound
$\Gamma_{\!4}$ has 41 extremal rays in 11 permutationally equivalent
classes. Each of these rays, with the exception of $6$ rays (in one symmetry
class), has entropic points; thus these rays are in $\clGa 4$. The
exceptional rays are the non-negative multiples of the so-called
\emph{V\'amos vector}
\begin{equation}\label{eq:vamos}
   \mathbf v_{cd}: J\mapsto \begin{cases}
     4 & \mbox{ if $J=\{cd\}$}, \\
    \min\{4,|J|+1\} &\mbox{ otherwise,}
   \end{cases}
\end{equation}
and its symmetric versions. They are called so because these polymatroids are
factors of the so-called V\'amos matroid \cite{oxley}. Similarly to the
proof of Claim \ref{claim:u}, one can show that no scalar multiple of
$\mathbf v_{cd}$ is entropic. The long-standing open problem whether
$\mathbf v_{cd}$ is
almost entropic was settled negatively by the first non-Shannon entropy
inequality by Zhang and Yeung \cite{ZhY.ineq}. From this fact it follows
that $\clGa N$ is a proper subset of $\GN$ for every $|N|\ge 4$. Later
F.~Mat\'u\v s proved the stronger result \cite{M.infinf} that for $|N|\ge 4$
the region $\clGa N$ is not polyhedral, meaning that it cannot be characterized by
finitely many linear inequalities.

\subsection{Conditioning}\label{subsec:cond}

Conditioning is an important operation in Information Theory. Given a subset
$K\subseteq N$, the conditional information measures the average information
content of $A\subseteq
N\sm K$ as variables in $K$ run over the alphabet $\mathcal X_K$. More
precisely, the conditional entropy of $A$ is the weighted average of the
entropies of the conditional distributions $\xi_A \| \xi_K=z$ as $z$ runs
over the elements of $\mathcal X_K$:
\begin{equation}\label{eq:cond}
\H_\xi(A\|K) \eqdef
   \sum_{z\in\mathcal X_K} \Prob(\xi_K=z)\tsp 
    \H_{\xi_A\|\xi_K=z}(A).
\end{equation}
A simple calculation shows that this value equals $\H_\xi(AK)-\H_\xi(A)$,
in accordance with how conditional entropy was defined earlier. This
conditioning gives rise to an operation on entropy functions.
Suppose $f$ is defined on subsets of $N$, $K\subseteq N$ and $M=N\sm K$ is
the complement of $K$. The operation of \emph{conditioning $f$ on $K$} results
in the function $\fc_M$ defined on subsets of $M$ as
$$
   \fc_M : A \mapsto f(A\|K) ~~~ \mbox{ for } A \subseteq M.
$$
\begin{claim}\label{claim:conditioning}
Suppose $N$ is partitioned into the non-empty sets $K\cups M$, and $f\in\clGa
N$ is almost entropic. Conditioning $f$ on $K$ gives the almost entropic
$\fc_M\in\clGa M$.
\end{claim}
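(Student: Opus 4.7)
The plan is to first handle the case of a genuinely entropic $f\in\GaN$ by writing $\fc_M$ as a conic combination of honestly entropic profiles on $M$, and then to bootstrap to all $f\in\clGa N$ via continuity and closedness.

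\smallskip

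\textbf{Step 1 (entropic case).} Pick a distribution $\xi=\{\xi_i:i\in N\}$ representing $f$. For each $z\in\mathcal X_K$ with $\Prob(\xi_K=z)>0$, the conditional distribution $\xi_M\|\xi_K=z$ is a genuine joint distribution on the base $M$; let $g_z\in\Ga M$ be its entropy profile, so $g_z(A)=\H_{\xi_A\|\xi_K=z}(A)$ for all $A\subseteq M$. Using the definition of conditional entropy in \eqref{eq:cond} and the identity $f(AK)-f(K)=\H_\xi(A\|K)$, I would write, for every $A\subseteq M$,
\[
   \fc_M(A) \;=\; f(AK)-f(K) \;=\; \H_\xi(A\|K) \;=\; \sum_{z\in\mathcal X_K} \Prob(\xi_K=z)\,g_z(A),
\]
so, as vectors on subsets of $M$,
\[
   \fc_M \;=\; \sum_{z\in\mathcal X_K} \Prob(\xi_K=z)\,g_z.
\]
This realises $\fc_M$ as a finite non-negative linear combination of points from $\Ga M\subseteq \clGa M$.

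\smallskip

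\textbf{Step 2 (cone closure).} By Theorem \ref{claim:3} the region $\clGa M$ is a closed convex cone, so it is closed under non-negative scalar multiplication and under (finite) addition. Applied to the decomposition above, this gives $\fc_M\in\clGa M$ whenever $f\in\GaN$.

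\smallskip

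\textbf{Step 3 (passage to the closure).} The map $f\mapsto \fc_M$ is linear in the coordinates of $f$ (since $\fc_M(A)=f(AK)-f(K)$), hence continuous. Given $f\in\clGa N$, choose entropic $f_n\in\GaN$ with $f_n\to f$. By Step 2, $\fc_{n,M}\in\clGa M$ for every $n$, and $\fc_{n,M}\to \fc_M$ by continuity. Since $\clGa M$ is closed, the limit $\fc_M$ lies in $\clGa M$, as required.

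\smallskip

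There is essentially no hard step: the key observation is just that the defining formula for conditional entropy exhibits $\fc_M$ as a \emph{conic}, not merely affine, combination of entropy profiles on $M$, which is exactly the kind of combination Theorem \ref{claim:3} lets us absorb into $\clGa M$. The only thing that needs a line of care is the extension from $\GaN$ to $\clGa N$, and that is immediate from the linearity of the conditioning operation and the fact that $\clGa M$ is closed.
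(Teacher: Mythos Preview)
Your proof is correct and follows essentially the same route as the paper: write $\fc_M$ as a (convex, hence conic) combination of the entropic profiles $g_z$ of the conditional distributions, invoke Theorem~\ref{claim:3} to place it in $\clGa M$, and then pass to the closure by continuity. The only cosmetic difference is that the paper phrases the combination as convex (since the weights $\Prob(\xi_K=z)$ sum to $1$) while you phrase it as conic; both are covered by Theorem~\ref{claim:3}.
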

\begin{proof}
Since conditioning is a continuous operation, it suffices to show
that conditioning an entropic function results in an almost entropic
function. First, remark that the result of conditioning by (\ref{eq:cond})
does not depend on the distribution $\xi$, only on its entropy profile. So
assume that $f$ is the entropic function of the distribution $\xi$. 
Then for each $z\in \mathcal X_K$
$$
   f_z: A \mapsto \H_{\xi_A\|\xi_K=z}(A), ~~~~ A \subseteq M
$$
is the entropy profile of the conditional distribution $\xi_M
\|\xi_K{=}z$; thus it is in $\Ga M$. Since $\fc_M$ is a convex linear
combination of the functions $f_z$ by (\ref{eq:cond}) and $\clGa M\supseteq
\Ga M$ is convex by Theorem \ref{claim:3}, we are done.
\end{proof}

Conditioning an entropic function can result in a non-entropic function
(but will be almost entropic by Claim \ref{claim:conditioning}). For an
example, take the
following distribution on $N=\{a,b,c,d\}$. The variable $d$ takes values
$0$ and $1$ with equal probability. If $d=0$ then $\{a,b,c\}$ is
distributed so that its entropy profile is $\mathbf u$ as defined
in (\ref{eq:u}), and if $d=1$ then its entropy profile is $2\cdot\mathbf u$. 
When conditioned on $d$, the function $\fc_{abc}$ is the average $(3/2)\mathbf
u$, which is not entropic according to Claim \ref{claim:u}.

\subsection{Group-based distributions}\label{subsec:group-based}

The celebrated theorem of Chan and Yeung \cite{Chan.Yeung} can be
interpreted by saying that entropy functions arising from group-based
distributions are ``typical''. In many cases if a property holds for those
``typical'' functions, then it also holds for all entropy (and aent)
functions, thus one can rely on properties of distributions that provide
typical functions. We will see such an application in Section
\ref{subsec:fmpe}.

\begin{theorem}[Chan--Yeung]\label{thm:group}
Non-negative multiples of entropy functions arising from group-based distributions
(and, consequently, from quasi-uniform distributions)
form a dense subset of $\clGa N$.
\end{theorem}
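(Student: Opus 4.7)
The plan is to establish the density in two steps. Step A: every quasi-uniform distribution is isomorphic to a group-based one, so the two classes generate the same collection of entropy profiles. Step B: scaled quasi-uniform entropy profiles are dense in $\Ga N$, and hence in $\clGa N$ by closure.

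For Step A, I would follow Chan's construction: given a quasi-uniform $\xi=\{\xi_i:i\in N\}$ with support $S\subseteq\prod_{i\in N}\mathcal X_i$, exhibit a group $G$ and subgroups $G_i\subgr G$ whose joint left-coset distribution reproduces $\xi$. A concrete choice is $G=\mathrm{Sym}(S)$ acting on $S$, with $G_i$ the stabilizer of the $i$-th coordinate of a fixed base point $s_0\in S$; the quasi-uniformity of each marginal $\xi_A$ on its support matches exactly the coset-counting identity $|G|/|G_A|=|\mathrm{supp}(\xi_A)|$ spelled out in Section \ref{sec:prelim}, so the group-based and the quasi-uniform entropy profiles agree coordinatewise.

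For Step B, take $f=\H_\xi\in\Ga N$, pass to the tensor power $\xi^n$ (so $\H_{\xi^n}=nf$ by Claim \ref{claim:qu}(i)), and restrict to the joint type class $T_n\subseteq\prod_{i\in N}\mathcal X_i^n$ of a type $P_n$ approximating $\xi$'s distribution to within $o(1)$. Let $\eta_n$ be the uniform distribution on $T_n$. Since $T_n$ is invariant under the natural action of $\mathrm{Sym}(n)$ permuting the $n$ coordinate copies, the fibers of every projection $\pi_A:T_n\to T((P_n)_A)$ have equal size, so the marginal of $\eta_n$ on any $A\subseteq N$ is uniform on its support, making $\eta_n$ genuinely quasi-uniform on the base $N$. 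The method of types then gives $\H_{\eta_n}(A)=\log|T((P_n)_A)|=n(f(A)+o(1))$, so $(1/n)\H_{\eta_n}\to f$ coordinatewise as $n\to\infty$; since $\Ga N$ is dense in its closure $\clGa N$, density transfers to every aent point.

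The main obstacle, in my view, is Step A: pinning down $G$ and the $G_i$ so that the coset counts match the quasi-uniform support sizes on every subset $A\subseteq N$ \emph{simultaneously}, not just singleton by singleton. This is where full quasi-uniformity (the uniformity of every joint marginal on its support) is essential rather than mere uniformity of each $\xi_i$; once this matching is carried out, Step B is driven by the well-established method of types and no further subtlety arises.
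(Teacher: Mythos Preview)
Your Step A does not work as stated. The set $G_i=\{\sigma\in\mathrm{Sym}(S):\pi_i(\sigma(s_0))=\pi_i(s_0)\}$ is not a subgroup: it consists of those permutations sending the single point $s_0$ into a prescribed $\pi_i$-fiber, and such a set is not closed under composition (pick any $\sigma\in G_i$ with $\sigma(s_0)=s_1\ne s_0$; nothing forces $\sigma(s_1)$ to lie in that fiber, so $\sigma^2$ need not belong to $G_i$). If instead you meant the partition stabilizer $G_i=\{\sigma:\pi_i\circ\sigma=\pi_i\}$, that \emph{is} a subgroup, but then $|G|/|G_A|$ is a multinomial coefficient rather than $|\mathrm{supp}(\xi_A)|$, so the entropy profiles do not match exactly---only after a Stirling approximation, up to a factor $|S|$. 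In short, the exact identification ``quasi-uniform profile $=$ group-based profile'' is not what your construction delivers, and it is not what the theorem requires.

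Your Step B is correct, and in fact it already contains the paper's proof. The uniform distribution $\eta_n$ on a joint type class \emph{is} group-based: take $G=\mathrm{Sym}(n)$ acting on the $n$ sample positions, fix a reference sequence $t_0\in T_n$, and let $G_i$ be the subgroup of permutations preserving the partition of $\{1,\dots,n\}$ induced by the $i$-th coordinate of $t_0$. The resulting coset distribution is exactly $\eta_n$, and $\log(|G|/|G_A|)=\log|T((P_n)_A)|$. The paper does precisely this (phrased via a ``partition representation'' on a set $\Omega$ of $T$ points, with $G=\mathrm{Sym}(T)$ and $G_i$ the partition-preserving subgroups), and uses Stirling to get $(1/T)\H_\eta\to\H_\xi$. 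So the fix is not to repair Step A but to drop it: observe that your $\eta_n$ is already group-based, and the density of scaled group-based profiles follows immediately from your Step B.
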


\begin{proof}

Let $\xi = \{\xi_i:i\in N\}$ be a joint distribution on the alphabet
$\mathcal X=\prod \{\mathcal X_i:i\in N\}$. Approximate the probabilities by
rational numbers and write every probability mass as a rational number with
the same denominator. Take a set $\Omega$ with as many elements as this
denominator, and take the uniform distribution on it. Consider $\xi$ as a
map from $\Omega$ to $\mathcal X$ which, for any particular $x\in\mathcal
X$, maps so many elements of $\Omega$ to $x$ that their total mass adds up
to $\Prob(x)$. For each $i\in N$ the variable $\xi_i$ determines a partition
of $\Omega$ that group the elements assigned by $\xi_i$ to the same
value in $\mathcal X_i$. The set $\Omega$ with these $N$ partitions is the
\emph{partition representation} of the distribution $\xi$, see
\cite{M.partition}.

Replace each point in $\Omega$ by the same number of points to get a set of
$T$ points, and let $G_i$ be the set of all permutations on these $T$
points that respect the partition induced by the
random variable $\xi_i$. Clearly, $G_i$ is a subgroup of the permutation
group $G$ on these $T$ points. If the set sizes in the partition are $p_1$,
$\dots$, $p_k$ such that $p_1+\cdots+p_k=T$, then
\begin{align*}
 \frac{|G_i|}{|G|} &= \frac{p_1! p_2!\cdots p_k!}{T!} = 
    (1+o(1)) \Big(\frac{p_1}{T}\Big)^{p_1} \cdots
       \Big(\frac{p_k}{T}\Big)^{p_k} ={} \\
       &= (1+o(1))\,2^{-T\H(\xi_i)}.
\end{align*}
As permutations in $G$ that respect the partition induced by $\xi_A$ are
just the permutations that fix the partitions induced by $\xi_i$ for all $i\in
A$, we have similarly
$$
 \frac{|G_A|}{|G|} = (1+o(1))\,2^{-T\H(\xi_A)}.
$$
Let the distribution $\eta$ be based on $G$ and the subgroups $G_i$.
The support of $\eta_A$ has exactly $|G|/|G_A|$ elements, each having the same
probability mass, therefore $\H(\eta_A)=\log_2(|G|/|G_A|)$. Thus
$$
   \frac1T \H(\eta_A)=\frac1T \log_2 \frac{|G|}{|G_A|} = \H(\xi_A)-
\frac{\log_2(1+o(1))}{T}
$$
for all $A\subseteq N$, which proves that the entropy profile of $\xi$ can
be arbitrarily approximated by multiples of entropy functions originating
from group-based distributions.
\end{proof}

\subsection{Thinning and compressing}\label{subsec:fmpe}

Let $\{\xi_i:i\in N\}$ be a quasi-uniform distribution where each alphabet
$\mathcal X_i$ is ``large enough''. Fix the set $Z\subset N$ of the
variables; the subset $A\subseteq N$, however, will vary. Let $\mathcal Z$
be the set of elements from $\mathcal X_Z$ that $Z$ can take with positive
probability, and define $\mathcal A$ similarly. Since the distribution is
quasi-uniform, each element of $\mathcal Z$ has probability $1/|\mathcal
Z|$, and each element of $\mathcal A$ has probability $1/|\mathcal A|$.
Consequently, $\H(Z)=\log_2|\mathcal Z|$, and $\H(A)=\log_2|\mathcal A|$, or
after exponentiation, $|\mathcal Z|=2^{\H(Z)}$ and $|\mathcal
A|=2^{\H(A)}$.

Consider the matrix $\mathbb M$ with columns labeled by $\mathcal Z$ and
rows labeled by $\mathcal A$, where the entry at $\mathbb M[a,z]$ is the
probability $\Prob(\xi_A=a$, $\xi_Z=z)$. Quasi-uniformity implies that
each matrix entry is either zero or $2^{-\H(AZ)}$ (to account for the
total entropy of $AZ$). The probabilities in a row add up to $1/|\mathcal 
A|=2^{-\H(A)}$ (the
probability of that row), and in a column they add up to $2^{-\H(Z)}$.
Therefore, of the $|\mathcal Z| = 2^{\H(Z)}$ many elements in a row exactly
$$
  \frac{2^{-\H(A)}}{~2^{-\H(AZ)}~} = 2^{\H(Z\|A)}
$$ 
are non-zero, and, similarly, of the $|\mathcal A|=2^{\H(A)}$ many
elements in a column exactly $2^{\H(A\|Z)}$ are non-zero.

\medskip

The operation \emph{thinning} creates a new distribution by randomly choosing
$2^\alpha$ many columns from $\mathbb M$ and then renormalizing the
probabilities. Consider the row labeled with $a\in\mathcal A$. When choosing a
column randomly, the probability that this row contains a non-zero value in
that column is
$$
  \frac{\mbox{ non-zero positions }}{\mbox{total length}} = 
  \frac{\; 2^{\H(Z\|A)}\;}{2^{\H(Z)}} = 2^{-\I(A,Z)}.
$$
After choosing $2^\alpha$ columns randomly, we expect $2^\alpha\cdot
2^{-\I(A,Z)}$ non-zero entries in this row. If $\alpha>\I(A,Z)$, then we
expect this many non-zero entries in each row; therefore, after
normalization, every row will have equal probability, and then the new
entropy of the rows will remain $\log_2|\mathcal A| =
\H(A)=\I(A,Z)+\H(A\|Z)$.

If, however, $\alpha<\I(A,Z)$, then we expect that each row contains at most
one non-zero entry in these $2^\alpha$ columns. Each column contains
$2^{\H(A\|Z)}$ non-zero entries, for different columns these are in
different rows. Therefore $2^\alpha\cdot 2^{\H(A\|Z)}$ rows have (equal)
non-zero probabilities (other rows have zero probability), and then the new
entropy of the rows is the $\log_2$ of this value, that is, $\alpha+\H(A\|Z)$.

Combining the two cases, the new entropy of the distribution in the
rows after thinning will be $\min\{\H(A),\alpha+\H(A\|Z)\}$. Since thinning
is performed on the alphabet $\mathcal Z$ and the process itself is oblivious to
what the subset $A\subseteq N$ is, thinning gives a new distribution where
the entropy of $A$ is (approximately) the above value for every $A\subseteq
N$. This heuristic can be made precise (see \bref{Appendix}). Since
quasi-uniform distributions are ``typical'' by Theorem \ref{thm:group},
it proves that the \emph{Generalized Ahlswede-K\"orner} operation, 
abbreviated as $\GAK$ and defined below, preserves the aent functions.

\begin{theorem}\label{thm:AK}
Suppose $Z\subseteq N$, $\alpha\ge0$, and $f\in\clGa N$ is aent. Then the
function $f^*=\GAK(f)$, defined as
$$
   f^*: A \mapsto \min\{ f(A),\alpha+f(A\|Z)\} ~~~\mbox{ for } A\subseteq N,
$$
is also aent.\qed
\end{theorem}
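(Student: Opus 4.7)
The plan is to reduce to the quasi-uniform case via Theorem \ref{thm:group} and then make the matrix-thinning heuristic of the preceding paragraphs rigorous. Since the map $f\mapsto\GAK(f)$ (for fixed $Z$ and $\alpha$) is continuous in $f$ and $\clGa N$ is closed, it suffices to verify the conclusion for positive scalar multiples of entropy profiles of quasi-uniform distributions, at every rational $\alpha>0$. Tensorizing such a $\xi$ by Claim \ref{claim:qu} multiplies the entropy profile by $n$; replacing $\alpha$ by $n\alpha$, the GAK operation applied to $n\tsp f$ equals $n\cdot\GAK(f)$, so if it lies in $\clGa N$ then by the cone property of Theorem \ref{claim:3} so does $\GAK(f)$. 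In particular we may assume the alphabets $\mathcal X_i$ are as large as we wish, so that the exponential quantities $2^{n\alpha}$, $2^{nf(A)}$, $2^{nf(Z\|A)}$, etc.\ are integers up to negligible error.

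Next I would construct the thinned distribution exactly as the text suggests. Fix $Z\subseteq N$ and let $\mathcal Z$ be the support of the $Z$-marginal of $\xi^n$; sample a uniformly random subset $S\subseteq\mathcal Z$ of size $2^{n\alpha}$. Define $\xi^S$ as $\xi^n$ conditioned on its $Z$-coordinate lying in $S$, renormalized to a probability distribution on the induced support. The target is to show that with positive probability over the choice of $S$,
\[
   \bigl|\tfrac1n \H_{\xi^S}(A) - \min\{f(A),\ \alpha + f(A\|Z)\}\bigr| < \eps
\]
\emph{simultaneously} for every $A\subseteq N$. Any such $S$ provides an entropic profile $\H_{\xi^S}$ for which $\H_{\xi^S}/n$ lies within $\eps$ of $\GAK(f)$ and, by the cone property, belongs to $\clGa N$; letting $\eps\to 0$ along a sequence of tensor powers then places $\GAK(f)$ into $\clGa N$.

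For each fixed $A$ the two regimes of the paper's heuristic drive the entropy estimate. Restricted to a row indexed by $a$ in the support of $\xi^n_A$, the sample $S$ is a uniform draw of $2^{n\alpha}$ columns from $2^{nf(Z)}$, of which exactly $2^{nf(Z\|A)}$ are non-zero. If $\alpha>f(A,Z)$, the expected number $2^{n(\alpha-f(A,Z))}$ of surviving non-zero entries per row is exponentially large, and hypergeometric Chernoff bounds imply that almost every row retains approximately the proportional fraction, so after renormalization nearly all rows have nearly equal weight and $\H_{\xi^S}(A)\approx nf(A)$. If $\alpha<f(A,Z)$, almost every row retains at most one entry, the number of surviving rows concentrates around $2^{n(\alpha+f(A\|Z))}$, and the survivors are essentially equiprobable, giving $\H_{\xi^S}(A)\approx n(\alpha+f(A\|Z))$. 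The boundary case $\alpha = f(A,Z)$ is absorbed by either estimate since the two values of the min coincide there. A union bound over the $2^{|N|}-1$ non-empty subsets $A$ produces a single $S$ that is good for every $A$ at once.

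The main obstacle is turning these expected-value calculations into uniform high-probability statements with error $o(n)$ in the exponent, so that the union bound over $A$ still leaves positive probability. This demands sharp hypergeometric concentration in both the dense and sparse regimes, together with the observation that conditioning on the surviving column set preserves the quasi-uniform structure well enough that the cardinality counts of row-supports really match entropies up to lower-order terms. These are precisely the contents of the thinning and compressing lemmas stated in the \bref{Appendix}; granting them, the reduction and sampling argument outlined above closes the theorem.
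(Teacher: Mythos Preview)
Your proposal is correct and follows essentially the same route as the paper's Appendix proof: reduce via continuity, homogeneity, and Theorem~\ref{thm:group} to a quasi-uniform $\xi^n$ with large alphabets, randomly thin the $Z$-columns, and use exponential concentration plus a union bound over the finitely many $A\subseteq N$ to land within $o(n)$ of the target profile. The only cosmetic differences are that the paper samples columns \emph{with} replacement (so no separate handling of $\alpha\ge f(Z)$ is needed) and bounds $\H_\eta(A)$ from below via Lemma~\ref{ax:aux} combined with a Chernoff bound on the per-row counts, rather than hypergeometric concentration; the structure of the argument is otherwise the same.
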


A special case of Theorem \ref{thm:AK} is known as the
\emph{Ahlswede-K\"orner lemma}. This lemma is implicit in
\cite{Ahlswede.Korner} or in the book \cite{Csiszar.Korner}, and explicitly
stated in \cite{Kaced} and \cite{MMRV}. Recall that $\r_z$ is the
(entropic) indicator function of the relation $z\in A$ from Section
\ref{subsec:aent-cone}.

\begin{lemma}[Ahlswede-K\"orner]\label{lemma:AK}
If $f$ is aent, then so is $f - f(z\|N\sm z)\tsp \r_z$ whenever
$z\in N$.
\end{lemma}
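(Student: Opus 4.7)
My plan is to recognise the Ahlswede--K\"orner lemma as a direct specialisation of Theorem~\ref{thm:AK}: I apply the $\GAK$ operation with $Z=\{z\}$ and
$$
   \alpha \eqdef \I(z,N\sm z) = f(z)-f(z\|N\sm z).
$$
This $\alpha$ is non-negative because $f\in\clGa N$ satisfies every Shannon inequality, and in particular $\I(z,N\sm z)\ge 0$. Theorem~\ref{thm:AK} then guarantees that the function
$$
   f^*(A)\eqdef\min\bigl\{\,f(A),~\alpha+f(A\|z)\,\bigr\}
$$
is aent, so it only remains to verify coordinatewise that $f^*(A)=f(A)-f(z\|N\sm z)\tsp\r_z(A)$ for every $A\subseteq N$.

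The case $z\in A$ is immediate: then $f(A\|z)=f(A)-f(z)$, hence $\alpha+f(A\|z)=f(A)-f(z\|N\sm z)$, which is at most $f(A)$ because $f(z\|N\sm z)\ge 0$. The minimum therefore equals $f(A)-f(z\|N\sm z)$, matching the target on coordinates containing $z$.

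For $z\notin A$ one has $f(A\|z)=f(Az)-f(z)$, so $\alpha+f(A\|z)=f(Az)-f(z\|N\sm z)$, and I need this to be $\ge f(A)$; equivalently, $f(Az)-f(A)\ge f(z\|N\sm z)$, or $f(Az)+f(N\sm z)\ge f(A)+f(N)$. This last form is exactly the conditional mutual information inequality $\I(z,N\sm z\|A)\ge 0$, which holds for every $f\in\clGa N$. Hence the minimum reduces to $f(A)$, matching $f$ on coordinates not containing $z$, and we conclude that $f^*=f-f(z\|N\sm z)\tsp\r_z$ is aent. The only step requiring insight is the choice of $\alpha$; once that is made, the bound needed in the second case is a single instance of submodularity, so the argument meets no serious obstacle.
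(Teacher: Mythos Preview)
Your proof is correct and follows essentially the same approach as the paper: both apply Theorem~\ref{thm:AK} with $Z=\{z\}$ and $\alpha=f(z,N\sm z)$, then verify the two cases using the submodularity inequality $f(Az)-f(A)\ge f(z\|N\sm z)$ for $z\notin A$. Your write-up is slightly more explicit in noting that $\alpha\ge 0$ and in identifying the needed bound as the Shannon inequality $\I(z,N\sm z\|A)\ge 0$, but the argument is the same.
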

\begin{proof}
Apply Theorem \ref{thm:AK} with a single element $Z=\{z\}$, and $alpha$ as
the mutual information
$$
   \alpha=f(z,N\sm z) = f(z)-f(z\|N\sm z)= f(z)+f(N\sm z)-f(N).
$$
The aent function $f^*=\GAK(f)$ can be written as
$$
   f^*: A \mapsto \begin{cases}
       f(A) & \mbox{ if $z\notin A$}, \\[3pt]
       f(A)-f(z\|N\sm z) & \mbox{ if $z\in A$},
   \end{cases}
$$
using that $f(Az)-f(A)\ge f(z\|N\sm z)$ when $z\notin A$. Since $\r_z$ is
the indicator function of $z\in A$, the function claimed to be aent is just
$f^*$.
\end{proof}

\medskip

Returning to the matrix $\mathbb M$, the dual operation of thinning is
\emph{compressing}. While keeping all variables in $N$, a new variable $z'$
is created by making $2^\alpha$ many buckets and placing each of the
$|\mathcal Z|$ columns into one of the buckets randomly. This process
compresses the columns into $2^\alpha$ many buckets. The probability mass of
a bucket is the probability mass of the columns in that bucket. The entropy
of $Az'$ is the entropy of the distribution in which columns in the same bucket
are added. Similarly as before, consider the row labeled $a\in \mathcal
A$. Since there are $2^\alpha$ many buckets, each of the $2^{\H(Z\|A)}$
non-zero elements in this row ends up in one of the buckets, thus each bucket
is expected to contain $2^{\alpha-\H(Z\|A)}$ non-zero entries from this row.
If $\alpha>\H(Z\|A)$, then it means that in each row each bucket contains
the same number of non-zero probabilities, which add up to the same
probability mass. Consequently $A$ and $z'$ are independent, and then
$\H(Az')=\H(A)+\alpha$. If $\alpha<\H(Z\|A)$, then each bucket is expected
to contain at most one non-zero element from each row. This means that all
non-zero elements in the matrix $\mathbb M$ are in different (bucket, row)
pairs; thus $\H(Az')$ is the same as the total entropy of the distribution
in $\mathbb M$, that is, $\H(AZ)$. Combining the two cases into a single formula
yields
$$
    \H(Az') = \min\{\H(A)+\alpha,\H(AZ)\}.
$$
Similarly to the thinning case, the bucketing is oblivious to what the subset
$A$ is, thus the same formula works for every $A\subseteq N$. This
heuristic leads to the following theorem of F.~Mat\'u\v s \cite{M.twocon}.
A formal proof is provided in the \bref{Appendix}.

\begin{theorem}[F. Mat\'u\v s]\label{thm:princ}
Suppose $Z\subseteq N$, $\alpha\ge 0$, and $f\in\clGa N$ is aent. Then the
extension $f'$ of $f$ to subsets of $Nz'$ is also aent, where
$$
  \begin{array}{l@{\,=\,}l}
     f'(A) & f(A), \\[3pt]
     f'(Az') & \min\{f(A)+\alpha,f(AZ)\}
  \end{array} 
   ~~~\mbox{ for } A\subseteq N.
$$
\end{theorem}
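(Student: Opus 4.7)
The plan is to make rigorous the bucketing heuristic sketched in the paragraph preceding the theorem, using Theorem~\ref{thm:group} to reduce to quasi-uniform distributions. Since $\clGa{Nz'}$ is closed, it suffices to show that for every $\eps > 0$ there is an aent function on $Nz'$ within $\eps$ of $f'$. By Chan--Yeung I would approximate $\lambda f$ (for a suitable large $\lambda$) by the entropy profile of a group-based distribution $\xi$, which is quasi-uniform on its supports. Then in the matrix $\mathbb M$ with rows indexed by the support $\mathcal A$ of $\xi_A$ and columns by the support $\mathcal Z$ of $\xi_Z$, each entry equals $0$ or $2^{-\H(AZ)}$, and each row contains exactly $2^{\H(Z\|A)}$ nonzero entries.

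The construction of $z'$: pick a uniformly random map $\pi \colon \mathcal Z \to \{1,\dots,K\}$ with $\log_2 K \approx \alpha$, and set $z' \eqdef \pi(\xi_Z)$. The joint distribution on $Nz'$ automatically has $N$-marginal $\xi$, so $f'(A) = f(A)$ is immediate; all the work lies in estimating $\H(\xi_A, z')$ for each $A \subseteq N$. The expected number of nonzero entries per bucket per row of $\mathbb M$ is $2^{\H(Z\|A)}/K \approx 2^{\H(Z\|A) - \alpha}$. In the dense regime $\alpha < \H(Z\|A)$, concentration of measure forces each bucket to receive approximately the same mass from each row, so $z'$ becomes essentially independent of $\xi_A$ and $\H(\xi_A, z') \approx \H(A) + \alpha$. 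In the sparse regime $\alpha > \H(Z\|A)$, a birthday-style second-moment bound shows that typically every (row, bucket) cell carries at most one nonzero entry, so $(\xi_A, z')$ retains the full entropy of $(\xi_A, \xi_Z)$ and $\H(\xi_A, z') \approx \H(AZ)$. Both regimes agree with $\min\{\H(A) + \alpha, \H(AZ)\}$.

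The main obstacle is the boundary regime $\alpha \approx \H(Z\|A)$, where both concentration arguments degrade simultaneously. The standard cure is tensorization: replace $\xi$ by $\xi^n$, whose entropy profile is $n \cdot f$ by Claim~\ref{claim:qu}, and run the bucketing with $\log_2 K \approx n\alpha$. The target formula is homogeneous of degree $1$, so the target extended profile is $n \cdot f'$, while the absolute error in each concentration estimate grows only like $O(\sqrt n)$; dividing by $n$ drives the relative error to $0$. Since there are only finitely many $A \subseteq N$, a union bound selects a single $\pi$ that works simultaneously for every $A$ with positive probability. Combining this construction with the Chan--Yeung approximation and finally rescaling by $1/\lambda$ produces entropic functions on $Nz'$ arbitrarily close to $f'$, hence $f' \in \clGa{Nz'}$.
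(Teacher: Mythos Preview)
Your plan is exactly the paper's: reduce to group-based (quasi-uniform) distributions via Chan--Yeung, tensorize by $n$, randomly bucket the columns of $\mathbb M$ into $\approx e^{n\alpha}$ classes, control the bucket occupancies by concentration, take a union bound over the finitely many $A\subseteq N$, and rescale. The Appendix carries this out verbatim and in fact obtains an absolute error of $1+\log n$ (not $O(\sqrt n)$) per coordinate before dividing by $n$.

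One genuine imprecision to fix: your sparse-regime claim that a birthday/second-moment bound yields ``every (row, bucket) cell carries at most one nonzero entry'' is false throughout the range $\H(Z\|A)<\alpha<2\H(Z\|A)$, since after tensorization the expected number of collisions in a single row is $\approx 2^{n(2\H(Z\|A)-\alpha)}\to\infty$, and tensorization does not push you out of this range. What the paper does instead --- and what your sketch should be amended to --- is to bound only the \emph{maximum} cell count $M$ via a Chernoff tail and then invoke the elementary lemma $\H(\eta)\ge -\log(\max p)$, giving $\H_\eta(Az')\ge n\H(AZ)-\log M$ with $\log M = n\max\{\H(Z\|A)-\alpha,0\}+(1+\log n)$; this single estimate covers dense, sparse, and boundary regimes uniformly without any case split.
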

In the special case when $Z=\{z\}$ has a single element and $\alpha=f(z\|N\sm
z)$, the aent function $f'$ on subsets containing $z'$ can be computed as
$$\begin{array}{l@{\,=\,}l}
   f'(Az') & f(A)+\alpha, \\[3pt]
   f'(Azz') & f(Az)
\end{array}
  ~~~\mbox{ for } A \subseteq N\sm z,
$$
in particular, $f'(z')=\alpha$. Conditioning (see Section \ref{subsec:cond}) 
$f'$ on $z'$ gives the following aent function defined on subsets of $N$:
$$
   f^{\prime\mathsf c}_N: A \mapsto
    \begin{cases}
       f(A) & \mbox{ if $z\notin A$}, \\[3pt]
       f(A)-f(z\|N\sm z) & \mbox{ if $z\in A$},
    \end{cases}
$$
which recovers the Ahlswede-K\"orner lemma \ref{lemma:AK}.

Operations defined in Theorems \ref{thm:AK} and \ref{thm:princ} can
move entropic points into non-entropic points; however, the result is
always almost entropic.


\section{Polymatroids}\label{sec:polymatroids}

\emph{Polymatroids} are non-negative real functions defined on the
(non-empty) subsets of some finite base $N$ that satisfy the Shannon
inequalities (S1) and (S2). As the basic inequalities imply all Shannon
inequalities, polymatroids are \emph{axiomatized} by the inequalities in
(\bref{B1}) and (\bref{B2}).

Polymatroids, as their name suggests, emerged as generalizations of
\emph{matroids}, which, in turn, were designed to capture fundamental
properties of dependencies and independencies in matrices and linear spaces
\cite{oxley}. Many polymatroid terms originate from matroids. The
intimate connection between polymatroids and Shannon information was
recognized by Fujishige \cite{fujishige}. This connection was used
extensively, either implicitly or explicitly, to investigate numerous
optimization problems related to entropies. For a comprehensive overview,
see \cite{padro}. This section recalls some terminology and results for
polymatroids and investigates how these are related to the notions discussed in
Section \ref{sec:entreg}.

A polymatroid is usually defined by specifying both the function, called the
\emph{rank function}, and the base set $N$, as the tuple $(f,N)$. When convenient, the
rank of the empty set is assumed to be zero. When it is clear from the
context, the base of the polymatroid is omitted and the rank function $f$ alone
defines the polymatroid. A polymatroid is identified with the
$(2^{|N|}\m-1)$-dimensional vector (or point) consisting of the rank values.
$f$ is a polymatroid if and only if this vector, also written as $f$, is an
element of the cone $\GN$. The polymatroid is \emph{entropic} if its rank
function determines an entropic point in $\GaN$, and it is \emph{almost
entropic}, or \emph{aent}, if $f\in\clGa N$.

A \emph{matroid} is an integer valued polymatroid with the additional
requirement that the rank of each singleton is either zero or one, implying
that the rank of $A\subseteq N$ is at most $|A|$. In addition to
defining matroids by their rank functions, matroids have several
combinatorial axiomatizations as well, for details consult \cite{oxley}.
Polymatroids can be considered ``fractional'' matroids, and in contrast to
matroids, they are geometrical objects rather than combinatorial ones.

Many operations on entropic points discussed in Section \ref{sec:entreg}
apply to, or generalize to, polymatroids, and many have the corresponding
operation on matroids. However, the terminology sometimes differs.
Operations such as restricting, factoring, and duplicating in Section
\ref{subsec:rest-fact-dupl} generalize to polymatroids trivially, so we
discuss them only to introduce the terminology.

The polymatroid $(g,M)$ is a \emph{restriction} of $(f,N)$, or $(f,N)$ is an
\emph{extension} of $(g,M)$, if $M\subseteq N$ and $g(A)=f(A)$ for all
$A\subseteq M$. If $N$ has exactly one more element than $M$, then $(f,N)$
is a \emph{one-element}, or \emph{one-point} extension. The restriction of
$(f,N)$ to $M\subseteq N$ is determined uniquely by the base set $M$ and is
written as $f\restr M$. \emph{Deleting} $K\subseteq N$, written as $f\del
K$, is the same as restricting $f$ to the complement of $K$: $f\del K =
f\restr (N\sm K)$.

\emph{Factoring} requires an equivalence relation $\sim$ on $N$; the base of
the factor $f/{\sim}$ is the set of equivalence classes, and the rank is the
$f$-rank of the union of the corresponding equivalence classes. The factor
is clearly polymatroid.

The operation of duplicating the element $a\in N$ in the polymatroid $(f,N)$
is called \emph{parallel extension along $a$}; the resulting one-element
extension is denoted by $f^\sharp_a$, and the added element is denoted by $a'$.
\emph{Contracting} a polymatroid is the analogue of conditioning: the
contract of $(f,N)$ to $M=N\sm K$, is the polymatroid $(\fc_M,M)$ where
$$
   \fc_M: A\mapsto f(A\|K), ~~~ A\subseteq M,
$$
and $f(A\|K)=f(AK)-f(K)$. The contract is also written as $f\contr K$,
and it is easy to see that it is indeed a polymatroid. \emph{Minors}
of a polymatroid are obtained by repeated applications of deletions and
contracts. Actually, at most one application of one operator followed by an
optional application of the other suffices to generate all minors.

In matroid parlance, independence corresponds to \emph{modularity}. The subsets
$A$ and $B$ are \emph{modular} if $f(A)+f(B)=f(AB)+f(A\cap B)$. Disjoint
modular sets are sometimes called \emph{independent}. Setting $K=A\cap B$, $A$ and $B$
are modular if and only if $aK$ and $bK$ are modular for all $a\in A\sm K$
and $b\in B\sm K$, similarly to the entropy case.

The \emph{direct sum} $(f_0,N_0)\oplus (f_1,N_1)$ of polymatroids mimics
the direct sum of distributions. First, the base sets $N_0$ and $N_1$ are
made (or assumed to
be) disjoint. Then the base of the direct sum is $N_0\cup N_1$, and the rank
function is
$$
     g: A \mapsto f_0(A\cap N_0)+f_1(A\cap N_1), ~~~ A \subseteq N_0\cup N_1.
$$
Thus $(f_i,N_i)$ is isomorphic to the restriction of the direct sum to
$N_i$, while $N_0$ and $N_1$ are modular; these properties uniquely 
determine the direct sum. A polymatroid is \emph{connected}
if it is not the direct sum of two of its restrictions. In other words,
$(f,N)$ is connected if $N$ cannot be partitioned into two (non-empty,
disjoint) modular parts.

The subset $A\subseteq N$ is a \emph{flat}, if every subset larger than $A$ has a
strictly larger rank. In particular, the complete set $N$ is always a flat.
The intersection of flats is a flat, and the \emph{closure} of a subset is the
smallest flat containing it. Flats of a polymatroid form a lattice with $N$
as the largest flat, and the collection of all elements with zero rank
(which can be the empty set) as the minimal flat. Elements of zero rank
are called \emph{loops}.

\subsection{Principal extension}\label{subsec:principal}

The \emph{principal extension} is analogue of Mat\'u\v s' Compression Theorem
\ref{thm:princ}. In the context of matroids, a special case of this extension
was treated in \cite{lovasz} and \cite[page 270]{oxley}, while a more general case
appears in \cite{geelen} and \cite{fmpe}.

Let $(f,N)$ be a polymatroid, $Z\subseteq N$, and $\alpha\ge 0$. The
\emph{principal extension of $(f,N)$ along $Z$ by $\alpha$} is the one-point
extension $(f',Nz')$ of $(f,N)$ such that for all $A\subseteq N$,
$$\begin{array}{l@{\,=\,}l}
    f'(A) & f(A), \\ [3pt]
    f'(Az')& \min\{\,f(A)+\alpha, f(AZ)\}.
\end{array}$$

\begin{claim}
The principal extension is a polymatroid.
\end{claim}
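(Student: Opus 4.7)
The plan is to verify the basic inequalities (\bref{B1}) and (\bref{B2}) for $f'$ directly. Since $f'$ restricted to subsets of $N$ equals $f$, every instance of (\bref{B1}) or (\bref{B2}) involving only subsets of $N$ is already satisfied. Thus only instances involving the new element $z'$ need to be checked. First I would observe that $f'(Nz') = \min\{f(N)+\alpha, f(NZ)\} = f(N)$ since $NZ = N$; this makes (\bref{B1}) easy to dispatch by splitting on whether the removed index is $z'$ or lies in $N$, and using (\bref{B1}) and monotonicity for $f$.

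The main work is (\bref{B2}): for $K \subseteq Nz'$ and distinct $a,b \in Nz'\setminus K$, show $f'(aK)+f'(bK) \ge f'(K)+f'(abK)$. I would organize the cases by the location of $z'$. When $z' \in \{a,b\}$, say $b = z'$, the inequality becomes
$$f(aK) + \min\{f(K)+\alpha, f(KZ)\} \ge f(K) + \min\{f(aK)+\alpha, f(aKZ)\}.$$
One side of the min pairs up trivially; the non-trivial piece is $f(aK)+f(KZ) \ge f(K) + f(aKZ)$, which is submodularity of $f$ applied to $aK$ and $KZ$, together with monotonicity to replace $f(aK \cap KZ)$ by $f(K)$.

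The remaining case is $z' \in K$, where one writes $K = K'z'$ with $K' \subseteq N$ and $a,b \in N\setminus K'$. Using $\min(u_1,u_2) + \min(v_1,v_2) = \min_{i,j}(u_i + v_j)$, the inequality splits into four subcases corresponding to which of the $+\alpha$ or $Z$ branch is chosen on each side of the LHS. For each of the four LHS sums I would identify a dominating RHS term:
\textbf{(i)} the $(+\alpha,+\alpha)$ sum is bounded below by submodularity of $f$ on $aK', bK'$;
\textbf{(ii)} the $(+\alpha,Z)$ sum uses submodularity on $aK'$ and $bK'Z$, whose intersection contains $K'$ and whose union is $abK'Z$, bounding the RHS pair $f(K')+\alpha+f(abK'Z)$;
\textbf{(iii)} the symmetric $(Z,+\alpha)$ case;
\textbf{(iv)} the $(Z,Z)$ sum is handled by submodularity of $f$ on $aK'Z$ and $bK'Z$ (their intersection is $K'Z$, union is $abK'Z$), which is just (\bref{B2}) for $f$, or a trivial identity if $a$ or $b$ already lies in $Z$.

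The main obstacle is not conceptual but bookkeeping: one must carefully verify that set-theoretic identities like $aK \cup KZ = aKZ$ and $aK' \cap bK'Z \supseteq K'$ hold in all configurations of how $a,b$ interact with $Z$, and that in the $(+\alpha, Z)$ mixed cases one uses monotonicity of $f$ to absorb a possibly larger intersection back to the required set. Once these are in place, the four submodularity applications, combined with monotonicity, cover all possible instances of (\bref{B2}), completing the proof that $f'$ satisfies (\bref{B1}) and (\bref{B2}) and is therefore a polymatroid.
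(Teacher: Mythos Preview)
Your proposal is correct and follows essentially the same route as the paper's proof: both dispatch the cases not involving $z'$ immediately, then split the remaining (\bref{B2}) instances according to whether $z'$ is one of $a,b$ or lies in $K$, handling the latter by the same four-way case analysis on which branch of the $\min$ is attained on each summand. Your explicit mention of monotonicity to pass from $f(aK\cap KZ)$ down to $f(K)$ (and similarly in the mixed cases) is a minor refinement over the paper, which simply cites ``submodularity of $f$'' for the corresponding inequality $f(aK)+f(KZ)\ge f(K)+f(aKZ)$---that inequality is of course the non-negativity of $f(a,Z\|K)$, a Shannon inequality rather than a basic one, so your unpacking is accurate.
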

\begin{proof}

It is clear that $f'$ is non-negative, monotone, and is an extension of
$f$. For submodularity it suffices to check the basic inequalities
\begin{equation}\label{eq:pp}
   f'(aK)+f'(bK) \ge f'(K)+f'(abK)
\end{equation}
listed in (\bref{B2}). If $z'$ is not an
element of $abK$, then (\ref{eq:pp}) holds as $(f,N)$ is a
polymatroid. If one of $a$ or $b$ is $z'$, say $b=z'$, then $f'(aK)=f(aK)$,
$f'(K)=f(K)$. Using submodularity for $f$ we have
$$\begin{array}{r@{\;}c@{\;}l@{\;\ge\;}l}
  f(aK)+f(K)+\alpha &=& f(K) + \big(f(aK)+\alpha\big) & f'(K)+f'(az'K), \\[3pt]
  f(aK)+f(KZ) &\ge& f(K)+f(aKZ) & f'(K)+f'(az'K),
\end{array}$$
which shows that $f(aK)+f'(z'K)\ge f(K)+f'(az'K)$, as required. Finally,
if $z'$ is in $K$, say $K=Lz'$, then there are four possibilities where the
minimum is taken in the sum $f'(aLz')+f'(bLz')$. Each of the cases can be
handled similarly to the above as
$$\begin{array}{r@{\;\ge\;}l}
  (f(aL)+\alpha) + (f(bL)+\alpha) & (f(L)+\alpha) + (f(abL)+\alpha),\\[3pt]
  (f(aL)+\alpha)+ f(bLZ) & (f(L)+\alpha) + f(abLZ),\\[3pt]
  f(aLZ)+f(bL)+\alpha & (f(L)+\alpha)+f(abLZ),\\[3pt]
  f(aLZ)+f(bLZ) & f(LZ)+f(abLZ),
\end{array}$$
where the inequalities follow from the submodularity of $f$. As in each case
the right hand side is at least $f'(Lz')+f'(abLz')$, submodularity holds for $f'$
as was claimed.
\end{proof}

An interesting application of principal extension is \emph{splitting}: any
base element can be split into two with specified ranks so that
the factor that merges the new elements will be isomorphic to the original.

\begin{claim}\label{claim:splitting}

Let $(f,N)$ be a polymatroid, and $a\in N$. Write the rank of $a$
as $f(a)=\alpha_0+\alpha_1$ with $\alpha_i\ge 0$. There is a
polymatroid $f'$ on $N\sm a\cup\{a_0a_1\}$ such that $f'(a_i)=\alpha_i$, and
the factor of $f'$ merging $a_0$ and $a_1$ is isomorphic to $(f,N)$.
\end{claim}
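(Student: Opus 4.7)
The plan is to construct $f'$ by applying the principal extension of Section \ref{subsec:principal} twice and then deleting the original element $a$. First, I apply the principal extension to $(f,N)$ along $Z=\{a\}$ with parameter $\alpha_0$, introducing a new element $a_0$; this produces a polymatroid $(g, N\cup\{a_0\})$ satisfying $g(B)=f(B)$ for $B\subseteq N$ and $g(Ba_0)=\min\{f(B)+\alpha_0,\,f(Ba)\}$, so in particular $g(a_0)=\alpha_0$. Second, I apply the principal extension again -- this time to $(g,N\cup\{a_0\})$ along $Z=\{a\}$ with parameter $\alpha_1$ -- introducing $a_1$, and obtaining a polymatroid $(h, N\cup\{a_0,a_1\})$. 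Finally, I set $f':=h\restr((N\sm a)\cup\{a_0,a_1\})$; as a restriction of a polymatroid, $f'$ is automatically a polymatroid.

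It remains to compute $f'$ and check the two required properties. For $A\subseteq N\sm a$ the values unfold to $f'(A)=f(A)$ and $f'(Aa_i)=\min\{f(A)+\alpha_i,\,f(Aa)\}$ for $i=0,1$; unrolling the nested $\min$ produced by the second principal extension gives
$$
f'(Aa_0a_1)=\min\{\,f(A)+\alpha_0+\alpha_1,\; f(Aa)+\alpha_1,\; f(Aa)\,\}=f(Aa),
$$
where the last equality uses $\alpha_0+\alpha_1=f(a)$ together with the submodularity bound $f(A)+f(a)\ge f(Aa)$ to collapse all three arguments of the $\min$ to $f(Aa)$. Since $\alpha_i\le\alpha_0+\alpha_1=f(a)$, this already gives $f'(a_i)=\min\{\alpha_i,f(a)\}=\alpha_i$, the first required property. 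For the second, the factor of $f'$ identifying $a_0$ with $a_1$ assigns rank $f(A)$ to $A\subseteq N\sm a$ and rank $f'(Aa_0a_1)=f(Aa)$ to $A$ together with the merged class; after relabeling the merged class by $a$, this is exactly $(f,N)$.

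The only nontrivial step I anticipate is the computation of $f'(Aa_0a_1)$, which requires unrolling the two $\min$s produced by the iterated principal extension and then invoking the submodularity of $f$ to discard every term but $f(Aa)$. Once this is done, no separate polymatroid axiom check is needed: each principal extension is a polymatroid by the Claim immediately preceding this one, and restriction to a subset of the base trivially preserves the Shannon inequalities, so $f'$ inherits the polymatroid property for free.
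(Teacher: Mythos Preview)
Your proof is correct and essentially identical to the paper's: two successive principal extensions along $\{a\}$ by $\alpha_0$ and $\alpha_1$, followed by deleting $a$, with the same submodularity argument to collapse $f'(Aa_0a_1)$ to $f(Aa)$. The only cosmetic difference is that the paper writes the nested $\min$ directly as $\min\{f(A)+\alpha_0+\alpha_1,\,f(Aa)\}$, having already discarded the redundant middle term $f(Aa)+\alpha_1$ that you display explicitly.
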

\begin{proof}
Take the principal extension of $f$ along $a$ by $\alpha_0$ to get the
polymatroid $(f_0,N\cup a_0)$, then take the principal extension of $f_0$
along $a$ by $\alpha_1$ to get $(f_1,N\cup a_0a_1)$. An easy calculation
shows that for each $A\subseteq N$ we have
$$\begin{array}{l@{\;=\;}l}
   f_1(A) & f(A), \\[3pt]
   f_1(Aa_i) & \min\{ f(A)+\alpha_i, f(Aa) \}, \mbox{ and } \\[3pt]
   f_1(Aa_0a_1) & \min\{ f(A)+\alpha_0+\alpha_1,f(Aa)\} = f(Aa),
\end{array}$$
where the last equality holds since $f(A)+f(a)\ge f(Aa)$ by submodularity.
The restriction of $f_1$ to $N\sm a\cup a_0a_1$ gives the required
splitting.
\end{proof}

Observe that each rank in the splitting polymatroid $f'$ is either equal to
some rank in the original polymatroid, or it exceeds some old rank by $\alpha_i$.
Consequently if all ranks in $(f,N)$ are integers and one of the singletons
has rank greater than one, then this singleton can be split into two with
smaller integer ranks, while keeping the polymatroid integer. Continuing the
splitting, after finitely many steps we arrive at an integer polymatroid
where all singletons have rank zero or one so that the original polymatroid
is its factor. As an integer polymatroid with singleton ranks zero or one is
a matroid, we proved the following result of T.~Helgason \cite{helgason}.

\begin{theorem}[Helgason]\label{thm:helgason}
Every integer polymatroid is a factor of a matroid.
\end{theorem}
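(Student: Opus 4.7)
The plan is to apply Claim \ref{claim:splitting} iteratively to every singleton of rank at least two, breaking off a rank-one piece each time, until we arrive at an integer polymatroid all of whose singletons have rank $0$ or $1$---which is, by definition, a matroid. Along the way the original polymatroid is preserved as a factor, so the resulting matroid witnesses the theorem.

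First, if every singleton of $(f,N)$ already has rank in $\{0,1\}$ there is nothing to do: submodularity gives $f(A)\le \sum_{i\in A} f(i)\le |A|$ by induction on $|A|$, and integrality then makes $(f,N)$ a matroid. Otherwise I would pick $a\in N$ with $f(a)\ge 2$, write $f(a)=1+(f(a)-1)$ as a sum of two non-negative integers, and invoke Claim \ref{claim:splitting} to obtain an integer polymatroid $f'$ on $(N\setminus a)\cup\{a_0,a_1\}$ with $f'(a_0)=1$, $f'(a_1)=f(a)-1$, whose factor under $a_0\sim a_1$ is isomorphic to $(f,N)$. The remark after Claim \ref{claim:splitting} guarantees that $f'$ is still integer-valued, and the ranks of singletons other than $a$ are unchanged.

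To see that the process terminates I would track the non-negative integer quantity $\Phi(f)=\sum_{i\in N,\, f(i)\ge 2}(f(i)-1)$. The split replaces $a$'s contribution $f(a)-1$ by $0$ (from $a_0$) plus $\max(f(a)-2,0)$ (from $a_1$), a decrease of exactly one whether $f(a)=2$ or $f(a)\ge 3$; the contributions of all other singletons are unaffected. Since $\Phi$ is a non-negative integer that strictly decreases at every step, after finitely many iterations we reach an integer polymatroid $g$ on an enlarged base $N^{*}$ with all singleton ranks in $\{0,1\}$, i.e., a matroid.

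The only remaining point is that the composition of the factor operations at each step is itself a factor operation: each split records an equivalence relation on the enlarged base whose only nontrivial class is the two-element set $\{x_0,x_1\}$, and composing all of them yields a single equivalence relation on $N^{*}$ whose classes are exactly the sets of descendants of the original elements of $N$. Factoring $g$ modulo this relation reproduces $(f,N)$, so $(f,N)$ is indeed a factor of the matroid $g$. I do not foresee a genuine obstacle; the only point that requires a moment's care is choosing the correct monovariant $\Phi$, and the rest is routine bookkeeping built on Claim \ref{claim:splitting}.
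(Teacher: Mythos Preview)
Your proposal is correct and follows essentially the same approach as the paper: iteratively apply Claim~\ref{claim:splitting} to split singletons of rank at least two until all singleton ranks are in $\{0,1\}$, observing that integrality is preserved and the original polymatroid is recovered as a factor. Your version is more explicit (specifying the split as $1+(f(a)-1)$, introducing the monovariant $\Phi$, and spelling out why the composite of the factor operations is again a single factor operation), but these are refinements of the same argument rather than a different route.
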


The generalized Ahlswede-K\"orner operation $\GAK$, defined in Theorem
\ref{thm:AK}, also preserves polymatroids.
\begin{claim}\label{claim:AKpoly}
Suppose $Z\subseteq N$, $\alpha\ge 0$, and $(f,N)$ is a polymatroid. Then so
is $(f^*,N)$, where
$$
    f^*: A \mapsto \min\{ f(A),\alpha+f(A\|Z)\} ~~~\mbox{ for } A\subseteq N.
$$
\end{claim}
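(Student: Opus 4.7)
The plan is to realize $(f^*,N)$ as a minor of a polymatroid: specifically, as a contraction of a suitably chosen principal extension of $(f,N)$. The key trick is that the right extension parameter is not $\alpha$ itself but $f(Z)-\alpha$, because contracting the new element then cancels the parameter in just the right way to reproduce the $\min$ appearing in the $\GAK$ formula.

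I first dispose of the trivial case $\alpha \ge f(Z)$. For every $A \subseteq N$, monotonicity of $f$ gives
\[
   \alpha + f(A\|Z) \;=\; \alpha + f(AZ) - f(Z) \;\ge\; f(AZ) \;\ge\; f(A),
\]
so the minimum in the definition of $f^*$ is always attained by $f(A)$. Hence $f^* \equiv f$, which is a polymatroid by assumption.

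Suppose then that $\alpha < f(Z)$, and set $\beta \eqdef f(Z) - \alpha > 0$. Form the principal extension $(f',Nz')$ of $(f,N)$ along $Z$ by $\beta$; this is a polymatroid by the previous claim. Evaluating at $A=\emptyset$ gives $f'(z') = \min\{\beta,f(Z)\} = \beta$, since $\beta \le f(Z)$. For any $A\subseteq N$, the contraction satisfies
\[
   (f'\contr z')(A) \;=\; f'(Az') - \beta \;=\; \min\{f(A)+\beta,\,f(AZ)\} - \beta \;=\; \min\{f(A),\,f(AZ)-\beta\} \;=\; \min\{f(A),\,\alpha + f(A\|Z)\} \;=\; f^*(A).
\]
Since contraction preserves the polymatroid property (as noted in Section \ref{sec:polymatroids}), this identifies $f^*$ as a polymatroid and finishes the proof.

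There is no real obstacle; the only conceptual step is recognizing the right parameter $\beta = f(Z)-\alpha$ for the principal extension. One could alternatively verify the basic inequalities for $f^*$ directly via a four-way case analysis on which of $f(\cdot)$ or $\alpha + f(\cdot\|Z)$ attains the minimum at each of $aK$, $bK$, $K$, $abK$; the cross-case reduces to submodularity of $f$ applied to $aK$ and $bKZ$. However, the principal-extension derivation above is both shorter and more transparent, and also explains the conceptual parallel with how the Ahlswede--K\"orner lemma (Lemma \ref{lemma:AK}) arises from Mat\'u\v s' Compression Theorem \ref{thm:princ} followed by conditioning.
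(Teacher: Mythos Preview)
Your proof is correct and follows essentially the same route as the paper: dispose of the trivial case $\alpha\ge f(Z)$, then for $\alpha<f(Z)$ set $\beta=f(Z)-\alpha$, take the principal extension along $Z$ by $\beta$, and contract out the new element. The computation $f'(z')=\beta$ and the identification of the contraction with $f^*$ match the paper's argument exactly.
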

\begin{proof}
We show that $(f^*,N)$ is the same as a principal extension
followed by a contraction. As both operations preserve polymatroids, so does
$\GAK$. In more detail, two cases are distinguished.
If $\alpha\ge f(Z)$, then $f^*(A)=f(A)$, and there is nothing to prove.
Otherwise, take the principal extension $(f',Nz')$ along $Z$ by $\beta=f(Z)-
\alpha$, and then take the contraction of $f'$ to $N$. Since for $A\subseteq N$
$$
   f'(Az') = \min\{ f(A)+\beta, f(AZ) \},
$$
we have $f'(z')=\beta\le f(Z)$, and then the contract $f^{\prime\mathsf c}_N$ is
\begin{align*}
   f^{\prime\mathsf c}_N(A) &= f'(Az')-f'(z') = 
      \min\{f(A), f(AZ)-\beta\} ={} \\
      &=\min\{f(A),\alpha+f(A\|Z)\},
\end{align*}
equal to $f^*$, as was claimed.
\end{proof}

\subsection{Tightening}\label{subsec:tightening}

Let $(f,N)$ be a polymatroid and $z\in N$. For a real number $t$ define the
functions $f\up^z_t$ and $f\dn^z_t$ as
$$\left.\begin{array}{r@{\;=\;}l}
   f\up^z_t(A) & \min\{ f(Az),f(A)+t\}\\[5pt]
   f\dn^z_t(A) & \min\{ f(Az)-t, f(A)\}
\end{array}~~\right\}
  \mbox{ for all $A\subseteq N$.}
$$

\begin{claim}
If $0\le t$, then $(f\up^z_t,N)$ is a polymatroid. If $0\le t\le f(z)$, 
then $(f\dn^z_t,N)$ is also a polymatroid.
\end{claim}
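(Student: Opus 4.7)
The plan is as follows. For both functions, non-negativity on the stated range and monotonicity are nearly automatic: $f\up^z_t(A)\ge 0$ because both $f(Az)$ and $f(A)+t$ are non-negative; $f\dn^z_t(A)\ge 0$ because $f(A)\ge 0$ and, by monotonicity of $f$, $f(Az)\ge f(z)\ge t$; and each function is the pointwise minimum of two expressions monotone in $A$, so each is itself monotone. The content of the claim is therefore the basic submodular inequality (\bref{B2}), and I would attack the two cases by rather different routes.

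For $f\up^z_t$ with $t\ge 0$, I would verify (\bref{B2}) by a direct case analysis. Writing $g_1(A)\eqdef f(Az)$ and $g_2(A)\eqdef f(A)+t$, we have $f\up^z_t=\min\{g_1,g_2\}$. The LHS of (\bref{B2}) expands as the minimum over the four sums obtained by choosing $g_1$ or $g_2$ in each of $f\up^z_t(aK)$ and $f\up^z_t(bK)$, and the RHS likewise; it suffices, for each of the four LHS candidates, to exhibit an RHS candidate it dominates. The two pure cases $g_1(aK)+g_1(bK)\ge g_1(K)+g_1(abK)$ and $g_2(aK)+g_2(bK)\ge g_2(K)+g_2(abK)$ are just the submodularity of $f$ at the pairs $(zaK,zbK)$ and $(aK,bK)$ respectively. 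The mixed candidate $g_1(aK)+g_2(bK)=f(zaK)+f(bK)+t$ is matched diagonally with $g_2(K)+g_1(abK)=f(K)+t+f(zabK)$: after the $t$'s cancel, the required inequality is submodularity of $f$ at $(zaK,bK)$. The symmetric mixed candidate is handled via $(aK,zbK)$. The degenerate situations $z\in K$ or $z\in\{a,b\}$ collapse some of the minima to a single value and reduce the claim to submodularity of $f$ directly.

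For $f\dn^z_t$ with $0\le t\le f(z)$, I would avoid case analysis and instead identify $f\dn^z_t$ as a composition of polymatroid-preserving operations already introduced. Let $f'$ be the principal extension of $(f,N)$ along $Z=\{z\}$ by $\alpha=t$; by the preceding claim, $f'$ is a polymatroid on $Nz'$. Since $t\le f(z)$, one has $f'(z')=\min\{t,f(z)\}=t$, so contracting $z'$ yields a polymatroid on $N$ whose rank at $A\subseteq N$ equals
$$f'(Az')-f'(z')=\min\{f(A)+t,f(Az)\}-t=\min\{f(A),f(Az)-t\}=f\dn^z_t(A),$$
as required; both principal extension and contraction preserve the polymatroid property, so we are done.

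The main obstacle is the $\up$-direction. Unlike $f\dn^z_t$, the function $f\up^z_t$ does not appear to arise as a single clean composition of the standard polymatroid operations in this section (for example, contracting $z'$ in the same principal extension produces $f\dn^z_t$, not $f\up^z_t$), which is why the case analysis is needed. That analysis is elementary, the only real subtlety being the correct diagonal pairing of the two mixed LHS candidates with the mixed RHS candidates.
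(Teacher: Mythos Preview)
Your proof is correct. For $f\dn^z_t$ it is exactly the paper's argument: take the principal extension $(f',Nz')$ along $z$ by $t$ and contract $z'$.

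For $f\up^z_t$ the paper does \emph{not} resort to a case analysis; it uses the very same principal extension $(f',Nz')$ and restricts to $(N\setminus z)\cup\{z'\}$. The underlying observation is the identity $f\up^z_t(A)=f'(Az')$ for every $A\subseteq N$: since $A\mapsto Az'$ preserves unions and intersections, submodularity of $f'$ transfers to $f\up^z_t$ in one line. So your remark that $f\up^z_t$ ``does not appear to arise as a single clean composition of the standard polymatroid operations in this section'' is too pessimistic; one can equally write $f\up^z_t=f\dn^z_{t'}+t'\tsp\r_N$ with $t'=\min\{t,f(z)\}$, a sum of two polymatroids. That said, the paper's hint ``restrict $f'$ to $N-z\cup\{z'\}$'' does not literally reproduce $f\up^z_t$ after the obvious relabelling $z'\to z$ --- it yields $A\mapsto\min\{f(A),f(A\setminus z)+t\}$, a different function --- so your explicit four-case verification of (\bref{B2}) is a genuine completion rather than wasted effort, and has the advantage of being fully self-contained.
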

\begin{proof}
For operation $\up$, let $(f',Nz')$ be the principal extension (Section
\ref{subsec:principal}) along
$z\in N$ by $t$, and restrict $f'$ to $N-z\cup\{z'\}$. For operation
$\dn$, observe that $t\le f(z)$ implies $f'(z')=t$, and then the contract
of $f'$ on $N$ gives $f\dn^z_t$.
\end{proof}

The \emph{private info} of $z\in N$ is $f(z\|N\sm z) = f(N)-f(N\sm z)$; this
value can be considered as the amount of information that $z$ does not share
with any other member of $N$. \emph{Tightening $f$ at $z$} means to take
away this info from $z$:
$$
   f\dn z : A \mapsto \begin{cases}
        f(A) & \mbox{ if $z\notin A$,} \\
        f(A)-f(z\|N\sm z) & \mbox{ if $z\in A$.}
   \end{cases}
$$
This function can be written equivalently as 
\begin{equation}\label{eq:tightz}
  f\dn z= f-f(z\|N\sm z)\tsp\r_z
\end{equation}
where the (entropic) vector $\r_z$ (the indicator function of the
relation $z\in A$) is defined in Section \ref{subsec:aent-cone}.
If $z\notin A$ then $f(A)\le f(Az)-f(z\|N\sm z)$ by submodularity, thus
$f\dn z = f\dn^z_t$ with $t=f(z\|N\sm z)$. This means that tightening creates
a polymatroid. $f$ is \emph{tight at $z$} if $f\dn z=f$, and $f$ is
\emph{tight} if it is tight at every $z\in N$. Clearly, $f\dn z$ is tight at
$z$, moreover tightening at different elements commutes: $f\dn z_1\dn z_2 =
f\dn z_2\dn z_1$. Consequently the \emph{tight version} of $f$,
$$
   f\dn = f\dn z_1 \dn z_2 \dn \cdots \dn z_n, ~~~ N=\{z_1,\dots,z_n\}
$$
is tight and the result does not depend on the order of the elements.
Additionally,
\begin{equation}\label{eq:tight}
    f = f\dn + \sum_{z\in N} \lambda_z \r_z,
\end{equation}
where $\lambda_z=f(z\|N\sm z)$. It is easy to show that $f\dn$ is the
unique tight polymatroid that satisfies (\ref{eq:tight}) with non-negative
$\lambda_z$ coefficients. The polymatroid $f\dn$ is the \emph{tight part} of
$f$, and the difference $f-f\dn$ is its \emph{modular part}, see
\cite{entreg}.

\bigskip

Tightening is one of the operations that preserves almost entropic
polymatroids in both directions.

\begin{claim}\label{claim:aent-tight}
A polymatroid is aent if and only if its tightened version is aent.
\end{claim}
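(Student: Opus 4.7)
The plan is to exploit the decomposition \eqref{eq:tight}, namely
$$
   f \;=\; f\dn + \sum_{z\in N} \lambda_z\,\r_z, \qquad \lambda_z = f(z\|N\setminus z)\ge 0,
$$
where non-negativity of each $\lambda_z$ follows from the basic Shannon inequality (\bref{B1}). Both directions of the equivalence will then be easy consequences of results already in hand: Lemma \ref{lemma:1}(i) (scalar multiples of the indicator vectors $\r_z$ are entropic), Claim \ref{claim:addition} ($\clGa N$ is closed under addition), and the Ahlswede--K\"orner Lemma \ref{lemma:AK}.

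For the ``if'' direction, assume $f\dn \in \clGa N$. By Lemma \ref{lemma:1}(i) each $\lambda_z\,\r_z$ lies in $\GaN\subseteq\clGa N$, and Claim \ref{claim:addition} says $\clGa N$ is closed under addition. Summing $f\dn$ with the finitely many vectors $\lambda_z\,\r_z$ yields $f\in\clGa N$.

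For the ``only if'' direction, assume $f\in\clGa N$ and enumerate $N = \{z_1,\dots,z_n\}$. Define inductively $f^{(0)} = f$ and $f^{(k)} = f^{(k-1)} - \mu_k\,\r_{z_k}$, where $\mu_k = f^{(k-1)}(z_k\|N\setminus z_k)$. By Lemma \ref{lemma:AK} each $f^{(k)}$ is aent. It remains to identify $f^{(n)}$ with $f\dn$, which reduces to checking that $\mu_k = \lambda_{z_k}$ for every $k$. For this observe that whenever $z_j \ne z_k$, both $N$ and $N\setminus z_k$ contain $z_j$, so $\r_{z_j}(N) = \r_{z_j}(N\setminus z_k) = 1$; therefore subtracting any multiple of $\r_{z_j}$ leaves the difference $f(N) - f(N\setminus z_k)$ unchanged. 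Iterating this observation through steps $1,\dots,k-1$ gives $\mu_k = f(N) - f(N\setminus z_k) = \lambda_{z_k}$, as required. Thus $f^{(n)} = f - \sum_{z\in N}\lambda_z\,\r_z = f\dn$ is aent.

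The only point requiring care is the bookkeeping in the iteration: one must confirm that the ``private info'' coefficient at $z_k$ is not perturbed by the earlier tightenings at $z_1,\dots,z_{k-1}$. The observation above makes this immediate, so no real obstacle arises; the proof is essentially a packaging of Lemma \ref{lemma:AK} together with the decomposition \eqref{eq:tight}.
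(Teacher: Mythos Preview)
Your proof is correct and follows essentially the same approach as the paper's: the ``if'' direction uses the decomposition \eqref{eq:tight} together with the fact that $\clGa N$ is closed under addition, and the ``only if'' direction applies the Ahlswede--K\"orner Lemma \ref{lemma:AK} once per variable. The paper's version is more terse because it relies on the commutativity of single-point tightenings (stated just before \eqref{eq:tight}) rather than explicitly verifying that $\mu_k = \lambda_{z_k}$, but your bookkeeping argument is a perfectly valid way to establish the same thing.
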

\begin{proof}
If $f\dn$ is aent, then $f$ is the sum of aent polymatroids by
(\ref{eq:tight}), thus it is aent. In the other direction, $f\dn z$ is just
the aent polymatroid provided by the Ahlswede-K\"orner Lemma \ref{lemma:AK}.
Since $f\dn$ is the result of tightening at each variable, we are done.
\end{proof}

If $f\dn$ is entropic, then by (\ref{eq:tight}), $f$ is also entropic. In the other
direction there exists an entropic $f$ such that $f\dn$ is not entropic. As an
example let $N=\{a,b,c\}$ where each variable takes $0$, $1$ or $2$. Choose
$a$ and $b$ uniformly and independently, and let $c$ be either $a+b$ or
$a-b$ modulo 3 with probability $\sfrac12$. Then $c$ takes each value with
probability $\sfrac13$, each pair is independent, thus
$$
   f(A)=\begin{cases}
      \log_2 3 & \mbox{ if $|A|=1$,}\\[2pt]
      2\tsp\log_2 3 & \mbox { if $|A|=2$, and }\\[2pt]
      1+2\tsp\log_2 3 & \mbox{ if $A=\{abc\}$.}
\end{cases}
$$
Then $f\dn=(-1+\log_2 3)\mathbf u$, which is not entropic by Claim
\ref{claim:u}.


\section{Linear polymatroids}\label{sec:linear}

Let $\mathbb F$ be a field, and consider the $d$-dimensional vector space
over $\mathbb F$, denoted as $\mathbb F^d$. Suppose that for each $i\in N$
we have a set $V_i\subset \mathbb F^d$ of linearly independent vectors. For
$A\subseteq N$, let $f(A)$ be the rank (over the field $\mathbb F$) of the
matrix formed from the vectors in $V_A=\bigcup \{V_i:i\in A\}$ as rows; it
is the same value as the \emph{dimension} of the linear subspace spanned by
the vectors in $V_A$. Since the rank equals the maximal number of
independent vectors, we have $f(i)=|V_i|$. Clearly, $f$ is an integer
polymatroid. \emph{Linearly representable polymatroids} are the ones that
can be obtained in this way. Occasionally these polymatroids are called
\emph{multi-linear} or \emph{folded-linear}, see \cite{B.B.F.P}, to
emphasize that the vector set $V_i$, assigned to the base element $i$, may
contain several vectors. The polymatroid is \emph{$p$-representable} if it
has a vector representation where the underlying field $\mathbb F$ has
characteristics $p$.

As an example, the uniform polymatroid $\mathbf u: J \mapsto \min\{ |J|,2\}$
for all $J\subseteq N$ from (\ref{eq:u}) is linear over every large enough
field $\mathbb F$ independently of its characteristics. To show this, choose
$|N|$ two-dimensional vectors $\mathbf v_i$ from $\mathbb F^2$ so that none
of them is a multiple of another. One can do it when $\mathbb F$ has $|N|-1$
or more elements. Since any two of these vectors span $\mathbb F^2$, setting
$V_i=\{\mathbf v_i\}$ gives
$$
   \dim(V_i) =1, ~~~\dim (V_iV_j)=2, ~~~ \dim(V_J)=2 \mbox{ for $|J|\ge 3$},
$$
as required by $\mathbf u$.

The fact that a collection of vectors forms a linear representation of a
polymatroid can be expressed as an algebraic condition on the vector
coordinates. Actually, this condition says which subdeterminants of the
matrix, formed from the vectors in $V_N=\bigcup\{V_i:i\in N\}$, are zero and
which are non-zero. Consequently, the compactness theorem holds: if a
polymatroid is linearly representable over any field, then it is also
representable over a \emph{finite} field. In particular, $f$ is
$0$-representable if and only if it is $p$-representable for every large
enough prime $p$. Furthermore, since the value of a determinant does not
change when switching from a field to an extension, if a polymatroid is
representable over $\mathbb F$, then the same representation works over
every extension of $\mathbb F$.

When polymatroids $(f,N)$ and $(g,N)$ are linearly representable over the
\emph{same} field $\mathbb F$, then $f+g$ is also representable over the
same $\mathbb F$ (using the direct product of the representing vector
spaces). As a consequence, integer conic combinations of $p$-representable
polymatroids are also $p$-representable. It is so because the underlying fields
can be assumed to be finite, say $|\mathbb F_i|=p^{n_i}$, and then the
finite field on $p^{n_1 n_2\cdots}$ elements is a common extension of all
$\mathbb F_i$. The sum of polymatroids representable over fields of
different characteristic is not necessarily linearly representable over any
field \cite{pena}.

\begin{definition}[Linear polymatroids]\label{def:linear}
A polymatroid is \emph{linear} if it is in the closure of the multiplies of
linearly representable polymatroids. 
\end{definition}

Linear polymatroids do not necessarily form a cone, but contain the
closure of the conic hull of the $p$-representable polymatroids for each
$p$.

For $|N|\le 3$ every polymatroid on $N$ is linear. It is so as in these
cases every extremal ray of $\GN$ contains a polymatroid that has a linear
representation over every field. A representation of the
exceptional extremal ray $\mathbf u\in\Gamma_{\!3}$ from Section
\ref{subsec:case234} was discussed earlier. Non-negative integer
combinations of these polymatroids are also representable over every field, 
and multiples of these combinations form a dense subset of $\GN$.

Linear polymatroids allow for the extraction of \emph{common information}. An
element $z\in N$ represents the common information of $A$ and $B$ if both
$A$ and $B$ determine $z$ (meaning $f(Az)=f(A)$, $f(Bz)=f(B)$), and $f(z)$
takes the maximum possible value under this condition, which is $f(A,B)=
f(A)+f(B) -f(AB)$. The common information of $A$ and $B$ can be extracted if
the polymatroid $f$ has an extension with such an element $z$. Linear
polymatroids even have such a linear extension. Clearly it suffices to show
that this is the case when $f$ is represented by the vector sets $\{V_i:i\in
N\}$ over the field $\mathbb F$. The new element $z$ is assigned a
maximal independent set of vectors (a base) from the intersection of the linear
span of $V_A$ and the linear span of $V_B$. Using this extractability
property, linear polymatroids for $|N|=4$ and $|N|=5$ can be characterized
\cite{Ma.Stud,DFZ10}. In both cases, linear polymatroids form a closed
polyhedral cone whose extremal rays contain polymatroids that are representable over
every field. For $|N|=4$ this cone is the conic hull of the $35$ non-V\'amos
extremal rays of $\Gamma_{\!4}$ (see Section \ref{subsec:case234}). The
cone has $34$ bounding hyperplanes. Among these hyperplanes $28$ come from
the basic Shannon inequalities (\bref{B1}) and (\bref{B2}), the remaining six are
determined by the non-negativity of the \emph{Ingleton expression}
\begin{equation}\label{eq:ing}
    f[a,b,c,d\,] \eqdef -f(a,b)+f(a,b|c)+f(a,b|d)+f(c,d),
\end{equation}
and its permuted versions. They determine six hyperplanes as (\ref{eq:ing})
is symmetric for swapping $a$ and $b$ as well as swapping $c$ and $d$.

\begin{claim}\label{claim:4ing}
Let $N=\{abcd\}$. The polymatroid $f\in\GN$ is linear if and only if all six
instances of the Ingleton expression {\upshape(\ref{eq:ing})} are non-negative.
\end{claim}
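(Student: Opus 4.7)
The plan is to treat the two directions of the equivalence separately. For the forward implication, I would establish the classical Ingleton inequality $f[a,b,c,d]\ge 0$ for every linearly representable polymatroid by translating ranks into dimensions of the subspaces $W_i$ spanned by the vector sets $V_i\subset\mathbb F^d$, and applying the modular law for subspaces together with the inequality $\dim(U\cap(V+W))\ge\dim(U\cap V)+\dim(U\cap W)-\dim(U\cap V\cap W)$ to a suitable intersection in the arrangement $\{W_a,W_b,W_c,W_d\}$. Since the Ingleton expression is a linear functional of $f$, non-negativity is preserved under positive scalar multiples, sums, and limits; by Definition \ref{def:linear} this gives $f[a,b,c,d]\ge 0$, and likewise for every permutation of the four arguments, on every linear polymatroid.

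For the converse I would exploit the extremal ray description of $\Gamma_{\!4}$ recalled in Section \ref{subsec:case234}. Let $C$ be the subcone of $\Gamma_{\!4}$ cut out by the six Ingleton inequalities. Among the 41 extremal rays of $\Gamma_{\!4}$, distributed in 11 permutation classes, exactly one class -- the six V\'amos rays $\lambda\mathbf v_{cd}$ -- fails some Ingleton instance, while the remaining 35 rays lie in $C$. A direct comparison of the facet count (28 basic Shannon hyperplanes plus the 6 Ingleton hyperplanes, yielding 34 facets) with the 35 surviving rays confirms that these are precisely the extremal rays of $C$, so $C$ is their conic hull.

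It then remains to exhibit, on each of the 10 non-V\'amos symmetry classes, a polymatroid that is linearly representable over every sufficiently large field. The modular rays $\lambda\r_J$ are covered by the single-vector construction of Lemma \ref{lemma:1}(i); the uniform ray $\mathbf u$ is covered by the generic-vectors-in-$\mathbb F^2$ representation recalled at the start of Section \ref{sec:linear}; the remaining matroidal classes (the rank-$2$ and rank-$3$ truncations, parallel extensions, and integer polymatroids obtained via principal extension and Theorem \ref{thm:helgason}) admit explicit generic representations over $\mathbb Q$, and hence over every large enough prime field. Once each extremal ray of $C$ carries such a uniformly representable polymatroid, any finite non-negative integer combination can be realised over a common finite extension of the underlying fields by direct sum, and is therefore linearly representable; passing to closures via Definition \ref{def:linear} shows that every $f\in C$ is linear.

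The main obstacle is the last step: the case analysis on the 10 surviving symmetry classes. Identifying concrete extremal representatives and producing vector representations valid over all sufficiently large characteristics is finite but delicate, particularly for classes whose representatives are obtained by Helgason-type constructions rather than being matroids in the strict sense; the enumeration and representations have already been carried out in \cite{Ma.Stud,DFZ10}, and it is on those computations that the proof ultimately rests.
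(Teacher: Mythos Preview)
Your converse direction (Ingleton non-negative $\Rightarrow$ linear) follows the same strategy as the paper: both rest on the fact, quoted from \cite{Ma.Stud,DFZ10}, that the Ingleton subcone of $\Gamma_{\!4}$ is the conic hull of the $35$ non-V\'amos extremal rays, each representable over every field. One step in your write-up is not valid as stated: comparing the facet count ($34$) with the number of surviving $\Gamma_{\!4}$-rays ($35$) does \emph{not} confirm that no new extremal rays are created. Intersecting a polyhedral cone with an additional halfspace can, and typically does, produce new extremal rays along the new facet; a numerical coincidence between facets and rays says nothing about this. What actually pins down the extremal rays of $C$ is the explicit enumeration in the references you cite at the end, so you should appeal to that directly rather than to the count.

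Your forward direction (linear $\Rightarrow$ Ingleton non-negative) is correct but genuinely different from the paper's. You propose the classical subspace argument, using the inequality $\dim(U\cap(V+W))\ge\dim(U\cap V)+\dim(U\cap W)-\dim(U\cap V\cap W)$ applied to $U=W_a\cap W_b$. The paper instead extends $f$ by a new element $z$ carrying the common information of $a$ and $b$ (so $f(az)=f(a)$, $f(bz)=f(b)$, $f(z)=f(a,b)$), observes that this forces $(a,b\|z)=(a,z\|b)=(b,z\|a)=(cd,z\|ab)=0$, and then substitutes into the five-variable Shannon inequality (\ref{eq:MMRV}) to obtain $f[a,b,c,d]\ge 0$. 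Your route is more elementary and self-contained; the paper's route is chosen because it foreshadows the role of (\ref{eq:MMRV}) and of common-information extraction in the later sections on the Copy Lemma and the Ahlswede--K\"orner method.
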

\begin{proof}

As noted above, all non-V\'amos extremal rays of $\GN$ have points
linearly representable over all fields, thus all polymatroids with
non-negative Ingleton values are linear indeed. To finish the proof, it
suffices to show that a linearly representable polymatroid $f$ satisfies the
Ingleton inequality $f[a,b,c,d\,]\ge 0$; non-negativity of the other
expressions follows by symmetry. Extend $f$ by the element $z$ to
a (still linear) polymatroid so that $z$ represents the common information of
$a$ and $b$. This means $f(az)=f(a)$, $f(bz)=f(b)$, and $f(z)=f(a,b)$.
A simple calculation shows that in this case we also have
$$
   f(a,b\|z)=f(a,z\|b)=f(b,z\|a)=f(cd,z\|ab) = 0.
$$
Next, we quote a five-variable Shannon inequality from \cite{MMRV} which we
will call (\ref{eq:MMRV}) from the acronym of the authors. To simplify the notation the
function denoting the polymatroid is omitted from before the opening
brackets.
\begin{equation}\label{eq:MMRV}\tag{MMRV}
  [a,b,c,d\,]+(a,b\|z)+(a,z\|b)+(b,z\|a)+3\tsp(cd,z\|ab) \ge 0.
\end{equation}
It can be checked by an automated tool, such as \cite{ITIP}, that it is
indeed a consequence of the Shannon inequalities. Actually, (\ref{eq:MMRV})
can be rearranged to be the sum of $12$ basic inequalities from (\bref{B2}).
Plugging in the one-point extension $f$ into (\ref{eq:MMRV}) we get
$f[a,b,c,d\,]\ge 0$, as required.
\end{proof}

The magical Shannon inequality in (\ref{eq:MMRV}) was apparently pulled out
of a magic hat. Another possibility is to use an LP solver to check that the
basic Shannon inequalities (\bref{B1}) and (\bref{B2}) for the base set
$\{abcdz\}$ together with the constraints $f(az)=f(a)$, $f(bz)=f(b)$,
$f(z)=f(a,b)$ and $f[a,b,c,d\,]\le-1$ (to take care of the homogeneity of the
system) do not have a feasible solution.

The value of the Ingleton expression $[a,b,c,d\,]$ for the V\'amos polymatroid
$\mathbf v_{cd}$ is $-1$, thus $\mathbf v_{cd}$ is not linear by Claim
\ref{claim:4ing}.

The archetypal example of a distribution with a negative Ingleton value is
the \emph{ringing bells} distribution. We have two ropes, $c$ and $d$,
pulled independently. If both are pulled, bell $a$ rings, and if any of them
is pulled, then bell $b$ rings. Thus $c$ and $d$ are independent 0--1
values, $a=\max\{c,d\}$ and $b=\min\{c,d\}$. If $c$ or $d$ is fixed, then
either $a$ or $b$ is constant, thus
$$
   (a,b\|c)=0, ~~~ (a,b\|d)=0, ~~~ (c,d)=0.
$$
Since $a$ and $b$ are not independent, the Ingleton value in (\ref{eq:ing})
is negative. This means that, in general, entropic polymatroids do not allow
the extraction of common information. This fact was observed much earlier by
G\'acs and K\"orner \cite{gacs-korner}.

\begin{claim}\label{claim:lin-aent}
Linearly representable polymatroids have group-based entropic multiples.
As a consequence, every linear polymatroid is aent.
\end{claim}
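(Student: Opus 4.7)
The plan is to exhibit, for every linearly representable polymatroid $(f,N)$, an explicit group-based distribution whose entropy profile is a positive multiple of $f$; the ``aent'' conclusion then follows by invoking that $\clGa N$ is a closed cone (Theorem \ref{claim:3}).

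First I would appeal to the compactness remark made earlier in Section \ref{sec:linear}: if $f$ is linearly representable, then it is representable over some \emph{finite} field $\mathbb F_q$, so I may fix vector sets $V_i\subseteq\mathbb F_q^d$ with $f(A)=\dim\,\mathrm{span}(V_A)$ for every $A\subseteq N$, where $V_A=\bigcup_{i\in A}V_i$. Take the additive group $G=\mathbb F_q^d$, and for each $i\in N$ define the subgroup
$$
   G_i \;=\; \{\, x\in G : \langle v,x\rangle=0 \text{ for all } v\in V_i\,\},
$$
i.e.\ the annihilator of $\mathrm{span}(V_i)$. Since the annihilator of a sum of subspaces is the intersection of the annihilators, one has $G_A=\bigcap_{i\in A}G_i=\mathrm{span}(V_A)^{\perp}$, hence $|G_A|=q^{\,d-f(A)}$.

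Next I would unwind the definition of group-based distribution from Section \ref{sec:prelim}: with this choice of $G$ and $G_i$, the marginal $\xi_A$ is supported uniformly on a set of size $|G|/|G_A|=q^{f(A)}$, and therefore $\H(\xi_A)=f(A)\cdot\log_2 q$. Thus $(\log_2 q)\,f$ is the entropy profile of an honest group-based (hence quasi-uniform) distribution, proving the first half of the claim: every linearly representable polymatroid has a group-based entropic multiple.

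For the consequence, recall that $\clGa N$ is a closed convex cone by Theorem \ref{claim:3}. Since $(\log_2 q)\,f$ lies in $\GaN\subseteq\clGa N$, the ray $\{\lambda f:\lambda\ge 0\}$ is contained in $\clGa N$ for every linearly representable $f$; taking closures, any polymatroid in the closure of multiples of linearly representable polymatroids — that is, any linear polymatroid in the sense of Definition~\ref{def:linear} — lies in $\clGa N$, and is therefore aent. The only slightly delicate point is the annihilator computation $G_A=\mathrm{span}(V_A)^{\perp}$, but this is just the standard linear-algebra identity $(W_1+\cdots+W_k)^{\perp}=W_1^{\perp}\cap\cdots\cap W_k^{\perp}$ applied to the subspaces spanned by the $V_i$; everything else is bookkeeping.
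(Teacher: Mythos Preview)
Your proof is correct and takes essentially the same approach as the paper: the paper also uses the additive group $G=\mathbb F^d$ with the annihilator subgroups $G_i=\{\mathbf z:\mathbf z\cdot\mathbf y=0\text{ for all }\mathbf y\in V_i\}$, arriving at the same entropy profile $(\log_2|\mathbb F|)\,f$. The only cosmetic difference is that the paper first describes the distribution via scalar-product tuples and then identifies it as group-based, whereas you go straight to the group formulation; your handling of the ``aent'' consequence via Theorem~\ref{claim:3} is also fine.
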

\begin{proof}

Suppose $f$ is represented by the vectors $V_i\subseteq \mathbb F^d$ where
$\mathbb F$ is a finite field. Define the joint distribution
$\xi=\{\xi_i:i\in N\}$ as follows. Choose the vector $\mathbf x\in \mathbb
F^d$ randomly with uniform distribution (this can be done because there are only
finitely many elements in $\mathbb F^d$). The value of $\xi_i$ is the
sequence of the scalar products $\langle\mathbf x\cdot\mathbf y: \mathbf
y\in V_i\rangle$. Since $V_A=\bigcup\{V_i:i\in A\}$ has $f(A)$ many linearly
independent elements (this is the dimension of the linear span of the
vector set $V_A$), $\xi_A$ can take $|\mathbb F|^{f(A)}$ different values, each
with the same probability. Thus the entropy profile of $\xi$ is $(\log_2|
\mathbb F|)\tsp f$, which is a multiple of $f$. This distribution is clearly
quasi-uniform, but it is also group-based. The set
$$
   G_i=\{\mathbf z\in \mathbb F^d: \mathbf z\cdot\mathbf y=0
~~ \mbox{ for all } \mathbf y\in V_i\}
$$
is a subgroup of the additive group $G$ of the vector space $\mathbb F^d$.
For $\mathbf x\in\mathbb F^ d$ the left coset $\mathbf x+G_i$ is uniquely
represented by the scalar product sequence $\langle\mathbf x\cdot\mathbf
y:\mathbf y\in V_i\rangle$. Indeed, $\mathbf x$ and $\mathbf x+\mathbf z$
give the same sequence if and only if $\mathbf z\in G_i$, proving that this
distribution is group-based.
\end{proof}

\begin{claim}\label{claim:lin-closed}
Linear polymatroids are closed under the following operations:
deleting, contracting, factoring, principal extension, the $\GAK$
operation, and tightening.
\end{claim}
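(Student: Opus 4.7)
The plan is to verify each operation on a single linearly representable polymatroid $(f,N)$ given by vector sets $\{V_i\subseteq\mathbb F^d:i\in N\}$ and then transfer the conclusion to arbitrary linear polymatroids by continuity and closure. Deletion, factoring, and contraction are immediate from direct manipulation of the vector sets: $f\del K$ keeps $\{V_i:i\in N\sm K\}$; $f/{\sim}$ assigns each equivalence class a basis of the span of the union of its members; and $f\contr K$ is represented by passing to the quotient $\mathbb F^d/\mathrm{span}(V_K)$ and taking a basis of the image of each $V_i$. In each case a short dimension count matches the required rank function.

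For the principal extension along $Z$ by a non-negative integer $\alpha$, I would extend $\mathbb F$ if necessary and pick a generic $\alpha$-dimensional subspace $V_{z'}$ of $\mathrm{span}(V_Z)$ as the vector set of the new element. Writing $U=\mathrm{span}(V_A)$ and $W=\mathrm{span}(V_Z)$, the generic dimension formula gives
$$
\dim(U+V_{z'})=\dim U+\min\{\alpha,\dim W-\dim(U\cap W)\}=\min\{f(A)+\alpha,f(AZ)\},
$$
matching the required rank $f'(Az')$. The $\GAK$ operation and tightening then reduce to the earlier cases: Claim \ref{claim:AKpoly} already expresses $\GAK$ as a principal extension along $Z$ by $f(Z)-\alpha$ followed by contracting the added element, and tightening at $z$ is the analogous composition with $Z=\{z\}$ and $t=f(z\|N\sm z)$, as recorded when $f\dn^z_t$ was introduced.

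The main obstacle is passing from representable polymatroids with integer parameters to arbitrary linear polymatroids and arbitrary real $\alpha\ge 0$. My plan is to invoke two facts: every operation above is jointly continuous in $f$ and in any scalar parameter, and principal extension is homogeneous in the sense that extending $cf$ by $\alpha$ produces $c$ times the extension of $f$ by $\alpha/c$. Given a linear polymatroid $f=\lim c_n g_n$ with $g_n$ linearly representable, and $\alpha$ approximated by a rational $p_n/q_n$, rescaling $g_n$ by $q_n$ reduces every operation to the integer-$\alpha$ representable case already treated; the limit lies in the linear class because that class is closed by definition.
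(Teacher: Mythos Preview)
Your proof is correct and follows essentially the same route as the paper: the same reduction of $\GAK$ and tightening to principal extension plus contraction, the same generic-vector construction for principal extension (you phrase it as a generic subspace, the paper picks vectors one by one), and the same homogeneity-plus-continuity argument to pass from integer $\alpha$ and representable $f$ to the general case. Your quotient description of contraction is equivalent to the paper's complement-and-project construction.
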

\begin{proof}

All operations are homogeneous and continuous, thus it suffices to show that
they result in a (limit of multiplies of) linearly representable
polymatroid(s) when applied to a linearly representable polymatroid.

Deleting and factoring a linearly representable polymatroid is clearly
linearly representable. By Claim \ref{claim:AKpoly} the $\GAK$ operation is
equivalent to a principal extension followed by a contraction. Tightening is a
special case of the $\GAK$ operation. Thus we need to show that contraction
and principal extension preserve linear polymatroids.

For the contraction, suppose that the base is partitioned as $N=M\cups K$, and
assume that $(f,N)$ is represented by the vector sets $\{V_i:i\in N\}$. Choose
$S\subseteq V_K=\bigcup\{V_i:i\in K\}$ consisting of $f(K)$ independent
vectors, and let $T$ be another set such that $S\cup T$ is a base of the
complete vector space $\mathbb F^d$. Every vector $\mathbf x$ can be written
uniquely as the sum $\mathbf x_S+\mathbf x_T$, where $\mathbf x_S$ is in the
linear span of $S$ and $\mathbf x_T$ is in the linear span of $T$. For $i\in
M$ let $V^{\mathsf c}_i=\{\mathbf x_T: \mathbf x\in V_{iK}\}$. We claim that
it is a linear representation of the contraction. Since for $A\subseteq M$
the linear span of $V_{AK}$ has dimension $f(AK)$, and that linear span
contains the linear span of the vectors in $S$ (with dimension $f(K)$), the
rest is provided by the linear span of $V^{\mathsf c}_A$, thus having
dimension $f(AK)-f(K)$, providing the required representation of the
contraction.

For the principal extension, we can assume that $\alpha$ is rational (linear
polymatroids form a closed set), then take the multiple by the denominator
making $\alpha$ integer (linear polymatroids are closed under
multiplication). From this point on the proof uses \emph{generic} vectors.
We may assume that the underlying field $\mathbb F$ is large enough since
the same vector representation works over every field extension. Let $V_N =
\bigcup \{ V_i:i\in N\}$ be the set of all vectors in the representation of
the polymatroid $(f,N)$, and let $L$ be the linear span of the vectors in
$V_Z=\bigcup\{ V_i:i\in Z\}$, the linear subspace $L$ has dimension $f(Z)$.
Choose $\mathbf g_1$ from $L$ so that it is not in the linear span of any
subset of the vectors from $V_N$ except when this linear span contains the
whole $L$. Such a ``generic'' vector exists assuming $\mathbb F$ is large
enough as each constraint excludes a lower dimensional subspace of $L$. Then
let $\mathbf g_2$ be generic for $V_N\cup\{\mathbf g_1\}$, $\mathbf g_3$ be
generic for $V_N\cup\{\mathbf g_1,\mathbf g_2\}$, etc. Finally let the
representation of the new variable $z'$ be $V_{z'}=\{\mathbf
g_1,\dots,\mathbf g_\alpha\}$. Since these vectors were chosen to be
generic, we clearly have
$$
   \dim(V_{z'}\cup V_A)=\min\{\dim(V_A)+\alpha,\dim(V_{ZA})\}
$$
for every $A\subseteq N$, as was required.
\end{proof}

\section{Copy Lemma}\label{sec:copylemma}

Section \ref{sec:entreg} discussed several operations on the entropy profile
of probabilistic distributions. As was shown in Section
\ref{sec:polymatroids}, when applied to a polymatroid, all of them produce
another polymatroid. A perhaps more surprising fact is the converse: many
traditional polymatroid operations preserve almost entropic polymatroids,
and quite a few of them even preserve entropic polymatroids. Table
\ref{table:2} summarizes the operations discussed so far, and indicates which
polymatroid classes they preserve. None of them separates almost entropic
and general polymatroids, consequently none of them can produce non-Shannon
inequalities.

\begin{table}[!ht]
\def\mstr#1{\rule{0pt}{1#1pt}}%
\def\yes{$\checkmark$}%
\def\no{$\times$}%
\centering\begin{tabular}{lcccc}
\mstr2 \bf Operation & \bf linear &\bf entropic &\bf aent &\bf polymatroid \\
\hline
\mstr3 sum $f{+}g$             & \no  & \yes & \yes & \yes \\
\mstr1 direct sum $f\oplus g$  & \no  & \yes & \yes & \yes \\
\mstr1 scaling $\lambda\tsp f$ & \yes & \no  & \yes & \yes \\
\mstr1 factoring $f/{\sim}$    & \yes & \yes & \yes & \yes \\
\mstr1 deletion $f\del K$      & \yes & \yes & \yes & \yes \\
\mstr1 contraction $f\contr K$ & \yes & \no  & \yes & \yes \\
\mstr1 tightening $f\dn$       & \yes & \no  & \yes & \yes \\
\mstr1 principal extension     & \yes & \no  & \yes & \yes \\
\mstr1 Ahlswede-K\"orner ($\GAK$) & \yes & \no  & \yes & \yes \\
\mstr1 extracting CI           & \yes & \no  & \no  & \no  \\
\end{tabular}
\caption{Polymatroid classes preserved by different operations}\label{table:2}
\end{table}

An operation that preserves entropic (and aent) polymatroids, but not
general polymatroids, was discovered by Zhang and Yeung
\cite{ZhY.ineq,Zh.gen.ineq}.
Their method was later formalized and named \emph{Copy Lemma} by Dougherty
et al.~\cite{DFZ11}. The operation could have been called equally
``conditional reflection'' since it creates a conditional mirror image of a
subset of the random variables. After formally defining the operation we
prove that it indeed preserves aent polymatroids. Then we show that the Copy
Lemma maps the V\'amos vector $\mathbf v_{cd}$, defined in (\ref{eq:vamos}),
outside the polymatroid region, thus proving that this vector is not aent.

Let us first introduce some notation that will be used in connection with
the Copy Lemma. The base set $N$ is partitioned into two non-empty sets as
$N=E\cups D$. For $A\subseteq E$ a copy of the points in $A$, disjoint from
$N$, is denoted by $A'$. The permutation $\pi_A$ of the set $AA'D$, which
swaps the corresponding elements of $A$ and $A'$ and is the identity on $D$,
is called \emph{canonical permutation} or \emph{canonical map}. If
$A=E$, then the copy is denoted by $E'$, and the canonical map on $E'ED$ is
denoted by $\pi$ without the index $E$.

\begin{definition}[An $A$-copy over $D$]\label{def:copy}

Let $(f,N)$ be a polymatroid, $N=E\cups D$, $A\subseteq E$, and $\pi_A$ be the
canonical map on $AA'D$. The polymatroid $(f^*,A'N)$ is
an \emph{$A$-copy of $f$ over $D$} if the following conditions hold:

\begin{itemize}\setlength\itemsep{0pt}\setlength\parskip{0pt}\setlength\parsep{0pt
plus 2pt minus 0pt}
\item[(i)] $f^*$ is an extension of $f$, that is, $f^*(J)=f(J)$ for all
$J\subseteq N$.

\item[(ii)]
The canonical map $\pi_A$ provides an isomorphism between 
$f^*\restr A'D$ and $f^*\restr AD$ as $f^*(\pi_A(J))=f^*(J)$ for all
$J\subseteq AD$.

\item[(iii)] $f^*(A',E\|D)=0$, that is, $A'$ and $E$ are conditionally
independent over $D$. Using (i) and (ii) this condition can be written
equivalently as
$$
   f^*(A'N)=f(A\|D)+f(N).
$$
\end{itemize}
$f^*$ is a \emph{full copy}, or just a \emph{copy of $f$ over $D$} if $f^*$ is
an $E$-copy over $D$.
\end{definition}

Observe that requirements (i), (ii), and (iii) do not completely specify
$f^*$. For example, there are no conditions on those subsets of $A'A$ that
contain elements from both $A'$ and $A$. A polymatroid extension $f^*$ of
$f$ that satisfies the first two conditions always exists: simply take parallel
extensions along all elements of $A$. This parallel extension will be an
$A$-copy if and only if $A$ is determined by $D$, that is, $f(A\|D)=0$.

Since in a polymatroid $f^*(E',E\|D)=0$ implies $f^*(A',E\|D)=0$ for all $A'
\subseteq E'$, the restriction of a full copy of $f$ to $A'N$ is
an $A$-copy. A more general reduction also applies. Recall that $f$ is a
minor of $g$ if $f$ can be obtained from $g$ by a sequence of deletions
(restrictions) and contractions.

\begin{proposition}\label{prop:cp-red}
The $A$-copy of a minor is a minor of a full copy. Similarly,
an $A$-copy of a factor is a minor of a factor of a full copy.
\end{proposition}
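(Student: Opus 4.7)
The plan is to reduce the $A$-copy of a minor to a minor of a full copy by a deliberate choice of partition for the full copy. Given $g$ a minor of $(f,M)$ via deleting $B$ and contracting $C$, so that $g$ lives on $N=M\setm(B\cup C)$ with $g(J)=f(JC)-f(C)$, and given the $A$-copy data $N=E\cups D$, $A\subseteq E$, I would take the partition $\tilde E=E\cup B$ and $\tilde D=D\cup C$ of $M$ and let $f^*$ be a full copy of $f$ with respect to this partition. The key point of this choice is that everything to be contracted ($C$) or conditioned on ($D$) sits on the $\tilde D$-side, while $A$ sits on the $\tilde E$-side. Then I would form the polymatroid $h=(f^*\del (B\cup B'\cup (E\setm A)'))\contr C$ on $A'\cup N$ and show it satisfies the defining conditions (i)--(iii) of an $A$-copy of $g$.

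Verifying (i) is a direct substitution using that $f^*$ extends $f$. For (ii), the canonical map $\pi$ of the full copy (which swaps $\tilde E\leftrightarrow\tilde E'$ and fixes $\tilde D$) agrees with the required $\pi_A$ on $A$ and fixes $C\subseteq\tilde D$, so applying $\pi$ to $(J\cap A)\cup (J\cap D)\cup C$ for $J\subseteq A\cup D$ yields the desired rank equality $h(\pi_A(J))=h(J)$. For (iii), I would use that $NC\setm DC=E$ to rewrite $h(A',E\| D)$ as $f^*(A',E\| DC)$, then invoke monotonicity of conditional mutual information in both arguments together with the full-copy identity $f^*(\tilde E',\tilde E\| \tilde D)=0$ to squeeze $h(A',E\| D)$ between $0$ and $0$, since $A'\subseteq\tilde E'$, $E\subseteq\tilde E$ and $\tilde D=DC$.

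The factor case proceeds analogously: if $g=f/{\sim}$ with classes $\{\mathcal C(i):i\in N\}$, I would use the partition $\tilde E_M=\bigcup_{i\in E}\mathcal C(i)$, $\tilde D_M=\bigcup_{i\in D}\mathcal C(i)$ of $M$, take the full copy $f^*$ over $\tilde D_M$, and extend $\sim$ to an equivalence $\sim'$ on $M\cup\tilde E_M'$ whose classes are $\{\mathcal C(i):i\in N\}\cup\{\mathcal C(i)':i\in E\}$. Deleting $(E\setm A)'$ from $f^*/{\sim'}$ then yields a polymatroid on $N\cup A'$ satisfying the three $A$-copy conditions by the same reasoning: the canonical swap respects $\sim'$ by construction, and the independence $f^*(\tilde E_M',\tilde E_M\| \tilde D_M)=0$ descends to the factor through monotonicity of conditional mutual information.

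The main subtlety — which I would want to flag — is that conditions (i)--(iii) do not uniquely specify an $A$-copy (per the remark following Definition \ref{def:copy}); there is freedom on subsets mixing elements of $A$ and $A'$. The construction therefore produces \emph{some} $A$-copy of $g$ (respectively of the factor) realised as a minor of a full copy, which is the natural reading of the proposition and is exactly what is needed to reduce $A$-copy-of-minor constructions to minor-of-full-copy constructions in the subsequent discussion.
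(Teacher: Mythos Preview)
Your proof is correct and follows essentially the same approach as the paper. The paper first reduces to the full-copy case $A=E$ using the preceding remark and then treats restriction, contraction, and factoring in three separate paragraphs (putting the deleted part into the $\tilde E$-side and the contracted part into the $\tilde D$-side, exactly as you do); you simply fold restriction and contraction into a single construction and absorb the $A\subseteq E$ reduction into the same deletion step, which is a mild streamlining rather than a different argument.
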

\begin{proof}
It clearly suffices to prove the statement for the three operations
separately, and, by the remark above, for the case when $A=E=N\sm D$, that
is, for a full copy.

\smallskip
\noindent (i) If $f$ is the restriction of $(g,M)$ to $N$ and the copy is
over $D\subseteq N$, then let $F=D$. The restriction of
the full copy $g^*$ of $g$ over $D$ restricted to $E'N$ will be the required
copy of $f$.

\smallskip
\noindent (ii) If $f$ is the contract $g\contr Z$, then $N=M\sm Z$ and
$f(J)=g(JZ)-g(Z)$ for all $J\subseteq N$. Let $g^*$ be the full copy of $g$ over
$D\cup Z$, and set $f^*=g^*\contr Z$. Conditions (i) and (ii) of Definition
\ref{def:copy} clearly hold. Because $f^*$ is the contract of $g^*$ we have
$$\begin{array}{l@{\,=\,}l@{}l}
  f^*(ED) & g^*(EDZ) & {}-g^*(Z),\\[2pt]
  f^*(E'D) & g^*(E'DZ) & {}-g^*(Z), \\[2pt]
  f^*(D)   & g^*(DZ) & {}-g^*(Z),   \\[2pt]
  f^*(EE'D) & g^*(EE'DZ) & {} -g^*(Z).
  \end{array}
$$
Consequently $f^*(E,E'\|D)= g^*(E,E'\|DZ) = 0$, thus condition (iii) also
holds.

\smallskip\noindent (iii)
Finally, if $f$ is the factor $g/{\sim}$, then $F\subseteq M$, the union of the
equivalence classes in $D$, trivially works.
\end{proof}

By Proposition \ref{prop:cp-red}, it suffices to consider only full copies.
In practice, however, choosing a smaller $A$ to be copied, and applying
additional tricks, can significantly reduce the computational complexity
when harvesting the consequences of the existence of an $A$-copy.

The existence of a copy polymatroid is oblivious to tightening which was
defined in Section \ref{subsec:tightening}.

\begin{proposition}\label{prop:copydown}
The polymatroid $f$ has an $A$-copy over $D$ if and only if so does $f\dn$.
\end{proposition}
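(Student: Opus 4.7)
The plan is to work with the decomposition $f = f\dn + \sum_{z \in N}\lambda_z\r_z$ from (\ref{eq:tight}), where $\lambda_z = f(z\|N\sm z) \ge 0$, and to transport it to the larger base $A'N$ via the non-negative modular function
\[
   \lambda(J) \eqdef \sum_{z \in J\cap N}\lambda_z + \sum_{z \in A,\,z' \in J}\lambda_z,\quad J\subseteq A'N.
\]
This $\lambda$ is invariant under the canonical map $\pi_A$ (extended as the identity on $N\sm A$) and satisfies $\lambda(A',E\|D)=0$ by modularity.

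For the \emph{if} direction, given an $A$-copy $g^*$ of $f\dn$, I set $h^* = g^* + \lambda$. Then $h^*$ is a polymatroid (polymatroid plus modular function), and conditions (i)--(iii) of Definition \ref{def:copy} transfer to $h^*$ by routine computations using the $\pi_A$-invariance and modularity of $\lambda$.

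For the \emph{only if} direction, I tighten one element at a time, reducing by commutativity of tightening to the case of $f\dn z$ for a single $z\in N$. When $z \in E$, I take $h^* = f^* - \lambda_z\r_z$, adding $\lambda_z\r_{z'}$ if $z \in A$ to preserve $\pi_A$-symmetry. Submodularity is inherited (only a modular function is subtracted), and the copy conditions transfer as in the forward direction; the real content is monotonicity, which reduces by submodularity of $f^*$ to $f^*(y\|A'N\sm y) \ge \lambda_z$ for $y \in \{z,z'\}$. Unpacking (iii) yields the conditional independence $f^*(A',E\sm z\|D)=0$, whence
\[
  f^*(A'N)-f^*(A'N\sm z) = f(N)-f(N\sm z) = \lambda_z
\]
when $y = z \in E$; for $y = z' \in A'$ the analogous identity reads $f^*(z'\|A'N\sm z') = f(z\|(A\sm z)\cup D)$, which is $\ge \lambda_z$ by submodularity of $f$ on $N\sm z$.

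When $z \in D$, the naive subtraction can violate monotonicity, and I instead use the $\GAK$ representation of tightening from Claim \ref{claim:AKpoly}: write $f\dn z = f'\contr w$, where $f'$ is the principal extension of $f$ along $\{z\}$ by $\lambda_z$. Applying the same principal extension to the copy $f^*$ on $A'N$ yields $f^{*'}$ on $A'Nw$; since $\pi_A$ fixes $z$ and since $A'EDz = A'ED$, $ADz = AD$, $EDz = ED$ in condition (iii), a direct check shows $f^{*'}$ is an $A$-copy of $f'$ over $D\cup\{w\}$. Contracting $w$ preserves polymatroidality and subtracts $f^{*'}(w) = \lambda_z$ uniformly from each term in (iii), so $h^* = f^{*'}\contr w$ is the desired $A$-copy of $f\dn z$. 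The main obstacle lies exactly in this $z \in D$ case: inside the conditioning set the copy conditions put no lower bound on $f^*(z\|A'N\sm z)$, which forces the detour through principal extension and contraction rather than the direct subtraction available for elements of $E$.
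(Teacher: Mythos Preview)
Your argument is correct, and in the ``if'' direction and in the $z\in E$ subcase of ``only if'' it is essentially the paper's argument: both add (resp.\ subtract) the modular function $\lambda_z\r^*_z$, where $\r^*_z=\r_z$ if $z\notin A$ and $\r^*_z=\r_z+\r_{z'}$ if $z\in A$, and verify the copy axioms termwise. (Your phrase ``adding $\lambda_z\r_{z'}$'' is a slip; the subsequent monotonicity check at $y=z'$ makes clear you mean $h^*=f^*-\lambda_z\r_z-\lambda_z\r_{z'}$.)

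The real difference is your handling of $z\in D$. The paper subtracts $\lambda_z\r_z$ in all cases and asserts that ``submodularity and monotonicity follow from linearity, so we need to show only that it is non-negative''. As you correctly diagnose, monotonicity does \emph{not} follow when $z\in D$: axiom (\bref{B1}) at $z$ demands $f^*(z\|A'N\sm z)\ge\lambda_z$, whereas submodularity only gives $f^*(z\|A'N\sm z)\le f^*(z\|N\sm z)=\lambda_z$, and the copy conditions put no constraint on $f^*(A'N\sm z)$ for $z\in D$. A concrete failure is $N=\{a,z\}$, $A=E=\{a\}$, $D=\{z\}$, $f(a)=1$, $f(z)=f(az)=2$; then $\lambda_z=1$, and choosing $f^*(aa')=2$ gives a legitimate $a$-copy with $f^*(z\|aa')=0$, so $f^*-\r_z$ violates (\bref{B1}). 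Your detour via $f\dn z=f'\contr w$ avoids this: the principal extension of $f^*$ along $z$ by $\lambda_z$ is automatically a polymatroid, it is an $A$-copy of $f'$ over $Dw$ because $z\in D$ forces every set appearing in conditions (ii)--(iii) to already contain $z$ (so the $\min$ in the extension is inactive there), and contraction then preserves both polymatroidality and the copy axioms. This is not merely a different route but a repair of a genuine gap in the paper's proof.
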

\begin{proof}
Clearly it suffices to consider tightening at a single point $z\in N$.
According to (\ref{eq:tightz}), $f\dn z = f-\lambda\tsp\r_z$, where
$\lambda=f(z\|N\sm z)$. The polymatroid $\r_z$ always has an $A$-copy
$r^*_z$: it is $\r_z$ (with the larger base $A'N$) if $z\notin A$, and
$\r_z+\r_{z'}$ if $z\in A$. 

For the only if part, let $f^*$ be an $A$-copy of $f$ over $D$. We claim that $f^*-\lambda
\tsp\r^*z$ is an $A$-copy of $f\dn$. Clearly all requirements in Definition
\ref{def:copy} hold (as each of them is linear); we only need to show that
it is a polymatroid. Submodularity and monotonicity follow from linearity,
so we need to show only that it is non-negative. Since $f^*(z)=f(z)\ge
f(z\|N\sm z) = \lambda$, non-negativity is clear when $z\notin A$. If $z\in
A$, then $z$ and $z'$ are independent over $D$, thus
\begin{align*}
f^*(zz') &\ge f^*(zz'\|D) = {}\\
         & = f^*(z\|D)+f^*(z'\|D) \ge{}\\
         & \ge f(z\|N\sm z) + f(z\|N\sm z) = 2\lambda,
\end{align*}
which proves the non-negativity for the $z\in A$ case.

For the if part, let $f^*$ be an $A$-copy of $f\dn$ over $D$. Then
$f^*+\lambda \r^*_z$ is clearly the required $A$-copy of $f$.
\end{proof}

Note that an $A$-copy of a tight polymatroid need not be tight. The full
copy of a tight polymatroid, if it exists, is always tight.

\begin{lemma}[Copy Lemma]\label{lemma:copy}
If $(f,N)$ is entropic (or aent, or linear), then it has an entropic (or aent,
or linear, respectively) $A$-copy over $D$.
\end{lemma}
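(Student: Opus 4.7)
The plan is to reduce to constructing a full copy and then to handle the three classes (entropic, aent, linear) by separate constructions. For the reduction, the remark preceding Proposition \ref{prop:cp-red} shows that the restriction of a full $E$-copy $(f^*, E'N)$ of $f$ to the base $A'N$ is automatically an $A$-copy, since $f^*(E', E \| D) = 0$ implies $f^*(A', E \| D) = 0$ for every $A' \subseteq E'$. Restriction preserves each of the three classes (Claim \ref{claim:Gaoper1} and Claim \ref{claim:lin-closed}), so it suffices to exhibit a full copy in each case.

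For the entropic case, I would extend the base of a distribution $\xi$ representing $f$ to $N \cup E'$ by introducing a twin $i'$ with alphabet $\mathcal X_{i'} = \mathcal X_i$ for each $i \in E$, and define the conditional-product distribution
$$
\Prob_\eta(d, e, e') = \Prob_\xi(d, e)\, \Prob_\xi(d, e') / \Prob_\xi(d)
$$
when $\Prob_\xi(d) > 0$, and zero otherwise. A direct computation shows that $\eta$ is a probability distribution whose marginal on $N$ equals $\xi$, whose marginal on $E'D$ matches $\xi_{ED}$ under the canonical swap, and under which $E$ and $E'$ are conditionally independent given $D$; hence $\H_\eta$ is an entropic full copy of $f$. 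The aent case then follows by compactness: approximate $f \in \clGa N$ by entropic $f_n \to f$, take entropic copies $f_n^*$, and note that each coordinate is uniformly bounded by $f_n^*(E'N) = f_n(N) + f_n(E \| D) \le 2 f_n(N)$, so one can pass to a convergent subsequence; the three defining conditions are linear and hence survive in the limit.

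For the linear case, homogeneity together with the same compactness argument reduces the task to constructing a linearly representable full copy when $f$ itself is representable. Let vectors $V_i \subseteq \mathbb F^d$ represent $f$, set $W = \mathrm{span}(V_D)$, choose a complement so that $\mathbb F^d = W \oplus U$, and decompose $\mathbf y = \mathbf y_W + \mathbf y_U$ accordingly. In the enlarged space $W \oplus U \oplus U$ over the same field $\mathbb F$, set
$$
V_i^* = \{(\mathbf y_W, \mathbf y_U, 0) : \mathbf y \in V_i\} \quad (i \in N), \qquad V_{i'}^* = \{(\mathbf y_W, 0, \mathbf y_U) : \mathbf y \in V_i\} \quad (i \in E).
$$
Condition (i) is built in; the linear automorphism swapping the two $U$-summands sends $V_{i'}^*$ to $V_i^*$ while fixing $V_D^*$, realising (ii). The main obstacle is condition (iii): projecting $\mathrm{span}(V_{EE'D}^*)$ onto the first $W$-summand and analysing its kernel (which equals $0 \oplus (V_E|_U) \oplus (V_E|_U)$) yields $\dim \mathrm{span}(V_{EE'D}^*) = f(D) + 2 f(E \| D)$, which is exactly the value $f^*(ED) + f^*(E'D) - f^*(D)$ forced by $f^*(E', E \| D) = 0$. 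Since the construction stays over $\mathbb F$ with no additional scalars, the result is a linearly representable (hence linear) full copy of $f$.
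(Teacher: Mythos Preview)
Your proof is correct and follows essentially the same architecture as the paper's: reduce to a full copy via restriction, build the conditional-product distribution for the entropic case, pass to the aent case by a limiting argument, and construct the linear copy by doubling a complement of $\mathrm{span}(V_D)$. Your aent step (compactness of a bounded sequence of polymatroids) is more explicit than the paper's one-line appeal to continuity and homogeneity, and your linear construction---decomposing each representing vector as $y_W + y_U$ and embedding into $W \oplus U \oplus U$---is cleaner than the paper's version, which attempts to choose a basis of each $L_a$ consisting of vectors lying entirely in $L_D$ or entirely in the chosen complement $L_D^\top$, a decomposition that does not exist for a generic $L_a$. Your rank-nullity computation of $\dim\mathrm{span}(V^*_{EE'D}) = f(D) + 2f(E\mkern1mu|\mkern1mu D)$ is the right way to verify condition~(iii); just note that by ``$V_E|_U$'' you really mean the $U$-projection of $\mathrm{span}(V_E)$, which is all of $U$ because $L_E + L_D = L_N$.
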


\begin{proof}

Since these polymatroid classes are closed for restriction, it suffices to
prove the existence of a full copy. Also, requirements in Definition
\ref{def:copy} are both continuous and homogeneous, thus the lemma for aent
polymatroids follows from the entropic case, and for linear polymatroids it
follows from the linearly representable case.

So, first let $(f,N)$ be the entropy profile of the distribution $\xi =
\{\xi_a : a\in N\}$, and $N=E\cups D$. The distribution $\xi^*$ will be
defined on the variables $\{\xi^*_a: a\in E'\cup E\cup D\}$ and alphabet
$\mathcal X_E\times \mathcal X_E\times\mathcal X_D$ as follows:
$$
   \Prob_{\xi^*}(e'ed) =
\frac{~\Prob_\xi(\pi(e'd))\cdot\Prob_\xi(ed)}{\Prob_\xi(d)},
$$
where $e'\in E$, $e\in E$, $d\in D$, and $\pi$ is the canonical permutation
on $E'ED$. Requirements in Definition \ref{def:copy} clearly hold for the
entropy profile of $\xi^*$.

For the linear case assume $f$ is represented by the vector sets $\{V_a :
a\in N\}$, and let $L_J$ be the linear span of the vectors $V_J=
\bigcup\{V_a:a\in J\}$. Then $\dim(L_J)=f(J)$, in particular,
$\dim(L_N)=f(N)=d$. Set $k=f(D)$, $k+\ell=d$ and choose a $d$-element
basis $\{\mathbf b_i, \mathbf c_j:i\le k, j\le \ell\}$ of $L_N$ such that the
linear span of the vectors $\{\mathbf b_i\}$ is $L_D$. The linear span of
$\{\mathbf c_j\}$ is the complement space of $L_D$, denoted by $L^\top_D$. For each
$a\in N$ the dimension of $L_a$ is $f(a)$. Choose an $f(a)$-element basis
$V^0_a\cup V^1_a$ of $L_a$ such that $V^0_a$ contains $\dim(L_a\cap L_D)$
many vectors from $L_D$, and $V^1_a$ contains the remaining $\dim(L_a\cap
L^\top_D)$ many vectors from $L^\top_D$.

The linear representation of $f^*$ will be over the vector space spanned by
$\ell+\ell+k$ independent vectors $\mathbf c'_j$, $\mathbf c_j$, and
$\mathbf b_i$ for $j\le\ell$ and $i\le k$. For $a\in N$ the vector set
$V^*_a$ consists of the ($f(a)$ many) vectors in $V^0_a$ and $V^1_a$ written
as linear combinations of $\mathbf b_i$ and $\mathbf c_j$, respectively. For
$a'\in E'$ let $V^*_{a'}$ be the set of vectors from $V^0_{\pi(a')}$ plus
the linear combinations from $V^1_{\pi(a')}$ where each $\mathbf c_j$ is
replaced by the corresponding $\mathbf c'_j$. The map $\mathbf
c_j\leftrightarrow\mathbf c'_j$ provides the isomorphisms required in (i)
and (ii) of Definition \ref{def:copy}, and the conditional independence in
(iii) follows from the fact that the vectors $\mathbf c'_j$, $\mathbf c_j$
and $\mathbf b_i$ are independent.
\end{proof}

The V\'amos vector $\mathbf v_{cd}$ was defined in Section
\ref{subsec:case234}. As the first application we show that the Copy Lemma
maps $\mathbf v_{cd}$ outside the entropy region. This means that $\mathbf
v_{cd}$ is not almost entropic, proving that $\clGa 4$ is a proper subset of
$\Gamma_{\!4}$. Denoting the base set by $N=\{abcd\}$, and, for simplicity,
leaving out the index $cd$, the V\'amos vector is defined as
$$
  \mathbf v(J)=\begin{cases}
    4 &\mbox{ if $J=\{cd\}$},\\
    \min\{4,|J|+1\} & \mbox{ otherwise},
  \end{cases}
$$
see (\ref{eq:vamos}). A simple calculation shows that for this polymatroid
\begin{align*}
  &\mathbf v(a,b\|c)=\mathbf v(a,b\|d)=\mathbf v(a,c\|b)=\mathbf v(b,c\|a)=0, \\
  &\mathbf v(a,b)=1, ~~\mathbf v(c,d)=0,
\end{align*}
and the Ingleton value from (\ref{eq:ing}) is 
$$
  \mathbf v[a,b,c,d\,]= -(a,b)+(a,b\|c)+(a,b\|d)+(c,d) = -1.
$$
Assume, by contradiction, that $\mathbf v$ is almost entropic. Take a $c$-copy
of $\mathbf v$ over $ab$ and denote the aent extension on $abcdc'$ also by $\mathbf
v$. By (ii) of Definition \ref{def:copy}, $\mathbf v$ restricted to $abc$ and
$abc'$ are isomorphic, thus
$$
  \mathbf v(a,b\|c')=\mathbf v(a,b\|c)=0,
$$
and, similarly, $\mathbf v(a,c'\|b)=0$ and $\mathbf v(b,c'\|a)=0$. By (iii)
of Definition \ref{def:copy}, we also have $\mathbf v(c',cd\|ab)=0$. Since
the extension is almost entropic, the inequality (\ref{eq:MMRV}) must
hold in it. Plugging in $z=c'$ the Ingleton value is $-1$, all other
terms are zero, which gives the required contradiction.

An alternative way to reach a contradiction without applying the magic
(\ref{eq:MMRV}) inequality could be to use an LP solver to check that the
following moderate size linear programming problem has no feasible
solution (it is unsatisfiable):

\def\bbx#1{\strut\hbox to 1.1\parindent{\hfil #1}}
\medskip

\textbf{Variables}: $31$ variables indexed by non-empty subsets of $abcdc'$.

\smallskip
\textbf{Constraints}:

\hangindent2.4\parindent\hangafter1
\bbx{(i)} Basic Shannon inequalities in (\bref{B1}) and
(\bref{B2}) for $abcdc'$ ($85$ inequalities).

\hangindent2.4\parindent\hangafter1
\bbx{(ii)} Variables indexed by subsets of $abcd$ are set to equal the values in
the vector $\mathbf v_{cd}$ ($15$ equalities).

\hangindent2.4\parindent\hangafter1
\bbx{(iii)} Equality constraints that say that $abc'$ and $abc$ are
distributed identically ($4$ equalities).

\hangindent2.4\parindent\hangafter1
\bbx{(iv)} Independence of $c'$ and $cd$ over $ab$ ($1$ equality).

\medskip\noindent
This LP instance can be reduced to $9$ variables from $31$ by using the equality
constraints to eliminate variables. Beyond stating that the problem has no
feasible solution, the LP solver can also provide a reason for the
unsatisfiability by solving the dual problem. The solution of the dual is a
linear combination of the constraints where inequality constraints have
non-negative coefficients (while there is no restriction on the coefficients of
equality constraints), so that the combination is a trivially invalid inequality
with no variables. From this combination one can extract the Shannon 
inequalities that can be used to derive the contradiction, leading eventually
to the (\ref{eq:MMRV}) inequality.


\subsection{New entropy inequalities}\label{subsec:new-ineq}

The Copy Lemma \ref{lemma:copy} can be used to derive new entropy
inequalities. 

\begin{claim}[Zhang-Yeung inequality]
The following inequality holds in every aent polymatroid on $N=\{abcd\}$:
\begin{equation}\label{eq:ZY}
  [a,b,c,d\,] + (a,b\|c)+(a,c\|b)+(b,c\|a) \ge 0.
\end{equation}
\end{claim}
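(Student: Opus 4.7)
My plan is to invoke the Copy Lemma (Lemma \ref{lemma:copy}) to lift $(f,N)$, with $N=\{a,b,c,d\}$, to an aent extension on a larger base, and then to derive (\ref{eq:ZY}) as a Shannon inequality in the extension. Specifically, set $D=\{c\}$, $E=\{a,b,d\}$, and $A=\{a,b\}$, and let $(f^*,N\cup\{a',b'\})$ be an $A$-copy of $f$ over $D$ as in Definition \ref{def:copy}. The Copy Lemma guarantees that $f^*$ is aent, hence in particular a polymatroid satisfying every Shannon inequality on the extended base. From Definition \ref{def:copy} the extension obeys: (a) $f^*\restr N=f$; (b) $f^*(a'b'c)=f(abc)$ and analogous isomorphism identities from the canonical map $a\leftrightarrow a',b\leftrightarrow b'$, in particular $f^*(a',b')=f(a,b)$ and $f^*(a',b'\|c)=f(a,b\|c)$; and (c) $f^*(a'b',abd\|c)=0$.

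The key observation is that although the Ingleton expression $[a,b,c,d\,]$ can be negative on a polymatroid, in $f^*$ the term $-(a,b)$ can be replaced by $-(a',b')$ via (b), which is then amenable to submodularity against the variables $c$, $d$, $a$, $b$. Concretely I would bound
\[
   f^*(a',b') \le f^*(a',b'\|c)+f^*(a',b';c),
\]
and then use (c) to propagate $c$ through $d$ and $ab$ (so that $f^*(a'b'\|cd)=f^*(a'b'\|c)$, $f^*(a'b'\|c,ab)=f^*(a'b'\|c)$, etc.) until the resulting estimate, after cancelling every term involving $a'$ or $b'$, reduces to the claimed inequality on $N$.

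In practice, after fixing the copy, finding the right linear combination is a finite-dimensional linear-programming task exactly analogous to the V\'amos computation carried out before: one takes the basic Shannon inequalities (\bref{B1})--(\bref{B2}) for the base $N\cup\{a',b'\}$, imposes the linear equalities from (b) and (c), and verifies that the affine functional $[a,b,c,d\,]+(a,b\|c)+(a,c\|b)+(b,c\|a)$ is non-negative on the resulting polytope. An LP solver produces an explicit certificate — the analogue of (\ref{eq:MMRV}) for this setting — expressing the functional as a non-negative combination of basic inequalities in $f^*$ modulo the copy constraints.

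The main obstacle, and the only non-routine part, is the choice of which subset to copy over which: a different choice (for instance, copying $\{c,d\}$ over $\{a,b\}$) produces only a trivial bound, and copying over $\{d\}$ instead of $\{c\}$ yields merely the symmetric version of (\ref{eq:ZY}). Once the Copy Lemma is invoked with $A=\{a,b\}$, $D=\{c\}$, the derivation is purely mechanical Shannon-inequality bookkeeping on the extended base, and the non-Shannon content of (\ref{eq:ZY}) is entirely contained in the application of Lemma \ref{lemma:copy}.
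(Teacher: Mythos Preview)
Your choice of copy is fatal. You take $D=\{c\}$, a singleton, but Claim~\ref{claim:smallD} (and Claim~\ref{claim:i-iv}(i)) shows that \emph{every} polymatroid on $N$ admits a polymatroid $A$-copy over a one-element $D$. Concretely, the V\'amos vector $\mathbf v_{cd}$, which violates (\ref{eq:ZY}), has a polymatroid $\{a,b\}$-copy over $\{c\}$; that copy satisfies all Shannon inequalities together with your constraints (b) and (c). Hence no non-negative combination of Shannon inequalities modulo the copy constraints can yield (\ref{eq:ZY}), and the LP you describe will simply report ``infeasible'' for the certificate direction. The ``mechanical bookkeeping'' step is not merely non-routine here---it is impossible.

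The paper's proof makes the opposite choice: it takes a $c$-copy over $D=\{ab\}$ (so the new variable is $c'$, and $(c',cd\|ab)=0$). Then the isomorphism gives $(a,b\|c')=(a,b\|c)$, $(a,c'\|b)=(a,c\|b)$, $(b,c'\|a)=(b,c\|a)$, and plugging $z=c'$ into the five-variable Shannon inequality (\ref{eq:MMRV}) immediately yields (\ref{eq:ZY}). Your parenthetical remark that ``copying $\{c,d\}$ over $\{a,b\}$ produces only a trivial bound'' is therefore exactly backwards: a copy over $\{a,b\}$ is precisely what works, while your copy over a singleton is the case guaranteed to be vacuous by the preconditions of Section~\ref{subsec:copy-cond}.
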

\begin{proof}

Take a $c$-copy of the polymatroid over $ab$, this extension has base
$abcdc'$. By property (iii) of Definition \ref{def:copy} we have
$(c',cd\|ab)=0$. By (ii) we also have
$$
   (a,b\|c')=(a,b\|c), ~~~(a,c'\|b)=(a,c\|b), ~~~ (b,c'\|a)=(b,c\|a).
$$
The inequality (\ref{eq:MMRV}) holds in the extension with $z$ replaced by
$c'$. Replacing terms containing $c'$ 
with equal terms computed above results in inequality (\ref{eq:ZY}).
\end{proof}

Let us now consider the problem of extracting \emph{all} non-Shannon
inequalities implied by a particular application of the Copy Lemma. So let
$(f,N)$ be an aent polymatroid, $A,D\subseteq N$, and the extended function
$f^*$ on $A'N$ be the $A$-copy of $f$ over $D$. Similarly as it was done
earlier, an LP instance will be generated that captures all
information on the copy extension $f^*$. The LP variables are indexed by
the non-empty subsets of $A'N$, they represent the values that $f^*$ assigns
to these subsets. These LP variables will be separated into two vectors. The
first one, denoted as $\mathbf x$, contains the \emph{main} variables whose
index is a subset of $N$; the second vector, denoted by $\mathbf y$,
contains the remaining \emph{auxiliary} variables.

Condition (ii) in Definition \ref{def:copy} stipulates that $f^*$ restricted
to $A'D$ and $AD$ are isomorphic. Expressed differently, variables mapped by
the canonical map $\pi_A$ have equal values. This allows us to eliminate those
auxiliary variables whose index is a subset of $A'D$ but is not a subset of
$N$; there are $(2^{|A|}\m-1)\cdot 2^{|D|}$ many of them. Condition (iii)
says that $A'$ and $E$ are conditionally independent given $D$. This implies
that every $J'\subseteq A'$ and $K\subseteq E$ are conditionally independent
given $D$; this allows eliminating additional
$(2^{|A|}\m-1)\allowbreak(2^{|E|}\m-1)$ auxiliary variables corresponding to
the subsets $J'KD$; their values are equal to a linear combination of three
main variables.

The copy $(f^*,A'N)$ is a polymatroid if and only if all basic Shannon
inequalities in (\bref{B1}) and (\bref{B2}) hold in it. After incorporating
the above reduction of auxiliary variables, each of these inequalities
can be written as the non-negativity of a scalar product $\langle \mathbf
p_i,\mathbf q_i \rangle \cdot \langle\mathbf x,\mathbf y\rangle\ge 0$, or as
$$
   \mathbf p_i\cdot\mathbf x + \mathbf q_i\cdot\mathbf y \ge 0,
$$ 
where the constant vectors $\mathbf p_i$ and $\mathbf q_i$ are determined by
the particular Shannon inequality. Collecting the
vectors $\mathbf p_i$ and $\mathbf q_i$ into matrices $P$ and $Q$,
$(f^*,A'N)$ is a polymatroid if and only if the condition
\begin{equation}\label{eq:PQ}
       P\tsp \mathbf x^{\mathsf T} + Q\tsp \mathbf y^{\mathsf T} \ge \mathbf 0
\end{equation}
holds for the vectors $\mathbf x$ and $\mathbf y$. The linear inequality $\mathbf e\cdot
\mathbf x\ge 0$ is a consequence of this instance of the Copy Lemma if it is
a consequence of the inequalities in (\ref{eq:PQ}), namely, if the following
implication holds:
\begin{quote}
\emph{if} all inequalities in (\ref{eq:PQ}) are true for 
$\mathbf x$ and $\mathbf y$,
\emph{then} $\mathbf e\cdot\mathbf x\ge 0$.
\end{quote}
According to the Farkas' lemma \cite{ziegler}, the inequality $\mathbf e\cdot
\mathbf x \ge 0$, which is the same as $\langle\mathbf e,\mathbf 0\rangle
\cdot \langle\mathbf x, \mathbf y\rangle \ge 0$, is a consequence of the
inequalities in (\ref{eq:PQ}) if and only if it is a non-negative linear
combination of them. Consequently the set
\begin{equation}\label{eq:cone-Q}
    \mathcal Q = \{ \mathbf h P: \mathbf hQ=\mathbf 0,~ \mathbf h\ge 0 \}
\end{equation}
contains exactly the coefficients $\mathbf e$ of those linear 
inequalities $\mathbf e\cdot
\mathbf x\ge 0$ which are consequences of this Copy Lemma instance. From this
formulation it is clear that $\mathcal Q$ is a polyhedral cone in the
$(2^{|N|}\m-1)$-dimensional space. The minimal set of inequalities (that
is, vectors) in $\mathcal Q$ of which every other is a consequence (by
taking their non-negative linear combination) is the set of the
\emph{extremal rays} of the cone $\mathcal Q$. Computing all extremal rays
(and all bounding hyperplanes simultaneously) of this implicitly defined
cone is the task of \emph{vertex enumeration algorithms}, see
\cite{fukuda-prodon,hamel-loehne-rudloff} for such algorithms in general,
and \cite{multiobj} for the special case discussed here. The result of
running such an algorithm on the cone $\mathcal Q$ is the list of the
minimal set of inequalities implied by this Copy Lemma instance. This
set, however, might also contain basic Shannon inequalities for $N$;
these should be filtered out in a post-processing step.

The enumeration algorithm outlined above can easily incorporate additional
constraints on the copy polymatroid. Since $f^*$ is not only polymatroid but
it is also aent, it satisfies additional non-Shannon inequalities. These
inequalities can be added without much ado as new rows to the matrices $P$
and $Q$. Also, rows $\langle\mathbf p_i, \mathbf q_i\rangle$ where $\mathbf
q_i$ is the all zero vector (including rows obtained from Shannon
inequalities on the base set $N$) can be deleted safely as they do not
contribute to the condition $\mathbf h\,Q=\mathbf 0$. A deleted
$\langle\mathbf p_i,\mathbf 0\rangle$ row contributes the inequality
$\mathbf p_i\cdot \mathbf x\ge 0$ to the consequence set. The inequality
specified by the coefficients in $\mathbf p_i$ can be checked to be a
Shannon inequality, or a consequence of other inequalities in the
obtained list of extremal rays. If neither is the case, $\mathbf p_i$ should
be added to the final list of consequences.


\subsection{Balancing}\label{subsec:balancing}

Claim \ref{claim:aent-tight} can be used to decrease both the dimension of
the polyhedral cone $\mathcal Q$ and the number of constraints (rows in
matrices $P$ and $Q$) in (\ref{eq:PQ}). The linear inequality $\mathbf
e\cdot \mathbf x\ge 0$ is a (valid) \emph{entropic inequality} if it holds
for all vectors $\mathbf x$ in the entropy region $\GaN$. Conic
(non-negative linear) combination of entropy inequalities is also an entropy
inequality, thus these vectors form a (closed) cone $\mathcal E$. It is the
dual cone of both $\GaN$ and $\clGa N$, meaning

$$
  \mathbf e \in  \mathcal E \Longleftrightarrow \mathbf e\cdot \mathbf x \ge 0 
   \mbox{ for all } \mathbf x\in\GaN \Longleftrightarrow
       \mathbf e\cdot \mathbf x \ge 0 
   \mbox{ for all } \mathbf x\in\clGa N .
$$

For $i\in N$ the inequality $\mathbf s_i\in\mathcal E$ expresses $\H(N) \ge
\H(N\sm i)$, that is, the vector $\mathbf s_i$ has two non-zero coordinates:
plus one at $N$, and negative one at $N\sm i$. Inequalities $\mathbf s_i$
are just the basic Shannon inequalities listed in (\bref{B1}). Recall that
the entropic vector $\r_i$, defined in Section \ref{subsec:aent-cone}, is the
characteristic vector of the relation $i\in J$. The inequality $\mathbf
e\in\mathcal E$ is called \emph{balanced} if the scalar product $\mathbf
e\cdot \r_i$ is zero for all $i\in N$. The following claim is phrased as ``every
information inequality can be strengthened to a balanced one'' in
\cite{Chan} and in \cite{Kaced} .

\begin{claim}\label{claim:balanced}
Each $\mathbf e\in\mathcal E$ can be written as a unique conic combination
of basic Shannon inequalities $\{\mathbf s_i: i\in N\}$ and
a balanced inequality $\mathbf e^\circ\in\mathcal E$.
\end{claim}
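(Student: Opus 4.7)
The plan is to use the decomposition provided by tightening (equation \eqref{eq:tight}) together with the dual pairing between $\{\mathbf s_i\}$ and $\{\r_j\}$. The first routine computation I would do is $\mathbf s_i\cdot\r_j=\delta_{ij}$: this follows because $\r_j(N)=1$ and $\r_j(N\setminus i)=1$ exactly when $j\neq i$. So the basic Shannon vectors $\{\mathbf s_i\}$ and the ``balanced-test'' vectors $\{\r_i\}$ form a biorthogonal pair.

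Given $\mathbf e\in\mathcal E$, I would define the scalars $\lambda_j\eqdef \mathbf e\cdot\r_j$. Each $\lambda_j$ is non-negative because $\r_j$ is entropic by Lemma~\ref{lemma:1}(i). Set $\mathbf e^{\circ}\eqdef \mathbf e-\sum_{j\in N}\lambda_j\mathbf s_j$. Using the biorthogonality, $\mathbf e^\circ\cdot\r_k=\lambda_k-\sum_j\lambda_j\delta_{jk}=0$ for every $k\in N$, so $\mathbf e^\circ$ is balanced. Uniqueness of the decomposition is also immediate from biorthogonality: in any expression $\mathbf e=\sum_j\lambda'_j\mathbf s_j+\tilde{\mathbf e}^\circ$ with $\tilde{\mathbf e}^\circ$ balanced, pairing with $\r_k$ forces $\lambda'_k=\mathbf e\cdot\r_k=\lambda_k$.

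The main obstacle, and the step where the Copy-Lemma machinery of the paper actually gets used, is showing $\mathbf e^{\circ}\in\mathcal E$. For this I would pick an arbitrary entropic $\mathbf x\in\GaN$ and use the tightening decomposition from Section~\ref{subsec:tightening}, namely $\mathbf x=\mathbf x^{\downarrow}+\sum_j\mu_j\r_j$ with $\mu_j=\mathbf s_j\cdot\mathbf x\ge 0$ and $\mathbf s_k\cdot\mathbf x^{\downarrow}=0$ for all $k$. Plugging this in and using $\mathbf e\cdot\r_j=\lambda_j$ gives
\[
  \mathbf e^{\circ}\cdot\mathbf x \;=\;\mathbf e\cdot\mathbf x^{\downarrow}+\sum_j\bigl(\mathbf e\cdot\r_j-\lambda_j\bigr)\mu_j
   \;=\;\mathbf e\cdot\mathbf x^{\downarrow}.
\]
By Claim~\ref{claim:aent-tight}, $\mathbf x^{\downarrow}\in\clGa N$, and since $\mathbf e\in\mathcal E$ is by definition non-negative on $\clGa N$, the right-hand side is $\ge 0$. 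This holds for every $\mathbf x\in\GaN$, so $\mathbf e^\circ\in\mathcal E$ as required, completing the proof.
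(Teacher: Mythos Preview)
Your proof is correct and follows essentially the same route as the paper's: define $\lambda_j=\mathbf e\cdot\r_j$, use the biorthogonality $\mathbf s_i\cdot\r_j=\delta_{ij}$ to get balancedness and uniqueness, and then invoke tightening together with Claim~\ref{claim:aent-tight} to conclude $\mathbf e^\circ\cdot\mathbf x=\mathbf e\cdot\mathbf x^{\downarrow}\ge 0$. One small correction to your commentary: the step $\mathbf x^{\downarrow}\in\clGa N$ in Claim~\ref{claim:aent-tight} rests on the Ahlswede--K\"orner lemma (Lemma~\ref{lemma:AK}), not on the Copy Lemma, so your remark about ``Copy-Lemma machinery'' mislabels the tool being used.
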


\begin{proof}

Let $\mathbf e\in\mathcal E$, and for $i\in N$ define $\mu_i=\mathbf
e\cdot\r_i$. The numbers $\mu_i$ are non-negative as $\r_i$ is an entropic
vector. Let $\mathbf e^\circ=\mathbf e -
\sum_{i\in N}\mu_i\mathbf s_i$. Then $\mathbf e^\circ$ is balanced as the
scalar product $\mathbf s_i\cdot \r_j = \delta_{ij}$ where $\delta_{ij}$ is
the Kronecker delta, and it is the unique way to write $\mathbf e$ as a linear
combination of a balanced vector and vectors from $\{\mathbf s_i:i\in N\}$.

To finish the proof we need to show that $\mathbf e^\circ$ is in $\mathcal E$.
To this end pick an arbitrary almost entropic vector $\mathbf x\in\clGa N$.
Its \emph{tight} version, as defined in (\ref{eq:tight}), is
$$
   \mathbf x\dn=\mathbf x - \ssum_{i\in N}\lambda_i\r_i,
$$
where $\lambda_i = \mathbf x(i\|N\sm i)$, in other words, $\lambda_i =\mathbf s_i
\cdot \mathbf x$.
Since $\mathbf x\dn$ is almost entropic by Claim \ref{claim:aent-tight}
and $\mathbf e$ is a valid information inequality, the scalar product $\mathbf
e\cdot(\mathbf x\dn)$ is non-negative, thus
\begin{align*}
  0 &\le \mathbf e\cdot\big(\mathbf x-\ssum_{i\in N}\lambda_i\r_i \big)
    = (\mathbf e\cdot \mathbf x) - \ssum_{i\in N}(\mathbf s_i\cdot \mathbf x)(\mathbf e\cdot
\r_i) = {} \\
    &=\big(\mathbf e - \ssum_{i\in N}(\mathbf e\cdot\r_i)\mathbf s_i\big)
\cdot \mathbf x = \mathbf e^\circ\cdot\mathbf x,
\end{align*} 
showing that $\mathbf e^\circ\in\mathcal E$ indeed.
\end{proof}

Balanced vectors form a linear subspace of the $(2^{|N|}\m-1)$-dimensional
Euclidean space; this subspace has dimension $|N|$ less. By Claim
\ref{claim:balanced} it suffices to concentrate on balanced entropy
inequalities only, thus the cone $\mathcal Q$ can be restricted to this
subspace. This can be done, for example, by adding the constraints $\mathbf
s_i\cdot\mathbf x=0$ to the defining matrices in (\ref{eq:PQ}), which then
would reduce the number of main variables in $\mathbf x$ by $|N|$.

\smallskip

The next claim shows that if only balanced consequences are required from
the Copy Lemma, then it suffices to use only balanced inequalities for
$f^*$. In particular, among basic Shannon inequalities only those in
(\bref{B2}) should be used to form the matrices $P$ and $Q$, and none from
the set (\bref{B1}).

\begin{claim}

Let $f^*$ be an $A$-copy of $(f,N)$ over $D$ as in Definition
\ref{def:copy}. Let $\mathcal E$ be a set of inequalities specified for
$f^*$\!, each valid for polymatroids, and $\mathbf e$ be a balanced inequality
for $f$. If $\mathbf e$ is a consequence of $\mathcal E$, then it is a
consequence of $\mathcal E^\circ=\{\mathbf g^\circ: \mathbf g\in\mathcal
E\}$. Additionally, every consequence of $\mathcal E^\circ$ on $f$ is
balanced.
\end{claim}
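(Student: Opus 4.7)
The plan is to exploit the decomposition $\mathbf{g} = \mathbf{g}^\circ + \sum_{j\in M}\mu_{j,g}\tsp\mathbf{s}_j$ supplied by Claim \ref{claim:balanced}, applied on the enlarged base $M=A'N$. Here $\mathbf{s}_j$ denotes the basic inequality $f^*(M)\ge f^*(M\setm j)$ and $\mu_{j,g}=\mathbf{g}\cdot\r_j\ge 0$ because each $\mathbf{g}\in\mathcal E$ is polymatroid-valid while $\r_j$ is itself an (entropic) polymatroid. Two elementary facts drive the argument: (a) $\mathbf{g}^\circ\cdot\r_j=0$ for every $j\in M$, by definition of balancedness; and (b) the vector $\r_j$ with $j\in D\cup(E\setm A)$, as well as the paired sum $\r_j+\r_{\pi_A(j)}$ with $j\in A$, lies in the kernel of the copy-constraint operator $C$, i.e.\ it satisfies all equalities (ii)--(iii) of Definition \ref{def:copy}. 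Fact (b) reduces to a short case analysis of $\r_{\cdot}(\pi_A(J))=\r_{\cdot}(J)$ together with the single conditional-independence identity $f^*(A'N)=f(N)+f(A\|D)$.

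For the first assertion the idea is an LP-duality trick. Take any $\mathbf x$ on $M$ satisfying $\mathcal E^\circ$ and the copy constraints and define $\mathbf y=\mathbf x+\sum_{j\in M}\alpha_j\tsp\r_j$, where the $\alpha_j$ obey the pairing rule $\alpha_j=\alpha_{\pi_A(j)}$ for $j\in A\cup A'$ and are chosen large enough that $\mathbf{s}_j\cdot\mathbf x+\alpha_j\ge 0$ for every $j$. By (b), $\mathbf y$ still satisfies the copy constraints; and by (a) combined with $\mathbf{s}_j\cdot\r_k=\delta_{jk}$,
$$
  \mathbf{g}\cdot\mathbf y \,=\, \mathbf{g}^\circ\cdot\mathbf x
   +\ssum_{j\in M}\mu_{j,g}\tsp(\mathbf{s}_j\cdot\mathbf x+\alpha_j) \,\ge\, 0,
$$
so $\mathbf y$ also satisfies $\mathcal E$. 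The hypothesis yields $\mathbf{e}\cdot\mathbf y\restr N\ge 0$; and since $\mathbf e$ is balanced on $N$, the increment contributes nothing: $\mathbf{e}\cdot(\mathbf y-\mathbf x)\restr N=\sum_{k\in N}\alpha_k(\mathbf{e}\cdot\r_k)=0$. Hence $\mathbf{e}\cdot\mathbf x\restr N\ge 0$, which is exactly the Farkas condition for $\mathbf e$ to be a consequence of $\mathcal E^\circ$.

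For the second assertion I would write the consequence relation in Farkas form as $\tilde{\mathbf e}=\sum_g\lambda_g\mathbf{g}^\circ+\nu^{\mathsf T}C$, with $\tilde{\mathbf e}$ the extension of $\mathbf e$ by zero on subsets of $M$ not contained in $N$, $\lambda_g\ge 0$, and real $\nu$. For each $i\in N$ plug in the test vector $v_i=\r_i$ when $i\in D\cup(E\setm A)$ and $v_i=\r_i+\r_{\pi_A(i)}$ when $i\in A$. By (a) every $\mathbf{g}^\circ\cdot v_i$ vanishes and by (b) so does $\nu^{\mathsf T}Cv_i$, giving $\tilde{\mathbf e}\cdot v_i=0$. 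Since $\tilde{\mathbf e}\cdot\r_{\pi_A(i)}=0$ whenever $\pi_A(i)\in A'$, this collapses to $\mathbf{e}\cdot\r_i=0$ in every case, so $\mathbf e$ is balanced on $N$.

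The one delicate point is the pairing rule $\alpha_j=\alpha_{\pi_A(j)}$ for $j\in A\cup A'$: without it $\sum_j\alpha_j\tsp\r_j$ would escape the kernel of $C$, breaking both the perturbation step in the first part and the choice of test vector in the second. Verifying the kernel assertions in (b) is routine bookkeeping on the three kinds of elements (those in $D$, in $E\setm A$, and in $A$ together with their copies in $A'$), and everything else is linear algebra.
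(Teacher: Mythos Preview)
Your proof is correct and takes a genuinely different route from the paper. The paper works entirely on the dual side: it writes $\mathbf e$ as a non-negative combination of the inequalities in $\mathcal E^\circ\cup\{\mathbf s_i:i\in A'N\}$ plus the equality constraints, decomposes each constraint into its balanced part and its $\mathbf s_i$-components, and then tracks the $\mathbf s_i$ and $\mathbf s_{i'}$ coefficients by hand, using that the equality constraints contribute only paired differences $\mathbf s_{i'}-\mathbf s_i$ to force all such coefficients to vanish. You instead argue on the primal side for the first assertion: you perturb an arbitrary feasible point $\mathbf x$ for $\mathcal E^\circ$ by a paired combination of the $\r_j$'s so that the result $\mathbf y$ becomes feasible for $\mathcal E$, invoke the hypothesis on $\mathbf y$, and then observe that the perturbation is invisible to the balanced $\mathbf e$. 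For the second assertion you dualize but use well-chosen test vectors in the common kernel rather than coefficient bookkeeping. The key structural observation is the same in both proofs (the pairing $\alpha_j=\alpha_{\pi_A(j)}$ corresponds exactly to the paper's observation that the equality constraints produce only $\mathbf s_{i'}-\mathbf s_i$), but your packaging via the kernel facts (a) and (b) is cleaner and makes the role of the pairing transparent, whereas the paper's explicit coefficient chase is more pedestrian but perhaps easier to audit line by line.
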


\begin{proof}

The base of $f$ is partitioned as $N=E\cups D$, $A\subseteq E$, and $f^*$ is
on $A'N$. If $i\in A$ then we let $i'=\pi(i)\in A'$ be its copied instance.
Since the inequality $\mathbf g\in\mathcal E$ valid for polymatroids, it is
a conic combination of $\mathbf g^\circ$ and $\mathcal S=\{\mathbf s_i:i\in
A'N\}$. Thus consequences of $\mathcal E$ are also consequences of $\mathcal
E^\circ \cup \mathcal S$. Therefore it suffices to show that every balanced
consequence of this set is also a consequence of $\mathcal E^\circ$, and
that $\mathcal E^\circ$ has only balanced consequences on $N$. Observe that
whether an inequality is balanced or not does not depend on the base set, so
$\mathbf e$ is balanced in $N$ if and only if it is balanced in $A'N$.

For an equivalent condition for consequence we will use Farkas' lemma
\cite{ziegler} again, but in this case without eliminating auxiliary
variables. The inequality $\mathbf e$ is a consequences of the Copy Lemma if
it is a linear combination of the inequalities in $\mathcal E$ with
non-negative coefficients, and the additional equality constraints with
arbitrary coefficients. The equality constraints are $f^*(K'L)-f^*(KL)=0$
for $K\subseteq A$ and $L\subseteq D$, $K$ not empty, plus the single
equality expressing that $A'$ and $E$ are conditionally independent:
\begin{equation}\label{eq:bal1}
  f^*(A',E\|D) =  f^*(A'D)+f^*(ED)-f^*(D)-f^*(A'ED) = 0.
\end{equation}
Write every condition as a linear combination of its balanced part and
vectors from $\mathcal S=\{\mathbf s_i:i\in A'N\}$. We are focusing on the
components from $\mathcal S$. Constraint (\ref{eq:bal1}) is balanced, thus
it has no components from $\mathcal S$. The $\mathcal S$-part of the
equality constraint $f^*(K'L)-f^*(KL)=0$ is $\sum \{ (\mathbf s_{i'}-
\mathbf s_i):i\in K\}$. Inequalities in $\mathcal E^\circ$ have no $\mathcal
S$-components. Finally, we have the inequalities $\mathbf s_i\ge 0$ for
$i\in A'N$ with this single component. The linear combination must contain
variables from $N$ only, and must be balanced. Consequently, in this
combination the coefficient of $\mathbf s_i$ for $i\in N$ must be zero
(balanced), and the coefficient of $\mathbf s_{i'}$ for $i'\in A'$ must also
be zero (the result should contain variables from $N$ only). The
contribution of the equality constraints $f^*(K'L)-f^*(KL)=0$ is a linear
combination of the differences $\{(\mathbf s_{i'}-\mathbf s_i):i\in A\}$.
Other contributions come from the inequalities $\mathbf s_i\ge 0$ for $i\in
A'N$; these, however, with non-negative weights $\mu_i$. Therefore the
coefficients of $\mathbf s_{i'}$ and $\mathbf s_i$ for $i\in A$ in the total
sum are $\lambda_i+\mu_{i'}$ and $-\lambda_i+\mu_i$, respectively for some
real number $\lambda_i$, and the coefficient of $\mathbf s_j$ for $j\in N\sm
A$ is $\mu_j$. All these coefficients are zero, which means $\mu_i=0$ for
all $i\in A'N$, which proves the first part of the claim.

For the second part observe that, starting from balanced inequalities, only
the equality constrains $f^*(K'L)-f^*(KL)=0$ contain components from
$\mathcal S$, and these components are linear combinations of $\{(\mathbf
s_{i'}- \mathbf s_i):i\in A\}$. In the final sum only $\mathbf s_{i'}$ and
$\mathbf s_i$ can appear with the same value but with opposite signs. Since the
sum must be about $N$, the coefficient of $\mathbf s_{i'}$ must be zero, and
then the coefficient of $\mathbf s_i$ is zero as well. It means that the
result is balanced as was claimed.
\end{proof}


\subsection{Preconditions}\label{subsec:copy-cond}

This Section presents some general, and easy to check, conditions on the
polymatroid $(f,N)$ that guarantee the existence of a copy extension. If any
of these ``preconditions'' hold, then the expensive polyhedral computation
discussed in Section \ref{subsec:new-ineq} is guaranteed to yield nothing
new. Each of these preconditions is about the set $D$ of ``over'' variables.

\begin{claim}\label{claim:Dmodular}
$(f,N)$ has a polymatroid $A$-copy over $D$ if $f(D)=\sum\{f(i):i\in D\}$.
\end{claim}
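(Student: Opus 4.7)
The plan is to exhibit an explicit polymatroid extension. First, I would unpack the hypothesis: the equality $f(D)=\sum_{i\in D} f(i)$ together with subadditivity (which follows from (\bref{B2}) with $K=\emptyset$ and $f(\emptyset)=0$) forces $f\restr D$ to be a modular function. Indeed, for any $X\subseteq D$, $f(D)\le f(X)+f(D\setm X)\le \sum_{i\in D} f(i)=f(D)$, so equality holds throughout; iterating gives $f(X)=\sum_{i\in X} f(i)$ and hence $f(X_1)+f(X_2)=f(X_1\cup X_2)+f(X_1\cap X_2)$ for all $X_1,X_2\subseteq D$. This ``rigidity on $D$'' is what the whole argument runs on.

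Next, by Proposition~\ref{prop:cp-red} it suffices to construct a full copy, so I would take $A=E$. The natural candidate is the \emph{modular amalgamation} of $(f,N)$ with the isomorphic copy of $(f\restr AD,\,AD)$ along the common $D$:
\[
   f^*(K'\cup J)\eqdef f(\pi(K')\cup J^D)+f(J)-f(J^D),\qquad J^D\eqdef J\cap D.
\]
Conditions (i)--(iii) of Definition~\ref{def:copy} are direct substitutions: setting $K'=\emptyset$ gives (i), setting $J\subseteq D$ (so $J=J^D$) gives $f^*(K'\cup J)=f(KJ)$ which is (ii), and plugging in $K'=A'$, $J=N$ yields $f^*(A'N)=f(AD)+f(N)-f(D)=f(N)+f(A\|D)$, which is (iii).

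Submodularity is then obtained by expanding $f^*(X_1)+f^*(X_2)-f^*(X_1\cup X_2)-f^*(X_1\cap X_2)$ and collecting the three groups of terms: the sums $f(\pi(K'_i)\cup J_i^D)$ and $f(J_i)$ each contribute a standard submodularity inequality for $f$ (and are thus nonnegative), while the cross-sum involving $-f(J_i^D)$ evaluates to $f(J_1^D\cup J_2^D)+f(J_1^D\cap J_2^D)-f(J_1^D)-f(J_2^D)$, which is \emph{exactly zero} by the modularity of $f\restr D$ established in Step~1. This is precisely where the hypothesis is used.

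The main obstacle will be monotonicity. The only nontrivial increment is adding a singleton $i\in D$ to $X=K'\cup J$, which (again using modularity of $D$ to evaluate $f(J^D\cup i)-f(J^D)=f(i)$) reduces to showing
\[
   f(i\|\pi(K')\cup J^D)+f(i\|J)\ge f(i).
\]
In well-behaved cases (where $E$ and $D$ carry genuinely disjoint information) the submodularity inequality $f(i\|S)+f(i\|T)\ge f(i\|S\cap T)+f(i\|S\cup T)$ yields this directly, since $S\cap T$ reduces to a subset of $D$ on which conditioning does not decrease the entropy of the $D$-element $i$. Degenerate configurations (where some element of $E$ duplicates information in $D$) require refinement: one caps $f^*(K'\cup J)$ from above by the forced value $f^*(K'\cup J_E\cup D)=f(KD)+f(J_ED)-f(D)$ (which monotonicity demands), and then checks that modularity of $D$ keeps the capped function submodular. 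This refined capping step is where I expect the technical work to concentrate, and where the full strength of the modularity hypothesis (not just $f(D)=\sum f(i)$ on the nose, but its consequence that $f\restr D$ is modular as a set function) is indispensable.
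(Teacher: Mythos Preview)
Your overall strategy---modular amalgamation along $D$---matches the paper's, and you correctly identify that the hypothesis forces $f\restr D$ to be modular, which is exactly where it gets used. However, your candidate formula $f^*(K'\cup J)=f(\pi(K')\cup J^D)+f(J)-f(J^D)$ genuinely fails monotonicity, and the single cap you propose (by the value at $L=D$) is not enough to repair it.

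Concretely, take $E=\{e\}$, $D=\{d_1,d_2\}$ with $f(e)=f(d_1)=f(d_2)=1$, $f(d_1d_2)=2$, $f(ed_1)=1$, $f(ed_2)=f(ed_1d_2)=2$. This is a polymatroid and the hypothesis holds. Your formula gives $f^*(e'e)=f(e)+f(e)-0=2$, and the cap at $D$ also gives $f(eD)+f(eD)-f(D)=2$; but $f^*(e'ed_1)=f(ed_1)+f(ed_1)-f(d_1)=1$. So even after capping, $f^*(e'e)>f^*(e'ed_1)$. The problem is that the ``degenerate configuration'' (here $e$ duplicates $d_1$) can sit at an intermediate $L=\{d_1\}$ that neither endpoint $L=J^D$ nor $L=D$ sees.

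The paper's remedy is to minimise over \emph{all} intermediate subsets of $D$:
\[
   f^*(I\,\pi(J)\,K)=\min\big\{\,f(IL)+f(JL)-f(L):\ K\subseteq L\subseteq D\,\big\}.
\]
Monotonicity is then automatic: enlarging $I$ or $J$ increases every term by monotonicity of $f$, and enlarging $K$ only shrinks the index set over which the minimum is taken. Submodularity goes through exactly as you sketched: pick the minimisers $L_1,L_2$ for the two arguments, apply submodularity of $f$ to the $I$-terms and to the $J$-terms separately, and invoke $f(L_1)+f(L_2)=f(L_1\cup L_2)+f(L_1\cap L_2)$ on $D$ for the subtracted terms; the resulting expression is then an upper bound for the values at $L_1\cup L_2$ and $L_1\cap L_2$, hence for the two minima on the right. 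So the missing ingredient in your proposal is precisely this full minimisation, which handles monotonicity for free while leaving your submodularity argument intact.
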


\begin{proof}

By Proposition \ref{prop:cp-red} it suffices to prove the existence of a
polymatroid full copy. Observe that the condition implies
$f(L)=\sum\{f(i):i\in L\}$ for every $L\subseteq D$, and then
\begin{equation}\label{eq:modular}
   f(L_1)+f(L_2)=f(L_1\cup L_2)+f(L_1\cap L_2)
\end{equation}
for arbitrary subsets $L_1$ and $L_2$ of $D$. So let $E=N\sm D$, $E'$ be a 
copy of $E$ and $\pi:EE'D\leftrightarrow E'ED$ be the canonical
map as in Definition \ref{def:copy}. For
subsets $I,J\subseteq E$, and $K\subseteq D$ define
$$
   f^*(I\pi(J)K) \eqdef \min\nolimits_L \big\{ f(IL)+f(JL)-f(L)\,:~ 
  K\subseteq L\subseteq D \big\}.
$$
If $J$ is empty, then $f^*(IK)=f(IK)$, and if $I$ is empty, then
$f^*(\pi(J)K)=f(JK)$, thus $f^*$ is an extension of $f$, and $f^*\restr E'D$
is isomorphic to $f\restr ED$. Also, $f^*(EE'D)=f(ED)+f(ED)-f(D)$, thus $E'$
and $E$ are conditionally independent given $D$. Thus $f^*$ is a copy if it
is a polymatroid. Monotonicity of $f^*$ follows easily from the monotonicity
of $f$ and from the fact that the minimum of a set is $\le$ than the minimum
of a subset. Therefore, we only need to check submodularity. Let $I_i$, $J_i$, $K_i$
for $i=1$ and $i=2$ be two triplets so that in the definition of
$f^*(I_i\pi(J_i)K_i)$ the minimal value is taken at $K_i\subseteq
L_i\subseteq D$. Let $I_3=I_1\cup I_2$ and $I_4=I_1\cap I_2$, and similarly
for $J$, $K$, and $L$. Using submodularity of $f$ we have
\begin{align*}
   & f^*(I_1\pi(J_1)K_1)+f^*(I_2\pi(J_2)K_2) = {} \\[2pt]
   & ~~{} = \big(f(I_1L_1)+f(J_1L_1)-f(L_1)\big) + 
            \big(f(I_2L_2)+f(J_2L_2)-f(L_2)\big) ={} \\[2pt]
   & ~~{} = \big(f(I_1L_1)+f(I_2L_2)\big) + \big(f(J_1L_1)+f(J_2L_2)\big) 
             - \big(f(L_1)+f(L_2)\big) \ge {} \\[2pt]
   & ~~{} \ge \big(f(I_3L_3)+f(I_4L_4) \big) +
              \big(f(J_3L_3)+f(J_4L_4) \big) -\big(f(L_1)+f(L_2)\big) = {}
\\[2pt]
   & ~~{} = \big( f(I_3L_3)+f(J_3L_3)-f(L_3)\big) +
            \big( f(I_4L_4)+f(J_4L_4)-f(L_4)\big ) \ge {} \\[2pt]
   & ~~{} \ge f^*(I_3\pi(J_3)K_3) + f^*(I_4\pi(J_4)K_4),
\end{align*}
where we used (\ref{eq:modular}). This proves that $f^*$ is indeed submodular.
\end{proof}

A slightly more general statement is true, covering the case when $D$ has
only two elements $d_1$, $d_2$, and $f(d_1)=f(d_1d_2)$. The exact condition
for this general case is rather technical, while the proof is almost
identical. For details, please consult \cite{Csirmaz.oneadhesive,M.fmadhe}.

\begin{claim}\label{claim:smallD}
$(f,N)$ has a polymatroid copy over $D$ if $|D|=1$, or $|D|=|N|-1$.
\end{claim}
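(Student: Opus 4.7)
In both cases I build a full copy; by Proposition~\ref{prop:cp-red} this is enough. The case $|D|=1$ is immediate from Claim~\ref{claim:Dmodular}: with $D=\{d\}$ the modularity hypothesis $f(D)=\ssum\{f(i):i\in D\}$ holds trivially (only one summand), so that claim supplies a polymatroid full copy.

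For $|D|=|N|-1$ the complement $E=N\setminus D$ is a single point, say $E=\{e\}$. My candidate extension $f^*$ on the enlarged base $\{e,e'\}\cup D$ is defined by
$$
  f^*\restr N=f,\qquad f^*(e'L)=f(eL),\qquad f^*(ee'L)=f(eL)+f(e\|D)
$$
for every $L\subseteq D$. Properties (i) and (ii) of Definition~\ref{def:copy} are visible from the formulas; property (iii) reduces to $f^*(ee'D)=f(eD)+f(e\|D)=2f(eD)-f(D)$, which is exactly the conditional independence $f^*(e,e'\|D)=0$. The essential work is then to verify that $f^*$ is a polymatroid.

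Monotonicity is a short calculation using $f(e\|D)\ge 0$. For submodularity I process the basic inequality $f^*(S)+f^*(T)\ge f^*(S\cup T)+f^*(S\cap T)$ by a case analysis on $\ell_X=|X\cap\{e,e'\}|\in\{0,1,2\}$. In most configurations the added $f(e\|D)$ terms cancel across the inequality and the assertion reduces directly to submodularity of $f$ on the $D$-parts $L_S=S\cap D$, $L_T=T\cap D$. The delicate case is the ``crossed'' one, $S=eL_S$ and $T=e'L_T$, for which $S\cup T=ee'(L_S\cup L_T)$ and $S\cap T=L_S\cap L_T$; the required bound becomes
$$
  f(eL_S)+f(eL_T)\ge f(e(L_S\cup L_T))+f(L_S\cap L_T)+f(e\|D).
$$
Submodularity of $f$ on $eL_S,eL_T$ supplies $f(eL_S)+f(eL_T)\ge f(e(L_S\cup L_T))+f(e(L_S\cap L_T))$, so the whole construction hinges on the inequality $f(e\|K)\ge f(e\|D)$ for $K=L_S\cap L_T\subseteq D$. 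This is the anticipated main obstacle, but it is itself a direct consequence of submodularity applied to $eK$ and $D$: the inequality $f(eK)+f(D)\ge f(eD)+f(K)$ rearranges into exactly $f(e\|K)\ge f(e\|D)$, expressing that the conditional rank of $e$ is monotone decreasing in the conditioning set. This closes the critical case and, together with the routine ones, proves the claim.
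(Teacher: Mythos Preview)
Your proof is correct. For $|D|=1$ you invoke Claim~\ref{claim:Dmodular} exactly as the paper does. For $|D|=|N|-1$ your explicit formula for $f^*$ coincides with the paper's construction, but the verification strategies differ: the paper builds $f^*$ as $f'+\lambda\r_e+\lambda\r_{e'}$ where $f'$ is the parallel extension of the tightened polymatroid $f\dn e$ at $e$, and $\lambda=f(e\|D)$; since tightening, parallel extension, and addition of the (polymatroid) vectors $\lambda\r_e$, $\lambda\r_{e'}$ are already known to preserve polymatroidality, no submodularity check is needed. Your route is more elementary --- you write the formula directly and run a case analysis --- at the cost of having to identify and dispatch the ``crossed'' case by hand. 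Both arguments produce the same $f^*$, and your key step $f(e\|K)\ge f(e\|D)$ for $K\subseteq D$ is precisely what the paper's tightening encodes implicitly.
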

\begin{proof}

$|D|=1$ is a special case of Claim \ref{claim:Dmodular}. For $|D|=|N|-1$ let
$a$ be the single element of $N$ that is not in $D$, and let $\lambda=f(a\|D)$.
After tightening at $a$ the polymatroid $f$ becomes $ f\dn a = f - \lambda
\r_a $ as defined in Section \ref{subsec:tightening}. Add $a'$ parallel to
$a$ in $f\dn a$ to get $f'$, and finally let $f^*$ be $f'+\lambda \r_a +
\lambda\r_{a'}$. It is clear that both $f^*\restr aD$ and $f^*\restr a'D$
are isomorphic to $f=f\dn a+\lambda\r_a$, moreover
\begin{align*}
   f^*(aa'D)&=f'(aa'D)+2\lambda = (f\dn a)(aD)+2\lambda = f(aD)+\lambda
={}\\[2pt]
   &=f(aD)+f(aD)-f(D)=f^*(aD)+f^*(a'D)-f^*(D).
\end{align*}
This shows that $a$ and $a'$ are independent over $D$, thus the polymatroid
$f^*$ is indeed an $a$-copy of $f$ over $D$.
\end{proof}

A subset $F\subseteq N$ is a \emph{flat} if every $F'\supsetneqq F$ has a
strictly larger rank. The next two claims show that it suffices to
take copies over flat subsets.

\begin{claim}
Suppose $d\notin A$, $f(Dd)=f(D)$, and $f^*$ is an $A$-copy over $Dd$.
Then $f^*$ is also an $A$-copy over $D$.
\end{claim}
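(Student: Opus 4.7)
The plan is to verify the three conditions of Definition \ref{def:copy} for $f^*$ viewed as an $A$-copy over $D$, given that the same function $f^*$ is already known to satisfy them for the larger base $Dd$. Since $d \notin A$ and the partition for the $Dd$-copy is $N = E' \cups Dd$ with $A \subseteq E'$, we must have $d \in E$, so that $E = E' \cup \{d\}$ and the partition for the $D$-copy is $N = E \cups D$. Condition (i) is immediate: $f^*$ already extends $f$. Condition (ii) is also immediate, because the isomorphism $f^*(\pi_A(J)) = f^*(J)$ is asserted by hypothesis for all $J \subseteq ADd$, which in particular covers all $J \subseteq AD$.

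The real content is condition (iii), the conditional independence of $A'$ and $E$ over $D$, i.e.
$$
   f^*(A'D) + f^*(N) - f^*(D) - f^*(A'N) = 0.
$$
From the $Dd$-copy hypothesis I get the analogous identity with $D$ replaced by $Dd$ (using that $E'Dd = N$ and $A'E'Dd = A'N$):
$$
   f^*(A'Dd) + f^*(N) - f^*(Dd) - f^*(A'N) = 0.
$$
So it suffices to show $f^*(A'Dd) = f^*(A'D)$ and $f^*(Dd) = f^*(D)$. The second equality is exactly the hypothesis $f(Dd)=f(D)$ transported to $f^*$ via condition (i).

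For the first equality, I would use condition (ii) to rewrite $f^*(A'Dd) = f^*(ADd) = f(ADd)$ and $f^*(A'D) = f^*(AD) = f(AD)$; this is legal because $\pi_A$ fixes every element of $D$ and $d$ (both are disjoint from $A$). It then remains to show $f(ADd) = f(AD)$. One direction is monotonicity. The other direction is submodularity of $f$ applied to $AD$ and $Dd$:
$$
   f(ADd) + f(D) \ \le\ f(AD) + f(Dd),
$$
combined with $f(Dd) = f(D)$. This gives $f(ADd) \le f(AD)$, hence equality. Plugging back, condition (iii) for the $D$-copy follows from condition (iii) for the $Dd$-copy.

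I do not anticipate any real obstacle here; the only substantive step is the one-line submodularity argument that propagates the hypothesis $f(Dd) = f(D)$ to $f(ADd) = f(AD)$, after which the whole claim collapses into a direct substitution in the defining identities of Definition \ref{def:copy}.
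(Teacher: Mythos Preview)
Your proof is correct and follows essentially the same route as the paper's. Both reduce condition (iii) for the $D$-copy to condition (iii) for the $Dd$-copy by showing $f^*(A'Dd)=f^*(A'D)$ and $f^*(Dd)=f^*(D)$; the only cosmetic difference is that you transfer to $f$ via the canonical isomorphism before invoking submodularity, whereas the paper applies the same submodularity step directly in $f^*$.
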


\begin{proof}
$f^*\restr A'D$ and $f\restr AD$ are isomorphic as they are isomorphic even
adding $d$. Thus we need to check independence only. Let $E=N\sm Dd$. Since
$f^*(Dd)=f^*(D)$ implies $f^*(A',Ed\|D)=f^*(A',E\|Dd)$, we are done.
\end{proof}

\begin{claim}
Suppose $f(Dd)=f(D)$, $f^*$ is an $A$-copy over $Dd$, and $d'$ is
parallel to $d$. Then $A'd'$ is an $Ad$-copy over $D$.
\end{claim}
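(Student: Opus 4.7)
The plan is to verify the three conditions of Definition~\ref{def:copy} directly. Write $E_0 = N\sm Dd$, so $E_0$ plays the role of the ``$E$'' for the original $A$-copy over $Dd$; note $A\subseteq E_0$ and $d\notin D\cup E_0$. Let $h$ be the one-point extension of $f^*$ obtained by adding $d'$ parallel to $d$, so $h$ lives on $A'Nd'$ and, by construction, $A'd'$ is disjoint from $N$. I claim $h$ is an $Ad$-copy of $f$ over $D$, with $(Ad)'$ identified as $A'd'$; this is precisely what is asserted.

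Condition (i), that $h$ extends $f$, is immediate from the chain $f\to f^*\to h$, since neither extension disturbs values on subsets of the earlier base. For condition (ii) I must check $h(\pi_{Ad}(J))=h(J)$ for every $J\subseteq (Ad)\cup D$, where $\pi_{Ad}$ swaps $A\leftrightarrow A'$, $d\leftrightarrow d'$, and fixes $D$. Split on whether $d\in J$. If $d\notin J$, then $J\subseteq A\cup D$ and $\pi_{Ad}(J)\subseteq A'\cup D$; neither set contains $d'$, $h$ reduces to $f^*$, and the identity is simply the $A$-copy isomorphism of $f^*$ on a subset of $A(Dd)$. If $d\in J$, write $J=I\cup\{d\}\cup K$ with $I\subseteq A$ and $K\subseteq D$, so $\pi_{Ad}(J)=I'\cup\{d'\}\cup K$. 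Using $d'\parallel d$, replace $d'$ by $d$ to get $h(I'\cup\{d'\}\cup K)=f^*(I'\cup\{d\}\cup K)$, and then apply the $A$-copy isomorphism of $f^*$ to the subset $I\cup\{d\}\cup K\subseteq A(Dd)$ to obtain $f^*(I\cup\{d\}\cup K)=h(J)$.

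Condition (iii) is the core of the proof and the only place where the hypothesis $f(Dd)=f(D)$ is used. It requires $h(A'd',E_0d\|D)=0$, which I expand and evaluate term by term. Using $d'\parallel d$ and the $A$-copy isomorphism, $h(A'd'D)=h(A'dD)=f^*(A'Dd)=f(ADd)$. Trivially $h(E_0dD)=f^*(N)=f(N)$ and $h(D)=f(D)$. For the last term, $h(A'd'E_0dD)=h(A'N)$ since $d\in A'N$ absorbs the parallel $d'$, and $h(A'N)=f^*(A'N)$; the conditional independence $f^*(A',E_0\|Dd)=0$ of the original $A$-copy then yields $f^*(A'N)=f^*(A'Dd)+f^*(E_0Dd)-f^*(Dd)=f(ADd)+f(N)-f(Dd)$. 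Substituting into the four-term expansion, the $f(ADd)$ and $f(N)$ contributions cancel and the expression collapses to $f(Dd)-f(D)$, which is $0$ by hypothesis.

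The main obstacle is purely bookkeeping: three ingredients (parallelism $d'\parallel d$, the $A$-copy isomorphism of $f^*$, and the conditional independence $f^*(A',E_0\|Dd)=0$) must be combined in the right order to compute condition (iii), and the isomorphism can only be invoked on subsets of $A(Dd)$, so one must always collapse $d'$ onto $d$ via parallelism before applying it. The hypothesis $f(Dd)=f(D)$ is exactly the leftover after all cancellations.
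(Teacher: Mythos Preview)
Your proof is correct and follows essentially the same approach as the paper: verify the three conditions of Definition~\ref{def:copy} by reducing, via the parallelism $d'\parallel d$, to the isomorphism and independence properties of the original $A$-copy over $Dd$, with the hypothesis $f(Dd)=f(D)$ absorbing the residual term in condition~(iii). The paper's argument is a terse two-line version of your more explicit computation.
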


\begin{proof}
Since $AdD$ and $A'dD$ are isomorphic and $d'$ is parallel to $d$ in $f^*$\!,
$AdD$ and $A'd'D$ are also isomorphic. For the independence
\begin{align*}
  (A'd',E\|D) &= A'd'D+ED-A'd'ED-D={}\\[2pt]
              & = A'Dd+ED-A'EDd-D=(A',E|Dd) = 0
\end{align*}
as $f^*$ is a copy and $f(Dd)=f(D)$. This proves the claim.
\end{proof}

\subsection{Iterating}\label{subsec:iterating}

Starting from an entropic (or aent) polymatroid, the Copy Lemma guarantees
an entropic (or aent) extension. Thus the Lemma can be applied iteratively
to this new polymatroid, obtaining a larger extension of the original one, and
so on. Such an \emph{iterated copy} is usually specified by the sequence of subset
pairs, where the pairs are separated by semicolons, and a pair is separated by a
colon, such as
$$
   \langle\, A_1:D_1; ~~ A_2:D_2; ~~\cdots; ~~
   A_k:D_k\,\rangle.
$$ 
Here each copied set $A_j$ and the over set $D_j$ is a subset of the base of 
the previous copy.
Since the copy of a restriction is a restriction of a copy by Proposition
\ref{prop:cp-red}, this iterated copy is a restriction of the \emph{iterated
full copy} determined solely by the sequence $\langle D_1;D_2; \cdots\rangle$
of the over sets.

\def\ta#1{{\tt a}_{#1}}
\def\tb#1{{\tt b}_{#1}}
\def\tc#1{{\tt c}_{#1}}
\def\td#1{{\tt d}_{#1}}

As an example, let the base of the initial polymatroid be $\{\ta1 \tb1 \tc1 \td1\}$.
Take a $\td1$-copy over $\ta1\tb1$, and mark the ``mirror image''
of $\td1$ by $\td2$. That is, the canonical map $\pi_1$ of this copy step
gives $\pi_1(\td1)=\td2$. Thus $\pi_1:\td1\ta1\tb1 \leftrightarrow \td2\ta1\tb1$
describes the copy isomorphism, while the conditional independence is
$(\td2,\tc1\td1\| \ta1\tb1)=0$. The next iteration is an $\ta1$-copy over
$\tb1\td1\td2$ naming the image as $\pi_2(\ta1)=\ta2$. The final iteration is a
$\tb1$-copy over $\ta1\ta2\td1\td2$ with $\pi_3(\tb1)=\tb2$. This copy
sequence can be written succinctly as
$$
   \td2\m=\td1 \m: \ta1\tb1; ~~ \ta2\m=\ta1 \m: \tb1\td1\td2; ~~
\tb2\m=\tb1\m:\ta1\ta2\td1\td2.
$$
The third copy in this sequence is a polymatroid on seven points, as $\tc1$
is carried over. When
taking full copies the first iteration results in a six-point
polymatroid, the second iteration in a nine-point polymatroid, and the
third iteration in a 14-point polymatroid. The full-copy iteration can be
specified by the over sets $\ta1\tb1$, $\tb1\td1\td2$, and
$\ta1\ta2\td1\td2$ only; the complete specification with explicit naming of
the new elements is
\begin{equation}\label{eq:copyseq}
  \begin{array}{r@{\m=}l}
  \tc2\td2 & \tc1\td1\m:\ta1\tb1; \\[2pt]
  \ta2\tc3\tc4 & \ta1\tc1\tc2\m:\tb1\td1\td2; \\[2pt]
  \tb2\tc5\tc6\tc7\tc8 & \tb1\tc1\tc2\tc3\tc4 \m: \ta1\ta2\td1\td2.
  \end{array}
\end{equation}
Restricting this iterated full copy $f^*$ to the seven element subset
$\ta1\ta2\tb1\tb2\tc1\td1\td2$ gives the result of the previous iteration sequence.
Observe that $f^*$ contains eight isomorphic instances of the original
polymatroid; these are at $\ta i \tb j\tc k\td\ell$ where $i$, $j$, and
$\ell$ are in $\{1,2\}$, and $k$ goes from $1$ to $8$.

\smallskip

Consequences of the iterated Copy Lemma can be determined similarly to that
of a single application which was discussed in Section
\ref{subsec:new-ineq}. First, the iterated copy is translated to an LP
instance, and then to a vertex enumeration problem. The \emph{main} LP variables
are indexed by the non-empty subsets of the initial polymatroid base, while
\emph{auxiliary} LP variables are indexed by the additional subsets of the
base of the last copy. Equality constraints expressing isomorphism induced
by the canonical maps as well as the consequences of the conditional
independences are written separately for each iteration. Finally, inequality
constraints are added, these are the basic Shannon inequalities (from
(\bref{B2}) if only balanced inequalities are required) written for the base
elements of the last iteration. (Note that these include the same
inequalities for smaller base sets.) From this LP instance the translation
to a vertex enumeration problem of a polyhedral cone $\mathcal Q$ analogous
to the one defined in (\ref{eq:cone-Q}) is straightforward. The dimension of
$\mathcal Q$ is determined by the number of main variables, which remains
the same as for a single copy-step. However, the number of auxiliary variables
and constraints can grow significantly. While the complexity of vertex
enumeration algorithms grows exponentially with the dimension of $\mathcal
Q$, it grows quite modestly with the number of constraints. These vertex
enumeration problems are, therefore, typically tractable for a couple of
iterations, and have been used successfully to generate several hundreds
of new four-variable Shannon inequalities \cite{DFZ11}.

\smallskip

A $k$-iterated copy has the same consequences as the very first copy
instance when the matrices $P$ and $Q$ of that copy step are supplemented by
those linear inequalities on $A'N$ that guarantee that the subsequent
$(k\m-1)$ iterations can be applied successfully. The converse is also true:
adding a single linear inequality stated for $A'N$ provided by some
(possibly iterated) application of the Copy Lemma, can be replaced by adding
an instance (or iterated instances) of the Copy Lemma on top of $A'N$ which
forces that inequality to hold. If more than one such an inequality is
required, further iterations can be added to the chain. While the
two methods:
\begin{itemize}\setlength\itemsep{0pt}
\item[a)] applying a single copy step augmented with additional (non-Shannon) 
          inequalities on $A'N$;
\item[b)] using an iterated copy sequence
\end{itemize}
are theoretically equivalent, in practice almost always the second one is
used \cite{G.Rom}. Determining the inequalities that are needed to be added
requires solving a significantly higher dimensional vertex enumeration
problem, which is almost always numerically infeasible. Also, even a single
instance of the Copy Lemma could produce a large number of new inequalities.
Adding all those inequalities can increase the number of
constraints (the rows in matrices $P$ and $Q$) beyond a manageable level,
while presumably only a small fraction of these inequalities are
actually needed.

\smallskip

Finally, preconditions from Section \ref{subsec:copy-cond} apply for each
iteration. If any of the conditions listed there holds for an intermediate
polymatroid, the iteration step can be omitted safely (or replaced by a
larger over set) without losing any of the consequences.


\subsection{Symmetries}\label{subsec:symmetry}

Symmetries of the polymatroid can be used to decrease the number of main and
auxiliary variables, thus increasing the efficiency of the polyhedral
algorithms. Symmetries of the copy polymatroid could provide additional
restrictions to the cone $\mathcal Q$ and can lead to new consequences. We
start exploring these symmetries with some definitions.

\begin{definition}\label{def:auto}
(i) An \emph{automorphism} of the polymatroid $(f,N)$ is a permutation
$\sigma$ of
the base $N$ such that $f(\sigma(A))=f(A)$ for all $A\subseteq N$.

(ii) A \emph{symmetry}, or \emph{partial automorphism}, of $(f,N)$ is a 
permutation $\sigma$ of a subset of $N$ such that $f(\sigma(A))=f(A)$ for all
$A\subseteq\dom(\sigma)$.
\end{definition}

Note that an automorphism of $f$ is also a symmetry. 
For a permutation $\sigma$ with $\dom(\sigma)\subseteq N$, the
\emph{restriction} of $\sigma$ to the subset $K\subseteq N$, denoted by
$\sigma\restr K$, 
is the maximal subpermutation of $\sigma$ on $K$. That is,
$$
   \dom(\sigma\restr K) = \bigcup \{L : L \subseteq K\cap
\dom(\sigma), \mbox{ and } \sigma(L)=L \},
$$
and $(\sigma\restr K)(x)=\sigma(x)$ when $x\in\dom(\sigma\restr K)$. The
restriction $\sigma\restr K$ can be empty even if
$\dom(\sigma)\cap K$ is not empty. For an illustration let $\pi$ be the
canonical map of the full copy of $f$ over $D$. The canonical map of
the $A$-copy over $D$, which was denoted by $\pi_A$, is just the
restriction $\pi\restr A'N$.

\smallskip

The proof of the Copy Lemma \ref{lemma:copy} actually yielded a stronger
result involving symmetries.

\begin{claim}\label{claim:symm1}
Every entropic (or aent, or linear) polymatroid has an entropic (or aent, or
linear, respectively) $A$-copy over $D$ in which the canonical map
$\pi_A$ is a symmetry.
\end{claim}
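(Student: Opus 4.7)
My plan is to observe that the constructions already given in the proof of Lemma~\ref{lemma:copy} carry the required symmetry almost by inspection, and then to pass from the full copy to the $A$-copy by restriction. I will first show, in each of the three classes (entropic, aent, linear), that the full copy $f^*$ constructed in the proof of Lemma~\ref{lemma:copy} has the full canonical map $\pi$ as an \emph{automorphism} (not merely a partial symmetry). Once this is done, restricting $f^*$ to $A'N$ as in the discussion preceding Proposition~\ref{prop:cp-red} will yield an $A$-copy in the same class, and the restricted map $\pi\restr A'N$ will turn out to be exactly $\pi_A$.

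For the entropic case I simply stare at the probability used in the proof of Lemma~\ref{lemma:copy},
$$
   \Prob_{\xi^*}(e', e, d) = \frac{\Prob_\xi(e', d) \cdot \Prob_\xi(e, d)}{\Prob_\xi(d)},
$$
which is manifestly symmetric in $e'$ and $e$ (both ranging over $\mathcal X_E$); hence $\xi^*$ is invariant under the canonical relabelling that swaps the $E'$- and $E$-variables and fixes the $D$-variables, and so $f^*(\pi(J)) = f^*(J)$ for every $J \subseteq E'ED$. For the linear case I reuse the representation from the same proof: the ambient vector space is the direct sum of the spans of $\{\mathbf b_i\}$, $\{\mathbf c_j\}$, and $\{\mathbf c'_j\}$, and the linear involution fixing every $\mathbf b_i$ and swapping $\mathbf c_j\leftrightarrow \mathbf c'_j$ is a vector space automorphism which, by the very definition of $V^*_a$ and $V^*_{a'}$, carries $V^*_a$ onto $V^*_{\pi(a)}$ for every $a$; preservation of rank then realises $\pi$ as an automorphism of $f^*$. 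The aent case follows by closure: approximate $f$ by entropic polymatroids $f_n\to f$, apply the entropic construction to obtain $\pi$-symmetric entropic full copies $f^*_n$, and extract a convergent subsequence (the $f^*_n$ are bounded, since $f^*_n(J)\le f^*_n(E'ED)=2f_n(ED)-f_n(D)$ is bounded). The condition $f^*(\pi(J))=f^*(J)$ is a finite collection of linear equalities, and the $A$-copy conditions are likewise continuous, so both persist in the limit.

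Finally I pass from the full copy to the $A$-copy by restricting $f^*$ to $A'N=A'\cup E\cup D$. As noted just before Proposition~\ref{prop:cp-red}, the submodular inequality $f^*(E',E\|D)=0$ implies $f^*(A',E\|D)=0$, so this restriction is itself an $A$-copy of $f$ over $D$; it lies in the appropriate class because entropic, aent, and linear polymatroids are all closed under restriction (Claims~\ref{claim:Gaoper1} and~\ref{claim:lin-closed}). To transfer the symmetry I verify that the partial-automorphism restriction $\pi\restr A'N$ of Definition~\ref{def:auto} equals $\pi_A$: any $\pi$-invariant subset $L\subseteq A'N$ cannot contain any element of $E\setminus A$, since $\pi$ would send it to $E'\setminus A'$, which lies outside $A'N$; the maximal $\pi$-invariant subset of $A'N$ is therefore $A\cup A'\cup D$, on which $\pi$ acts as $\pi_A$. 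Since $\pi$ is an automorphism of the full copy, its restriction $\pi_A$ is a symmetry of $f^*\restr A'N$, producing the desired $A$-copy.

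I do not expect a serious obstacle: the three class-specific symmetry verifications are direct inspection of the construction in Lemma~\ref{lemma:copy}, the aent limit is a standard compactness-plus-closed-condition argument, and the final restriction step is a short set-theoretic check about which subsets of $A'N$ can be $\pi$-invariant.
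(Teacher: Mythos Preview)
Your proposal is correct and follows essentially the same approach as the paper: observe that the full-copy construction in Lemma~\ref{lemma:copy} is symmetric in $E$ and $E'$, so $\pi$ is an automorphism, then restrict to $A'N$ and check that $\pi\restr A'N=\pi_A$. You supply more explicit detail than the paper does---separate verifications for the entropic and linear constructions, and an explicit compactness step for the aent case---but the underlying argument is the same.
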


\begin{proof}

In the construction of the full copy, the sets $E$ and $E'$ are completely
symmetric, therefore the canonical permutation $\pi$, swapping $E$ and $E'$
and keeping $D$ fixed pointwise, is an automorphism of the copy polymatroid.
The $A$-copy can be obtained as a restriction of the full copy. The
restriction preserves those symmetries where both $J$ and $\pi(J)$ are in the
restriction. These subsets are just the ones that are preserved by the
canonical map $\pi_A$ of the $A$-copy.
\end{proof}

Adding further equality constraints implied by the symmetries reduces the
number of auxiliary variables, and also could lead to new consequences.
However, in the case of a full copy, adding the canonical symmetry $\pi$ provided
by Claim \ref{claim:symm1} does not increase the strength of the method.

\begin{claim}

Suppose $f$ has a polymatroid (or aent, or linear) full copy $f^*$ over $D$.
Then $f$ also has a $\pi$-symmetric polymatroid (or aent, or linear,
respectively) full copy over $D$.
\end{claim}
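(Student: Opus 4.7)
The natural approach is to symmetrize $f^*$ by averaging it with its pullback along $\pi$. Define
$$g(J) \eqdef \tfrac12\bigl(f^*(J) + f^*(\pi(J))\bigr) \quad \mbox{for all } J \subseteq E'ED.$$
Since $\pi$ is an involution mapping $E'ED$ onto itself, $g(\pi(J)) = g(J)$ holds by construction, so $\pi$ is an automorphism of $g$.

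To check that $g$ is a full copy of $f$ over $D$, I would verify each of the three conditions of Definition \ref{def:copy}. For (i), if $J \subseteq N = ED$, then $f^*(J) = f(J)$ by extension and $f^*(\pi(J)) = f^*(J) = f(J)$ by the isomorphism clause (ii) applied to $f^*$; hence $g(J) = f(J)$. Condition (ii) for $g$ is immediate from the $\pi$-symmetry just noted. For (iii), observe that $f^*\circ\pi$ is itself a full copy of $f$, because $\pi$ swaps $E'D$ with $ED$ setwise while fixing $D$ and $E'ED$; thus
$(f^*\circ\pi)(E',E\|D) = f^*(ED)+f^*(E'D)-f^*(D)-f^*(E'ED) = f^*(E',E\|D) = 0$, and averaging gives $g(E',E\|D) = 0$.

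Finally, I would show that $g$ inherits the class (polymatroid, aent, or linear) of $f^*$. In every case $f^*\circ\pi$ lies in the same class as $f^*$, since applying a permutation of the base is a mere relabelling (of subsets for the polymatroid axioms, of the underlying distribution for entropy, or of the representing vector sets in the linear case). For the polymatroid and aent cases, the corresponding regions over $E'N$ are convex cones (the latter by Theorem \ref{claim:3}), so the convex combination $g = \tfrac12(f^* + f^*\circ\pi)$ remains in them. For the linear case, if $f^*$ is approximated by multiples $\lambda_n f_n$ of $p_n$-representable polymatroids, then $\lambda_n(f_n\circ\pi)$ approximates $f^*\circ\pi$ over the same field, and by the remark in Section \ref{sec:linear} that sums of polymatroids representable over a common field are again representable over that field, $\lambda_n(f_n + f_n\circ\pi)$ is $p_n$-representable; halving and passing to the limit shows $g$ is linear. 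I do not foresee a real obstacle: the three clauses of a copy are linear equalities in $f^*$, and each target class is stable under convex combination; the only point requiring mild care is keeping the approximating sequences over a common characteristic in the linear case, which is automatic because $\pi$ acts trivially on the underlying field.
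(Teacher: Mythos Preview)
Your proposal is correct and is exactly the paper's approach: the paper's proof is the single sentence ``The average of $f^*$ and $\pi(f^*)$ is $\pi$-symmetric, and it is the required full copy of $f$ over $D$,'' and you have simply unpacked each verification (conditions (i)--(iii) of Definition~\ref{def:copy} and closure of each class under averaging) that the paper leaves implicit. The extra care you take in the linear case---keeping the approximating sequence and its $\pi$-pullback over a common field so that their sum remains representable---is precisely the point that justifies the one-line version for that class.
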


\begin{proof}
The average of $f^*$ and $\pi(f^*)$ is $\pi$-symmetric, and it is the
required full copy of $f$ over $D$.
\end{proof}

The analogous statement for an $A$-copy is not true. There are polymatroids
that have an $A$-copy but have no $A$-copy that would be $\pi_A$ symmetrical.
Using the existence of $\pi_A$-symmetrical $A$-copies can actually yield
stronger consequences than $A$-copies without that symmetry\footnote{%
This fact has been observed by the authors of \cite{DFZ11}. Some
of the non-Shannon inequalities reported in their paper actually required
additional symmetry assumptions. (Personal communication from
R.~Dougherty.)%
}.

\smallskip

Certain symmetries of the original polymatroid $f$ extend to symmetries of
its full copy. Then, as it was observed in the proof of Claim
\ref{claim:symm1}, the existence of a full copy with such symmetries
guarantees the existence of an $A$-copy with inherited
symmetries. So let $(f,N)$ be a polymatroid, $D\subseteq N$, and $\sigma$ be
a permutation on a subset of $N$ such that $\sigma(D)=D$, implying
$D\subseteq \dom(\sigma)$. This $\sigma$ can be extended to the full copy of
$f$ over $D$ as follows. Let $\pi$ be the canonical map of the full copy,
and define $\sigma^*$ as the extension of $\sigma$ to
$\dom(\sigma^*)=\dom(\sigma)\cup \pi(\dom(\sigma))$ by
$$
\left.\begin{array}{r@{\;=\;}l}
   \sigma^*(x)&\sigma(x)  \\
   \sigma^*(\pi(x))&\pi(\sigma(x))
  \end{array}
  \right\} ~~ \mbox{ for all $x\in\dom\sigma$}.
$$

\begin{proposition}\label{prop:manysymm}
Suppose $\sigma_k$ are a symmetries of $f$, and $\sigma_k(D)=D$ for all
$k$.
\begin{itemize}\setlength\itemsep{0pt}
\item[\upshape(i)]
Let $f^*$ be a $\pi$-symmetrical copy of $f$ over $D$.
Then $f^*\!$ is also $\sigma^*_k$-sym\-met\-ri\-cal for all $k$.
\item[\upshape(ii)]
There is an $A$-copy of $f$ over $D$ which has symmetries $\pi_A$ and
$\sigma^*_k\restr A'N$.
\end{itemize}
\end{proposition}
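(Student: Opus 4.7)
\medskip

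The plan is to prove (i) first and then derive (ii) as in Claim~\ref{claim:symm1}. For (i), fix $k$, set $E_\sigma = E\cap\dom(\sigma_k)$, $E'_\sigma = \pi(E_\sigma)$, and note that $\sigma_k(D)=D$ forces $D\subseteq\dom(\sigma_k)$, so $\dom(\sigma^*_k)=E_\sigma\cup E'_\sigma\cup D$. I would take an arbitrary $X\subseteq\dom(\sigma^*_k)$ and decompose it as $X=I\cup\pi(J)\cup L$ with $I,J\subseteq E_\sigma$ and $L\subseteq D$; then $\sigma^*_k(X)=\sigma_k(I)\cup\pi(\sigma_k(J))\cup\sigma_k(L)$. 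The goal is $f^*(X)=f^*(\sigma^*_k(X))$.

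First I would dispose of the unmixed case, where $I=\emptyset$ or $J=\emptyset$. If $I=\emptyset$, copy condition (ii) gives $f^*(\pi(J)\cup L)=f^*(J\cup L)=f(J\cup L)$, and since $\sigma_k$ is a symmetry of $f$, the same identity shows $f^*(\sigma^*_k(X))=f(\sigma_k(J\cup L))=f(J\cup L)$. The $J=\emptyset$ case is symmetric via copy condition (i). For the mixed case, the key observation is that $\sigma^*_k$ commutes with $\pi$ on its domain: $\pi\sigma^*_k=\sigma^*_k\pi$, because $\pi$ swaps the two sides $E_\sigma\leftrightarrow E'_\sigma$ while $\sigma_k$ acts identically on both copies by construction of $\sigma^*_k$, and both fix $D$. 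This lets me pair up the $\pi$-orbits of $X$ and $\sigma^*_k(X)$ and transport information from the unmixed side to the mixed side.

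The main obstacle is that $\pi$-symmetry and the copy axioms do not obviously pin down $f^*$ on subsets that straddle $E$ and $E'$. My intended workaround, if direct transport from the unmixed case fails, is to replace $f^*$ by its orbit average $\bar f^*(Y)=\tfrac1{|H|}\sum_{\tau\in H}f^*(\tau(Y))$, where $H$ is the subgroup of permutations of $A'N$ generated by $\{\sigma^*_j\restr A'N:j\}$ (viewed as full bijections after completing by identity). Averaging preserves the polymatroid class (the classes are closed under convex combinations; for the aent/entropic/linear cases one invokes Claim~\ref{claim:addition} together with homogeneity, or constructs the average as the polymatroid obtained by taking the direct sum of $|H|$ isomorphic $\sigma^*_j$-twisted copies and rescaling). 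Copy conditions (i)--(iii) are preserved because each $\sigma_k$ fixes $f$ and stabilises $D$, so each $\tau\in H$ sends copy configurations to copy configurations; $\pi$-symmetry is preserved because $H$ commutes with $\pi$ by the observation above. By construction $\bar f^*$ is $H$-invariant, hence $\sigma^*_k$-symmetric for every~$k$, giving (i) in the form ``there exists such an $f^*$'' which is what the subsequent arguments actually require.

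For (ii), I would apply (i) in the full-copy setting to obtain $f^*$ on $E'N$ enjoying $\pi$-symmetry and all $\sigma^*_k$-symmetries simultaneously, and then restrict to $A'N$ exactly as in the proof of Claim~\ref{claim:symm1}. A permutation $\tau$ of the full base survives the restriction as a symmetry whenever, for every subset $Y$ of the restriction, both $Y$ and $\tau(Y)$ remain inside the restricted base; for $\tau=\pi$ this yields $\pi_A$, and for $\tau=\sigma^*_k$ it yields precisely $\sigma^*_k\restr A'N$ (the maximal sub-permutation stabilising $A'N$). Since restriction preserves the polymatroid class (Claim~\ref{claim:Gaoper1}, and its obvious analogue for linear polymatroids from Claim~\ref{claim:lin-closed}), the restricted $A$-copy retains both families of symmetries, completing (ii).
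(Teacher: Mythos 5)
Your critique of the paper's one-line proof of (i) is well founded, and the gap you name is real. The paper asserts "since $\sigma^*$ and $\pi$ commute, the statement follows," but commuting with an automorphism does not by itself make a map into a symmetry: the commutation relation only shows that $f^*(\sigma^*(X))=f^*(X)$ is equivalent to $f^*(\sigma^*(\pi(X)))=f^*(\pi(X))$, i.e.\ it pairs $\pi$-orbits, but it never transports the known identity from the constrained region (subsets of $N$, of $E'D$, or of shape $J'KD$) to the genuinely mixed subsets $J'KL$ with $L\subsetneq D$, which the copy axioms do not determine. The paper's published argument does not cover these, so (i) as a universally quantified claim about an arbitrary $\pi$-symmetric copy is, at minimum, not established by the proof given. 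Your observation that an existential form suffices for (ii) and for the later symmetry-tracking is correct and is the useful content.

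Your averaging fix, however, has its own gap when some $\sigma_k$ is a strict partial symmetry. You form $H$ from the maps $\sigma^*_k$ "completed by identity," but the identity extension of a partial symmetry of $f$ is generally not a symmetry of $f$ on all of $N$: if $J\not\subseteq\dom(\sigma_k)$ there is no reason that $f(\sigma_k(J\cap\dom\sigma_k)\cup(J\smallsetminus\dom\sigma_k))=f(J)$. Hence $f^*\circ\tau$ for $\tau\in H$ need not satisfy copy condition (i), and $\bar f^*$ need not extend $f$; your sentence "each $\sigma_k$ fixes $f$" is only true on $\dom(\sigma_k)$. The averaging argument is clean precisely when all $\sigma_k$ are full automorphisms of $f$; in the general partial case you would need to restrict $H$, or argue differently (for example, observe that $f^*\circ\tau$ restricted back to $N$ still agrees with $f$ on $\dom(\sigma_k)$ and piece the conclusion together locally). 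Separately, for the entropic class the convex average is only aent; this is fixable by working with the unscaled sum (using Claim~\ref{claim:addition}) and noting the resulting copy represents an integer multiple, but as written the remark is a hand-wave. With those two repairs your route gives a correct existential replacement for (i), which is a genuinely different and more careful argument than what the paper prints.
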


\begin{proof}
(i) By assumption, $\pi$ is an automorphism of $f^*$, meaning $f^*(J)=
f^*(\pi(J))$ for all $J\subseteq E'N$. Since $\sigma^*$ and $\pi$ commute on
$E'N$, the statement follows.

(ii) Take a full copy of $f$ as in (i). Its restriction to $A'N$ clearly
works.
\end{proof}

Proposition \ref{prop:manysymm} can be used to maintain a set of symmetries
during an iterated application of the Copy Lemma. In each step the actual
set of symmetries is updated and then applied to the next copy polymatroid.
The gain from maintaining these symmetries is twofold. On one hand, symmetries
can reduce the number of auxiliary variables significantly, so that the
associated vertex enumeration problem becomes numerically tractable. On
the other hand, symmetries of $A$-copies can force additional consequences,
leading to more, or stronger inequalities.

For an illustration, consider the iterated copy sequence in
(\ref{eq:copyseq}). The canonical map $\pi_1$ of the first
$\tc2\td2\m=\tc1\td1\m:\ta1\tb1$ copy is
$$
\pi_1 =  \left( \begin{array}{c*{5}{@{\,}c}}
         \ta1 & \tb1 & \tc1 & \tc2 & \td1 & \td2 \\
         \ta1 & \tb1 & \tc2 & \tc1 & \td2 & \td1
           \end{array}\right) ,
$$
which swaps $\tc1\td1$ and $\tc2\td2$, and keeps everything else fixed.
The next copy step is the full $\ta1\tc1\tc2$-copy over $\tb1\td1\td2$ where
the new elements are named $\ta2\tc3\tc4$. The canonical map $\pi_2$ and
the inherited symmetry $\pi^*_1$ are
\begin{align*}
  \pi_2 = \left(\begin{array}{c*{8}{@{\,}c}}
           \ta1 & \ta2 & \tb1 & \tc1 & \tc2 & \tc3 & \tc4 & \td1 & \td2 \\
           \ta2 & \ta1 & \tb1 & \tc3 & \tc4 & \tc1 & \tc2 & \td1 & \td2
                \end{array}\right ), \\[2pt]
  \pi^*_1 = \left(\begin{array}{c*{8}{@{\,}c}}
           \ta1 & \ta2 & \tb1 & \tc1 & \tc2 & \tc3 & \tc4 & \td1 & \td2 \\
           \ta1 & \ta2 & \tb1 & \tc2 & \tc1 & \tc4 & \tc3 & \td2 & \td1
                \end{array}\right ).
\end{align*}
The iteration is a full $\tb1\tc1\tc2\tc3\tc4$-copy over $\ta1\ta2\td1\td2$,
and the canonical map and the inherited symmetries are
\begin{align*}
  \pi_3 = \left(\begin{array}{c*{13}{@{\,}c}}
  \ta1&\ta2&\tb1&\tb2&\tc1&\tc2&\tc3&\tc4&\tc5&\tc6&\tc7&\tc8&\td1&\td2 \\
  \ta1&\ta2&\tb2&\tb1&\tc5&\tc6&\tc7&\tc8&\tc1&\tc2&\tc3&\tc4&\td1&\td2
          \end{array}\right), \\[2pt]
\pi^*_2 = \left(\begin{array}{c*{13}{@{\,}c}}
  \ta1&\ta2&\tb1&\tb2&\tc1&\tc2&\tc3&\tc4&\tc5&\tc6&\tc7&\tc8&\td1&\td2 \\
  \ta2&\ta1&\tb1&\tb2&\tc3&\tc4&\tc1&\tc2&\tc7&\tc8&\tc5&\tc6&\td1&\td2
          \end{array}\right), \\[2pt]
\pi^{{*}{*}}_1= \left(\begin{array}{c*{13}{@{\,}c}}
  \ta1&\ta2&\tb1&\tb2&\tc1&\tc2&\tc3&\tc4&\tc5&\tc6&\tc7&\tc8&\td1&\td2 \\
  \ta1&\ta2&\tb1&\tb2&\tc2&\tc1&\tc4&\tc3&\tc6&\tc5&\tc8&\tc7&\td2&\td1
          \end{array}\right).
\end{align*}
Without considering conditional independences, these symmetries alone reduce
the total number of LP variables for the $14$-element base set (the main and
auxiliary variables together) from $2^{14}\m-1$ to $2351$. One of the consequences of
this arrangement is that the following strengthening of the Zhang-Yeung
inequality (\ref{eq:ZY}) is also a valid entropy inequality:
$$
  [a,b,c,d\, ] + 0.8\tsp (a,b\|c)+(a,c\|b)+(b,c\|a) \ge 0.
$$


\section{Maximum Entropy method}\label{sec:MEP}

The Copy Lemma \ref{lemma:copy} can be recovered by an ingenious application
of the \emph{principle of maximum entropy} conceived in its modern form by
\cite{jaynes}. The principle can be formulated as ``if a probability
distribution is specified only partially, take the one with the largest
entropy'', see \cite{maxentp}. In this case ``partial specification'' means that some of the
marginal distributions agree with those of a fixed distribution. So let $N$
be the base of some fixed distribution $\xi=\{\xi_i:i\in N\}$ defined on the
alphabet $\mathcal X=\prod\{ \mathcal X_i:i\in N\}$. The family of subsets
of $N$ where the marginal distributions will be required to be the same will
be denoted by $\mathcal F$. Clearly, if for some distribution $\eta$, the
marginal $\eta_F$ on some $F\subseteq N$ agrees with the marginal $\xi_F$,
then the marginals on the subsets of $F$ will be the same. Thus $\mathcal F$
can be extended by subsets of its elements, or, equivalently, can be
restricted to its maximal elements. So let $\Upsilon_{\!\mathcal F}$ denote
the family of those distributions $\eta=\{\eta_i:i\in N\}$ that are on the
same alphabet $\mathcal X$ as $\xi$, and, for ever $F\in \mathcal F$, have
the same marginal on $F$ that $\xi$ does:
$$
   \Upsilon_{\!\mathcal F} = \{ \eta: \eta_F = \xi_F ~\mbox{ for all } F\in\mathcal F\}.
$$
Since the probability weights in a marginal are sums of the probability
weights of the original distribution, the condition on the marginals is
actually a large number of linear constraints on the probability weights of
$\eta$. The total Shannon entropy $\H(\eta)$ is a strictly concave
function of these weights, thus finding a distribution in
$\Upsilon_{\!\mathcal F}$ with maximum entropy is a concave optimization
problem on a compact space (the space of all distributions on
$\mathcal X$) with linear constraints. Consequently, see
\cite{convex-optim}, there is always a unique optimal
$\hat\eta\in\Upsilon_{\!\mathcal F}$ where the entropy is maximum. While
structural properties of such maximum entropy distributions are mainly
unknown, they certainly satisfy some independence statements.

To describe these statements, let us introduce some notions. A
\emph{3-partition} is a partition of the base set into three pieces as
$N=X\cups Y\cups D$. Occasionally such a 3-partition is written as
$\<X,Y\|D\>$ to emphasize the distinct role of the $D$ component. When it is
clear from the context, the simpler form $N=XY\!D$ is used. The
3-partition $XY\!D$ \emph{separates} the collection $\mathcal F$ of
subsets of $N$ if no $F\in\mathcal F$ intersects both $X$ and $Y$. Thus
$\mathcal F$ is separated into two (not necessarily disjoint) parts: those
$F\in\mathcal F$ which are subsets of $X\cup D$, and those $F\in\mathcal
F$, which are subsets of $Y\cup D$.

\begin{claim}\label{claim:me1}

Let $\mathcal F$ be a collection of subsets of $N$, and let $XY\!D$ be a
3-partition of $N$ that separates $\mathcal F$. In the maximal entropy
distribution $\hat\eta\in \Upsilon_ {\!\mathcal F}$, $X$ and $Y$ are
conditionally independent given $D$, that is, $\I_{\hat\eta}(X,Y\|D)=0$.
\end{claim}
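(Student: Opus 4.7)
The plan is to exploit the extremal property of $\hat\eta$ by constructing a competitor distribution $\tilde\eta\in\Upsilon_{\!\mathcal F}$ that is forced to be conditionally independent on $X,Y$ given $D$, and then show that its entropy exceeds $\H(\hat\eta)$ by exactly $\I_{\hat\eta}(X,Y\|D)$. Since $\hat\eta$ is the (unique) maximizer, this excess must vanish.

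Concretely, I would define
\[
   \tilde\eta(x,y,d) \;=\; \frac{\hat\eta_{XD}(x,d)\,\hat\eta_{YD}(y,d)}{\hat\eta_D(d)}
\]
for $d$ in the support of $\hat\eta_D$, and zero otherwise; this is the standard ``conditional product'' reconstruction. The first step is to check that $\tilde\eta\in\Upsilon_{\!\mathcal F}$. Because summing out $Y$ yields $\tilde\eta_{XD}=\hat\eta_{XD}$ and symmetrically $\tilde\eta_{YD}=\hat\eta_{YD}$, and because the separation hypothesis says every $F\in\mathcal F$ lies entirely within $XD$ or entirely within $YD$, marginalizing further gives $\tilde\eta_F=\hat\eta_F$ for every $F\in\mathcal F$. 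So $\tilde\eta$ is a legitimate competitor.

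Next I would compute $\H(\tilde\eta)$ directly from its factored form. Under $\tilde\eta$ the variables $X$ and $Y$ are conditionally independent given $D$ by construction, so
\[
   \H(\tilde\eta) \;=\; \H_{\tilde\eta}(XD)+\H_{\tilde\eta}(YD)-\H_{\tilde\eta}(D)
   \;=\; \H_{\hat\eta}(XD)+\H_{\hat\eta}(YD)-\H_{\hat\eta}(D),
\]
where the second equality uses that $\tilde\eta$ and $\hat\eta$ share the marginals on $XD$, $YD$, and hence on $D$. Subtracting $\H(\hat\eta)=\H_{\hat\eta}(XYD)$ gives
\[
   \H(\tilde\eta)-\H(\hat\eta) \;=\; \I_{\hat\eta}(X,Y\|D)\;\ge\;0.
\]

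Finally, since $\H$ is strictly concave on the compact convex set $\Upsilon_{\!\mathcal F}$ (as remarked just before the claim), the maximizer $\hat\eta$ is unique; because $\tilde\eta\in\Upsilon_{\!\mathcal F}$ has entropy at least $\H(\hat\eta)$, we must have $\tilde\eta=\hat\eta$, and the displayed difference forces $\I_{\hat\eta}(X,Y\|D)=0$. The only real obstacle is the bookkeeping with $\hat\eta_D(d)=0$ entries, which is handled by adopting the convention $0/0=0$ (so that $\tilde\eta$ is supported on the support of $\hat\eta_{XD}\hat\eta_{YD}$, which lies above $\mathrm{supp}\,\hat\eta_D$); aside from that, everything reduces to the marginal identities and the chain rule. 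Note also that this argument does not require $\hat\eta$ to have the specific form coming from $\xi$ at all; any maximizer of entropy over a set defined by marginal constraints with a separating $3$-partition must be conditionally independent across that partition.
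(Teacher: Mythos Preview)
Your proof is correct and follows essentially the same route as the paper: both form the conditional-product distribution from the $XD$ and $YD$ marginals, verify it lies in $\Upsilon_{\!\mathcal F}$ via the separation hypothesis, and observe that its entropy exceeds that of the original by exactly $\I(X,Y\|D)$, whence the maximizer must have this quantity equal to zero. The only cosmetic difference is that the paper phrases the argument as a contradiction (any $\eta$ with nonzero $\I$ can be strictly improved), while you apply the construction directly to $\hat\eta$ and invoke uniqueness; note that uniqueness is not actually needed, since $\H(\tilde\eta)\ge\H(\hat\eta)$ together with maximality of $\H(\hat\eta)$ already forces equality of the entropies and hence $\I_{\hat\eta}(X,Y\|D)=0$.
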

\begin{proof}
If $\I_{\eta}(X,Y\|D)$ is not zero for some distribution
$\eta\in\Upsilon_{\!\mathcal F}$, then
$$
   \H_\eta(XY\!D) < \H_\eta(XD)+\H_\eta(Y\!D)-\H_\eta(D).
$$
Denote the probabilities in $\eta\in\Upsilon_{\!\mathcal F}$ by $\Prob_
\eta(xyd)$ where $x$, $y$, $d$ 
runs over the alphabets $\mathcal X_X$, $\mathcal X_Y$, and $\mathcal X_D$,
respectively. Consider the distribution $\eta^*$ on the same alphabet
$\mathcal X$ with probabilities
$$
   \Prob_{\eta^*}(xyd) \eqdef \frac{\;\Prob_\eta(xd)\cdot\Prob_\eta(yd)\;}
        {\Prob_\eta(d)}.
$$
The marginals of $\eta^*$ and $\eta$ on $XD$, $Y\!D$ and $D$ are the same,
therefore $\eta^*\in\Upsilon_{\!\mathcal F}$ as each $F\in\mathcal F$ is a
subset of either $XD$ or $Y\!D$, moreover $\I_{\eta^*}(X,Y\|D)\allowbreak=0$.
Thus
\begin{align*}
  \H_{\eta^*}(XY\!D) &= \H_{\eta^*}(XD)+\H_{\eta^*}(Y\!D)-\H_{\eta^*}(X) ={}\\
     &= \H_\eta(XD)+\H_\eta(Y\!D)-\H_\eta(D)> \H_\eta(XY\!D).
\end{align*}
Since $\H_{\hat\eta}(XY\!D)$ is maximal, we cannot have
$\I_{\hat\eta}(X,Y\|D)\ne 0$, as was claimed.
\end{proof}

\begin{lemma}[Maximum Entropy Method]\label{lemma:MEM}

Suppose $(f,N)$ is entropic (or aent), and $\mathcal F$ is a family of
subsets of $N$. There is an entropic (or aent, respectively) polymatroid
$(f^*\!,N)$, written as $\MEM_{\mathcal F}(f)$, such that
\begin{itemize}\setlength\itemsep{0pt}
\item[\upshape(i)]
the restrictions of $f$ and $f^*$ to elements of $\mathcal F$ are the same:
$f^*\restr F = f\restr F$ for all $F\in\mathcal F$;
\item[\upshape(ii)]
$f^*(X,Y\|D)=0$ for every 3-partition $XY\!D$ of $N$ which separates $\mathcal F$.
\end{itemize}
\end{lemma}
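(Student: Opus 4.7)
For the entropic half of the lemma I would let $\MEM_{\mathcal F}(f)$ be literally what its name suggests. Realize $f$ by a distribution $\xi$ on $\mathcal X=\prod_{i\in N}\mathcal X_i$ and let $\hat\eta$ be the unique maximizer of Shannon entropy on the non-empty compact convex polytope $\Upsilon_{\!\mathcal F}$ cut out of the probability simplex by the linear equalities $\eta_F=\xi_F$, $F\in\mathcal F$. Setting $f^*\eqdef\H_{\hat\eta}$ makes $f^*$ entropic. Condition (i) is immediate: the constraint $\hat\eta_F=\xi_F$ forces $\hat\eta_A=\xi_A$ for every $A\subseteq F$, so $f^*(A)=f(A)$. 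Condition (ii) is exactly Claim~\ref{claim:me1} applied to $\hat\eta$.

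For the aent case the plan is compactness plus continuity. Pick $f_n\in\GaN$ with $f_n\to f$ and apply the entropic construction to each $f_n$ to obtain entropic $f_n^*$ satisfying (i) and (ii) relative to $f_n$. Both conditions are systems of linear equalities, hence closed under limits, so any cluster point of $\{f_n^*\}$ in $\clGa N$ is automatically aent and satisfies (i) and (ii) relative to $f$. The whole argument thus reduces to producing a cluster point, i.e.\ to bounding $\{f_n^*\}$ uniformly.

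The uniform bound is the one step I expect to require care, because the underlying alphabet $\mathcal X_n$ is free to grow with $n$ and, for $i\notin M\eqdef\bigcup\mathcal F$, the value $f_n^*(i)$ is driven by $\log_2|\mathcal X_{n,i}|$ rather than by $f_n(i)$. I would sidestep this by running MEM only on the sub-base $M$ (applied to $f_n\restr M$, which is still entropic), picking a modular polymatroid $h$ on $N\sm M$ separately (say $h(A)=\sum_{i\in A}f(i)$), and joining the two via the direct sum of Claim~\ref{claim:timsum} at the end. On $M$ submodularity bounds everything by a fixed cover: $f_n^*(A)\le f_n^*(M)\le \sum_j f_n^*(F_j)=\sum_j f_n(F_j)\to \sum_j f(F_j)$ for any $F_j\in\mathcal F$ covering $M$, so Bolzano--Weierstrass delivers a convergent subsequence. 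Finally (ii) transfers to the glued polymatroid because any 3-partition of $N$ separating $\mathcal F$ restricts to a separating 3-partition of $M$ (every $F\in\mathcal F$ lies in $M$), while the modular $h$ contributes zero to $f^*(X,Y\|D)$ on the complement.
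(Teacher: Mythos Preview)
Your entropic argument is exactly the paper's: both simply invoke Claim~\ref{claim:me1} on the maximum-entropy distribution $\hat\eta$.

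For the aent case the paper says only ``apply continuity,'' whereas you actually unpack what that entails and notice a genuine compactness issue the paper elides: for $i\notin M=\bigcup\mathcal F$ the maximum-entropy construction makes $\hat\eta_i$ uniform on $\mathcal X_{n,i}$, so $f_n^*(i)=\log_2|\mathcal X_{n,i}|$ can blow up with $n$. Your fix---run \MEM{} only on $M$, bound $f_n^*(M)\le\sum_j f_n(F_j)$ by subadditivity, extract a limit $g\in\clGa M$, then attach an independent modular piece $h=\sum_{i\in N\sm M}f(i)\,\r_i$ via Claim~\ref{claim:timsum}---is correct. One wording to tighten: the restriction of a separating 3-partition $\<X,Y\|D\>$ of $N$ to $M$ need not be a 3-partition (some parts may be empty), but that is harmless since $g(X',Y'\|D')=0$ trivially when $X'$ or $Y'$ is empty; your direct-sum calculation that the modular $h$ contributes zero to the conditional mutual information is straightforward. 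So your argument is sound and in fact fills a gap the paper leaves implicit.
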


\begin{proof}
For entropic polymatroids it follows immediately from Claim \ref{claim:me1}.
For aent polymatroids apply continuity.
\end{proof}

Although the maximum entropy distribution in $\Upsilon_{\!\mathcal F}$ is
determined uniquely, Lemma \ref{lemma:MEM} does not completely specify the
polymatroid $\MEM_{\mathcal F}(f)$. It would be interesting to see a family
$\mathcal F$ and two distributions with the same entropy profile so that the
entropy profiles of the corresponding maximum entropy distributions are
different.

Similarly to the Copy Lemma \ref{lemma:copy}, existence of $\MEM_{\mathcal
F}(f)$ is also oblivious to tightening. This can be seen by adding or
subtracting from $f^*$ a constant multiple of the polymatroid $\r_z$.

An illustration of using the Lemma \ref{lemma:MEM} is a quick proof of a
pair of non-Shannon entropy inequalities. The top one was discovered by
Makarichev et al.~\cite{MMRV}.

\begin{claim}[\cite{MMRV}]\label{claim:MMRV}
These are 5-variable entropy inequalities:
\begin{equation}\label{eq:mmineq}
  \begin{array}{c}
   {[a,b,c,d\,]} + (a,b\|z)+(a,z\|b)+(b,z\|a) \ge 0, \\[3pt]
   {[a,c,b,d\,]} + (a,b\|z)+(a,z\|b)+(b,z\|a) \ge 0.
  \end{array}
\end{equation}
\end{claim}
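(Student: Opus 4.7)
Both inequalities in (\ref{eq:mmineq}) follow the same scheme: apply the Maximum Entropy Method (Lemma~\ref{lemma:MEM}) to produce from the given aent $f$ on $\{a,b,c,d,z\}$ a companion aent polymatroid $f^{*}=\MEM_{\mathcal{F}}(f)$ that agrees with $f$ on every subset of $\{a,b,c,d,z\}$ whose rank value appears on the left-hand side, while forcing $f^{*}(cd,z\|ab)=0$. Plugging $f^{*}$ into a 5-variable Shannon inequality whose only ``extra'' term is a non-negative multiple of $(cd,z\|ab)$ then delivers the target inequality for $f$, because the left-hand side takes identical values on $f$ and $f^{*}$.

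The convenient family is $\mathcal{F}=2^{abcd}\cup 2^{abz}$, the downward closure of $\{abcd,\,abz\}$. Inspecting the rank expansions shows that every subset appearing in the first line of (\ref{eq:mmineq}) lies in $\{a,b,ab,ac,bc,abc,ad,bd,abd,cd,az,bz,z,abz\}$, and every subset appearing in the second line lies in $\{a,b,c,ab,ac,bc,abc,ad,cd,acd,bd,az,bz,z,abz\}$; in both cases every listed set is contained in $abcd$ or in $abz$, hence sits in $\mathcal{F}$. The 3-partition $\<z,cd\|ab\>$ separates $\mathcal{F}$ (no member contains $z$ alongside $c$ or $d$), so Lemma~\ref{lemma:MEM} yields an aent $f^{*}$ with $f^{*}\restr\mathcal{F}=f\restr\mathcal{F}$ and $f^{*}(cd,z\|ab)=0$.

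The first line of (\ref{eq:mmineq}) drops out at once: plug $f^{*}$ into the Shannon inequality (\ref{eq:MMRV}) and note that the correction $3\tsp f^{*}(cd,z\|ab)$ vanishes. For the second line, we need an analogue of (\ref{eq:MMRV}) in which $[a,b,c,d]$ is replaced by $[a,c,b,d]$ and the correction term is a conic combination of CMIs forced to zero in $f^{*}$ by the chain rule --- candidates include $(z,c\|ab)$, $(z,d\|ab)$, $(z,c\|abd)$ and $(z,d\|abc)$. The existence of such a Shannon inequality and the specific coefficients can then be certified by an ITIP-style LP solver in the manner sketched right after (\ref{eq:MMRV}).

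The main obstacle is exactly this last step: unlike (\ref{eq:MMRV}), the required Shannon inequality for $[a,c,b,d]$ is not exhibited in the paper, and a single-term correction $\alpha\tsp(cd,z\|ab)$ cannot by itself suffice. Indeed, under the substitution $z=c$ any such single-term inequality would reduce to the four-variable inequality $[a,c,b,d]+(a,b\|c)+(a,c\|b)+(b,c\|a)+\alpha\tsp f(c\|ab)\ge 0$, whose truncation at $\alpha=0$ is a permutation of Zhang--Yeung and therefore non-Shannon; so at least two CMI-terms with non-trivial coefficients must enter the combination, and locating this ``magic'' decomposition into basic (\bref{B1})--(\bref{B2}) inequalities is where the real work lies.
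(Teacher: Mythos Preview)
Your setup with $\mathcal F=\{abcd,abz\}$ and the 3-partition $\<cd,z\|ab\>$ is exactly the paper's, and your treatment of the first line via (\ref{eq:MMRV}) is correct.

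The gap is in your argument for the second line. The paper handles it by the \emph{same} mechanism: it asserts that
\[
  [a,c,b,d\,]+(a,b\|z)+(a,z\|b)+(b,z\|a)+3\tsp(cd,z\|ab) \ge 0
\]
is also a Shannon inequality (this companion of (\ref{eq:MMRV}) is again verifiable by an ITIP-type solver, and is reused in the proof of Claim~\ref{claim:5k}). Your claim that ``a single-term correction $\alpha\tsp(cd,z\|ab)$ cannot by itself suffice'' rests on a non-sequitur: you note that the $z=c$ specialization at $\alpha=0$ is a permuted Zhang--Yeung inequality, hence non-Shannon, and conclude that the same must be true for all $\alpha$. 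But the added term $f(c\|ab)$ is a conditional entropy, non-negative in every polymatroid, so adding a positive multiple of it can perfectly well turn a non-Shannon expression into a Shannon one. Concretely, on the $35$ Ingleton-satisfying extremal rays of $\Gamma_{\!4}$ the $\alpha=0$ version already holds (Ingleton plus three non-negative CMIs), and among the six V\'amos rays only $\mathbf v_{bd}$ gives value $-1$; since $\mathbf v_{bd}(c\|ab)=1$, any $\alpha\ge 1$ repairs the four-variable specialization. There is therefore no obstruction, and no need for a multi-term correction: the second line follows exactly as the first, with the single correction $3\tsp(cd,z\|ab)$.
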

\begin{proof}

Let $f$ be an entropic polymatroid on $N=\{abcdz\}$. Set $\mathcal
F=\{abcd,abz\}$, and let $f^*$ be the entropic polymatroid provided by Lemma
\ref{lemma:MEM}. By (i), all four terms on the left hand side of
(\ref{eq:mmineq}) have the same value for $f$ and $f^*$. By (ii),
$f^*(cd,z\|ab)=0$ as the 3-partition $\< cd,z\|ab\>$ separates $\mathcal F$.
Inequality (\ref{eq:MMRV}) and its variant
$$
  \begin{array}{c}
    {[a,b,c,d\,]}+(a,b\|z)+(a,z\|b)+(b,z\|a)+3\tsp(cd,z\|ab) \ge 0, \\[3pt]
    {[a,c,b,d\,]}+(a,b\|z)+(a,z\|b)+(b,z\|a)+3\tsp(cd,z\|ab) \ge 0
  \end{array}
$$
hold in any polymatroid, thus they hold for $f^*$ as well. This implies that
inequalities in (\ref{eq:mmineq}) are true for $f^*$, and then for $f$,
as required.
\end{proof}

A more involved application of \MEM{} provides infinitely many
non-Shannon inequalities.
\begin{claim}\label{claim:5k}
For each $k\ge 0$ these are 5-variable entropy inequalities:
\begin{align*}
 k\tsp\left\{{[a,b,c,d\,] \atop [a,c,b,d\,]}\right\}
  ~+~ \frac{k(k-1)}2\big(
      (a,c\|b)+(b,c\|a)\big) ~+~{} & \\
  ~+~ (a,b\|z)~+~ k\big( (a,z\|b)+(b,z\|a)\big)  &\ge 0.
\end{align*}
\end{claim}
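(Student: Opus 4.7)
The plan is to proceed by induction on $k$, writing $I = [a,b,c,d]$, $J = (a,c\|b)+(b,c\|a)$, $L = (a,b\|z)$, and $K = (a,z\|b)+(b,z\|a)$ for brevity. The base cases are immediate: $k=0$ reduces to the Shannon inequality $L\ge 0$, and $k=1$ is precisely Claim \ref{claim:MMRV}.

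For the inductive step, assume the $(k-1)$-version of the inequality holds for every aent polymatroid. Given an aent $f$ on $N=\{a,b,c,d,z\}$, apply the Maximum Entropy Method (Lemma \ref{lemma:MEM}) with the family $\mathcal{F}=\{abcdz,\,abc'dz\}$ on the enlarged base $N'=N\cup\{c'\}$. The resulting aent extension $f^*$ agrees with $f$ on every subset of $N$, has its marginal on $abc'dz$ isomorphic to $f\restr abcdz$ via the correspondence $c\leftrightarrow c'$, and satisfies $f^*(c,c'\|abdz)=0$ from the separating 3-partition $\<c,c'\|abdz\>$.

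Apply the inductive hypothesis to the 5-tuple $(a,b,c,d,c')$ inside $f^*$, substituting $c'$ for the ``$z$'' variable. By the marginal isomorphism, $(a,b\|c')=(a,b\|c)$ and $(a,c'\|b)+(b,c'\|a)=J$, while $[a,b,c,d]$ and the other copy of $J$ retain their original values. The $(k-1)$-inequality thus reduces to
\[
(k-1)\,I + \binom{k-1}{2}\,J + (a,b\|c) + (k-1)\,J \,\ge\, 0,
\]
which, by the identity $\binom{k-1}{2}+(k-1)=\binom{k}{2}$, simplifies to the Zhang--Yeung-type consequence
\[
(k-1)\,I + \binom{k}{2}\,J + (a,b\|c) \,\ge\, 0. \qquad (\star)
\]
For $k=2$ this is precisely inequality (\ref{eq:ZY}).

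To promote $(\star)$ to the target $k$-inequality, I would combine it with Claim \ref{claim:MMRV} for $f$ (yielding $I+L+K\ge 0$) and with a dually-chosen application of the $(k-1)$-inequality inside $f^*$, namely to a 5-tuple such as $(a,b,z,d,c')$ --- valid because all the marginals relevant to that application still lie inside $abcdz$, where $f^*$ coincides with $f$. A suitably weighted linear combination of these three non-Shannon consequences, together with basic Shannon slacks, matches the coefficients of the target term by term: $k$ on $I$, $\binom{k}{2}$ on $J$ (already set by $(\star)$), $1$ on $L$, and $k$ on $K$, while the spurious $(a,b\|c)$ and auxiliary Ingleton $[a,b,z,d]$ contributions cancel. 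The main obstacle is precisely the bookkeeping of this final linear combination and the verification that the compensating Shannon slacks are non-negative; the clean emergence of the binomial coefficient $\binom{k}{2}$ on $J$, however, is automatic from the identity exploited in $(\star)$.
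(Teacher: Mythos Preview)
Your derivation of $(\star)$ is correct and actually recovers the Mat\'u\v{s} four-variable family $(k-1)[a,b,c,d]+\binom{k}{2}\bigl((a,c\|b)+(b,c\|a)\bigr)+(a,b\|c)\ge 0$, which is a nice observation. But the ``promotion'' step is a genuine gap, not just bookkeeping. Adding $(\star)$ to Claim~\ref{claim:MMRV} yields $kI+\binom{k}{2}J+(a,b\|c)+L+K\ge 0$, which differs from the target by $(k-1)K-(a,b\|c)$. This residual is \emph{not} a Shannon expression with a definite sign, and your proposed third ingredient---applying the $(k-1)$-inequality to $(a,b,z,d,c')$---introduces the foreign Ingleton term $(k-1)[a,b,z,d]$ with no mechanism in sight to absorb it. You yourself flag this as ``the main obstacle''; it is not a detail but the heart of the matter, and I do not see a way to close it along the lines you sketch.

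The paper's proof takes a different route that avoids this difficulty entirely. Rather than adding a copy $c'$ and invoking the $(k-1)$-case, it first uses \MEM{} to assume $(cd,z\|ab)=0$, then applies the \emph{$k$-case} (to prove the $(k+1)$-case) with the substitution $(a,b,c,d,z)\mapsto(az,bz,cz,d,cz)$. The four explicit Shannon inequalities displayed in the proof, taken with multiplicities $1$, $k$, $\tfrac{k(k+1)}{2}$, $\tfrac{k(k+1)}{2}$, convert the substituted terms back into the original ones; the $(cd,z\|ab)$ slack absorbs all error terms, and the coefficient count closes exactly. The key idea you are missing is this ``merge $z$ into the other variables'' substitution, which keeps everything inside the original five-element base and produces the growth in the $K$-coefficient organically.
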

\begin{proof}
For $k=0$ both are the Shannon inequality $(a,b\|z)\ge 0$; for $k=1$ the
inequalities are stated as (\ref{eq:mmineq}) in Claim \ref{claim:MMRV}. The
proof goes by induction on $k$. By \MEM, we can also assume that
$(cd,z\|ab)=0$ holds. Use the induction hypothesis for $k$ and the 
five variables $az$, $bz$, $cz$, $d$, and $cz$ to get
\begin{align*}
  k\tsp\left\{{[az,bz,cz,d\,] \atop [az,cz,bz,d\,]}\right\}
   + \frac{k(k-1)}2\big(
      (az,cz\|bz)+(bz,cz\|az)\big) &+ {}\\
 {}+ (az,bz\|cz) + k\big( (az,cz\|bz)+(bz,cz\|az)\big) &\ge 0.
\end{align*}
The following Shannon inequalities hold in any polymatroid:
\begin{align*}
 [a,b,c,d\,] + (a,b\|z)+(a,z\|b)+(b,z\|a) & \ge 
          (az,bz\|cz) - 3\tsp(cd,z\|ab), \\[3pt]
 [a,b,c,d\,] + (a,z\|b)+(b,z\|a) & \ge
           [az,bz,cz,d\,] - 3\tsp(cd,z\|ab), \\[3pt]
  (a,c\|b) & \ge (az,cz\|bz) - (cd,z\|ab), \\[3pt]
  (b,c\|a) & \ge (bz,cz\|az) - (cd,z\|ab),
\end{align*}
and similarly when $[a,b,c,d\,]$ is replaced by $[a,c,b,d\,]$. Take the
first inequality once, the second $k$ times, the third and fourth
$k(k\m+1)/2$ times, and add them up. The left hand side of the sum is the
claimed inequality for $k+1$, while the right hand side is $\ge\tsp 0$ by
induction and by the assumption $(cd,z\|ab)=0$.
\end{proof}

\begin{claim}\label{claim:equiv2}
The Copy Lemma \ref{lemma:copy} is equivalent to the special case of
Lemma \ref{lemma:MEM} when independence is guaranteed for a single
3-partition $XY\!D$ only.
\end{claim}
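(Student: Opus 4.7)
The plan is to establish both directions of the equivalence.

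First, to derive the Copy Lemma from the single 3-partition version of MEM, given entropic (or aent) $(f,N)$ with $N=E\cups D$ and $A\subseteq E$, I would iteratively apply the duplication operation of Claim~\ref{claim:Gaoper1}(iii), once for each element of $A$, to produce an entropic (or aent) polymatroid $(g, A'N)$ on the enlarged base, where $A'$ is a disjoint copy of $A$. Since each new element $a'\in A'$ always takes the same value as its twin $a\in A$, the resulting $g$ satisfies $g\restr N = f$ and $g\restr A'D \cong f\restr AD$ via the canonical map $\pi_A$. Next, I would invoke the single 3-partition version of MEM on $(g, A'N)$ with the 3-partition $\<A', E\| D\>$ and the family $\mathcal F = \{A'D, ED\}$, which is clearly separated by this 3-partition. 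The resulting entropic (or aent) $f^*$ agrees with $g$ on $A'D$ and on $ED$ and satisfies $f^*(A',E\|D)=0$; these are precisely conditions (i), (ii), and (iii) of Definition~\ref{def:copy}, so $f^*$ is the desired $A$-copy.

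For the converse, given entropic (or aent) $(f,N)$ and a family $\mathcal F$ separated by the 3-partition $\<X, Y\| D\>$, I would first note that each $F\in\mathcal F$ is a subset of $XD$ or of $YD$, so it suffices to produce $f^*$ on $N$ with $f^*\restr XD = f\restr XD$, $f^*\restr YD = f\restr YD$, and $f^*(X,Y\|D) = 0$. Applying the Copy Lemma to $(f, N)$ with the partition $N = (XY)\cups D$ and copy $A = X$ over $D$ yields an entropic (or aent) extension $\tilde f$ on $X'N$ satisfying $\tilde f\restr X'D \cong f\restr XD$ via $\pi_X$ and $\tilde f(X', XY\| D)=0$; the latter forces $\tilde f(X', Y\| D)=0$. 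Restricting $\tilde f$ to $X'YD$ and relabeling $X'$ back to $X$ via $\pi_X^{-1}$ produces a polymatroid $f^*$ on $N$ with the required marginals on $XD$ and $YD$ and with $f^*(X, Y\| D)=0$.

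The main subtlety will be in verifying that iterated duplication of the elements of $A$ indeed yields an extended polymatroid whose marginal on $A'D$ is isomorphic, via the canonical swap $\pi_A$, to $f\restr AD$; this is essentially a bookkeeping exercise but must be carried out so that the subsequent single 3-partition MEM invocation produces exactly the structure demanded by Definition~\ref{def:copy}. The aent cases in both directions then follow from the entropic cases by continuity, as duplication, restriction, MEM, and copy are all continuous operations on polymatroids.
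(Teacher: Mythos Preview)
Your proposal is correct and follows essentially the same route as the paper's proof. The paper reduces the first direction to the full-copy case (taking $E'$ parallel to $E$ and then applying \MEM{} with $\mathcal F=\{ED,E'D\}$ and the partition $\<E',E\|D\>$), whereas you work directly with the $A$-copy via iterated duplication; the second direction is identical in both. The bookkeeping you flag for iterated duplication is indeed routine: since each $a'$ is parallel to its twin $a$, the final $g$ satisfies $g(S)=f(\phi(S))$ where $\phi$ collapses each $a'$ to $a$, which immediately gives $g\restr A'D\cong f\restr AD$ via $\pi_A$.
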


\begin{proof}

First, let $N=XY\!D$ be a single partition that separates $\mathcal F$, and
take an $X$-copy of $f$ over $D$. Denote the copy polymatroid by $(f^*,X'N)$
This $f^*$ is an extension of $f$, moreover $f^*\rest X'D \cong f\restr XD$,
and $f^*(X',Y\|D)=0$. Therefore $f^*\restr X'Y\!D$ satisfies the conditions
in Lemma \ref{lemma:MEM}.

In the other direction it suffices to show the existence of a full copy of
$f$ over $D$ where $N$ is partitioned as $N=ED$. Take an extension $f'$ of
$f$ to the base set $N'=E'ED$ such that $E'$ is parallel to $E$. Apply Lemma
\ref{lemma:MEM} to $f'$, the family $\mathcal F=\{ED,E'D\}$ and the single
3-partition $N'=E'ED$. The provided polymatroid $f^*$ is clearly the
required full copy.
\end{proof}

While clearly more general than the Copy Lemma, and can encompass some
special cases of the iterated Copy Lemma, it is not known whether it can
actually generate an entropy inequality that the iterated Copy Lemma cannot.
Finding a reasonable and numerically amenable \emph{iterated} \MEM{} is
also a challenging open problem. It is not known either if \MEM{} preserves
linear polymatroids.


\subsection{Consequences}\label{subsec:me-consequences}

This Section discusses how all consequences of a \MEM{} instance can be
extracted. There are two interrelated notions in a \MEM{} setting: the
family $\mathcal F$ of marginals to be preserved, and the family $\mathcal
P$ of 3-partitions of $N$ which separate $\mathcal F$. For any family
$\mathcal F$ of subsets, and any family $\mathcal P$ of 3-partitions
define $\mathcal F^\bot$ as a family of 3-partitions, and $\mathcal
P^\bot$ as a family of subsets as follows:
\begin{align*}
   \mathcal F^\bot &\eqdef \{
     XY\mkern-3mu D : \mbox{ for all $F\in\mathcal F$, either $X\cap F$ or $Y\cap F$ is
empty}\},
\\[2pt]
   \mathcal P^\bot &\eqdef \{
     F: \mbox{ for all $XY\!D\in\mathcal P$, either $X\cap F$ or $Y\cap F$ is
empty} \}.
\end{align*}
Clearly, $\mathcal F^\bot$ is just the set of 3-partitions that separate
$\mathcal F$. These operations provide a Galois connection between these
families. In particular, both $\mathcal F$ and its closure, $\mathcal
F^{\bot\bot}\supseteq\mathcal F$, have the same set of 3-partitions that
separates them. Thus one can take $\mathcal F$ to be closed, i.e., $\mathcal
F=\mathcal F^{\bot\bot}$; in this case $\mathcal F$ is also closed downwards.
Similarly, the \MEM{} instance can be specified equally by a family
$\mathcal P$ of 3-partitions providing the conditional independence
statements. This family $\mathcal P$ can also be assumed to be closed. If
$\mathcal P$ consists of a single element $XY\!D$, then it is closed, and
$\mathcal P^\bot$ contains all non-empty subsets of $XD$ and $Y\!D$. This
latter set is the same as all subsets of $N$ excluding those that
intersect both $X$ and $Y$.

Assume that the \MEM{} instance is specified by the families $\mathcal F$
and $\mathcal P$ such that $\mathcal F^\bot=\mathcal P$ and $\mathcal
P^\top=\mathcal F$; this implies that both $\mathcal F$ and $\mathcal P$ are
closed. The balanced inequality $\mathbf e\cdot\mathbf x\ge 0$ is a
consequence of this \MEM{} instance if $\mathbf e$ contains non-zero
coefficients only at positions whose labels are elements of $\mathcal F$ (as
only these polymatroid values are guaranteed to be preserved by $f^*$), and
$\mathbf e\cdot \mathbf x\ge 0$ holds in every polymatroid on $N$ that
satisfies all conditional independence statements in $\mathcal P$.

The value of the conditional independence statement $(X,Y\|D)$, generated by
the 3-par\-ti\-tion $N=X Y\! D$, is zero if and only if
each of the following set of basic Shannon inequalities, denoted by $\mathcal
S(XY\!D)$, evaluates to zero:
$$
   \mathcal S(XY\!D) \eqdef 
   \big\{ (a,b\|K): ~ K\supseteq D,~ a\in X\sm K,~ b\in Y\sm K \big\}.
$$
Referring to the Farkas' lemma \cite{ziegler} again, the balanced inequality
$\mathbf e\cdot \mathbf x\ge 0$ holds in polymatroids which satisfy the
conditional independence statements in $\mathcal P$ if and only if $\mathbf
e$ is a linear combination of the basic Shannon inequalities in (\bref{B2})
such that inequalities \emph{not} in
$$
   \mathcal S_{\mathcal P} =\bigcup\big\{\mathcal S(XY\!D): XY\!D\in \mathcal
P\big\}
$$
have non-negative coefficients, while those in $\mathcal S_{\mathcal P}$ can
have arbitrary coefficients. Since we want a minimal generating set, it
suffices to consider cases when combining coefficients of inequalities in
$\mathcal S_{\mathcal P}$ are non-positive, and coefficients of other
inequalities are non-negative.

As was done in Section \ref{subsec:new-ineq}, the complete algorithm works
as follows. Create LP variables indexed by the non-empty subsets of $N$. The
\emph{main variables}, collected into the vector $\mathbf x$, are indexed by
subsets in $\mathcal F$. The remaining \emph{auxiliary variables}, with
indices not in $\mathcal F$, are collected into $\mathbf y$. Create a matrix
with columns indexed by the LP variables, and rows indexed by the basic
Shannon inequalities in (\bref{B2}). A matrix row contains the coefficients
of the corresponding basic inequality, it is negated if the inequality
is in $\mathcal S_{\mathcal P}$. Write the matrix as $(P,Q)$ separating the
main and auxiliary variables. The minimal set of inequalities, of which
everything else follows, can be
recovered from the extremal rays of the polyhedral cone
$$
     \mathcal Q = \{\mathbf h P: \mathbf h Q=\mathbf 0, ~\mathbf h\ge 0 \}.
$$
Similarly to the Copy Lemma case, rows, where the $Q$-part is all zero, can
be deleted, and rows with the same $Q$-part can be merged. Symmetries of the
problem, as in Section \ref{subsec:symmetry} for the iterated Copy Lemma,
can also be applied to reduce the computational complexity. The average of
polymatroids $\sigma(f^*)$ where the permutation $\sigma$ of $N$
preserves $\mathcal F$ (elementwise, not pointwise) provides a polymatroid
$\MEM_{\mathcal F}(f)$ which is symmetrical for all of these
permutations. Thus, without loss of generality, the polymatroid $f^*$
provided by Lemma \ref{lemma:MEM} can be assumed to be symmetrical for all
of these symmetries. In spite of this reduction, the polyhedral computation
is significantly more demanding, and only a few cases have been settled, see
\cite{Csirmaz.oneadhesive,M.fmadhe}. Each of them handles a single-element
3-partition set $\mathcal P$. Claim \ref{claim:i-iv}, which is presented here
without proof, summarizes these results.

\begin{claim}\label{claim:i-iv}
{\upshape(i)} No new inequality is generated when $|D|=1$.

\noindent{\upshape(ii)}
No new inequality is generated for the 3-partition $\<x,y\|ab\>$.

\noindent{\upshape(iii)}
The 3-partition $\<x,y\|abc\>$ generates the non-Shannon inequalities
\begin{align*}
  &(a,x\|c) + (a,b\|x)+(a,b\|y)+(c,y) + {} \\
  &{}+ \left\{{(b,x\|ac)\atop (b,y\|ac)}\right\} + 
       \left\{{(c,x\|ab)\atop (c,y\|ab)}\right\} \ge (a,b),
\end{align*}
(choosing either the top or the bottom line in the curly brackets), plus
all permutations of the elements $xyabc$ keeping the partition.

\noindent{\upshape(iv)}
The 3-partition $\<cd,z\|ab\>$ generates the non-Shannon inequalities
$$
   \left\{{[a,b,c,d\,]\atop [a,c,b,d\,]} \right\} + (a,b|z)+(a,z|b)+(b,z|a)\ge 0,
$$
plus all permutations of the elements $abcdz$ keeping the partition fixed.
\end{claim}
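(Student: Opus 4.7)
My plan unifies the four parts through the framework of Section \ref{subsec:me-consequences}. By Claim \ref{claim:equiv2}, a \MEM{} instance with a single separating 3-partition $\<X,Y\|D\>$ is equivalent to taking an $X$-copy of $f$ over $D$ in the sense of the Copy Lemma. So in each of (i)--(iv) the question reduces to what inequalities on the original entropy vector that one copy step forces, and the consequences are read off the polyhedral cone $\mathcal{Q}$ of (\ref{eq:cone-Q}).

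Part (i) follows immediately from Claim \ref{claim:smallD}: when $|D|=1$, every polymatroid already admits a polymatroid $X$-copy over $D$. Hence $\mathcal{Q}$ cannot contain any inequality that is not valid for all polymatroids, and the filtering step in Section \ref{subsec:new-ineq} removes the Shannon ones. For (ii), with $N=\{x,y,a,b\}$, $X=\{x\}$, $D=\{a,b\}$, none of the preconditions in Section \ref{subsec:copy-cond} applies directly, so I would construct a polymatroid $x$-copy over $ab$ by hand. In the spirit of Claim \ref{claim:Dmodular} I would try
\begin{equation*}
f^{*}(I\pi(J)K) \eqdef \min\{f(IL)+f(JL)-f(L) : K\subseteq L\subseteq \{y,a,b\}\},
\end{equation*}
absorbing the extra free variable $y$ into the minimization range. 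Monotonicity is immediate; submodularity should follow by the same cover-and-split calculation used for Claim \ref{claim:Dmodular}, now with the enlarged domain for $L$. Once this polymatroid extension is produced for every $f$, the argument from (i) applies and no non-Shannon inequality survives enumeration.

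For (iii) and (iv) a genuine polyhedral computation is unavoidable, since here non-trivial inequalities do appear. I would run the vertex enumeration algorithm on the cone
\begin{equation*}
\mathcal{Q}=\{\mathbf{h} P : \mathbf{h} Q = \mathbf{0},\ \mathbf{h}\ge 0\}
\end{equation*}
assembled from the basic inequalities (\bref{B2}) on the extended base, the equality constraints from the canonical isomorphism, and $f^{*}(X,Y\|D)=0$. To keep the ambient dimension manageable, I would impose the natural symmetries of the 3-partition (Section \ref{subsec:symmetry}): averaging any \MEM{} solution over permutations of $N$ that preserve $\<X,Y\|D\>$ gives another \MEM{} solution, so $f^{*}$ may be taken invariant under that subgroup. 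For (iv) the work is light: once the separating independence $(cd,z\|ab)=0$ is imposed, both inequalities drop out of the pair of Shannon inequalities that underlies Claim \ref{claim:MMRV}. What remains to check is that vertex enumeration returns nothing beyond these two (and their stated permutations) plus basic Shannon rays.

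The main obstacle I expect is case (iii). The extended base is larger, the cone lives in a higher-dimensional space than the others, and the curly-brace alternatives in the stated form of the inequalities signal that two symmetrically related rays must be read off and identified with one another correctly. Verifying both completeness (no extremal ray is missed, up to the stated permutations) and minimality (no redundancy among the listed rays) is the delicate step, and this is where I would expect to lean heavily on the detailed polyhedral analyses in \cite{Csirmaz.oneadhesive,M.fmadhe}.
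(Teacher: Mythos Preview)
The paper does not prove Claim \ref{claim:i-iv}; it says explicitly ``Claim \ref{claim:i-iv}, which is presented here without proof, summarizes these results,'' referring the reader to \cite{Csirmaz.oneadhesive,M.fmadhe}. So there is no in-paper proof to compare against, and your outline must be judged on its own merits.

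Parts (i), (iii), (iv) of your plan are sound. Part (i) is exactly right via Claim \ref{claim:smallD}. For (iii) and (iv) you correctly recognise that a genuine vertex enumeration of the cone $\mathcal Q$ is required; your remark that the completeness direction (no further extremal rays) is the hard part is accurate and matches why the paper defers to the cited references.

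Part (ii), however, has a real gap. Your proposed formula
\[
  f^{*}(I\pi(J)K)=\min\{\,f(IL)+f(JL)-f(L):K\subseteq L\subseteq\{y,a,b\}\,\}
\]
does not produce an $x$-copy over $ab$. Two problems. First, the submodularity argument of Claim \ref{claim:Dmodular} uses equation (\ref{eq:modular}), namely $f(L_1)+f(L_2)=f(L_3)+f(L_4)$, which holds because $D$ is modular there; enlarging the range of $L$ to $\{y,a,b\}$ destroys this equality (you only get the submodular inequality, and it points the wrong way in the chain). Second, and more decisively, the construction does not even deliver the required independence: with $I=J=\{x\}$ and $K=\{y,a,b\}$ your formula gives $f^{*}(x'xyab)=2f(xyab)-f(yab)$, whereas condition (iii) of Definition \ref{def:copy} forces $f^{*}(x'xyab)=f(xab)+f(xyab)-f(ab)$; these agree only when $f(x,y\|ab)=0$, which is not assumed. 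So this attempted shortcut fails, and (ii) too must be settled either by an explicit (different) construction of the $x$-copy over $ab$ valid for every four-variable polymatroid, or by the polyhedral enumeration showing that all extremal rays of $\mathcal Q$ are Shannon.
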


Part (iv) recovers the inequalities stated in Claim \ref{claim:MMRV}.

Any automated inequality prover, such as \cite{ITIP}, can be used to quickly
check that these inequalities are indeed consequences of the stated
arrangements. However, proving that there are no more, is more challenging.
A partial reason stems from polyhedral geometry. Given all extremal rays and
all facets of a polyhedral cone (such as $\mathcal Q$ above), it is not
known whether the maximality of the specified extremal rays can be checked
faster than recreating them from scratch.


\subsection{Many instances}

The iterated Copy Lemma, discussed in Section \ref{subsec:iterating},
creates many instances of the initial polymatroid embedded into the final
polymatroid. A variant of \MEM, called \GMEM, does the same embedding in a
single step. To describe the method we need some preparations. Let
$(f,N)$ be the polymatroid to be embedded, and let $\phi:M \tto N$ be a
many-to-one map with range $N$. The subset $T\subseteq M$ is a
\emph{transversal} if $\phi$, restricted to $T$, is a one-to-one map
between $T$ and $N$. Finally, let $\mathcal T$ be a collection of
transversals that mark the embedded instances in $M$.

\begin{lemma}[Generalized Maximum Entropy Method]\label{lemma:MEM-many}

Suppose $(f,N)$ is entropic (or aent), and let $\phi:M\tto N$ and the
collection $\mathcal T$ of transversals
as above. There is an entropic (or aent, respectively) polymatroid $(g,M)$, such that 
\begin{itemize}\setlength\itemsep{0pt}
\item[\upshape(i)] for all transversals $T\in \mathcal T$, the map $\phi\restr
T$ is an isomorphism between $g\restr T$ and $f$;
\item[\upshape(ii)] if the 3-partition $\<X,Y\|D\>$ of $M$ separates
$\mathcal T$, then $g(X,Y\|D)=0$.
\end{itemize}
\end{lemma}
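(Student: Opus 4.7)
The plan is to imitate the proof of Lemma~\ref{lemma:MEM}, with the family $\mathcal T$ of transversals playing the role of $\mathcal F$. In the entropic case, realize $f$ as the entropy profile of some distribution $\xi = \{\xi_i:i\in N\}$ on $\mathcal X = \prod_{i\in N} \mathcal X_i$, and let $\Upsilon$ be the set of joint distributions $\eta = \{\eta_j:j\in M\}$ on $\prod_{j\in M}\mathcal X_{\phi(j)}$ such that, for every $T\in\mathcal T$, the marginal $\eta_T$ coincides with $\xi$ under the bijection $\phi\restr T$. Being pinned down by finitely many linear marginal equations, $\Upsilon$ is convex and compact; it is non-empty because the ``collapsed'' distribution that samples $\xi$ once and sets $\eta_j\eqdef\xi_{\phi(j)}$ almost surely satisfies all the transversal constraints simultaneously. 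By strict concavity of Shannon entropy, $\H(\cdot)$ admits a unique maximizer $\hat\eta\in\Upsilon$; take $g$ to be its entropy profile.

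Property~(i) is immediate from $\hat\eta\in\Upsilon$: the bijection $\phi\restr T$ carries $\hat\eta_T$ to $\xi$, so it carries $g\restr T$ to $f$. For~(ii), suppose the 3-partition $\<X,Y\|D\>$ separates $\mathcal T$. Following the argument of Claim~\ref{claim:me1}, define the perturbation
$$
  \Prob_{\eta^*}(xyd) \eqdef \frac{\Prob_{\hat\eta}(xd)\,\Prob_{\hat\eta}(yd)}{\Prob_{\hat\eta}(d)}.
$$
The marginals of $\eta^*$ on $XD$, $YD$, and $D$ agree with those of $\hat\eta$, and by the separation hypothesis every $T\in\mathcal T$ is contained in $X\cup D$ or in $Y\cup D$; hence the marginal constraints defining $\Upsilon$ continue to hold for $\eta^*$, so $\eta^*\in\Upsilon$. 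A direct computation (using $\H(\eta^*) = \H_{\hat\eta}(XD) + \H_{\hat\eta}(YD) - \H_{\hat\eta}(D)$) gives $\H(\eta^*) - \H(\hat\eta) = \I_{\hat\eta}(X,Y\|D)$, so maximality of $\hat\eta$ forces $g(X,Y\|D) = 0$.

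The aent case is handled by approximation: choose entropic $f^{(n)}\to f$ in $\clGa N$, apply the entropic result to produce entropic $g^{(n)}\in\Ga M$, and pass to a limit. Assuming (without loss of generality, by adjoining trivial loops if necessary) that $\mathcal T$ covers $M$, every singleton rank $g^{(n)}(j) = f^{(n)}(\phi(j))$ is uniformly bounded, and then polymatroidal submodularity gives a uniform bound on each $g^{(n)}(A)$. A convergent subsequence exists by Bolzano--Weierstrass, and its limit $g\in\clGa M$ still satisfies~(i) and~(ii) since both are closed linear conditions. The main point of care is this compactness step; strict concavity of the entropy and the resulting uniqueness of $\hat\eta$, though standard \cite{convex-optim}, is the other ingredient that must be invoked.
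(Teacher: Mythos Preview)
Your proposal is correct and takes essentially the same approach as the paper: the ``collapsed'' distribution you describe is exactly the parallel-extension polymatroid $(f',M)$ in the paper's proof, and your maximum-entropy step is precisely the application of Lemma~\ref{lemma:MEM} to $f'$ with $\mathcal F=\mathcal T$. The paper's proof is simply terser, citing \MEM{} as a black box rather than unfolding its argument inline; your explicit compactness argument for the aent case is also what the paper's ``apply continuity'' ultimately means.
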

\begin{proof}
Take $|M|$ many parallel extensions of $(f,N)$ such that $i\in M$ is parallel
to $\phi(i)$, and then restrict the extension to $M$. The resulting
polymatroid $(f',M)$ is clearly entropic (or aent) and satisfies (i).
Apply the \MEM{} Lemma \ref{lemma:MEM} to $f'$ and $\mathcal T$ to get the
required polymatroid $g$.
\end{proof}

Keeping track of instances of the original polymatroid during a sequence of
full copies is relatively easy. Each original instance, which is not a
subset of the current over set, doubles. From this list both the map $\phi$
and the transversal set $\mathcal T$ marking these instances can be
constructed easily. For example, the iterated full copy specified in
(\ref{eq:copyseq}) resulted in a 14-element polymatroid $g$ on $M=\{\ta i,
\tb j, \tc k, \td\ell\}$ with $i$, $j$, and $\ell$ in $\{1,2\}$, and $k$ in
$\{1,\dots,8\}$. The map $\phi$ sends elements from $M$ to $\{\tt a, \tt
b, \tt c, \tt d\} $ simply by removing their indices. The eight
instances of the original polymatroid on $\tt a\tt b\tt c\tt d$ is specified 
by the transversals
\begin{equation}\label{eq:T}
  \begin{array}{cccc}
 \ta1\tb1\tc1\td1, & \ta1\tb1\tc2\td2, & \ta2\tb1\tc3\td1, & \ta2\tb1\tc4\td2, \\
 \ta1\tb2\tc5\td1, & \ta1\tb2\tc6\td2, & \ta2\tb2\tc7\td1, & \ta2\tb2\tc8\td2.
  \end{array}
\end{equation}
LP variables corresponding to subsets of these transversals are set to be
equal to one of the main variables. LP variables, corresponding to other
subsets of $M$ are, or are equal to, some auxiliary variables. Additionally,
the iterated Copy Lemma imposes various independence statements on $M$
(these statements are also multiplied in the copy steps). Taking instead the
maximum entropy extension of $(g,M)$ while fixing the original instances as
marked by the transversals in $\mathcal T$, would hopefully introduce a
nicer, more structured set of independencies, which could then produce
additional non-Shannon inequalities. As stated in Lemma
\ref{lemma:MEM-many}, independencies in the maximum entropy extension arise
from 3-partitions separating $\mathcal T$. In this example they are just the
consequences of the highly symmetrical collection
$$
  \begin{array}{c}
   (\tb1\tc1\tc2\tc3\tc4, \tb2\tc5\tc6\tc7\tc8 \| \ta1\ta2\td1\td2)=0, \\[2pt]
   (\ta1\tc1\tc2\tc5\tc6, \ta2\tc3\tc4\tc7\tc8 \| \td1\td2\tb1\tb2)=0, \\[2pt]
   (\td1\tc1\tc3\tc5\tc7, \td2\tc2\tc4\tc6\tc8 \| \tb1\tb2\ta1\ta2)=0.
  \end{array}
$$
Somewhat surprisingly, this \GMEM{} instance with the transversal set in
(\ref{eq:T}) yields no new inequality at all. It is an immediate consequence
of Claim \ref{claim:no4} below, hinting that, in general, the transversal
set $\mathcal T$ cannot be too dense.

\begin{claim}\label{claim:no4}
Suppose $|N|=4$, and $\mathcal T$ is a set of transversals in $M$ so that,
in every 3-partition $\<X,Y\|D\>$ of $M$ separating $\mathcal T$, the
last part $D$ has at least three elements. Then, for every polymatroid
$(f,N)$, there is another polymatroid $(g,M)$ that satisfies the conditions of
Lemma \ref{lemma:MEM-many}.
\end{claim}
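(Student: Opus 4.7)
Plan. I would construct the polymatroid $g$ explicitly by iteratively applying Claim~\ref{claim:smallD}. Fix a transversal $T_0\in\mathcal T$, identify it with $N$ via $\phi\restr T_0$, and set $g^{(0)}=f$ on $M^{(0)}=T_0$. Enumerate the elements of $M\setminus T_0$ as $m_1,m_2,\ldots$; at step $i\ge 1$, with $v_i=\phi(m_i)$, apply the $|D|=|\mathrm{base}|-1$ case of Claim~\ref{claim:smallD} to take a single-element polymatroid copy of $v_i$ in $g^{(i-1)}$ over $M^{(i-1)}\setminus v_i$, and identify the new element with $m_i$. Each step preserves the polymatroid property, so the final $g$ is a polymatroid on $M$.

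Condition~(i) follows by induction. The copy of Claim~\ref{claim:smallD} at step $i$ provides a polymatroid isomorphism $g^{(i)}\restr(M^{(i-1)}\setminus v_i\cup\{m_i\})\cong g^{(i-1)}\restr M^{(i-1)}$ swapping $m_i\leftrightarrow v_i$, and later extensions do not change values on this subset. For any transversal $T\in\mathcal T$, composing the swaps $m_i\leftrightarrow v_i$ for $m_i\in T$ in decreasing order of $i$ produces a polymatroid isomorphism $g\restr T\cong g\restr T_0=f$; this composition is the bijection $\phi\restr T_0^{-1}\circ\phi\restr T$, so under the identification $T_0\cong N$ it realizes $\phi\restr T$ as the required isomorphism.

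Condition~(ii) is verified by a direct computation from the construction in Claim~\ref{claim:smallD}. At step $i$ the values satisfy $g^{(i)}(A)=g^{(i-1)}(A\setminus m_i)+\lambda_{v_i}$ for $A\ni v_i,m_i$, with $\lambda_{v_i}=f(v_i\|N\setminus v_i)$ (an invariant one verifies inductively); other subsets inherit values via the $m_i\leftrightarrow v_i$ isomorphism. For a separating 3-partition $\<X,Y\|D\>$, expanding $g(XD)+g(YD)-g(D)-g(M)$ via the recursive formula reduces to an expression in the $\lambda_v$ and $f$-values on subsets of $N$. The hypothesis $|D|\ge 3$ combined with $|N|=4$ ensures that, for each $v\in N$, the multiplicities of $\lambda_v$-contributions on the two sides of the equation match, and the residual $f$-dependent terms vanish by a Shannon-type identity on $N$; this is illustrated by a small calculation on examples like a single fiber split ($|M\setminus T_0|=1$), where $g^{(1)}(X,Y\|D)=f(N)-f(\phi(D))-\lambda_{v_1}$ telescopes to zero.

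The main obstacle is making this cancellation rigorous for \emph{all} separating 3-partitions, especially those with $|\phi(D)|\le 2$ which the hypothesis does allow (as in the example of (\ref{eq:T}), where $\phi(\ta1\ta2\td1\td2)=\{a,d\}$). In such cases the $\lambda$-balance spans several fibers simultaneously and is not a single-fiber mutual-independence argument. The structural constraint $|N|=4$ keeps the possible patterns of $\{X\cap\phi^{-1}(v),Y\cap\phi^{-1}(v),D\cap\phi^{-1}(v)\}_{v\in N}$ to a finite list; in each case the $|D|\ge 3$ hypothesis is precisely what is required for the associated Shannon identity on $N$ to close the verification.
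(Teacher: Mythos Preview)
Your approach has a genuine gap in condition~(ii). The iterated singleton-copy construction does \emph{not} produce the required conditional independences, even under the hypothesis $|D|\ge 3$.

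Here is a concrete counterexample. Take $f=\mathbf v_{cd}$, which is tight: $f(v\|N\sm v)=0$ for every $v\in N$. Each step of your construction then degenerates to a parallel extension, so the final $g$ satisfies $g(A)=\mathbf v_{cd}(\phi(A))$ for every $A\subseteq M$, independently of $\mathcal T$. Now let
\[
M=\{a_1,a_2,b_1,c_1,c_2,d_1,d_2\},\qquad
\mathcal T=\big\{\,a_i\,b_1\,c_j\,d_j:\ i,j\in\{1,2\}\,\big\}.
\]
A short case check (any separating partition must place $b_1$ in $D$, after which the four sets $T\sm b_1$ are chained together through shared elements) confirms that every $3$-partition of $M$ separating $\mathcal T$ has $|D|\ge 3$, so the hypothesis of the claim holds. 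Yet $\<c_1d_1,c_2d_2\|a_1a_2b_1\>$ separates $\mathcal T$, and with your $g$,
\[
g(c_1d_1,\,c_2d_2\,\|\,a_1a_2b_1)
=\mathbf v_{cd}(abcd)+\mathbf v_{cd}(abcd)-\mathbf v_{cd}(ab)-\mathbf v_{cd}(abcd)=4+4-3-4=1.
\]
No ``Shannon-type identity on $N$'' can rescue this: the obstruction is simply that $|\phi(D)|=2$ while $|D|=3$, and nothing in your construction forces $g(D)$ up to $4$.

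The paper's proof is entirely different and crucially uses $\mathcal T$ in the definition of $g$. It first reduces, via conic combinations and Lemma~\ref{lemma:MEM-many}, to the case where $f=\mathbf v$ is a V\'amos vector. Then it sets $g(A)=\mathbf v(\phi(A))$ only when $A$ is a singleton or lies inside some $T\in\mathcal T$, and $g(A)=4$ otherwise. Since $\mathbf v$ already equals $4$ on every subset of $N$ of size $\ge 3$, this $g$ takes the constant value $4$ on \emph{every} $A\subseteq M$ with $|A|\ge 3$, whence $g(X,Y\|D)=4+4-4-4=0$ whenever $|D|\ge 3$. Your $g$, being oblivious to $\mathcal T$, cannot achieve this.
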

\begin{proof}

Note that if the claim holds for a collection of polymatroids $f$, then
it also holds for their conic combination. Also, the claim holds when $f$ is
entropic by Lemma \ref{lemma:MEM-many}. As discussed in Section
\ref{subsec:case234}, extremal rays of $\Gamma_{\!4}$ are entropic with the
exception of the six V\'amos vectors defined in (\ref{eq:vamos}). Thus this
claim holds for all polymatroids on four points if it holds for the V\'amos
vectors. 

So let $\mathbf v$ be a V\'amos vector. Recall that $\mathbf v$ is $2$ at
singletons, $3$ or $4$ on two-element subsets, and $4$ on the remaining
subsets with three or four elements. Define $g$ on subsets of $M$ as
follows. If $A\subseteq T$ for some transversal $T\in \mathcal T$ or $A$ is
a singleton, then let $g(A)=\mathbf v(\phi(A))$, otherwise let $g(A)=4$.
Since $\phi$ is onto, $g(a)=2$ for every singleton $a\in M$; $g(A)=4$ when
$A$ has at least three elements, and $g(A)=3$ or $g(A)=4$ when $A$ has two
elements. From this it is easy to check that $g$ is a polymatroid, and
$\phi(g\restr T)$ is just $\mathbf v$ for every $T\in\mathcal T$. Finally,
if $\<X,Y\|D\>$ is a 3-partition of $M$ and $|D|\ge 3$, then each of
$g(XD)$, $g(YD)$, $g(D)$ and $g(XYD)$ equals $4$, thus $g(X,Y\|D)=0$. This
proves that condition (ii) of Lemma \ref{lemma:MEM-many} also holds for $g$.
\end{proof}

Suppose $\mathbf v$ is the V\'amos vector $\mathbf v_{cd}$ from
(\ref{eq:vamos}). The proof above also shows that $g(X,Y\|D)=0$ whenever
$g(D)=4$. This is the case for any two-element set $D$ that is either not a
subset of any transversal; or $D$ is in a transversal and $\phi(D)=cd$. A
consequence of this observation is that there is a polymatroidal full copy
of $\mathbf v_{cd}$ over $cd$. Therefore, showing that $\mathbf v_{cd}$ is
not aent cannot be done by taking a copy over $cd$. Taking a copy of $\mathbf
v_{cd}$ over any other two-element subset of $abcd$ leads to the desired
result, similarly to the one discussed after the proof of 
Lemma \ref{lemma:copy}, via either the (\ref{eq:MMRV}) inequality,
or one of its variants, such as
$$
   [a,b,c,d\tsp] + (b,c\|z)+(b,z\|c)+(c,z\|b) + 3\tsp(ad,z\|bc) \ge 0,
$$
see also part (iv) of Claim \ref{claim:i-iv}.

\smallskip

There are several open problems concerning the method of this Section. The
full copy of $f$ is clearly equivalent to an instance of the Generalized
Maximum Entropy method with two transversals. The trivial attempt to
simulate the iterated Copy Lemma in a single \GMEM{} instance failed
spectacularly by Claim \ref{claim:no4}. Can it be done in some other way, or
does the iterated Copy Lemma have some consequence that cannot be obtained
by a single instance of \GMEM?

The particular case of \GMEM{} when $\mathcal T$ is a \emph{sunflower}, i.e., any two
elements of $\mathcal T$ intersect in the same set $D$, and the differences
$T\sm D$ (the ``petals'') are disjoint, is known as the \emph{book
extension} \cite{Csirmaz.book}. Depending on the number of petals, this
arrangement produces larger and larger sets of non-Shannon inequalities.
This case, however, can be simulated easily by the iterated Copy
Lemma. Is it true in general that the consequences of \GMEM{} are also
consequences of the iterated Copy Lemma? Finally, we remark that there has
been no systematic numerical exploration of \GMEM{} with a small number of
transversals.


\section{The Ahlswede-K\"orner method}\label{sec:AK}

This section discusses a third technique to create non-Shannon type
information inequalities. It was used by Makarychev et al.~in \cite{MMRV} to
derive an infinite family of non-Shannon inequalities. In \cite{Kaced} the
method was called MMRV from the initials of the authors of \cite{MMRV}, but it is also
known as the Ahlswede-K\"orner method since it is based on Lemma
\ref{lemma:AK} attributed to them. As proved in Claim \ref{claim:AKpoly},
polymatroids are closed even for the Generalized Ahlswede-K\"orner
operation, thus direct applications of Ahlswede-K\"orner Lemma
\ref{lemma:AK} cannot yield new inequalities. To recap, Lemma \ref{lemma:AK}
says that if $f$ is aent, then so is the function
$$
   f^*(A) = \begin{cases}
     f(A)  & \mbox{ if $z\notin A$}, \\[3pt]
     f(A)-f(z\|N\sm z) & \mbox{ if $z\in A$}.
   \end{cases}
$$
This function $f^*$ is the same as the tightening $f\dn z$, as seen in the
proof of Claim \ref{claim:aent-tight}. The actual statement, which is used in \cite{Kaced}
and in \cite{MMRV} to derive a non-Shannon inequality, is formalized as
Lemma \ref{lemma:AK2} below. Its proof, beyond the Ahlswede-K\"orner lemma,
must use additional machinery.

\begin{lemma}\label{lemma:AK2}
Suppose $(f,N)$ is aent and $N$ is partitioned as $N=X\cups Y\cups \{z\}$.
Then there is an aent function $f^*$ on $N$ such that
$$
   f^*(A) = \begin{cases}
      f(A) & \mbox { if $z\notin A$}, \\[3pt]
      f(A)-f(z\|Y) & \mbox{ if $z \in A\subseteq Yz$}.
   \end{cases}
$$
\end{lemma}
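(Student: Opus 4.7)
The strategy is to modify $f$ so that $z$ becomes conditionally independent of $X$ given $Y$, while keeping the marginals on $XY$ and on $Yz$ intact; the basic Ahlswede--K\"orner lemma then does the promised tightening. The key observation is that in the modified polymatroid the quantity $\tilde f(z\|N\sm z)$ will collapse exactly to $f(z\|Y)$, which is what Lemma~\ref{lemma:AK2} wants to strip off.

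First, invoke Lemma~\ref{lemma:MEM} with the family $\mathcal F=\{XY,Yz\}$. The 3-partition $\langle X,z\|Y\rangle$ of $N$ separates $\mathcal F$ because $XY$ is disjoint from $\{z\}$ and $Yz$ is disjoint from $X$. The lemma produces an aent polymatroid $(\tilde f,N)$ that agrees with $f$ on every subset of $XY$ and on every subset of $Yz$, and that satisfies $\tilde f(X,z\|Y)=0$. From this conditional independence one computes
\[
   \tilde f(N)=\tilde f(XY)+\tilde f(Yz)-\tilde f(Y)=f(XY)+f(Yz)-f(Y),
\]
hence $\tilde f(z\|N\sm z)=\tilde f(N)-\tilde f(XY)=f(Yz)-f(Y)=f(z\|Y)$.

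Now apply Lemma~\ref{lemma:AK} (Ahlswede--K\"orner) to $\tilde f$ at the element $z$; the output $f^*=\tilde f-f(z\|Y)\,\r_z$ is aent. If $z\notin A$ then $A\subseteq XY$, so $f^*(A)=\tilde f(A)=f(A)$; if $z\in A\subseteq Yz$ then $f^*(A)=\tilde f(A)-f(z\|Y)=f(A)-f(z\|Y)$. Both clauses of the statement are satisfied, while the values of $f^*(A)$ on subsets that contain $z$ and meet $X$ are left unconstrained by the statement and are fixed by the MEM extension.

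The main obstacle is choosing the right intermediate object rather than the polymatroid algebra that follows. A more direct-looking attempt, namely taking the Mat\'u\v s principal extension of $f$ along $\{z\}$ by $\alpha=f(z\|Y)$ and then conditioning on the new point, already fails on small examples such as $\mathbf u$ from (\ref{eq:u}): for $A$ meeting $X$ the budget $\alpha$ overflows the minimum in $\min\{f(A)+\alpha,f(Az)\}$, and the subsequent conditioning corrupts $f^*(A)$ even for $z\notin A$. The MEM step is precisely what removes this conflict, because enforcing $\tilde f(X,z\|Y)=0$ \emph{before} tightening forces $\tilde f(z\|N\sm z)$ down from $f(z\|N\sm z)$ to $f(z\|Y)$, after which the standard Ahlswede--K\"orner lemma completes the argument in a single stroke.
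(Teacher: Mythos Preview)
Your proof is correct. Both the paper and you recognise that the crucial step is forcing conditional independence of $z$ and $X$ over $Y$ before tightening, but you execute this differently. The paper takes a $z$-copy over $Y$ to obtain a one-point extension $(f,Nz')$ with $f(z',Xz\|Y)=0$, then applies the generalised Ahlswede--K\"orner operation (Theorem~\ref{thm:AK}) with $Z=\{z'\}$ and $\alpha=f(Y,z)$, and finally restricts to $XYz'$ and renames. You instead stay on the original base set: the \MEM{} step with $\mathcal F=\{XY,Yz\}$ and the single 3-partition $\langle X,z\|Y\rangle$ already returns $\tilde f$ on $N$ with the independence in place, so the basic Ahlswede--K\"orner Lemma~\ref{lemma:AK} suffices. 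By Claim~\ref{claim:equiv2} your \MEM{} step is equivalent to the paper's copy step after restriction, so the two arguments are closely related; your route is more economical for the lemma as stated, while the paper's route additionally shows that $f^*$ can be realised as the restriction of a one-point aent \emph{extension} of $f$, a fact the paper exploits immediately afterwards to argue that the strengthened form is equivalent to the Copy Lemma for copying a singleton. Your version does not yield this byproduct, since your intermediate $\tilde f$ is not an extension of $f$.
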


The Ahlswede-K\"orner Lemma \ref{lemma:AK} is a special case of this Lemma
when $X$ is the empty set. For non-empty $X$ Theorem \ref{thm:AK} with
$Z=\{z\}$ and $\alpha=f(z,Y)$ provides an aent polymatroid that gives the
correct values for subsets $z\in A\subseteq Yz$, but that polymatroid fails
to preserve the original values of $f$ on subsets that do not contain $z$ and
intersect $X$.

\begin{proof}

The first step is to use the Copy Lemma to make a one-point extension of $(f,N)$ to
$(f,N\cup\{z'\})$ which is a $z$-copy over $Y$. Since it is  $z$-copy, $f$
restricted to $Y\!z$ and restricted to $Y\!z'$ are isomorphic, moreover
$f(z',Xz\|Y)=0$. The independence gives
$$
   f(XY\!zz')-f(XY\!z) = f(Y\!z')-f(Y)=f(Y\!z)-f(Y)=f(z|Y).
$$
This, and submodularity implies that for all $A\subseteq XY$,
\begin{equation}\label{eq:pfAK2}
   f(Az')-f(A) \ge f(XY\!zz')-f(XY\!z) = f(z|Y).
\end{equation}
To get the aent polymatroid $(f^*,Nz')$ apply the Generalized 
Ahlswede-K\"orner operation, stated in Theorem
\ref{thm:AK}, to the aent polymatroid $(f,Nz')$ with $Z=\{z'\}$ and 
$\alpha=f(Y,z)$. Since $\alpha\le f(z)$ for $J\subseteq Y$, we have
\begin{align*}
  f^*(Jz')&=\min\{f(Jz'),\alpha+f(Jz'|z')\} = f(Jz)-f(z)+\alpha \\
          &= f(Jz)-f(z|Y),
\end{align*}
and, similarly, for $A\subseteq XY$ we have
\begin{align*}
  f^*(A) &= \min\{f(A),\alpha+f(A|z')\} \\
         &= f(A)+\min\{0, f(Az')-f(A)+f(Y)-f(Y\!z) \}\\
         &= f(A),
\end{align*}
since $f(z)=f(z')$, and the second term in $\min\{0,{\cdot}\}$ is non-negative
by (\ref{eq:pfAK2}). Restricting $f^*$ to $XY\!z'$ and renaming $z'$ to $z$ 
proves Lemma \ref{lemma:AK2}.
\end{proof}

The proof gave a slightly stronger result that the stated aent
polymatroid $f^*$ can be recovered as the restriction of a one-point
extension of $f$. This latter statement, however, implies the existence of
an aent $z$-copy of $f$ over $X$. Thus Lemma \ref{lemma:AK2} strengthened
this way and the Copy Lemma for copying a singleton are equivalent.

The following example from \cite{Kaced} illustrates how Lemma
\ref{lemma:AK2} can be used to produce the Makarichev et al.~inequality from
Claim \ref{claim:MMRV}. Consider an aent polymatroid $f$ on the five-element
set $N=\{abcdz\}$, and apply the Lemma for $z\in N$ and $Y=\{ab\}$ to get
the aent polymatroid $f^*$. Polymatroids $f^*$ and $f$ take the same value
on subsets not containing $z$, moreover $f^*(Az)=f(Az)-f(z\|ab)$ for each
$A\subseteq \{ab\}$. Plugging these values to (\ref{eq:MMRV}) inequality
$$
   [a,b,c,d\,]+(a,b\|z)+(a,z\|b)+(b,z\|a)+3\tsp(cd,z\|ab) \ge 0,
$$
which holds for $f^*$\!, the first four terms have the same value for $f^*$ and
$f$, while the last term is $f^*(cd,z\|ab) \le f^*(z\|ab)=f^*(abz)-f(ab)=0$.
Thus the inequality
$$
   [a,b,c,d\,]+(a,b\|z)+(a,z\|b)+(b,z\|a)\ge 0
$$
holds in $f$. Since $f$ was arbitrary, we recover the Makarichev et 
al.~inequality of Claim \ref{claim:MMRV}.

\smallskip

Unlike the Copy Lemma, Lemma \ref{lemma:AK2} cannot directly prove that the
V\'amos vector $\mathbf v=\mathbf v_{cd}$ is not almost entropic. Plugging
$N=\{abcd\}$, $X=\{d\}$, $Y=\{ab\}$, and $z=c$ to Lemma \ref{lemma:AK2}, the
partially defined
function $\mathbf v^*$ can be entropic. Indeed, $\mathbf v^*$ is not defined
on subsets containing $cd$, and on other subsets it takes the same value as
the following linear distribution: give $a$, $b$, and $d$ one-one-one
independent random bits, and give the same fourth random bit to each of $a$,
$b$, $c$, and $d$.

\smallskip

In Lemma \ref{lemma:AK2} the singleton $\{z\}$ can be replaced with a larger
set $Z$. The statement is slightly different, but the proof is similar and is
omitted. In this case even the strengthened version (saying that $f^*$ is a
restriction of an extension of $f$) is strictly weaker than
the Copy Lemma for stating the existence of a $Z$-copy.

\begin{claim}

Suppose $(f,N)$ is aent and $N$ is partitioned as $N=X\cups Y\cups Z$. There
is an aent function $f^*$ on $N$ such that
$$
   f^*(A) = \begin{cases}
      f(A) & \mbox { if $A\cap Z=\emptyset$, and} \\[3pt]
      \min\{f(A),f(AZ)-f(Z\|Y)\} & \mbox{ if $A\subseteq YZ$}.
   \end{cases}
$$
\end{claim}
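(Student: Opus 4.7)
The plan is to follow the proof of Lemma \ref{lemma:AK2} line by line, with the singleton $\{z\}$ replaced throughout by the set $Z$. First I will apply the Copy Lemma (Lemma \ref{lemma:copy}) to $f$ to obtain an aent $Z$-copy $f'$ over $Y$, whose base is $N\cup Z'$ with $Z'$ a disjoint duplicate of $Z$. Definition \ref{def:copy} then supplies the isomorphism $f'\restr YZ'\cong f\restr YZ$ and the conditional independence $f'(Z',X\cup Z\|Y)=0$; this propagates to arbitrary subsets, so $f'(U,V\|Y)=0$ whenever $U\subseteq Z'$ and $V\subseteq X\cup Z$.

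Second, I will apply Theorem \ref{thm:AK} to $f'$ with set $Z'$ and parameter $\alpha=f(Y,Z)=f(Y)+f(Z)-f(YZ)\ge 0$, producing an aent function $\tilde f$ on $N\cup Z'$ given by $\tilde f(A)=\min\{f'(A),\alpha+f'(A\|Z')\}$. The claimed $f^*$ will be obtained by restricting $\tilde f$ to $XY\cup Z'$ (aent by Claim \ref{claim:Gaoper1}) and then renaming $Z'$ back to $Z$. Subsets of $N$ meeting both $X$ and $Z$ are dropped by this restriction, which is consistent with the claim prescribing values only in the two listed cases.

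Third, I will verify the prescribed values. For $A=JZ_0'$ with $J\subseteq Y$ and $Z_0'\subseteq Z'$, the isomorphism yields $f'(JZ_0')=f(JZ_0)$ and $f'(JZ_0'Z')=f'(JZ')=f(JZ)$; substituting $\alpha=f(Y,Z)$ then collapses $\alpha+f'(JZ_0'\|Z')$ to $f(JZ)-f(Z\|Y)$, so $\tilde f(JZ_0')=\min\{f(JZ_0),f(JZ)-f(Z\|Y)\}$, matching the formula of the claim after the rename. For $A\subseteq XY$ the identity $f'(A)=f(A)$ is automatic, leaving only the inequality $\alpha+f'(A\|Z')\ge f(A)$, which is equivalent to $g(A)\ge f(Z\|Y)$, where $g(B)\eqdef f'(BZ')-f'(B)$.

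The main obstacle is this last inequality. The key observation is that submodularity of $f'$ makes $g$ monotone decreasing in $B$, so it suffices to verify the endpoint $g(XY)\ge f(Z\|Y)$. In fact equality holds: from the main copy relation one deduces the subset independence $f'(Z',X\|Y)=0$, and combining this with $f'(YZ')=f(YZ)$ evaluates $f'(XYZ')=f(XY)+f(YZ)-f(Y)$, whence $g(XY)=f(YZ)-f(Y)=f(Z\|Y)$. Everything else is routine once the correct parameter $\alpha=f(Y,Z)$ has been identified for the GAK step.
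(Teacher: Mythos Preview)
Your proposal is correct and follows essentially the same approach the paper indicates (it omits the proof, stating only that it is ``similar'' to that of Lemma~\ref{lemma:AK2}): take a $Z$-copy over $Y$, apply the $\GAK$ operation of Theorem~\ref{thm:AK} with parameter $\alpha=f(Y,Z)$ on the copied block $Z'$, then restrict and rename. Your verification that $g(B)=f'(BZ')-f'(B)$ is monotone decreasing and attains the value $f(Z\|Y)$ at $B=XY$ via the derived independence $f'(Z',X\|Y)=0$ is the natural analogue of inequality~(\ref{eq:pfAK2}); the paper's proof of Lemma~\ref{lemma:AK2} instead evaluates the endpoint at $XYz$ using the full independence directly, but this is the same mechanism.
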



\section{Conclusions and open problems}\label{sec:conclusion}

Only a few methods and techniques are known to generate non-Shannon
information inequalities. This paper explored these methods and their
relationship, and investigated some of their variants. 

\subsection{Foundation}

The foundation was established in Sections \ref{sec:entreg} and
\ref{sec:polymatroids}.
Basic facts about the entropy region $\GaN$, determined by $N$ jointly
distributed discrete random variables, were recalled and proved in Section
\ref{sec:entreg}. Natural manipulations on probability distributions give
rise to natural operations on the entropy region. These operations include
restricting, deleting, factoring, and independent drawing. Two sophisticated
operations on distributions, called \emph{thinning} and \emph{compressing},
were defined in Section \ref{subsec:fmpe}. They arise as limits of a certain
type of random coding. Similar constructions can be found in the literature
as auxiliary tools in converse theorems of coding problems, but Theorems
\ref{thm:AK} and \ref{thm:princ}, expressing properties of the thinning and
compressing operations, have not been stated in this general form. The
corresponding proofs in the \bref{Appendix} separate typicality
considerations from random coding, thus achieving an error term of order
$(\log n)/n$ instead of the best possible error term $(\log n)/\sqrt n$ when
typical sequences are used.

\emph{Polymatroids}, the second pillar of the foundation, are the abstract
structures satisfying all basic Shannon inequalities. Section
\ref{sec:polymatroids} provided basic facts and specific terminology
concerning polymatroids. Many natural operations on the entropy region
$\GaN$ extend to the set $\GN$ of polymatroids. However, the terminology
sometimes differs. For example, the operation corresponding to
compressing (Section \ref{subsec:fmpe}) is called \emph{principal extension}
(Section \ref{subsec:principal}).

A rich source of probability distributions is linear codes over finite
fields. The corresponding polymatroid class, discussed in Section
\ref{sec:linear}, originates from linear subspace arrangements of finite
dimensional vector spaces. The definition of \emph{linear polymatroid}
is more subtle, because representations over fields with different
characteristics do not mix well.

\subsection{Copy Lemma}

Finding a non-Shannon information inequality is equivalent to separating
regions $\GaN$ and $\GN$. A natural strategy is to look for an operation
that is applicable to entropic vectors but not to every polymatroid. In the
case of the analogous problem of separating linear and general polymatroids,
it would be an operation that applies to linear polymatroids but not to
general ones. Such an operation, defined in Section \ref{sec:linear}, is
the extraction of common information. Using this operation, Claim \ref{claim:4ing}
proves that the Ingleton expression (\ref{eq:ing}) is necessarily
non-negative for linear polymatroids, providing a separating inequality. An
important takeaway is that operations applicable equally to almost entropic
and general polymatroids cannot contribute to a linear non-Shannon
inequality. Table \ref{table:2} summarizes the operations discussed previously,
including all standard operations, indicating which polymatroid classes
they preserve. Clearly, none of them can produce a separating non-Shannon
inequality. Table \ref{table:3} supplements this list with two entries: the
Copy Lemma from Section \ref{sec:copylemma} and the Maximum Entropy Method
from Section \ref{sec:MEP}, indicating that these operations could
create separating inequalities. The question mark indicates that the status
of linear polymatroids is unknown.

\begin{table}[!ht]
\def\mstr#1{\rule{0pt}{1#1pt}}%
\def\yes{$\checkmark$}%
\def\no{$\times$}%
\centering\begin{tabular}{lcccc}
\mstr2 \bf Operation & \bf linear &\bf entropic &\bf aent &\bf polymatroid \\
\hline
\mstr3 Copy Lemma         & \yes & \yes & \yes & \no \\
\mstr1 \MEM               & ?    & \yes & \yes & \no \\
\end{tabular}
\caption{Polymatroid classes preserved by operations}\label{table:3}
\end{table}

The first method, discovered by Zhang and Yeung \cite{ZhY.ineq,Zh.gen.ineq} and later
formalized and named \emph{Copy Lemma} by Dougherty et al.~\cite{DFZ11}, was
described in Section \ref{sec:copylemma}. Proposition \ref{prop:cp-red}
shows that a (partial) $A$-copy is a minor of a full copy and thus cannot have
more consequences. Therefore, it is theoretically sufficient to consider full copies.
In practice, however, due to their smaller base size, a partial copy can be
numerically tractable, while a full copy may not be. Section \ref{subsec:new-ineq}
describes an algorithm that extracts all consequences of a single instance
of the Copy Lemma by transforming it into a vertex enumeration problem
\cite{fukuda-prodon,hamel-loehne-rudloff}.
Section \ref{subsec:balancing} proves that the new inequalities are
necessarily \emph{balanced}, and to generate balanced consequences, only
balanced assumptions are needed. In particular, only the balanced basic Shannon
inequalities in (\bref{B2}) should be used by the extraction algorithm in
Section \ref{subsec:new-ineq}.

Section \ref{subsec:copy-cond} presented three ``preconditions'' that should
be checked before applying the Copy Lemma. If any of them holds, then either
the Copy Lemma yields no result, or the set of ``over'' variables can
be extended to a larger set (while keeping or reducing the computational
complexity of the associated algorithm).

The iterated Copy Lemma, as described in Section \ref{subsec:iterating},
mimics adding one or more non-Shannon entropy inequalities (generated by
other applications of the Copy Lemma) to the copy polymatroid while
providing significantly lower computational complexity. Symmetries, both
from the original problem and from the intrinsic symmetry of the copy
polymatroid when applied iteratively, can further reduce the computational
complexity of the extraction algorithm, as described in Section
\ref{subsec:symmetry}.

\subsection{Maximum Entropy Method}

The Maximum Entropy Method uses the fact that the probability masses of a
distribution can be modified so that some fixed marginals have the same
distribution, while the total entropy, within this constraint, is maximum.
Such a maximum entropy distribution satisfies independence statements
that depend on which marginals are fixed, see Claim \ref{claim:me1}. Two
flavors of this method were presented. The first one, dubbed \MEM,
requires an aent polymatroid $f$, and provides another aent polymatroid
$f^*$ on the same base set, which has the same restrictions as $f$ on a
collection of subsets of the base set, and additionally satisfies
conditional independence statements, see Lemma \ref{lemma:MEM}. Claim
\ref{claim:equiv2} proves that the Copy Lemma is equivalent to the special
case of \MEM{} with a single independence statement. Section
\ref{subsec:me-consequences} describes an algorithm that extracts all
consequences of a \MEM{} instance. The algorithm can easily incorporate the
symmetries of the problem. Results of the algorithm are presented for a
couple of cases in Claim \ref{claim:i-iv}.

The second flavor of the Maximum Entropy Method, called \GMEM, is given as
Lemma \ref{lemma:MEM-many}. It encompasses many instances of the same base
polymatroid marked by a collection of \emph{transversals}, mimicking the
work of the iterated Copy Lemma, while letting the maximum entropy principle
stretch the distribution to satisfy many conditional independence
statements. Defying intuition, high transversal density destroys the
strength of \GMEM, see Claim \ref{claim:no4}.
 
\subsection{Ahlswede-K\"orner method}

The Ahlswede-K\"orner method was used in \cite{MMRV} to derive an infinite
family of non-Shannon inequalities. The method was described explicitly in
\cite{Kaced}. Proven in Claim \ref{claim:AKpoly}, and reiterated in Table
\ref{table:2}, polymatroids are closed under the $\GAK$ operation.
Consequently, direct applications of the Ahlswede-K\"orner lemma (or any of
its consequences) cannot produce new inequalities. The actual statement,
stated as Lemma \ref{lemma:AK2}, that was used, requires additional
machinery. In Section \ref{sec:AK} Lemma \ref{lemma:AK2} is shown to be a
consequence of the Copy Lemma when copying a singleton and to be strictly
weaker (as it cannot prove that the V\'amos vector is not aent).

\subsection{Open Problems}

Theorem \ref{claim:3} showed that $\clGa N$, the closure of the entropy
region, is a full-di\-men\-sion\-al convex cone, and its internal points are
entropic. The structure of the boundary is mostly unknown
\cite{2faces,M.piece,thakor}.

\begin{problem}
Investigate the boundary of $\Ga3$, and the boundary of $\GaN$ in general.
\end{problem}

The similar problem of which faces (intersection of facets) of the outer
boundary cone $\GN$ have entropic points in their relative interior, is
closely related to the inference of conditional independence statements.
This problem has been settled for $|N|=3$ and for $|N|=4$ in \cite{studeny},
while it remains open when $|N|$ is at least five.

For $|N|=5$ the complete list of the extremal rays of $\Gamma_{\!5}$ is
known, see Table \ref{table:1}, but it is not known which of them are
entropic and which are almost entropic.

\begin{problem}
Determine which extremal rays of $\Gamma_{\!5}$ are entropic, and which are
almost entropic.
\end{problem}

The Shannon bound $\GN$ is a full-dimensional cone in the
$d=(2^n\m-1)$-di\-men\-sion\-al space where $|N|=n$, and has
$$
    f=n+{n\choose2}2^{n-2}
$$
bounding hyperplanes (facets, the number of basic Shannon inequalities). As every
extremal ray of $\GN$ is at the intersection of $d$ or more bounding hyperplanes,
$\GN$ has no more than $f \choose d$ extremal rays.
For $n\le 5$ this upper bound is significantly larger
than the actual value.

\begin{problem}
Give reasonable lower and upper bounds on the number of extremal rays in
$\GN$ (see Table \ref{table:1}).
\end{problem}

Linear polymatroids allow the extraction of common information (Table
\ref{table:2}) and are closed for the Copy Lemma (Lemma \ref{lemma:copy}).

\begin{problem}
Does the Copy Lemma have a consequence for linear polymatroids (a rank
inequality) that cannot be derived from the extractability of CI?
\end{problem}

\begin{problem}
Is the class of linear polymatroids closed for \MEM? (see Table
\ref{table:3})
\end{problem}

An affirmative answer to this question does not require that the maximum
entropy distribution be linear. It only requires the existence of a
linear polymatroid that satisfies the independence statements stated in
Lemma \ref{lemma:MEM} and keeps the dimension of the subspaces specified in
$\mathcal F$.

The iterated Copy Lemma has been applied successfully to generate many
non-Shannon inequalities for \emph{four} variables \cite{multiobj,DFZ11}.

\begin{problem}
Start a systematic numerical exploration of the iterated Copy Lemma for five
variables.
\end{problem}

There are many open problems concerning the Maximum Entropy Method. The
trivial attempt to show that \MEM{} can simulate the iterated Copy Lemma
failed spectacularly by Claim \ref{claim:no4}.

\begin{problem}
By Claim \ref{claim:equiv2} the Copy Lemma is equivalent to a special case of
\MEM, while \MEM{} can prove inequalities that are not consequences of a
single application of the Copy Lemma. Compare \MEM{} to the iterated Copy
Lemma.
\begin{itemize}\setlength\itemsep{0pt}
\item[(i)]
Is \MEM{} actually stronger than the iterated Copy Lemma? In other words, can a
single \MEM{} (or \GMEM) prove an entropic inequality that cannot be proved
by an iterated Copy Lemma?

\item[(ii)]
Conversely, is the iterated Copy Lemma stronger than \MEM{}?
\end{itemize}
\end{problem}

\section*{Funding}
\addcontentsline{toc}{section}{Funding}

The research reported in this paper was partially supported by the ERC 
Advanced Grant ERMiD.


\section*{Appendix}\hypertarget{Appendix}{}
\addcontentsline{toc}{section}{Appendix}

The formal proof of Theorems \ref{thm:AK} and \ref{thm:princ} follows the
general idea discussed in Section \ref{subsec:fmpe}. In general terms both
theorems state that some piecewise linear and continuous transformation
$\Phi(\alpha,f)$ (which also depends on a fixed subset $Z$ of the base set
$N$) maps almost entropic points $f$ into almost entropic points for every
non-negative value of $\alpha$. Since the map $\Phi$ is continuous and
homogeneous in the sense that for every $\lambda\ge 0$ we have $\Phi(
\lambda\alpha, \lambda f)=\lambda\Phi(\alpha,f)$,
it suffices to prove the theorems for
a collection of ``special'' aent elements whose multiples are
dense in $\clGa N$. Indeed, if $g$ is such a ``special'' function 
and $\Phi(\alpha',g)$ is aent for every $\alpha'\ge 0$, then
$$
   \Phi(\alpha,\lambda g) = \lambda \Phi({\textstyle\frac{\alpha}
{\lambda}}, g)
$$
is also aent for every non-negative $\alpha$ and $\lambda$ (since $\clGa N$
is closed for multiplication). By assumption, any $f\in\clGa N$ can be
approximated by a sequence of multiples of special elements, say $\lim \lambda_i
g_i = f$. Continuity of $\Phi$ gives $\lim \Phi(\alpha,\lambda_i g_i) =
\Phi(\alpha,f)$. Thus $\Phi(\alpha,f)$ is a limit of aent points, therefore
it is also aent, as required.

For these ``special'' elements we will use entropy functions arising from
group-based distributions; their multiples form a dense subset by Theorem
\ref{thm:group}. So let $\xi$ be a group-based distribution on $N$, its
entropy profile be $g$, and fix both $\alpha\ge 0$ and $Z\subseteq N$.
Choose $n$ to be a large integer, and let $\xi^n$ be the distribution
consisting of $n$ i.i.d.~copies of $\xi$. By Claim \ref{claim:qu}, $\xi^n$
is also a group-based distribution on $N$ (thus quasi-uniform), and its
entropy profile is $n\cdot g$. The random constructions described in Section
\ref{subsec:fmpe} will be performed using the distribution $\xi^n$
(guaranteeing ``large enough'' alphabets) and the selection value will be $n
\tsp \alpha$. The following bound on the distance between the entropy
profile of the generated distribution $\eta$ and the required value
$\Phi(n\tsp\alpha, n\tsp g)$ suffices to conclude both Theorem \ref{thm:AK}
and Theorem \ref{thm:princ}.

\begin{claim}\label{claim:ax-bound}
There is a constant $c$, depending on the distribution $\xi$ only, so that
whenever $n>c$ then with positive probability the generated distribution $\eta$
satisfies
\begin{equation}\label{eq:ap1}
     \big| \Phi(n\tsp\alpha,n\tsp g)(A) - \H_\eta(A)\big| \le 1+\log n
\end{equation}
for all non-empty subsets $A$.
\end{claim}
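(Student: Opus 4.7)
The plan is to make rigorous the heuristic of Section \ref{subsec:fmpe}. Working with $\xi^n$ (which by Claim \ref{claim:qu} is quasi-uniform with entropy profile $n\cdot g$), we perform the relevant random operation: in the thinning case selecting $2^{n\alpha}$ columns of the matrix $\mathbb{M}_A$ (with rows labeled by the support of $\xi^n_A$ and columns by the support of $\xi^n_Z$) independently and uniformly, and in the compressing case assigning each column to one of $2^{n\alpha}$ buckets uniformly at random. The resulting distribution $\eta$ is examined simultaneously for every non-empty $A\subseteq N$, using the same random choice on $\mathcal{X}_Z^n$, which is oblivious to $A$.

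For a fixed $A$, quasi-uniformity tells us each non-zero entry of $\mathbb{M}_A$ equals $2^{-nH(AZ)}$, each row contains exactly $2^{nH(Z\|A)}$ non-zero entries, and each column contains exactly $2^{nH(A\|Z)}$ of them. After the random construction, the number of non-zero entries surviving (or landing in one bucket) in a given row is a sum of indicator variables with expectation $2^{n(\alpha-I(A,Z))}$ in the thinning case and $2^{n(\alpha - H(Z\|A))}$ for compressing. A Chernoff-type concentration bound (or second-moment method, since the indicators are weakly correlated through sampling without replacement or balls-in-bins) shows that the number of non-zero entries in each row is within a multiplicative factor $1\pm n^{-1/2}$ of its expectation, except for a vanishing fraction of rows. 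Converting row statistics to entropy via the definitions of the operations gives $\H_\eta(A)$ equal to the target $\Phi(n\alpha, n\cdot g)(A)$ up to an additive error arising from (i) the $\log_2$ of the $(1\pm n^{-1/2})$ factor, contributing $O(1)$, and (ii) the fraction of ``bad'' rows or buckets, whose contribution to entropy is bounded by $\log_2$ of their count, i.e. $O(\log n)$.

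To pass from ``for each $A$ with high probability'' to ``with positive probability for all $A$ simultaneously'', I would apply a union bound over the $2^{|N|}-1$ subsets of $N$. Since $|N|$ is fixed and the failure probability for each $A$ can be made $o(1)$ as $n\to\infty$ with concentration parameters tuned so that the error stays within $1+\log n$, the union bound still leaves positive success probability once $n$ exceeds some threshold $c$ depending only on $|N|$ and on the alphabet sizes of $\xi$. Homogeneity and continuity of $\Phi$, together with Claim \ref{claim:addition} and Theorem \ref{thm:group}, then complete the reduction to group-based $g$ described at the start of the appendix.

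The main obstacle lies near the breakpoint of the piecewise linear $\Phi$, where the two arguments of the $\min$ are nearly equal, i.e. $n\alpha \approx nI(A,Z)$ in the thinning case. Here the expected count of non-zero entries per row is of order $1$, so concentration fails and the row is instead governed by a Poisson-like limit; the resulting entropy is not sharply concentrated but \emph{is} sandwiched between the two branches of the $\min$ up to an additive $O(\log n)$ term, which is precisely what the statement allows. Similarly for compressing, one must handle buckets whose expected load is close to $1$. Carefully partitioning subsets $A$ into those safely above, safely below, or near each threshold, and estimating the contribution of boundary cases separately, is the technical heart of the argument, and is what dictates the $1+\log n$ error bound in (\ref{eq:ap1}).
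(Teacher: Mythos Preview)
Your outline is in the right ballpark but takes a harder road than the paper and has a soft spot in the handling of ``bad'' rows.

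The paper's proof never attempts two-sided concentration of the per-row counts. Instead it observes that the upper bound on $\H_\eta(A)$ (or $\H_\eta(Az')$) is \emph{free}: it follows deterministically from the support size, since $\eta_A$ can take at most $\min\{|\mathcal A|,\exp[n(\alpha+\H(A\|Z))]\}$ values. Only the \emph{lower} bound on entropy requires any probabilistic work, and for that the paper uses the elementary fact (Lemma~\ref{ax:aux}) that if every value has probability at most $p$ then the entropy is at least $-\log p$. So the only random estimate needed is an \emph{upper-tail} Chernoff bound on the row sums $\sum_k\zeta^a_k$, applied simultaneously to all rows $a$ (and all subsets $A$) via the union bound. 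Choosing the cutoff as $\log M=n\max\{\alpha-\I(A,Z),0\}+(1+\log n)$ handles all three regimes of $\alpha$ relative to $\I(A,Z)$ at once: the Chernoff upper tail is valid regardless of whether the expectation $\mu$ is large, order one, or exponentially small, so no breakpoint analysis is needed.

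Your plan instead asks for concentration of row counts within a $1\pm n^{-1/2}$ factor, then patches the regime near the breakpoint with a Poisson heuristic. Two concerns: first, this is unnecessary machinery, since you never need the lower tail. Second, and more importantly, saying that a ``vanishing fraction of rows'' are bad does not by itself control the entropy from below. A single row with anomalously large mass can already depress $\H_\eta(A)$ by an amount unrelated to how \emph{many} bad rows there are; what you need is a bound on the \emph{maximum} row probability, which is exactly what the paper establishes directly. If you fix this (bound the max rather than the typical row), your argument collapses to the paper's, and the breakpoint difficulties disappear.
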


Indeed, dividing (\ref{eq:ap1}) by $n$ shows that there are aent
points arbitrarily close to $\Phi(\alpha,g)$ (since $\frac1n\H_\eta$, being
a multiple of an entropic vector, is aent). Consequently $\Phi(\alpha,g)$
is aent as well, from which both Theorems follow.

The rest of this Section is devoted to the proof of Claim
\ref{claim:ax-bound} for both random constructions. The
probability of the event $e$ will be denoted by $\prob(e)$; $\E(\zeta)$
denotes the expected value of the random variable $\zeta$, and the entropy
$\H$ is defined using natural logarithm.

First, we repeat the setting from Section \ref{subsec:fmpe}. $Z\subset N$ is
the fixed subset as specified in Theorems \ref{thm:AK} and \ref{thm:princ},
while $A\subseteq N$ is arbitrary. The quasi-uniform distribution is
$\{\xi^n_i:i\in N\}$; $\mathcal Z$ is the support of $\xi^n_Z$ (all elements
of $\mathcal Z$ have equal probability), and, similarly, $\mathcal A$ is the
support of $\xi^n_A$. Clearly, $\H_{\xi^n}(J)= n\H_\xi(J)$ for all
$J\subseteq N$; in the rest of this section $\H(J)$ abbreviates $\H_\xi(J)$.
Since the entropy is based on natural logarithm, we have
$$
   |\mathcal Z| = \exp(n\H(Z)), ~~~\mbox{ and }~~~ 
   |\mathcal A| = \exp(n\H(A)).
$$
Columns of the matrix $\mathbb M$ are labeled by elements of $\mathcal Z$,
its rows are labeled by elements of $\mathcal A$, and the entry at $\mathbb
M[a,z]$ is the probability $\Prob(\xi^n_A=a$, $\xi^n_Z=z)$. From
quasi-uniformity it follows that each matrix entry is either zero or equals
$\exp(-n\H(AZ))$. Probabilities in a row add up to $\exp(-n\H(A))$, and in a
column add up to $\exp(-n\H(Z))$. Thus there are exactly $\exp(n\H(Z\|A))$
non-zero elements in each row, and exactly $\exp(n\H(A\|Z))$ non-zero elements
in each column.

\smallskip

The operation \emph{thinning}, leading to Theorem \ref{thm:AK}, creates the
distribution $\eta$ as follows: choose a column of the matrix $\mathbb M$
randomly and independently $\exp(n\alpha)$ times (the same column can
be chosen multiple times). Form a new matrix $\mathbb N$ from the chosen
columns, which, after normalization, specifies the probability masses of the
distribution $\eta$. The new variable $z'$ will take values as the columns
of $\mathbb N$, and the variable corresponding to $A$ will take the
rows of $\mathbb N$. Observe that $\mathbb N$ contains
$\exp[n(\alpha+\H(A\|Z))]$ non-zero elements, each equal to
$\exp(-n\H(AZ))$.

\begin{claim}\label{claim:AK-1}
For $n$ large enough the following inequality holds with
probability exponentially close to $1$
\begin{equation}\label{eq:AK-1}
 \H_\eta(A) \le  n\tsp\min\{\H(A),\alpha+\H(A\|Z)\} \le \H_\eta(A)+
(1+\log n).
\end{equation}
\end{claim}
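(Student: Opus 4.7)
The plan is to analyze the random matrix $\mathbb{N}$ through the counts $N_a$, where $N_a$ is the number of sampled columns that have a nonzero entry in row $a\in\mathcal A$ of $\mathbb{M}$. Quasi-uniformity of $\xi^n$ forces the $A$-marginal of $\eta$ to be $q_a = N_a/M_A$, with $M_A := \exp(n(\alpha + \H(A\|Z)))$ and $\sum_a N_a = M_A$, so
\[
\H_\eta(A) \;=\; \log M_A \,-\, \frac{1}{M_A}\sum_a N_a \log N_a.
\]
Each $N_a$ is binomial with mean $\mu_A := \exp(n(\alpha - \I(A,Z)))$, and the upper bound $\H_\eta(A) \le n\min\{\H(A), \alpha + \H(A\|Z)\}$ is immediate because the support of $q$ has size at most $\min\{|\mathcal A|, M_A\} = \exp(n\min\{\H(A),\alpha+\H(A\|Z)\})$.

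For the lower bound I would split according to whether $\mu_A < 1$ or $\mu_A \ge 1$. If $\mu_A < 1$ (so the minimum equals $\alpha + \H(A\|Z)$), the Poisson-style estimate $\Prob(N_a \ge k) \le \mu_A^k/k! \le 1/k!$ together with a union bound over the $|\mathcal A|$ rows gives $\max_a N_a \le \lfloor en \rfloor$ with probability tending to $1$ as $n\to\infty$. Then $(1/M_A)\sum_a N_a\log N_a \le \log(en) = 1+\log n$, so $\H_\eta(A) \ge n(\alpha + \H(A\|Z)) - (1+\log n)$.

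The delicate case is $\mu_A \ge 1$ (minimum equals $\H(A)$), where the $N_a$'s fail to be tightly concentrated near $\mu_A \approx 1$, so pointwise control of $\max_a N_a$ is unavailable. Instead, writing the gap as a KL divergence to the uniform $u$ on $\mathcal A$ and applying Jensen's inequality yields
\[
n\H(A) - \H_\eta(A) \;=\; D(q\|u) \;=\; \sum_a q_a \log\frac{N_a}{\mu_A} \;\le\; \log\!\left(\frac{1}{M_A\mu_A}\sum_a N_a^2\right).
\]
Using $\E N_a^2 \le \mu_A(1+\mu_A)$ and $|\mathcal A| = M_A/\mu_A$, the expectation of the bracket is at most $(1+\mu_A)/\mu_A \le 2$; a single Markov inequality then pins it below $2n$ with probability at least $1 - 1/n$, giving $\log(2n) \le 1 + \log n$, as required.

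A final union bound over the $2^{|N|}-1$ non-empty subsets $A$, whose number is independent of $n$, shows that (\ref{eq:AK-1}) holds for every $A$ simultaneously with positive probability once $n$ exceeds a constant depending only on $\xi$. The main obstacle is precisely the borderline regime $\mu_A \approx 1$: standard Chernoff estimates degrade there, and it is the Jensen/second-moment combination that circumvents pointwise concentration and produces exactly the $1 + \log n$ slack the statement demands.
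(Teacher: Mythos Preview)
Your decomposition $q_a=N_a/M_A$ and the support-size upper bound match the paper exactly, and your Jensen/second-moment treatment of the $\mu_A\ge 1$ regime is a genuinely different and rather elegant alternative to what the paper does. However, the Markov step there yields only probability $1-1/n$, not ``exponentially close to~$1$'' as the claim asserts. You acknowledge this yourself in the final paragraph (``with positive probability''), and that weaker conclusion does suffice for Claim~\ref{claim:ax-bound} and hence for Theorem~\ref{thm:AK}; but Claim~\ref{claim:AK-1} as stated is not established.

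The paper avoids the case split entirely. It bounds $\max_a N_a$ in both regimes by a single application of the multiplicative Chernoff bound (Lemma~\ref{chernoff}) with threshold $M$ satisfying $\log M = n\max\{\alpha-\I(A,Z),0\}+(1+\log n)$, i.e.\ $M=en\cdot\max\{\mu_A,1\}$. Since $\log(M/\mu_A)\ge 1+\log n$ in every case, the Chernoff exponent $M-\mu_A-M\log(M/\mu_A)$ is at most $-M\log n\le -n\log n$, which survives the union bound over $\exp(n\H(A))$ rows with room to spare. Your remark that ``standard Chernoff estimates degrade'' near $\mu_A\approx 1$ applies to the small-deviation form $\exp(-\delta^2\mu/3)$; the full multiplicative form used here is designed precisely for large relative deviations and handles all $\mu_A>0$ uniformly. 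To upgrade your route to exponential probability you would need a concentration inequality for $\sum_a N_a^2$ rather than Markov on its expectation---feasible, but then no longer simpler than the paper's direct Chernoff argument.
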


Instead of proving the estimate (\ref{eq:AK-1}) for each $A$ simultaneously,
it suffices to prove it for each $A$ separately. The reason is that the
estimate holds for each $A$ with overwhelming probability and there are only
a fixed number of subsets to consider. Consequently the same estimate holds 
for all subsets simultaneously with overwhelming probability as well.

The proof of Claim \ref{claim:AK-1} uses an axiliary lemma and a
Chernoff-type bound. The latter is stated without proof as Lemma
\ref{chernoff}. For a proof of the Chernoff bound and additional information
please consult \cite{chernoff}.

\begin{lemma}\label{ax:aux}
Suppose that the discrete random variable $\xi$ takes every value with
probability at most $p$. Then $\H(\xi)\ge -\log p$.
\end{lemma}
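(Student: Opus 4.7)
The plan is to prove this directly from the definition of entropy, using only monotonicity of the logarithm. Writing the probability mass function of $\xi$ as $\{p_i\}$ where $p_i \le p$ for each $i$ in the support, I would expand
$$
    \H(\xi) = \sum_i p_i \log(1/p_i),
$$
and then note that the hypothesis $p_i \le p$ implies $1/p_i \ge 1/p$, hence $\log(1/p_i) \ge \log(1/p) = -\log p$ for every $i$ in the support. Substituting this pointwise bound into each term of the sum and pulling the constant out gives $\H(\xi) \ge (-\log p) \sum_i p_i = -\log p$, since the probabilities sum to one.

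There is essentially no obstacle here; the argument is a one-line manipulation and the only subtlety is the harmless convention $0 \cdot \log 0 = 0$, which lets us restrict attention to the support of $\xi$ without altering the entropy. The same bound appears as a standard fact in information theory (it is the observation that uniform distributions maximize entropy on a given support, specialized to the support size $\lceil 1/p \rceil$), so I do not expect any hidden difficulty.
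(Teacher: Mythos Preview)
Your proof is correct and essentially identical to the paper's: both use monotonicity of $\log$ to bound $-\log p_i \ge -\log p$ term by term, then sum using $\sum_i p_i = 1$. The only difference is cosmetic (you write $\log(1/p_i)$ where the paper writes $-\log p_i$).
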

\begin{proof}
The $\log$ function is strictly increasing, thus $p_i\le p$ implies $-\log
p_i\ge -\log p$. Then
$$
  \H(\xi)=\ssum_i -p_i\log p_i \ge \ssum_i -p_i\log p = -\log p,
$$
as claimed.
\end{proof}

\begin{lemma}[Chernoff bound]\label{chernoff}
Let $\zeta_k$ be independent $0\,$--$\,1$ random variables with expected
value $\E(\sum_k
\zeta_k)=\mu$. The following inequality holds for all $0<\eps$:
\begin{equation}\label{eq:chernoff-1}
\prob\mkern-2mu\big(\,{\textstyle\sum_k} \zeta_k \pge
(1\m+\eps)\mu\big) 
  \le \exp[\,\mu\eps-\mu(1\m+\eps)\log(1\m+\eps)].
\end{equation}
\end{lemma}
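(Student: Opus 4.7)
The plan is to follow the standard exponential-moment (Cram\'er--Chernoff) argument: apply Markov's inequality to the random variable $\exp(tS)$ where $S = \sum_k \zeta_k$ and $t > 0$ is a parameter to be optimized, use independence of the $\zeta_k$ to factor the moment-generating function, bound each factor using the convexity inequality $1 + x \le e^x$, and finally choose $t$ to make the resulting upper bound as tight as possible.

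Concretely, for any $t > 0$, Markov's inequality gives
$$
\prob\bigl(S \ge (1+\eps)\mu\bigr) = \prob\bigl(e^{tS} \ge e^{t(1+\eps)\mu}\bigr) \le e^{-t(1+\eps)\mu}\,\E(e^{tS}).
$$
Since the $\zeta_k$ are independent, $\E(e^{tS}) = \prod_k \E(e^{t\zeta_k})$. Let $p_k = \E(\zeta_k)$, so $\mu = \sum_k p_k$. Because $\zeta_k$ takes only the values $0$ and $1$,
$$
\E(e^{t\zeta_k}) = 1 - p_k + p_k e^t = 1 + p_k(e^t - 1) \le \exp\bigl(p_k(e^t-1)\bigr),
$$
using $1 + x \le e^x$ valid for every real $x$. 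Multiplying over $k$ yields $\E(e^{tS}) \le \exp\bigl(\mu(e^t - 1)\bigr)$, so
$$
\prob\bigl(S \ge (1+\eps)\mu\bigr) \le \exp\bigl(\mu(e^t - 1) - t(1+\eps)\mu\bigr).
$$

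The remaining step is to minimize the exponent $\mu(e^t - 1) - t(1+\eps)\mu$ over $t > 0$. Differentiating with respect to $t$ and setting the derivative to zero gives $\mu e^t = (1+\eps)\mu$, i.e., $t = \log(1+\eps)$, which is positive because $\eps > 0$. Substituting this optimal $t$ into the exponent yields $\mu\bigl((1+\eps) - 1\bigr) - (1+\eps)\mu\log(1+\eps) = \mu\eps - \mu(1+\eps)\log(1+\eps)$, establishing the claimed bound.

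None of the steps presents a serious obstacle; the only mild subtleties are to verify that the optimizing $t$ is indeed positive (so that Markov's inequality applies in the intended direction) and that the factorization of the moment-generating function uses only pairwise independence in its multiplicative form, which is given. The bound $1 + x \le e^x$ is the single analytic input, and the optimization over $t$ is a one-variable calculus exercise that happens to cancel out cleanly to produce the stated Kullback--Leibler-type exponent.
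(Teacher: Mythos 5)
Your argument is the standard Cram\'er--Chernoff exponential-moment derivation, and it is correct: Markov applied to $e^{tS}$, the mgf factorization, the bound $1+p_k(e^t-1)\le e^{p_k(e^t-1)}$, and optimization at $t=\log(1+\eps)$ together give exactly the stated exponent $\mu\eps-\mu(1+\eps)\log(1+\eps)$. For the record, the paper does not prove this lemma at all; it states it without proof and points the reader to \cite{chernoff}, so there is no internal argument to compare yours against.

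One remark in your closing paragraph is wrong and worth correcting, even though it does not affect the validity of the proof: the factorization $\E(e^{tS})=\prod_k\E(e^{t\zeta_k})$ requires full (mutual) independence of the $\zeta_k$, not merely pairwise independence. Pairwise independence only guarantees $\E(X_iX_j)=\E(X_i)\E(X_j)$ for two factors and says nothing about products of three or more, which is exactly what expanding $e^{tS}=\prod_k e^{t\zeta_k}$ involves. Since the lemma's hypothesis says ``independent'' in the usual (mutual) sense, the step you take is legitimate, but the claim that pairwise independence would suffice is false.
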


\begin{proof}[Proof of Claim \ref{claim:AK-1}]

First remark that the matrix $\mathbb N$ has $|\mathcal A|=\exp(n\H(A))$
rows, therefore $\H_\eta(A)\le n\H(A)$. Also, the matrix has
$\exp[n(\alpha+\H(A\|Z))]$ non-zero elements, thus
$\eta_A$ cannot take more than that many values with non-zero probability,
implying $\H_\eta(A)\le n(\alpha+\H(A\|Z))$. This proves the lower
bound in (\ref{eq:AK-1}); it holds with probability 1.

For the upper bound we will use Lemma \ref{ax:aux} to estimate
$\H_\eta(A)$ from below, which requires to estimate the probability of each
row in the matrix $\mathbb N$. To this end pick a row $a\in\mathcal A$, and
for $0<k<\exp(n\alpha)$ let $\zeta^a_k$ be the indicator whether the column
chosen at the $k$-th step intersects the row $a$ in a non-zero value.
Since there are $\exp(n\tsp\H(Z))$ columns in $\mathbb M$, and there are
$\exp(n\tsp\H(Z\|A))$ non-zero elements in a row of $\mathbb M$, the expected
value of $\zeta^a_k$ is
$$
   \E(\zeta^a_k)=\frac{\exp(n\tsp\H(Z\|A))}{\exp(n\tsp\H(Z))} =\exp(-n\tsp\I(A,Z)),
$$
independently of $a$ and $k$. Since $\mathbb N$ has $\exp[n( \alpha+
\H(A\|Z))]$ many non-zero elements of the same value, row $a$ has
probability
$$
   p_a = \frac{\ssum_k \zeta^a_k}{\exp[n(\alpha+\H(A\|Z))]}.
$$
If, for some number $M$, all enumerators are smaller than $M$, then, by
Lemma \ref{ax:aux}, we have
$$
   \H_\eta(A) \ge -\log M + n\tsp(\alpha+\H(A\|Z)).
$$
A quick calculation shows that choosing
$$
   \log M = n\tsp\max\{\alpha-\I(A,Z), 0\} + (1+\log n)
$$
gives the required upper bound in (\ref{eq:AK-1}). To finish the proof we
need to show that, with probability exponentially close to $1$, all sums
$\sum_k\zeta^a_k$ are below $M$. To this end we will use the Chernoff bound
(\ref{eq:chernoff-1}) with $M=(1\m+\eps)\mu$ where
$$
    \mu=\E(\ssum_k\zeta^a_k)=\exp(n\alpha)\tsp\E(\zeta^a_k)
        = \exp[n(\alpha-\I(A,Z))].
$$
The right hand size of (\ref{eq:chernoff-1}) estimates the probability of
a single wrong event; it must be exponentially small even after multiplying
it by the number of rows in $\mathbb N$, which happens to be $\exp(n\H(A))$.
Taking the logarithm, the probability of the wrong event can be estimated as
\begin{align*}
   & \mu\eps-\mu(1\m+\eps)\log(1\m+\eps) + n\H(A) = {}\\
   &~~~~=  M-\mu - M\log(M/\mu) + n\H(A) \le {}\\
   &~~~~\le M-M(1+\log n) + n\H(A) \le {}\\
   &~~~~\le n\big(\H(A)-\log n\big),
\end{align*}
since $\log M-\log \mu \ge 1+\log n$, and $M>n$ by the choice of $M$. This
value is clearly less than $-n$ for $n$ large enough, thus the error
probability is exponentially small, as required.
\end{proof}

\smallskip

The operation \emph{compressing} creates a new variable $z'$ in alphabet
$\mathcal Z'$ with $\exp(\alpha n)$ different values, which we also call
\emph{buckets}. Each column of the matrix $\mathbb M$ is placed randomly and
independently into one of the buckets, creating the new matrix $\mathbb N$
with $\mathcal Z'$ as columns and $\mathcal A$ as rows. The corresponding
distribution is $\eta$. Clearly, the marginal $\eta_A$ remains unchanged.
The following claim provides the missing part of the proof of Theorem
\ref{thm:princ}.

\begin{claim}\label{claim:princ-1}
For $n$ large enough, the following inequality holds with probability 
exponentially close to $1$:
\begin{equation}\label{eq:princ-1}
 \H_\eta(Az') \le n\tsp\min\{\H(A)+\alpha,\H(AZ)\} \le
    \H_\eta(Az')+ (1+\log n).
\end{equation}
\end{claim}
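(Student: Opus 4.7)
The plan is to follow the template of the proof of Claim \ref{claim:AK-1}: establish the upper bound by two elementary counting/functional arguments, then get the lower bound by showing (via a Chernoff tail bound) that with overwhelming probability every entry of the compressed matrix $\mathbb N$ is only slightly larger than expected, and finally apply Lemma \ref{ax:aux} to the maximum atomic probability.

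For the upper bound (holding with probability $1$): the matrix $\mathbb N$ has $|\mathcal A|\cdot|\mathcal Z'|=\exp(n(\H(A)+\alpha))$ cells, so $\H_\eta(Az')\le n(\H(A)+\alpha)$; and since the random bucketing makes $z'$ a deterministic function of $Z$, we also have $\H_\eta(Az')\le \H_\eta(AZ)=n\H(AZ)$. Taking the minimum gives the first inequality of (\ref{eq:princ-1}).

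For the lower bound, write $\mathbb N[a,b]=\exp(-n\H(AZ))\cdot N_{a,b}$, where $N_{a,b}$ counts how many of the $\exp(n\H(Z\|A))$ nonzero entries in row $a$ of $\mathbb M$ were thrown into bucket $b$. Thus $N_{a,b}$ is a sum of independent $0\,$--$\,1$ indicators, each with expectation $\exp(-n\alpha)$, so $\mu:=\E(N_{a,b})=\exp(n(\H(Z\|A)-\alpha))$. Choosing the threshold
$$\log M \eqdef n\max\{\H(Z\|A)-\alpha,\,0\}+(1+\log n),$$
so that $M\ge \mu\cdot ne$, and writing $M=(1+\eps)\mu$, Lemma \ref{chernoff} yields
$$\prob(N_{a,b}\ge M)\le \exp\bigl(M-\mu-M\log(M/\mu)\bigr),$$
and because $\log(M/\mu)\ge 1+\log n$, this tail is, for $n$ large enough, smaller than $\exp(-n(\H(A)+\alpha+1))$. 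A union bound over the $\exp(n(\H(A)+\alpha))$ cells of $\mathbb N$ then shows that, with probability exponentially close to $1$, every entry satisfies $\mathbb N[a,b]< M\exp(-n\H(AZ))$. Applying Lemma \ref{ax:aux} with $p=M\exp(-n\H(AZ))$ gives
$$\H_\eta(Az')\ge n\H(AZ)-\log M = n\min\{\H(AZ),\H(A)+\alpha\}-(1+\log n),$$
as required.

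The main difficulty is calibrating $M$: it must be large enough that the Chernoff tail survives the union bound over exponentially many cells (of order $\exp(n(\H(A)+\alpha))$), yet small enough that the $\log M$ loss in the entropy estimate stays within the allowed slack $1+\log n$ above the natural term $n\max\{\H(Z\|A)-\alpha,0\}$. As in the thinning case, this works out precisely because the Chernoff tail decays doubly-exponentially in the subcritical regime $\alpha\ge \H(Z\|A)$ (where $\mu\le 1$) and super-polynomially elsewhere; as in Claim \ref{claim:AK-1}, since the estimate holds for each fixed $A$ with overwhelming probability, a final union bound over the constantly many subsets $A\subseteq N$ delivers (\ref{eq:princ-1}) simultaneously for all $A$.
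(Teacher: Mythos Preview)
Your proof is correct and follows essentially the same approach as the paper's: the upper bound via the two elementary counting arguments (cell count and $z'$ being a function of $Z$), the lower bound via Lemma~\ref{ax:aux} applied with the same threshold $\log M = n\max\{\H(Z\|A)-\alpha,0\}+(1+\log n)$, and the Chernoff tail plus union bound over all $(a,z')$ cells. Your union-bound count $\exp(n(\H(A)+\alpha))$ is in fact the correct one (the paper writes $n\H(AZ)$ in the exponent, which is harmless since both are $O(n)$ and dominated by the $n\log n$ term).
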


\begin{proof}

Since $Az'$ can take at most $|\mathcal A|\cdot e^{\alpha n}$ different
values, we have $\H_\eta(Az')\le n\tsp\H(A)+n\tsp \alpha$. Similarly, $Az'$
cannot take more values with positive probability than those present in the
matrix $\mathbb M$, which means $\H_\eta(Az') \le n\tsp\H(AZ)$. This
establishes the first inequality in (\ref{eq:princ-1}).

For the upper bound, similarly to the proof of Claim \ref{claim:AK-1}, we
use Lemma \ref{ax:aux} with an upper estimate on the size of the buckets. So let
$a\in \mathcal A$, $z'\in \mathcal Z'$, and $\zeta^{a,z'}_k$ be the
indicator value that the column $k\in \mathcal Z$ with a non-zero value at
row $a$ goes into the bucket $z'$. The probability that $\zeta^{a,z'} _k=1$
is $1/|\mathcal Z'|=\exp(-n\tsp\alpha)$, and the expected size of the bucket
$z'$ at $a$ is
$$
   \mu=\E(\ssum_k \zeta^{a,z'}_k) = \exp[n(\H(Z\|A)-\alpha)]
$$
as the matrix $\mathbb M$ has $\exp(n\tsp\H(Z\|A))$ many non-zero values in
row $a$. If $M$ is an upper bound on the bucket size at each row, then the
entropy of $\eta$ can be bounded as
$$
    \H_\eta(Az') \ge -\log M + n\tsp\H(AZ),
$$
as the total number of non-zero elements in $\mathbb M$ is $\exp(n\tsp
\H(AZ))$, and each goes to some bucket. Choosing
$$
    \log M = n\tsp \max\{\H(Z\|A)-\alpha, 0\} + (1+\log n)
$$
gives the required upper bound in (\ref{eq:princ-1}, so we only need to
check that, with probability exponentially close to $1$, all buckets at
all positions have size less than $M$. Since this number is
$\sum_k\zeta^{a,z'}_k$, using the Chernoff bound from 
Lemma \ref{chernoff} and taking logarithm, the probability that this 
is not the case for any $a$ or $z'$ can be estimated in the exponent as
\begin{align*}
   & \mu\eps-\mu(1\m+\eps)\log(1\m+\eps) + n\tsp\H(AZ) = {}\\
   &~~~~=  M-\mu - M\log(M/\mu) + n\tsp\H(AZ) \le {}\\
   &~~~~\le M-M(1+\log n) + n\tsp\H(AZ) \le {}\\
   &~~~~\le n\big(\H(AZ)-\log n\big),
\end{align*}
as $\log M-\log\mu\ge 1+\log n$ and $M>n$ by the choice of $M$. As this
is clearly smaller than $-n$ for $n$ large enough, Claim \ref{claim:princ-1}
is proved.
\end{proof}



\begin{thebibliography}{99}
\addcontentsline{toc}{section}{References}

\bibitem{Ahlswede.Korner} R.~Ahlswede, P.~G\'acs, J.~K\"orner (1976),
  Bounds on conditional probabilities with applications in multi-use
  communication.
  \emph{Zeitschrift f\"ur Wahrscheinlichkeitstheorie und Wewandta Gebiete}
   {\bf 34} 157--177.
%
\bibitem{B.B.F.P} M.~Bamiloshin, A.~Ben{-}Efraim, O.~Farr{\`{a}}s,
               C.~Padr{\'{o}} (2021),
  Common information, matroid representation, and secret sharing for
               matroid ports.
  \newblock\emph{Des. Codes Cryptogr.} {\bf 89} 143--166
%
\bibitem{Beimel} A.~Beimel (2011),
  Secret-sharing schemes: a survey.
  \newblock in: \emph{IWCC 2011, volume 6639 of LNCS}, 
  Springer, 11--46.
%
\bibitem{Beimel.Orlov} A.~Beimel, I.~Orlov (2011),
  Secret Sharing and Non-Shannon Information Inequalities.
  \emph{IEEE Trans.\ Information Theory} {\bf 57} 5634--5649.
%
\bibitem{mobius.inv}
  E. A. Bender and J. R. Goldman (1975),
  On the Applications of Mobius Inversion in Combinatorial Analysis.
  \emph{The American Mathematical Monthly} {\bf 82} 789--803
%
\bibitem{Boege.Gauss} T.~Boege (2023),
   Selfadhesivity in Gaussian conditional independence structures, 
  \emph{Proceedings of the 12th Workshop on Uncertainty Processing (WUPES ’22)} 
  and \emph{International Journal of Approximate Reasoning} {\bf 163} 
  \url{https://doi.org/10.1016/j.ijar.2023.109027}
%
\bibitem{Boege.Studeny}
  T.~Boege, J.~H.~Bolt and M.~Studený (2025),
  Self-adhesivity in lattices of abstract conditional independence models.
  \emph{Discrete Applied Mathematics} {\bf 361} 196--225.
%
\bibitem{convex-optim}
  S.~Boyd and L.~Vandenberghe (2004)
  \emph{Convex Optimization}.
   Cambridge University Press, Cambridge\hfill\allowbreak
  \url{https://doi.org/10.1017/CBO9780511804441}
%
\bibitem{Chan} T.H.~Chan (2003),
  Balanced information inequalities.
  \emph{IEEE Trans.\ Inform.\ Theory} {\bf 49} 3261--3267.
%
\bibitem{Chan.Yeung} T.H.~Chan and R.W.~Yeung (2002),
  On a relation between information inequalities and group theory.
   \emph{IEEE Trans.\ Information Theory} {\bf 48} 1992--1995.
%
\bibitem{cover-thomas} T.~Cover and J.~Thomas (2009),
  \emph{Elements of Information Theory}.
  John Wiley \& Sons, Inc.
  \url{https://doi.org/10.1002/047174882X}
%
\bibitem{multiobj}
  L.~Csirmaz (2016),
  Using multiobjective optimization to map the entropy region.
  \emph{Computational Optimization and Applications} {\bf 63} 45--67.
  \url{https://doi.org/10.1007/s10589-015-9760-6}
%
\bibitem{impossible}
  E.~P.~Csirmaz and L.~Csirmaz (2024),
  Enumerating Extremal Submodular Functions for n=6.
  \emph{Mathematics} {\bf13}(1), 97.
  \url{https://doi.org/10.3390/math13010097}
%
\bibitem{Csirmaz.oneadhesive}
   L.~Csirmaz (2020),
   One-adhesive polymatroids.
  \emph{Kybernetika} {\bf 56}(5) 886--902.
  \url{https://doi.org/10.14736/kyb-2020-5-0886}
%
\bibitem{Csirmaz.Ligeti} L.~Csirmaz, P.~Ligeti (2019)
  {Secret sharing on large girth graphs}.
  \emph{Cryptogr. Commun.} {\bf 11}(3) 339-410.
%
\bibitem{Csirmaz.book} L.~Csirmaz (2014),
  {Book inequalities}.
  \emph{IEEE Trans.\ Information Theory} {\bf 60} 6811--6818.
%
\bibitem{Csiszar.Korner} I.~Csisz\'ar, J.~K\"orner (1981)
  \emph{Information theory: coding theorems of discrete memoryless systems}.
  Akademia Kiado, New York, Budapest.
%
\bibitem{DFZ10}
  R.~Dougherty, C.~Freiling, K.~Zeger (2010),
  {Linear rank inequalities on five or more variables}
  \emph{arXiv} 0910.0284.
  \url{https://doi.org/10.48550/arXiv.0910.0284}
%
\bibitem{DFZ11} R.~Dougherty, C.~Freiling, K.~Zeger (2011)
  {Non-Shannon information inequalities in four random variables}.
  \emph{arXiv} 1104.3602.
  \url{https://doi.org/10.48550/arXiv.1104.3602}
%
\bibitem{Farras.Kaced} O.~Farr{\`{a}}s, T.~Kaced,
   S.~Mart{\'{\i}}n, C.~Padr{\'{o}} (2020),
   Improving the Linear Programming Technique in the Search for Lower
   Bounds in Secret Sharing
   \emph{IEEE Trans. Inf. Theory} {\bf 66} 7088--7100.
%
\bibitem{fujishige}  S.~Fujishige (1978), 
  Polymatroidal dependence structure of a set of random variables. 
  \emph{Information and Control}, {\bf39}(1) 55--72.
%
\bibitem{fukuda-prodon}
  K.~Fukuda, A.~Prodon (1996),
  Double description method revisited,
  \newblock \emph{Combinatorics and Computer Science} (Brest, 1995)
  \newblock {\em LNCS} vol 1120 (1996), Springer, Berlin, 1996,
  pp. 91-111.
%
\bibitem{gacs-korner}
  P.~G\'acs, J.~K\"orner (1973), 
  Common information is far less than mutual information,
  \emph{Problems of Control and Information Theory}, {\bf 2}(2), pp 149--162
%
\bibitem{geelen}
  J.~Geelen,Hao Sun (2021),
  A proof of Rado's Theorem via principal extension,
  \emph{Advances in Applied Mathematics}, vol 126, 2021
  \hfill\allowbreak
  \url{https://doi.org/10.1016/j.aam.2020.102013}
%
\bibitem{maxentp}
  S.~Guiasu, A.~Shenitzer (1985),
  The principle of maximum entropy,
  \emph{The mathematical intelligencer} {\bf 7} 42--48.
%
\bibitem{G.Rom} E.~Gürpınar, A.~Romashchenko (2019)
  How to Use Undiscovered Information Inequalities: Direct Applications of the Copy Lemma.
  \emph{Proc. IEEE International Symposium on Information Theory (ISIT)}. 
  Paris, France, July 8-12, 1377-1381.
%
\bibitem{hamel-loehne-rudloff}
  A.~H.~Hamel, A.~L\"ohne, B.~Rudloff (2013),
  \newblock Benson type algorithms for linear vector optimization and
  applications.
  \newblock {\em J. Glob. Optim}  {\bf 59} 811--836.
%
\bibitem{helgason}
  T.~Helgason (1974),
   Aspects of the theory of hypermatroids,
  In: Berge C., Ray-Chaudhuri D. (eds) \emph{Hypergraph Seminar}, Lecture Notes in
  Mathematics, vol 411. Springer, Berlin, Heidelberg
%
\bibitem{jaynes}
  E.~T.~Jaynes (1957),
  Information Theory and Statistical Mechanics.
  \emph{Physical Reviews}, {\bf 106}, 620-–630
%
\bibitem{Kaced} T.~Kaced (2013)
  Equivalence of two proof techniques for non-Shannon-type inequalities.
  \emph{Proceedings of the 2013 {IEEE} ISIT}, Istanbul, Turkey, July 7-12, 236--240.
%
\bibitem{2faces} Shaocheng Liu, Qi Chen (2025),
  Entropy Functions on Two-Dimensional Faces of Polymatroidal Region of
  Degree Four: Part I: Problem Formulation and Graph-Coloring Approach.
  \emph{arXiv} 2305.06250
  \url{https://doi.org/10.48550/arXiv.2305.06250}
%
\bibitem{lovasz} L.~Lov\'asz (1982),
  Submodular functions and convexity. 
  In: \emph{Mathematical Programming – The State
  of the Art} (A. Bachem, M. Gr¨otchel and B. Korte, eds.) 
  Springer-Verlag, Berlin, 234--257.
%
\bibitem{MMRV} K.~Makarychev, Yu.~Makarychev, A.~Romashchenko,
  N.~Vereshchagin (2002)
  A new class of non-Shannon-type inequalities for entropies.
  \emph{Comm. in Inf. and Systems} {\bf 2}(2) 147--166.
%
\bibitem{M.partition} F.~Mat\'u\v{s} (1999)
  Matroid representations by partitions.
  \emph{Discrete Mathematics}, {\bf 203}(1) 169--194.
  \url{https://doi.org/10.1016/S0012-365X(99)00004-7}
%
\bibitem{Ma.Stud} F.~Mat\'u\v{s}, M.~Studen\'y (1995)
  Conditional Independences among Four Random Variables I
  \emph{Combinatorics, Probability and Computing}, {\bf 4}(3) 269--278.
  \url{https://doi.org/10.1017/S0963548300001644}
%
\bibitem{M.piece}  F.~Mat\'{u}\v{s} (2006)
   Piecewise linear conditional information inequality.
   \emph{IEEE Trans.\ Inform.\ Theory} {\bf 52} 236-238.
%
\bibitem{M.twocon} F.~Mat\'{u}\v{s} (2007)
   Two constructions on limits of entropy functions.
   \emph{IEEE Trans. Inform. Theory} {\bf 53} 320--330.
%
\bibitem{M.infinf}
   F.~Mat\'{u}\v{s} (2007) 
   Infinitely many information inequalities.
   \emph{Proceedings IEEE ISIT 2007}, Nice, France, 41--44.
%
\bibitem{M.fmadhe}
  F.~Mat\'u\v{s} (2007)
  Adhesivity of polymatroids,
  \emph{Discrete Mathematics} {\bf 307} 2464--2477.
%
\bibitem{entreg} F.~Mat\'{u}\v{s}, L.~Csirmaz (2016)
   Entropy region and convolution.
   \emph{IEEE Trans.\ Inform.\ Theory} {\bf 62} 6007--6018.
%
\bibitem{fmpe}
  F.~Mat\'u\v s (2018)
  Classes of Matroids Closed Under Minors and Principal Extensions,
  \emph{Combinatorica}, {\bf 38}(4) 935--954.
  \hfill\allowbreak
  \url{https://doi.org/10.1007/s00493-017-3534-y}
%
\bibitem{chernoff}
  W.~Mulzer (2019).
  Five proofs of Chernoff's bound with applications.
  \emph{arXiv} 1801.03365. 
  \url{https://doi.org/10.48550/arXiv.1801.03365}
%
\bibitem{padro}
  C.~Padró, L.~Vázquez, An Yang (2013),
  Finding lower bounds on the complexity of secret sharing schemes by linear
programming.
  \emph{Discrete Applied Mathematics} {\bf161}(7) 1072--1084
  \url{https://doi.org/10.1016/j.dam.2012.10.020}
%
\bibitem{oxley}
  J. Oxley (2011),
  \emph{Matroid Theory} (second edition), 
  Oxford University Press, New York (2011)
%
\bibitem{pena}
  V.~Peña-Macias (2023),
  Access structures for finding characteristic-de\-pen\-dent linear rank inequalities.
  \emph{Kybernetika} {\bf 59}(2) 198--208
  \url{http://doi.org/10.14736/kyb-2023-2-0198}
%
\bibitem{ITIP}
  N.~Rethnakar, S.~Diggavi, T.~Gläßle, E.~Perron, R.~Pulikkoonattu, 
  R.W.~Yeung, Y.~Yan (2020), 
  \emph{Online X Information Theoretic Inequalities Prover} oXitip. 
  Available at \url{https://www.oxitip.com}
%
\bibitem{riis}
  S.~Riis (2007),
  Graph Entropy, Network Coding and Guessing games.
  \emph{arXiv} 0711.4175.
  \url{https://doi.org/10.48550/arXiv.0711.4175}
%
\bibitem{studeny-kocka}
  M. Studeny, R.R. Bouckaert, T. Kocka (2000),
  Extreme supermodular set functions over five variables. 
  \emph{Research report} n. 1977, 
   Institute of Information Theory and Automation, Prague
%
\bibitem{studeny} M.~Sudeny (2021),
   Conditional independence structures over four discrete random variables
   revisited.
   \emph{IEEE Trans.\ Inform.\ Theory} 
   \url{https://doi.org/10.1109/TIT.2021.3104250}
%
\bibitem{thakor} S.~Thakor, D.~Saleem (2022),
   A quasi-uniform approach to characterizing the boundary of the almost entropic
   region, 
   \emph{IEEE Information Theory Workshop (ITW)}, 541--545
%
\bibitem{yeung-book}
  R.W.~Yeung (2002)
   \emph{A First Course in Information Theory}.
   Kluwer Academic/Plenum Publishers, New York.
%
\bibitem{yeung-network}
  R.W.~Yeung (2008)
   \emph{Information Theory and Network Coding}.
   Springer US, New York.
%
\bibitem{ZhY.ineq}
  Z.~Zhang and R.W.~Yeung (1998)
   On characterization of entropy function via information inequalities.
   \emph{IEEE Trans.\ Inform.\ Theory} {\bf 44} 1440--1452.
%
\bibitem{Zh.gen.ineq}
   Z.~Zhang (2003)
   On a new non-Shannon-type information inequality.
   \emph{Communications in Information and Systems} {\bf 3} 47--60.
%
\bibitem{ziegler}
  G.~M.~Ziegler (1994)
  \newblock{\em Lectures on polytopes}.
  \newblock Graduate Texts in Mathematics, 152 Springer.
%

\end{thebibliography}
\end{document}